\newrobustcmd*{\parentexttrack}[1]{%
  \begingroup
  \blx@blxinit
  \blx@setsfcodes
  \blx@bibopenparen#1\blx@bibcloseparen
  \endgroup}
\renewcommand{\cite}[1]{\parencite{#1}}
\def\blx@bblfile@bibtex{
  \blx@secinit
  \begingroup
  \blx@bblstart

%
%
\begingroup
\makeatletter
\@ifundefined{ver@biblatex.sty}
  {\@latex@error
     {Missing 'biblatex' package}
     {The bibliography requires the 'biblatex' package.}
      \aftergroup }
  {}
\endgroup

\sortlist[entry]{none/global/}
  \entry{Tanaka2002}{article}{}
    \name{author}{1}{}{%
      {{hash=TT}{%
         family={Tanaka},
         familyi={T\bibinitperiod},
         given={T.},
         giveni={T\bibinitperiod},
      }}%
    }
    \strng{namehash}{TT1}
    \strng{fullhash}{TT1}
    \field{labelnamesource}{author}
    \field{labeltitlesource}{title}
    \field{number}{11}
    \field{pages}{2888\bibrangedash 2910}
    \field{title}{{A statistical-mechanics approach to large-system analysis of
  CDMA multiuser detectors}}
    \field{volume}{48}
    \field{journaltitle}{IEEE Trans. Inf. Theory}
    \field{month}{11}
    \field{year}{2002}
  \endentry

  \entry{GuoVerdu2005}{article}{}
    \name{author}{2}{}{%
      {{hash=GD}{%
         family={Guo},
         familyi={G\bibinitperiod},
         given={D.},
         giveni={D\bibinitperiod},
      }}%
      {{hash=VS}{%
         family={Verd{\'u}},
         familyi={V\bibinitperiod},
         given={S.},
         giveni={S\bibinitperiod},
      }}%
    }
    \strng{namehash}{GDVS1}
    \strng{fullhash}{GDVS1}
    \field{labelnamesource}{author}
    \field{labeltitlesource}{title}
    \field{number}{6}
    \field{pages}{1983\bibrangedash 2010}
    \field{title}{Randomly spread {CDMA}: {A}symptotics via statistical
  physics}
    \field{volume}{51}
    \field{journaltitle}{IEEE Trans. Inf. Theory}
    \field{month}{06}
    \field{year}{2005}
  \endentry

  \entry{Krzakala2012probabilistic}{article}{}
    \name{author}{5}{}{%
      {{hash=KF}{%
         family={Krzakala},
         familyi={K\bibinitperiod},
         given={F.},
         giveni={F\bibinitperiod},
      }}%
      {{hash=MM}{%
         family={M{\'e}zard},
         familyi={M\bibinitperiod},
         given={M.},
         giveni={M\bibinitperiod},
      }}%
      {{hash=SF}{%
         family={Sausset},
         familyi={S\bibinitperiod},
         given={F.},
         giveni={F\bibinitperiod},
      }}%
      {{hash=SY}{%
         family={Sun},
         familyi={S\bibinitperiod},
         given={Y.},
         giveni={Y\bibinitperiod},
      }}%
      {{hash=ZL}{%
         family={Zdeborov{\'a}},
         familyi={Z\bibinitperiod},
         given={L.},
         giveni={L\bibinitperiod},
      }}%
    }
    \strng{namehash}{KFMMSFSYZL1}
    \strng{fullhash}{KFMMSFSYZL1}
    \field{labelnamesource}{author}
    \field{labeltitlesource}{title}
    \field{number}{08}
    \field{pages}{P08009}
    \field{title}{Probabilistic reconstruction in compressed sensing:
  {A}lgorithms, phase diagrams, and threshold achieving matrices}
    \field{volume}{2012}
    \field{journaltitle}{J. Stat. Mech. -- Theory E.}
    \field{month}{08}
    \field{year}{2012}
  \endentry

  \entry{krzakala2012statistical}{article}{}
    \name{author}{5}{}{%
      {{hash=KF}{%
         family={Krzakala},
         familyi={K\bibinitperiod},
         given={F.},
         giveni={F\bibinitperiod},
      }}%
      {{hash=MM}{%
         family={M{\'e}zard},
         familyi={M\bibinitperiod},
         given={M.},
         giveni={M\bibinitperiod},
      }}%
      {{hash=SF}{%
         family={Sausset},
         familyi={S\bibinitperiod},
         given={F.},
         giveni={F\bibinitperiod},
      }}%
      {{hash=SY}{%
         family={Sun},
         familyi={S\bibinitperiod},
         given={Y.},
         giveni={Y\bibinitperiod},
      }}%
      {{hash=ZL}{%
         family={Zdeborov{\'a}},
         familyi={Z\bibinitperiod},
         given={L.},
         giveni={L\bibinitperiod},
      }}%
    }
    \strng{namehash}{KFMMSFSYZL1}
    \strng{fullhash}{KFMMSFSYZL1}
    \field{labelnamesource}{author}
    \field{labeltitlesource}{title}
    \field{number}{2}
    \field{pages}{021005}
    \field{title}{Statistical-physics-based reconstruction in compressed
  sensing}
    \field{volume}{2}
    \field{journaltitle}{Phys. Rev. X}
    \field{month}{05}
    \field{year}{2012}
  \endentry

  \entry{MezardMontanariBook}{book}{}
    \name{author}{2}{}{%
      {{hash=MM}{%
         family={M{\'e}zard},
         familyi={M\bibinitperiod},
         given={M.},
         giveni={M\bibinitperiod},
      }}%
      {{hash=MA}{%
         family={Montanari},
         familyi={M\bibinitperiod},
         given={A.},
         giveni={A\bibinitperiod},
      }}%
    }
    \list{publisher}{1}{%
      {Oxford University press}%
    }
    \strng{namehash}{MMMA1}
    \strng{fullhash}{MMMA1}
    \field{labelnamesource}{author}
    \field{labeltitlesource}{title}
    \field{title}{Information, {P}hysics, and {C}omputation}
    \field{year}{2009}
  \endentry

  \entry{Barbier2015}{article}{}
    \name{author}{2}{}{%
      {{hash=BJ}{%
         family={Barbier},
         familyi={B\bibinitperiod},
         given={J.},
         giveni={J\bibinitperiod},
      }}%
      {{hash=KF}{%
         family={Krzakala},
         familyi={K\bibinitperiod},
         given={F.},
         giveni={F\bibinitperiod},
      }}%
    }
    \strng{namehash}{BJKF1}
    \strng{fullhash}{BJKF1}
    \field{labelnamesource}{author}
    \field{labeltitlesource}{title}
    \field{title}{Approximate message-passing decoder and capacity-achieving
  sparse superposition codes}
    \field{journaltitle}{Arxiv preprint arXiv:1503.08040}
    \field{month}{03}
    \field{year}{2015}
  \endentry

  \entry{Cover06}{book}{}
    \name{author}{2}{}{%
      {{hash=CTM}{%
         family={Cover},
         familyi={C\bibinitperiod},
         given={T.\bibnamedelima M.},
         giveni={T\bibinitperiod\bibinitdelim M\bibinitperiod},
      }}%
      {{hash=TJA}{%
         family={Thomas},
         familyi={T\bibinitperiod},
         given={J.\bibnamedelima A.},
         giveni={J\bibinitperiod\bibinitdelim A\bibinitperiod},
      }}%
    }
    \list{publisher}{1}{%
      {New York, NY, USA: Wiley-Interscience}%
    }
    \strng{namehash}{CTMTJA1}
    \strng{fullhash}{CTMTJA1}
    \field{labelnamesource}{author}
    \field{labeltitlesource}{title}
    \field{title}{Elements of Information Theory}
    \field{year}{2006}
  \endentry

  \entry{RFG2012}{article}{}
    \name{author}{3}{}{%
      {{hash=RS}{%
         family={Rangan},
         familyi={R\bibinitperiod},
         given={S.},
         giveni={S\bibinitperiod},
      }}%
      {{hash=FAK}{%
         family={Fletcher},
         familyi={F\bibinitperiod},
         given={A.\bibnamedelima K.},
         giveni={A\bibinitperiod\bibinitdelim K\bibinitperiod},
      }}%
      {{hash=GVK}{%
         family={Goyal},
         familyi={G\bibinitperiod},
         given={V.\bibnamedelima K.},
         giveni={V\bibinitperiod\bibinitdelim K\bibinitperiod},
      }}%
    }
    \strng{namehash}{RSFAKGVK1}
    \strng{fullhash}{RSFAKGVK1}
    \field{labelnamesource}{author}
    \field{labeltitlesource}{title}
    \field{number}{3}
    \field{pages}{1902\bibrangedash 1923}
    \field{title}{Asymptotic analysis of {MAP} estimation via the replica
  method and applications to compressed sensing}
    \field{volume}{58}
    \field{journaltitle}{IEEE Trans. Inf. Theory}
    \field{month}{03}
    \field{year}{2012}
  \endentry

  \entry{Montanari2006}{inproceedings}{}
    \name{author}{2}{}{%
      {{hash=MA}{%
         family={Montanari},
         familyi={M\bibinitperiod},
         given={A.},
         giveni={A\bibinitperiod},
      }}%
      {{hash=TD}{%
         family={Tse},
         familyi={T\bibinitperiod},
         given={D.},
         giveni={D\bibinitperiod},
      }}%
    }
    \strng{namehash}{MATD1}
    \strng{fullhash}{MATD1}
    \field{labelnamesource}{author}
    \field{labeltitlesource}{title}
    \field{booktitle}{IEEE Inf. Theory Workshop}
    \field{pages}{160\bibrangedash 164}
    \field{title}{Analysis of belief propagation for non-linear problems: The
  example of {CDMA} (or: How to prove {T}anaka's formula)}
    \field{month}{03}
    \field{year}{2006}
  \endentry

  \entry{Lesieur2015}{inproceedings}{}
    \name{author}{3}{}{%
      {{hash=LT}{%
         family={Lesieur},
         familyi={L\bibinitperiod},
         given={T.},
         giveni={T\bibinitperiod},
      }}%
      {{hash=KF}{%
         family={Krzakala},
         familyi={K\bibinitperiod},
         given={F.},
         giveni={F\bibinitperiod},
      }}%
      {{hash=ZL}{%
         family={Zdeborov{\'a}},
         familyi={Z\bibinitperiod},
         given={L.},
         giveni={L\bibinitperiod},
      }}%
    }
    \strng{namehash}{LTKFZL1}
    \strng{fullhash}{LTKFZL1}
    \field{labelnamesource}{author}
    \field{labeltitlesource}{title}
    \field{booktitle}{Proc. IEEE Int. Symp. Inf. Theory (ISIT)}
    \field{pages}{1635\bibrangedash 1639}
    \field{title}{Phase transitions in sparse {PCA}}
    \list{location}{1}{%
      {Hong Kong, China}%
    }
    \field{month}{07}
    \field{year}{2015}
  \endentry

  \entry{Han2014}{inproceedings}{}
    \name{author}{4}{}{%
      {{hash=HP}{%
         family={Han},
         familyi={H\bibinitperiod},
         given={P.},
         giveni={P\bibinitperiod},
      }}%
      {{hash=NR}{%
         family={Niu},
         familyi={N\bibinitperiod},
         given={R.},
         giveni={R\bibinitperiod},
      }}%
      {{hash=RM}{%
         family={Ren},
         familyi={R\bibinitperiod},
         given={M.},
         giveni={M\bibinitperiod},
      }}%
      {{hash=EYC}{%
         family={Eldar},
         familyi={E\bibinitperiod},
         given={Y.\bibnamedelima C.},
         giveni={Y\bibinitperiod\bibinitdelim C\bibinitperiod},
      }}%
    }
    \strng{namehash}{HPNRRMEYC1}
    \strng{fullhash}{HPNRRMEYC1}
    \field{labelnamesource}{author}
    \field{labeltitlesource}{title}
    \field{booktitle}{Proc. IEEE Global Conf. Signal Inf. Process. (GlobalSIP)}
    \field{pages}{497\bibrangedash 501}
    \field{title}{Distributed approximate message passing for sparse signal
  recovery}
    \list{location}{1}{%
      {Atlanta, GA}%
    }
    \field{month}{12}
    \field{year}{2014}
  \endentry

  \entry{MaBaronNeedell2014}{article}{}
    \name{author}{3}{}{%
      {{hash=MY}{%
         family={Ma},
         familyi={M\bibinitperiod},
         given={Y.},
         giveni={Y\bibinitperiod},
      }}%
      {{hash=BD}{%
         family={Baron},
         familyi={B\bibinitperiod},
         given={D.},
         giveni={D\bibinitperiod},
      }}%
      {{hash=ND}{%
         family={Needell},
         familyi={N\bibinitperiod},
         given={D.},
         giveni={D\bibinitperiod},
      }}%
    }
    \strng{namehash}{MYBDND1}
    \strng{fullhash}{MYBDND1}
    \field{labelnamesource}{author}
    \field{labeltitlesource}{title}
    \field{number}{23}
    \field{pages}{6323\bibrangedash 6334}
    \field{title}{Two-part reconstruction with noisy-sudocodes}
    \field{volume}{62}
    \field{journaltitle}{IEEE Trans. Signal Process.}
    \field{month}{12}
    \field{year}{2014}
  \endentry

  \entry{clickPrediction_MS2007}{inproceedings}{}
    \name{author}{3}{}{%
      {{hash=RM}{%
         family={Richardson},
         familyi={R\bibinitperiod},
         given={M.},
         giveni={M\bibinitperiod},
      }}%
      {{hash=DE}{%
         family={Dominowska},
         familyi={D\bibinitperiod},
         given={E.},
         giveni={E\bibinitperiod},
      }}%
      {{hash=RR}{%
         family={Ragno},
         familyi={R\bibinitperiod},
         given={R.},
         giveni={R\bibinitperiod},
      }}%
    }
    \strng{namehash}{RMDERR1}
    \strng{fullhash}{RMDERR1}
    \field{labelnamesource}{author}
    \field{labeltitlesource}{title}
    \field{booktitle}{Proc. Int. World Wide Web Conf. (WWW)}
    \field{pages}{521\bibrangedash 530}
    \field{title}{Predicting clicks: {E}stimating the click-through rate for
  new ads}
    \list{location}{1}{%
      {Banff, Alberta, Canada}%
    }
    \field{month}{05}
    \field{year}{2007}
  \endentry

  \entry{clickPrediction_Google2013}{inproceedings}{}
    \name{author}{16}{}{%
      {{hash=MHB}{%
         family={McMahan},
         familyi={M\bibinitperiod},
         given={H.\bibnamedelima B.},
         giveni={H\bibinitperiod\bibinitdelim B\bibinitperiod},
      }}%
      {{hash=HG}{%
         family={Holt},
         familyi={H\bibinitperiod},
         given={G.},
         giveni={G\bibinitperiod},
      }}%
      {{hash=SD}{%
         family={Sculley},
         familyi={S\bibinitperiod},
         given={D.},
         giveni={D\bibinitperiod},
      }}%
      {{hash=YM}{%
         family={Young},
         familyi={Y\bibinitperiod},
         given={M.},
         giveni={M\bibinitperiod},
      }}%
      {{hash=ED}{%
         family={Ebner},
         familyi={E\bibinitperiod},
         given={D.},
         giveni={D\bibinitperiod},
      }}%
      {{hash=GJ}{%
         family={Grady},
         familyi={G\bibinitperiod},
         given={J.},
         giveni={J\bibinitperiod},
      }}%
      {{hash=NL}{%
         family={Nie},
         familyi={N\bibinitperiod},
         given={L.},
         giveni={L\bibinitperiod},
      }}%
      {{hash=PT}{%
         family={Phillips},
         familyi={P\bibinitperiod},
         given={T.},
         giveni={T\bibinitperiod},
      }}%
      {{hash=DE}{%
         family={Davydov},
         familyi={D\bibinitperiod},
         given={E.},
         giveni={E\bibinitperiod},
      }}%
      {{hash=GD}{%
         family={Golovin},
         familyi={G\bibinitperiod},
         given={D.},
         giveni={D\bibinitperiod},
      }}%
      {{hash=CS}{%
         family={Chikkerur},
         familyi={C\bibinitperiod},
         given={S.},
         giveni={S\bibinitperiod},
      }}%
      {{hash=LD}{%
         family={Liu},
         familyi={L\bibinitperiod},
         given={D.},
         giveni={D\bibinitperiod},
      }}%
      {{hash=WM}{%
         family={Wattenberg},
         familyi={W\bibinitperiod},
         given={M.},
         giveni={M\bibinitperiod},
      }}%
      {{hash=HAM}{%
         family={Hrafnkelsson},
         familyi={H\bibinitperiod},
         given={A.\bibnamedelima M.},
         giveni={A\bibinitperiod\bibinitdelim M\bibinitperiod},
      }}%
      {{hash=BT}{%
         family={Boulos},
         familyi={B\bibinitperiod},
         given={T.},
         giveni={T\bibinitperiod},
      }}%
      {{hash=KJ}{%
         family={Kubica},
         familyi={K\bibinitperiod},
         given={J.},
         giveni={J\bibinitperiod},
      }}%
    }
    \strng{namehash}{MHBHGSDYMEDGJNLPTDEGDCSLDWMHAMBTKJ1}
    \strng{fullhash}{MHBHGSDYMEDGJNLPTDEGDCSLDWMHAMBTKJ1}
    \field{labelnamesource}{author}
    \field{labeltitlesource}{title}
    \field{booktitle}{Proc. ACM SIGKDD Int. Conf. Knowledge Discovery and Data
  Mining (KDD)}
    \field{pages}{1222\bibrangedash 1230}
    \field{title}{Ad click prediction: {A} view from the trenches}
    \list{location}{1}{%
      {Chicago, IL}%
    }
    \field{month}{08}
    \field{year}{2013}
  \endentry

  \entry{DonohoCS}{article}{}
    \name{author}{1}{}{%
      {{hash=DD}{%
         family={Donoho},
         familyi={D\bibinitperiod},
         given={D.},
         giveni={D\bibinitperiod},
      }}%
    }
    \strng{namehash}{DD1}
    \strng{fullhash}{DD1}
    \field{labelnamesource}{author}
    \field{labeltitlesource}{title}
    \field{number}{4}
    \field{pages}{1289\bibrangedash 1306}
    \field{title}{Compressed sensing}
    \field{volume}{52}
    \field{journaltitle}{IEEE Trans. Inf. Theory}
    \field{month}{04}
    \field{year}{2006}
  \endentry

  \entry{DeanGhemawat2008}{article}{}
    \name{author}{2}{}{%
      {{hash=DJ}{%
         family={Dean},
         familyi={D\bibinitperiod},
         given={J.},
         giveni={J\bibinitperiod},
      }}%
      {{hash=GS}{%
         family={Ghemawat},
         familyi={G\bibinitperiod},
         given={S.},
         giveni={S\bibinitperiod},
      }}%
    }
    \strng{namehash}{DJGS1}
    \strng{fullhash}{DJGS1}
    \field{labelnamesource}{author}
    \field{labeltitlesource}{title}
    \field{number}{1}
    \field{pages}{107\bibrangedash 113}
    \field{title}{Map{R}educe: Simplified data processing on large clusters}
    \field{volume}{51}
    \list{location}{1}{%
      {New York, NY}%
    }
    \field{journaltitle}{Commun. ACM}
    \field{month}{01}
    \field{year}{2008}
  \endentry

  \entry{Mota2012}{article}{}
    \name{author}{3}{}{%
      {{hash=MJ}{%
         family={Mota},
         familyi={M\bibinitperiod},
         given={J.},
         giveni={J\bibinitperiod},
      }}%
      {{hash=XJ}{%
         family={Xavier},
         familyi={X\bibinitperiod},
         given={J.},
         giveni={J\bibinitperiod},
      }}%
      {{hash=AP}{%
         family={Aguiar},
         familyi={A\bibinitperiod},
         given={P.},
         giveni={P\bibinitperiod},
      }}%
    }
    \strng{namehash}{MJXJAP1}
    \strng{fullhash}{MJXJAP1}
    \field{labelnamesource}{author}
    \field{labeltitlesource}{title}
    \field{number}{4}
    \field{pages}{1942\bibrangedash 1956}
    \field{title}{Distributed basis pursuit}
    \field{volume}{60}
    \field{journaltitle}{IEEE Trans. Signal Process.}
    \field{month}{04}
    \field{year}{2012}
  \endentry

  \entry{Patterson2013}{inproceedings}{}
    \name{author}{3}{}{%
      {{hash=PS}{%
         family={Patterson},
         familyi={P\bibinitperiod},
         given={S.},
         giveni={S\bibinitperiod},
      }}%
      {{hash=EYC}{%
         family={Eldar},
         familyi={E\bibinitperiod},
         given={Y.\bibnamedelima C.},
         giveni={Y\bibinitperiod\bibinitdelim C\bibinitperiod},
      }}%
      {{hash=KI}{%
         family={Keidar},
         familyi={K\bibinitperiod},
         given={I.},
         giveni={I\bibinitperiod},
      }}%
    }
    \strng{namehash}{PSEYCKI1}
    \strng{fullhash}{PSEYCKI1}
    \field{labelnamesource}{author}
    \field{labeltitlesource}{title}
    \field{booktitle}{Proc. IEEE Int. Conf. Acoustics, Speech, Signal Process.
  (ICASSP)}
    \field{pages}{4494\bibrangedash 4498}
    \field{title}{Distributed sparse signal recovery for sensor networks}
    \list{location}{1}{%
      {Vancouver, B.C., Canada}%
    }
    \field{month}{05}
    \field{year}{2013}
  \endentry

  \entry{Patterson2014}{article}{}
    \name{author}{3}{}{%
      {{hash=PS}{%
         family={Patterson},
         familyi={P\bibinitperiod},
         given={S.},
         giveni={S\bibinitperiod},
      }}%
      {{hash=EYC}{%
         family={Eldar},
         familyi={E\bibinitperiod},
         given={Y.\bibnamedelima C.},
         giveni={Y\bibinitperiod\bibinitdelim C\bibinitperiod},
      }}%
      {{hash=KI}{%
         family={Keidar},
         familyi={K\bibinitperiod},
         given={I.},
         giveni={I\bibinitperiod},
      }}%
    }
    \list{publisher}{1}{%
      {IEEE}%
    }
    \strng{namehash}{PSEYCKI1}
    \strng{fullhash}{PSEYCKI1}
    \field{labelnamesource}{author}
    \field{labeltitlesource}{title}
    \field{number}{19}
    \field{pages}{4931\bibrangedash 4946}
    \field{title}{Distributed compressed sensing for static and time-varying
  networks}
    \field{volume}{62}
    \field{journaltitle}{IEEE Trans. Signal Process.}
    \field{month}{10}
    \field{year}{2014}
  \endentry

  \entry{Han2015ICASSP}{inproceedings}{}
    \name{author}{3}{}{%
      {{hash=HP}{%
         family={Han},
         familyi={H\bibinitperiod},
         given={P.},
         giveni={P\bibinitperiod},
      }}%
      {{hash=NR}{%
         family={Niu},
         familyi={N\bibinitperiod},
         given={R.},
         giveni={R\bibinitperiod},
      }}%
      {{hash=EYC}{%
         family={Eldar},
         familyi={E\bibinitperiod},
         given={Y.\bibnamedelima C.},
         giveni={Y\bibinitperiod\bibinitdelim C\bibinitperiod},
      }}%
    }
    \strng{namehash}{HPNREYC1}
    \strng{fullhash}{HPNREYC1}
    \field{labelnamesource}{author}
    \field{labeltitlesource}{title}
    \field{booktitle}{Proc. IEEE Int. Conf. Acoustics, Speech, Signal Process.
  (ICASSP)}
    \field{pages}{3766\bibrangedash 3770}
    \field{title}{Modified distributed iterative hard thresholding}
    \list{location}{1}{%
      {Brisbane, Australia}%
    }
    \field{month}{04}
    \field{year}{2015}
  \endentry

  \entry{Ravazzi2015}{article}{}
    \name{author}{3}{}{%
      {{hash=RC}{%
         family={Ravazzi},
         familyi={R\bibinitperiod},
         given={C.},
         giveni={C\bibinitperiod},
      }}%
      {{hash=FSM}{%
         family={Fosson},
         familyi={F\bibinitperiod},
         given={S.\bibnamedelima M.},
         giveni={S\bibinitperiod\bibinitdelim M\bibinitperiod},
      }}%
      {{hash=ME}{%
         family={Magli},
         familyi={M\bibinitperiod},
         given={E.},
         giveni={E\bibinitperiod},
      }}%
    }
    \strng{namehash}{RCFSMME1}
    \strng{fullhash}{RCFSMME1}
    \field{labelnamesource}{author}
    \field{labeltitlesource}{title}
    \field{number}{4}
    \field{pages}{2081\bibrangedash 2100}
    \field{title}{Distributed iterative thresholding for {$\ell _{0}/\ell
  _{1}$} -regularized linear inverse problems}
    \field{volume}{61}
    \field{journaltitle}{IEEE Trans. Inf. Theory}
    \field{month}{04}
    \field{year}{2015}
  \endentry

  \entry{Han2015SPARS}{inproceedings}{}
    \name{author}{3}{}{%
      {{hash=HP}{%
         family={Han},
         familyi={H\bibinitperiod},
         given={P.},
         giveni={P\bibinitperiod},
      }}%
      {{hash=NR}{%
         family={Niu},
         familyi={N\bibinitperiod},
         given={R.},
         giveni={R\bibinitperiod},
      }}%
      {{hash=EYC}{%
         family={Eldar},
         familyi={E\bibinitperiod},
         given={Y.\bibnamedelima C.},
         giveni={Y\bibinitperiod\bibinitdelim C\bibinitperiod},
      }}%
    }
    \strng{namehash}{HPNREYC1}
    \strng{fullhash}{HPNREYC1}
    \field{labelnamesource}{author}
    \field{labeltitlesource}{title}
    \field{booktitle}{Proc. Signal Process. with Adaptive Sparse Structured
  Representations Workshop (SPARS)}
    \field{title}{Communication-efficient distributed {IHT}}
    \list{location}{1}{%
      {Cambridge, United Kingdom}%
    }
    \field{month}{07}
    \field{year}{2015}
  \endentry

  \entry{HanZhuNiuBaron2016ICASSP}{inproceedings}{}
    \name{author}{4}{}{%
      {{hash=HP}{%
         family={Han},
         familyi={H\bibinitperiod},
         given={P.},
         giveni={P\bibinitperiod},
      }}%
      {{hash=ZJ}{%
         family={Zhu},
         familyi={Z\bibinitperiod},
         given={J.},
         giveni={J\bibinitperiod},
      }}%
      {{hash=NR}{%
         family={Niu},
         familyi={N\bibinitperiod},
         given={R.},
         giveni={R\bibinitperiod},
      }}%
      {{hash=BD}{%
         family={Baron},
         familyi={B\bibinitperiod},
         given={D.},
         giveni={D\bibinitperiod},
      }}%
    }
    \strng{namehash}{HPZJNRBD1}
    \strng{fullhash}{HPZJNRBD1}
    \field{labelnamesource}{author}
    \field{labeltitlesource}{title}
    \field{booktitle}{Proc. IEEE Int. Conf. Acoustics, Speech, Signal Process.
  (ICASSP)}
    \field{pages}{6240\bibrangedash 6244}
    \field{title}{Multi-processor approximate message passing using lossy
  compression}
    \list{location}{1}{%
      {Shanghai, China}%
    }
    \field{month}{03}
    \field{year}{2016}
  \endentry

  \entry{Duarte2006IPSN}{inproceedings}{}
    \name{author}{4}{}{%
      {{hash=DMF}{%
         family={Duarte},
         familyi={D\bibinitperiod},
         given={M.\bibnamedelima F.},
         giveni={M\bibinitperiod\bibinitdelim F\bibinitperiod},
      }}%
      {{hash=WMB}{%
         family={Wakin},
         familyi={W\bibinitperiod},
         given={M.\bibnamedelima B.},
         giveni={M\bibinitperiod\bibinitdelim B\bibinitperiod},
      }}%
      {{hash=BD}{%
         family={Baron},
         familyi={B\bibinitperiod},
         given={D.},
         giveni={D\bibinitperiod},
      }}%
      {{hash=BRG}{%
         family={Baraniuk},
         familyi={B\bibinitperiod},
         given={R.\bibnamedelima G.},
         giveni={R\bibinitperiod\bibinitdelim G\bibinitperiod},
      }}%
    }
    \strng{namehash}{DMFWMBBDBRG1}
    \strng{fullhash}{DMFWMBBDBRG1}
    \field{labelnamesource}{author}
    \field{labeltitlesource}{title}
    \field{booktitle}{Proc. IEEE Int. Conf. Inf. Process. Sensor Networks
  (IPSN)}
    \field{pages}{177\bibrangedash 185}
    \field{title}{{Universal distributed sensing via random projections}}
    \list{location}{1}{%
      {Nashville, TN}%
    }
    \field{month}{04}
    \field{year}{2006}
  \endentry

  \entry{HN05}{article}{}
    \name{author}{2}{}{%
      {{hash=HJ}{%
         family={Haupt},
         familyi={H\bibinitperiod},
         given={J.},
         giveni={J\bibinitperiod},
      }}%
      {{hash=NR}{%
         family={Nowak},
         familyi={N\bibinitperiod},
         given={R.},
         giveni={R\bibinitperiod},
      }}%
    }
    \strng{namehash}{HJNR1}
    \strng{fullhash}{HJNR1}
    \field{labelnamesource}{author}
    \field{labeltitlesource}{title}
    \field{number}{9}
    \field{pages}{4036\bibrangedash 4048}
    \field{title}{{S}ignal reconstruction from noisy random projections}
    \field{volume}{52}
    \field{journaltitle}{IEEE Trans. Inf. Theory}
    \field{month}{09}
    \field{year}{2006}
  \endentry

  \entry{BaronDCStech}{report}{}
    \name{author}{5}{}{%
      {{hash=BD}{%
         family={Baron},
         familyi={B\bibinitperiod},
         given={D.},
         giveni={D\bibinitperiod},
      }}%
      {{hash=WMB}{%
         family={Wakin},
         familyi={W\bibinitperiod},
         given={M.\bibnamedelima B.},
         giveni={M\bibinitperiod\bibinitdelim B\bibinitperiod},
      }}%
      {{hash=DMF}{%
         family={Duarte},
         familyi={D\bibinitperiod},
         given={M.\bibnamedelima F.},
         giveni={M\bibinitperiod\bibinitdelim F\bibinitperiod},
      }}%
      {{hash=SS}{%
         family={Sarvotham},
         familyi={S\bibinitperiod},
         given={S.},
         giveni={S\bibinitperiod},
      }}%
      {{hash=BRG}{%
         family={Baraniuk},
         familyi={B\bibinitperiod},
         given={R.\bibnamedelima G.},
         giveni={R\bibinitperiod\bibinitdelim G\bibinitperiod},
      }}%
    }
    \strng{namehash}{BDWMBDMFSSBRG1}
    \strng{fullhash}{BDWMBDMFSSBRG1}
    \field{labelnamesource}{author}
    \field{labeltitlesource}{title}
    \field{number}{ECE-0612}
    \field{title}{Distributed compressed sensing}
    \list{institution}{1}{%
      {Rice University}%
    }
    \field{type}{techreport}
    \field{month}{12}
    \field{year}{2006}
  \endentry

  \entry{chen2006trs}{article}{}
    \name{author}{2}{}{%
      {{hash=CJ}{%
         family={Chen},
         familyi={C\bibinitperiod},
         given={J.},
         giveni={J\bibinitperiod},
      }}%
      {{hash=HX}{%
         family={Huo},
         familyi={H\bibinitperiod},
         given={X.},
         giveni={X\bibinitperiod},
      }}%
    }
    \strng{namehash}{CJHX1}
    \strng{fullhash}{CJHX1}
    \field{labelnamesource}{author}
    \field{labeltitlesource}{title}
    \field{number}{12}
    \field{pages}{4634\bibrangedash 4643}
    \field{title}{Theoretical results on sparse representations of multiple
  measurement vectors}
    \field{volume}{54}
    \field{journaltitle}{IEEE Trans. Signal Process.}
    \field{month}{12}
    \field{year}{2006}
  \endentry

  \entry{cotter2005ssl}{article}{}
    \name{author}{4}{}{%
      {{hash=CSF}{%
         family={Cotter},
         familyi={C\bibinitperiod},
         given={S.\bibnamedelima F.},
         giveni={S\bibinitperiod\bibinitdelim F\bibinitperiod},
      }}%
      {{hash=RBD}{%
         family={Rao},
         familyi={R\bibinitperiod},
         given={B.\bibnamedelima D.},
         giveni={B\bibinitperiod\bibinitdelim D\bibinitperiod},
      }}%
      {{hash=EK}{%
         family={Engan},
         familyi={E\bibinitperiod},
         given={K.},
         giveni={K\bibinitperiod},
      }}%
      {{hash=KDK}{%
         family={Kreutz-Delgado},
         familyi={K\bibinitperiod-D\bibinitperiod},
         given={K.},
         giveni={K\bibinitperiod},
      }}%
    }
    \strng{namehash}{CSFRBDEKKDK1}
    \strng{fullhash}{CSFRBDEKKDK1}
    \field{labelnamesource}{author}
    \field{labeltitlesource}{title}
    \field{number}{7}
    \field{pages}{2477\bibrangedash 2488}
    \field{title}{Sparse solutions to linear inverse problems with multiple
  measurement vectors}
    \field{volume}{53}
    \field{journaltitle}{IEEE Trans. Signal Process.}
    \field{month}{07}
    \field{year}{2005}
  \endentry

  \entry{Mishali08rembo}{article}{}
    \name{author}{2}{}{%
      {{hash=MM}{%
         family={Mishali},
         familyi={M\bibinitperiod},
         given={M.},
         giveni={M\bibinitperiod},
      }}%
      {{hash=EYC}{%
         family={Eldar},
         familyi={E\bibinitperiod},
         given={Y.\bibnamedelima C.},
         giveni={Y\bibinitperiod\bibinitdelim C\bibinitperiod},
      }}%
    }
    \strng{namehash}{MMEYC1}
    \strng{fullhash}{MMEYC1}
    \field{labelnamesource}{author}
    \field{labeltitlesource}{title}
    \field{number}{10}
    \field{pages}{4692\bibrangedash 4702}
    \field{title}{Reduce and boost: {R}ecovering arbitrary sets of jointly
  sparse vectors}
    \field{volume}{56}
    \field{journaltitle}{IEEE Trans. Signal Process.}
    \field{month}{10}
    \field{year}{2009}
  \endentry

  \entry{Berg09jrmm}{article}{}
    \name{author}{2}{}{%
      {{hash=BE}{%
         family={Berg},
         familyi={B\bibinitperiod},
         given={E.},
         giveni={E\bibinitperiod},
      }}%
      {{hash=FMP}{%
         family={Friedlander},
         familyi={F\bibinitperiod},
         given={M.\bibnamedelima P.},
         giveni={M\bibinitperiod\bibinitdelim P\bibinitperiod},
      }}%
    }
    \strng{namehash}{BEFMP1}
    \strng{fullhash}{BEFMP1}
    \field{labelnamesource}{author}
    \field{labeltitlesource}{title}
    \field{title}{Joint-sparse recovery from multiple measurements}
    \field{journaltitle}{Arxiv preprint arXiv:0904.2051}
    \field{month}{04}
    \field{year}{2009}
  \endentry

  \entry{JuYeKi07}{article}{}
    \name{author}{3}{}{%
      {{hash=JH}{%
         family={Jung},
         familyi={J\bibinitperiod},
         given={H.},
         giveni={H\bibinitperiod},
      }}%
      {{hash=YJC}{%
         family={Ye},
         familyi={Y\bibinitperiod},
         given={J.\bibnamedelima C.},
         giveni={J\bibinitperiod\bibinitdelim C\bibinitperiod},
      }}%
      {{hash=KEY}{%
         family={Kim},
         familyi={K\bibinitperiod},
         given={E.\bibnamedelima Y.},
         giveni={E\bibinitperiod\bibinitdelim Y\bibinitperiod},
      }}%
    }
    \list{publisher}{1}{%
      {IOP PUBLISHING LTD}%
    }
    \strng{namehash}{JHYJCKEY1}
    \strng{fullhash}{JHYJCKEY1}
    \field{labelnamesource}{author}
    \field{labeltitlesource}{title}
    \field{number}{11}
    \field{pages}{3201\bibrangedash 3226}
    \field{title}{{Improved k-t {BLAST} and k-t {SENSE} using {FOCUSS}}}
    \field{volume}{52}
    \field{journaltitle}{Physics in Medicine and Biology}
    \field{month}{05}
    \field{year}{2007}
  \endentry

  \entry{JuSuNaKiYe09}{article}{}
    \name{author}{5}{}{%
      {{hash=JH}{%
         family={Jung},
         familyi={J\bibinitperiod},
         given={H.},
         giveni={H\bibinitperiod},
      }}%
      {{hash=SK}{%
         family={Sung},
         familyi={S\bibinitperiod},
         given={K.},
         giveni={K\bibinitperiod},
      }}%
      {{hash=NKS}{%
         family={Nayak},
         familyi={N\bibinitperiod},
         given={K.\bibnamedelima S.},
         giveni={K\bibinitperiod\bibinitdelim S\bibinitperiod},
      }}%
      {{hash=KEY}{%
         family={Kim},
         familyi={K\bibinitperiod},
         given={E.\bibnamedelima Y.},
         giveni={E\bibinitperiod\bibinitdelim Y\bibinitperiod},
      }}%
      {{hash=YJC}{%
         family={Ye},
         familyi={Y\bibinitperiod},
         given={J.\bibnamedelima C.},
         giveni={J\bibinitperiod\bibinitdelim C\bibinitperiod},
      }}%
    }
    \strng{namehash}{JHSKNKSKEYYJC1}
    \strng{fullhash}{JHSKNKSKEYYJC1}
    \field{labelnamesource}{author}
    \field{labeltitlesource}{title}
    \field{number}{1}
    \field{pages}{103\bibrangedash 116}
    \field{title}{{k-t {FOCUSS}: {A} general compressed sensing framework for
  high resolution dynamic {MRI}}}
    \field{volume}{61}
    \field{journaltitle}{J. Magnetic Resonance in Medicine}
    \field{month}{01}
    \field{year}{2009}
  \endentry

  \entry{LeeKimBreslerYe2011}{article}{}
    \name{author}{4}{}{%
      {{hash=LO}{%
         family={Lee},
         familyi={L\bibinitperiod},
         given={O.},
         giveni={O\bibinitperiod},
      }}%
      {{hash=KJM}{%
         family={Kim},
         familyi={K\bibinitperiod},
         given={J.\bibnamedelima M.},
         giveni={J\bibinitperiod\bibinitdelim M\bibinitperiod},
      }}%
      {{hash=BY}{%
         family={Bresler},
         familyi={B\bibinitperiod},
         given={Y.},
         giveni={Y\bibinitperiod},
      }}%
      {{hash=YJC}{%
         family={Ye},
         familyi={Y\bibinitperiod},
         given={J.\bibnamedelima C.},
         giveni={J\bibinitperiod\bibinitdelim C\bibinitperiod},
      }}%
    }
    \strng{namehash}{LOKJMBYYJC1}
    \strng{fullhash}{LOKJMBYYJC1}
    \field{labelnamesource}{author}
    \field{labeltitlesource}{title}
    \field{number}{5}
    \field{pages}{1129\bibrangedash 1142}
    \field{title}{Compressive diffuse optical tomography: {N}oniterative exact
  reconstruction using joint sparsity}
    \field{volume}{30}
    \field{journaltitle}{IEEE Trans. Medical Imaging}
    \field{month}{05}
    \field{year}{2011}
  \endentry

  \entry{GuoWang2008}{article}{}
    \name{author}{2}{}{%
      {{hash=GD}{%
         family={Guo},
         familyi={G\bibinitperiod},
         given={D.},
         giveni={D\bibinitperiod},
      }}%
      {{hash=WCC}{%
         family={Wang},
         familyi={W\bibinitperiod},
         given={C.\bibnamedelima C.},
         giveni={C\bibinitperiod\bibinitdelim C\bibinitperiod},
      }}%
    }
    \strng{namehash}{GDWCC1}
    \strng{fullhash}{GDWCC1}
    \field{labelnamesource}{author}
    \field{labeltitlesource}{title}
    \field{number}{3}
    \field{pages}{421\bibrangedash 431}
    \field{title}{Multiuser detection of sparsely spread {CDMA}}
    \field{volume}{26}
    \field{journaltitle}{IEEE J. Sel. Areas Commun.}
    \field{month}{04}
    \field{year}{2008}
  \endentry

  \entry{CandesRUP}{article}{}
    \name{author}{3}{}{%
      {{hash=CE}{%
         family={Cand\`{e}s},
         familyi={C\bibinitperiod},
         given={E.},
         giveni={E\bibinitperiod},
      }}%
      {{hash=RJ}{%
         family={Romberg},
         familyi={R\bibinitperiod},
         given={J.},
         giveni={J\bibinitperiod},
      }}%
      {{hash=TT}{%
         family={Tao},
         familyi={T\bibinitperiod},
         given={T.},
         giveni={T\bibinitperiod},
      }}%
    }
    \strng{namehash}{CERJTT1}
    \strng{fullhash}{CERJTT1}
    \field{labelnamesource}{author}
    \field{labeltitlesource}{title}
    \field{number}{2}
    \field{pages}{489\bibrangedash 509}
    \field{title}{Robust uncertainty principles: {E}xact signal reconstruction
  from highly incomplete frequency information}
    \field{volume}{52}
    \field{journaltitle}{IEEE Trans. Inf. Theory}
    \field{month}{02}
    \field{year}{2006}
  \endentry

  \entry{GPSR2007}{article}{}
    \name{author}{3}{}{%
      {{hash=FM}{%
         family={Figueiredo},
         familyi={F\bibinitperiod},
         given={M.},
         giveni={M\bibinitperiod},
      }}%
      {{hash=NR}{%
         family={Nowak},
         familyi={N\bibinitperiod},
         given={R.},
         giveni={R\bibinitperiod},
      }}%
      {{hash=WSJ}{%
         family={Wright},
         familyi={W\bibinitperiod},
         given={S.\bibnamedelima J.},
         giveni={S\bibinitperiod\bibinitdelim J\bibinitperiod},
      }}%
    }
    \strng{namehash}{FMNRWSJ1}
    \strng{fullhash}{FMNRWSJ1}
    \field{labelnamesource}{author}
    \field{labeltitlesource}{title}
    \field{number}{4}
    \field{pages}{586\bibrangedash 597}
    \field{title}{Gradient projection for sparse reconstruction: Application to
  compressed sensing and other inverse problems}
    \field{volume}{1}
    \field{journaltitle}{IEEE J. Sel. Topics Signal Proces.}
    \field{month}{12}
    \field{year}{2007}
  \endentry

  \entry{DMM2010ITW1}{inproceedings}{}
    \name{author}{3}{}{%
      {{hash=DDL}{%
         family={Donoho},
         familyi={D\bibinitperiod},
         given={D.\bibnamedelima L.},
         giveni={D\bibinitperiod\bibinitdelim L\bibinitperiod},
      }}%
      {{hash=MA}{%
         family={Maleki},
         familyi={M\bibinitperiod},
         given={A.},
         giveni={A\bibinitperiod},
      }}%
      {{hash=MA}{%
         family={Montanari},
         familyi={M\bibinitperiod},
         given={A.},
         giveni={A\bibinitperiod},
      }}%
    }
    \strng{namehash}{DDLMAMA1}
    \strng{fullhash}{DDLMAMA1}
    \field{labelnamesource}{author}
    \field{labeltitlesource}{title}
    \field{booktitle}{IEEE Inf. Theory Workshop}
    \field{title}{{Message passing algorithms for compressed sensing: I.
  {M}otivation and construction}}
    \field{month}{01}
    \field{year}{2010}
  \endentry

  \entry{RanganGAMP2011ISIT}{inproceedings}{}
    \name{author}{1}{}{%
      {{hash=RS}{%
         family={Rangan},
         familyi={R\bibinitperiod},
         given={S.},
         giveni={S\bibinitperiod},
      }}%
    }
    \strng{namehash}{RS1}
    \strng{fullhash}{RS1}
    \field{labelnamesource}{author}
    \field{labeltitlesource}{title}
    \field{booktitle}{Proc. IEEE Int. Symp. Inf. Theory (ISIT)}
    \field{pages}{2168\bibrangedash 2172}
    \field{title}{Generalized approximate message passing for estimation with
  random linear mixing}
    \list{location}{1}{%
      {St. Petersburg, Russia}%
    }
    \field{month}{07}
    \field{year}{2011}
  \endentry

  \entry{BCS2008}{article}{}
    \name{author}{3}{}{%
      {{hash=JS}{%
         family={Ji},
         familyi={J\bibinitperiod},
         given={S.},
         giveni={S\bibinitperiod},
      }}%
      {{hash=XY}{%
         family={Xue},
         familyi={X\bibinitperiod},
         given={Y.},
         giveni={Y\bibinitperiod},
      }}%
      {{hash=CL}{%
         family={Carin},
         familyi={C\bibinitperiod},
         given={L.},
         giveni={L\bibinitperiod},
      }}%
    }
    \strng{namehash}{JSXYCL1}
    \strng{fullhash}{JSXYCL1}
    \field{labelnamesource}{author}
    \field{labeltitlesource}{title}
    \field{number}{6}
    \field{pages}{2346\bibrangedash 2356}
    \field{title}{Bayesian compressive sensing}
    \field{volume}{56}
    \field{journaltitle}{{IEEE Trans. Signal Process.}}
    \field{month}{06}
    \field{year}{2008}
  \endentry

  \entry{BCSEx2008}{inproceedings}{}
    \name{author}{2}{}{%
      {{hash=SMW}{%
         family={Seeger},
         familyi={S\bibinitperiod},
         given={M.\bibnamedelima W.},
         giveni={M\bibinitperiod\bibinitdelim W\bibinitperiod},
      }}%
      {{hash=NH}{%
         family={Nickisch},
         familyi={N\bibinitperiod},
         given={H.},
         giveni={H\bibinitperiod},
      }}%
    }
    \strng{namehash}{SMWNH1}
    \strng{fullhash}{SMWNH1}
    \field{labelnamesource}{author}
    \field{labeltitlesource}{title}
    \field{booktitle}{Proc. Int. Conf. Mach. Learning}
    \field{pages}{912\bibrangedash 919}
    \field{title}{Compressed sensing and {B}ayesian experimental design}
    \list{location}{1}{%
      {Helsinki, Finland}%
    }
    \field{month}{08}
    \field{year}{2008}
  \endentry

  \entry{CSBP2010}{article}{}
    \name{author}{3}{}{%
      {{hash=BD}{%
         family={Baron},
         familyi={B\bibinitperiod},
         given={D.},
         giveni={D\bibinitperiod},
      }}%
      {{hash=SS}{%
         family={Sarvotham},
         familyi={S\bibinitperiod},
         given={S.},
         giveni={S\bibinitperiod},
      }}%
      {{hash=BRG}{%
         family={Baraniuk},
         familyi={B\bibinitperiod},
         given={R.\bibnamedelima G.},
         giveni={R\bibinitperiod\bibinitdelim G\bibinitperiod},
      }}%
    }
    \strng{namehash}{BDSSBRG1}
    \strng{fullhash}{BDSSBRG1}
    \field{labelnamesource}{author}
    \field{labeltitlesource}{title}
    \field{number}{1}
    \field{pages}{269\bibrangedash 280}
    \field{title}{Bayesian compressive sensing via belief propagation}
    \field{volume}{58}
    \field{journaltitle}{IEEE Trans. Signal Process.}
    \field{month}{01}
    \field{year}{2010}
  \endentry

  \entry{EMGMTSP}{article}{}
    \name{author}{2}{}{%
      {{hash=VJ}{%
         family={Vila},
         familyi={V\bibinitperiod},
         given={J.},
         giveni={J\bibinitperiod},
      }}%
      {{hash=SP}{%
         family={Schniter},
         familyi={S\bibinitperiod},
         given={P.},
         giveni={P\bibinitperiod},
      }}%
    }
    \strng{namehash}{VJSP1}
    \strng{fullhash}{VJSP1}
    \field{labelnamesource}{author}
    \field{labeltitlesource}{title}
    \field{number}{19}
    \field{pages}{4658\bibrangedash 4672}
    \field{title}{Expectation-maximization {G}aussian-mixture approximate
  message passing}
    \field{volume}{61}
    \field{journaltitle}{IEEE Trans. Signal Process.}
    \field{month}{10}
    \field{year}{2013}
  \endentry

  \entry{turboGAMP}{inproceedings}{}
    \name{author}{3}{}{%
      {{hash=ZJ}{%
         family={Ziniel},
         familyi={Z\bibinitperiod},
         given={J.},
         giveni={J\bibinitperiod},
      }}%
      {{hash=RS}{%
         family={Rangan},
         familyi={R\bibinitperiod},
         given={S.},
         giveni={S\bibinitperiod},
      }}%
      {{hash=SP}{%
         family={Schniter},
         familyi={S\bibinitperiod},
         given={P.},
         giveni={P\bibinitperiod},
      }}%
    }
    \strng{namehash}{ZJRSSP1}
    \strng{fullhash}{ZJRSSP1}
    \field{labelnamesource}{author}
    \field{labeltitlesource}{title}
    \field{booktitle}{Proc. IEEE Stat. Signal Process. Workshop (SSP)}
    \field{pages}{325\bibrangedash 328}
    \field{title}{A generalized framework for learning and recovery of
  structured sparse signals}
    \list{location}{1}{%
      {Ann Arbor, MI}%
    }
    \field{month}{08}
    \field{year}{2012}
  \endentry

  \entry{MTKB2014ITA}{article}{}
    \name{author}{4}{}{%
      {{hash=MY}{%
         family={Ma},
         familyi={M\bibinitperiod},
         given={Y.},
         giveni={Y\bibinitperiod},
      }}%
      {{hash=TJ}{%
         family={Tan},
         familyi={T\bibinitperiod},
         given={J.},
         giveni={J\bibinitperiod},
      }}%
      {{hash=KN}{%
         family={Krishnan},
         familyi={K\bibinitperiod},
         given={N.},
         giveni={N\bibinitperiod},
      }}%
      {{hash=BD}{%
         family={Baron},
         familyi={B\bibinitperiod},
         given={D.},
         giveni={D\bibinitperiod},
      }}%
    }
    \strng{namehash}{MYTJKNBD1}
    \strng{fullhash}{MYTJKNBD1}
    \field{labelnamesource}{author}
    \field{labeltitlesource}{title}
    \field{title}{Empirical {B}ayes and full {B}ayes for signal estimation}
    \field{journaltitle}{Arxiv preprint arxiv:1405.2113v1}
    \field{month}{05}
    \field{year}{2014}
  \endentry

  \entry{Figueiredo2003}{article}{}
    \name{author}{2}{}{%
      {{hash=FMAT}{%
         family={Figueiredo},
         familyi={F\bibinitperiod},
         given={M.\bibnamedelima A.\bibnamedelima T.},
         giveni={M\bibinitperiod\bibinitdelim A\bibinitperiod\bibinitdelim
  T\bibinitperiod},
      }}%
      {{hash=NRD}{%
         family={Nowak},
         familyi={N\bibinitperiod},
         given={R.\bibnamedelima D.},
         giveni={R\bibinitperiod\bibinitdelim D\bibinitperiod},
      }}%
    }
    \list{publisher}{1}{%
      {IEEE}%
    }
    \strng{namehash}{FMATNRD1}
    \strng{fullhash}{FMATNRD1}
    \field{labelnamesource}{author}
    \field{labeltitlesource}{title}
    \field{number}{8}
    \field{pages}{906\bibrangedash 916}
    \field{title}{An {EM} algorithm for wavelet-based image restoration}
    \field{volume}{12}
    \field{journaltitle}{IEEE Trans. Image Process.}
    \field{month}{08}
    \field{year}{2003}
  \endentry

  \entry{DonohoKolmogorovCS2006}{inproceedings}{}
    \name{author}{3}{}{%
      {{hash=DD}{%
         family={Donoho},
         familyi={D\bibinitperiod},
         given={D.},
         giveni={D\bibinitperiod},
      }}%
      {{hash=KH}{%
         family={Kakavand},
         familyi={K\bibinitperiod},
         given={H.},
         giveni={H\bibinitperiod},
      }}%
      {{hash=MJ}{%
         family={Mammen},
         familyi={M\bibinitperiod},
         given={J.},
         giveni={J\bibinitperiod},
      }}%
    }
    \strng{namehash}{DDKHMJ1}
    \strng{fullhash}{DDKHMJ1}
    \field{labelnamesource}{author}
    \field{labeltitlesource}{title}
    \field{booktitle}{Proc. IEEE Int. Symp. Inf. Theory (ISIT)}
    \field{pages}{1924\bibrangedash 1928}
    \field{title}{The simplest solution to an underdetermined system of linear
  equations}
    \list{location}{1}{%
      {Seattle, WA}%
    }
    \field{month}{07}
    \field{year}{2006}
  \endentry

  \entry{HN11}{incollection}{}
    \name{author}{2}{}{%
      {{hash=HJD}{%
         family={Haupt},
         familyi={H\bibinitperiod},
         given={J.\bibnamedelima D.},
         giveni={J\bibinitperiod\bibinitdelim D\bibinitperiod},
      }}%
      {{hash=NR}{%
         family={Nowak},
         familyi={N\bibinitperiod},
         given={R.},
         giveni={R\bibinitperiod},
      }}%
    }
    \list{publisher}{1}{%
      {Cambridge University Press}%
    }
    \strng{namehash}{HJDNR1}
    \strng{fullhash}{HJDNR1}
    \field{labelnamesource}{author}
    \field{labeltitlesource}{title}
    \field{booktitle}{Compressed Sensing: Theory and Applications}
    \field{title}{Adaptive sensing for sparse recovery}
    \field{year}{2012}
  \endentry

  \entry{Ramirez2011}{article}{}
    \name{author}{2}{}{%
      {{hash=RI}{%
         family={Ram{\'\i}rez},
         familyi={R\bibinitperiod},
         given={I.},
         giveni={I\bibinitperiod},
      }}%
      {{hash=SG}{%
         family={Sapiro},
         familyi={S\bibinitperiod},
         given={G.},
         giveni={G\bibinitperiod},
      }}%
    }
    \strng{namehash}{RISG1}
    \strng{fullhash}{RISG1}
    \field{labelnamesource}{author}
    \field{labeltitlesource}{title}
    \field{number}{6}
    \field{pages}{2913\bibrangedash 2927}
    \field{title}{An {MDL} framework for sparse coding and dictionary learning}
    \field{volume}{60}
    \field{journaltitle}{IEEE Trans. Signal Process.}
    \field{month}{06}
    \field{year}{2012}
  \endentry

  \entry{LiVitanyi2008}{book}{}
    \name{author}{2}{}{%
      {{hash=LM}{%
         family={Li},
         familyi={L\bibinitperiod},
         given={M.},
         giveni={M\bibinitperiod},
      }}%
      {{hash=VPMB}{%
         family={Vitanyi},
         familyi={V\bibinitperiod},
         given={P.\bibnamedelima M.\bibnamedelima B.},
         giveni={P\bibinitperiod\bibinitdelim M\bibinitperiod\bibinitdelim
  B\bibinitperiod},
      }}%
    }
    \list{publisher}{1}{%
      {Springer-Verlag, New York}%
    }
    \strng{namehash}{LMVPMB1}
    \strng{fullhash}{LMVPMB1}
    \field{labelnamesource}{author}
    \field{labeltitlesource}{title}
    \field{title}{An {I}ntroduction to {K}olmogorov {C}omplexity and {I}ts
  {A}pplications}
    \field{year}{2008}
  \endentry

  \entry{AharoEB_KSVD}{article}{}
    \name{author}{3}{}{%
      {{hash=AM}{%
         family={Aharon},
         familyi={A\bibinitperiod},
         given={M.},
         giveni={M\bibinitperiod},
      }}%
      {{hash=EM}{%
         family={Elad},
         familyi={E\bibinitperiod},
         given={M.},
         giveni={M\bibinitperiod},
      }}%
      {{hash=BA}{%
         family={Bruckstein},
         familyi={B\bibinitperiod},
         given={A.},
         giveni={A\bibinitperiod},
      }}%
    }
    \strng{namehash}{AMEMBA1}
    \strng{fullhash}{AMEMBA1}
    \field{labelnamesource}{author}
    \field{labeltitlesource}{title}
    \field{number}{11}
    \field{pages}{4311\bibrangedash 4322}
    \field{title}{{K-SVD}: An algorithm for designing overcomplete dictionaries
  for sparse representation}
    \field{volume}{54}
    \field{journaltitle}{{IEEE} Trans. Signal Process.}
    \field{month}{11}
    \field{year}{2006}
  \endentry

  \entry{Mairal2008}{inproceedings}{}
    \name{author}{5}{}{%
      {{hash=MJ}{%
         family={Mairal},
         familyi={M\bibinitperiod},
         given={J.},
         giveni={J\bibinitperiod},
      }}%
      {{hash=BF}{%
         family={Bach},
         familyi={B\bibinitperiod},
         given={F.},
         giveni={F\bibinitperiod},
      }}%
      {{hash=PJ}{%
         family={Ponce},
         familyi={P\bibinitperiod},
         given={J.},
         giveni={J\bibinitperiod},
      }}%
      {{hash=SG}{%
         family={Sapiro},
         familyi={S\bibinitperiod},
         given={G.},
         giveni={G\bibinitperiod},
      }}%
      {{hash=ZA}{%
         family={Zisserman},
         familyi={Z\bibinitperiod},
         given={A.},
         giveni={A\bibinitperiod},
      }}%
    }
    \strng{namehash}{MJBFPJSGZA1}
    \strng{fullhash}{MJBFPJSGZA1}
    \field{labelnamesource}{author}
    \field{labeltitlesource}{title}
    \field{booktitle}{Workshop Neural Inf. Process. Syst. (NIPS)}
    \field{title}{Supervised dictionary learning}
    \list{location}{1}{%
      {Vancouver, B.C., Canada}%
    }
    \field{month}{12}
    \field{year}{2008}
  \endentry

  \entry{Zhoul2011}{article}{}
    \name{author}{9}{}{%
      {{hash=ZM}{%
         family={Zhou},
         familyi={Z\bibinitperiod},
         given={M.},
         giveni={M\bibinitperiod},
      }}%
      {{hash=CH}{%
         family={Chen},
         familyi={C\bibinitperiod},
         given={H.},
         giveni={H\bibinitperiod},
      }}%
      {{hash=PJ}{%
         family={Paisley},
         familyi={P\bibinitperiod},
         given={J.},
         giveni={J\bibinitperiod},
      }}%
      {{hash=RL}{%
         family={Ren},
         familyi={R\bibinitperiod},
         given={L.},
         giveni={L\bibinitperiod},
      }}%
      {{hash=LL}{%
         family={Li},
         familyi={L\bibinitperiod},
         given={L.},
         giveni={L\bibinitperiod},
      }}%
      {{hash=XZ}{%
         family={Xing},
         familyi={X\bibinitperiod},
         given={Z.},
         giveni={Z\bibinitperiod},
      }}%
      {{hash=DD}{%
         family={Dunson},
         familyi={D\bibinitperiod},
         given={D.},
         giveni={D\bibinitperiod},
      }}%
      {{hash=SG}{%
         family={Sapiro},
         familyi={S\bibinitperiod},
         given={G.},
         giveni={G\bibinitperiod},
      }}%
      {{hash=CL}{%
         family={Carin},
         familyi={C\bibinitperiod},
         given={L.},
         giveni={L\bibinitperiod},
      }}%
    }
    \strng{namehash}{ZMCHPJRLLLXZDDSGCL1}
    \strng{fullhash}{ZMCHPJRLLLXZDDSGCL1}
    \field{labelnamesource}{author}
    \field{labeltitlesource}{title}
    \field{number}{1}
    \field{pages}{130\bibrangedash 144}
    \field{title}{Nonparametric {B}ayesian dictionary learning for analysis of
  noisy and incomplete images}
    \field{volume}{21}
    \field{journaltitle}{IEEE Trans. Image Process.}
    \field{month}{01}
    \field{year}{2012}
  \endentry

  \entry{Garrigues07learninghorizontal}{inproceedings}{}
    \name{author}{2}{}{%
      {{hash=GPJ}{%
         family={Garrigues},
         familyi={G\bibinitperiod},
         given={P.\bibnamedelima J.},
         giveni={P\bibinitperiod\bibinitdelim J\bibinitperiod},
      }}%
      {{hash=OBA}{%
         family={Olshausen},
         familyi={O\bibinitperiod},
         given={B.\bibnamedelima A.},
         giveni={B\bibinitperiod\bibinitdelim A\bibinitperiod},
      }}%
    }
    \strng{namehash}{GPJOBA1}
    \strng{fullhash}{GPJOBA1}
    \field{labelnamesource}{author}
    \field{labeltitlesource}{title}
    \field{booktitle}{Workshop Neural Inf. Process. Syst. (NIPS)}
    \field{pages}{1\bibrangedash 8}
    \field{title}{Learning horizontal connections in a sparse coding model of
  natural images}
    \list{location}{1}{%
      {Vancouver, B.C., Canada}%
    }
    \field{month}{12}
    \field{year}{2007}
  \endentry

  \entry{CodedMapReduce2015Allerton}{inproceedings}{}
    \name{author}{3}{}{%
      {{hash=LS}{%
         family={Li},
         familyi={L\bibinitperiod},
         given={S.},
         giveni={S\bibinitperiod},
      }}%
      {{hash=MAMA}{%
         family={Maddah-Ali},
         familyi={M\bibinitperiod-A\bibinitperiod},
         given={M.\bibnamedelima A.},
         giveni={M\bibinitperiod\bibinitdelim A\bibinitperiod},
      }}%
      {{hash=AAS}{%
         family={Avestimehr},
         familyi={A\bibinitperiod},
         given={A.\bibnamedelima S.},
         giveni={A\bibinitperiod\bibinitdelim S\bibinitperiod},
      }}%
    }
    \strng{namehash}{LSMAMAAAS1}
    \strng{fullhash}{LSMAMAAAS1}
    \field{labelnamesource}{author}
    \field{labeltitlesource}{title}
    \field{booktitle}{Proc. Allerton Conference Commun., Control, and Comput.}
    \field{pages}{964\bibrangedash 971}
    \field{title}{Coded {M}ap{R}educe}
    \field{month}{09}
    \field{year}{2015}
  \endentry

  \entry{LMYA2016ISIT}{inproceedings}{}
    \name{author}{4}{}{%
      {{hash=LS}{%
         family={Li},
         familyi={L\bibinitperiod},
         given={S.},
         giveni={S\bibinitperiod},
      }}%
      {{hash=MAMA}{%
         family={Maddah-Ali},
         familyi={M\bibinitperiod-A\bibinitperiod},
         given={M.\bibnamedelima A.},
         giveni={M\bibinitperiod\bibinitdelim A\bibinitperiod},
      }}%
      {{hash=YQ}{%
         family={Yu},
         familyi={Y\bibinitperiod},
         given={Qian},
         giveni={Q\bibinitperiod},
      }}%
      {{hash=AAS}{%
         family={Avestimehr},
         familyi={A\bibinitperiod},
         given={A.\bibnamedelima S.},
         giveni={A\bibinitperiod\bibinitdelim S\bibinitperiod},
      }}%
    }
    \strng{namehash}{LSMAMAYQAAS1}
    \strng{fullhash}{LSMAMAYQAAS1}
    \field{labelnamesource}{author}
    \field{labeltitlesource}{title}
    \field{booktitle}{Proc. IEEE Int. Symp. Inf. Theory (ISIT)}
    \field{pages}{1814\bibrangedash 1818}
    \field{title}{Fundamental tradeoff between computation and communication in
  distributed computing}
    \list{location}{1}{%
      {Barcelona, Spain}%
    }
    \field{month}{07}
    \field{year}{2016}
  \endentry

  \entry{Thanou2013}{article}{}
    \name{author}{4}{}{%
      {{hash=TD}{%
         family={Thanou},
         familyi={T\bibinitperiod},
         given={D.},
         giveni={D\bibinitperiod},
      }}%
      {{hash=KE}{%
         family={Kokiopoulou},
         familyi={K\bibinitperiod},
         given={E.},
         giveni={E\bibinitperiod},
      }}%
      {{hash=PY}{%
         family={Pu},
         familyi={P\bibinitperiod},
         given={Y.},
         giveni={Y\bibinitperiod},
      }}%
      {{hash=FP}{%
         family={Frossard},
         familyi={F\bibinitperiod},
         given={P.},
         giveni={P\bibinitperiod},
      }}%
    }
    \strng{namehash}{TDKEPYFP1}
    \strng{fullhash}{TDKEPYFP1}
    \field{labelnamesource}{author}
    \field{labeltitlesource}{title}
    \field{number}{1}
    \field{pages}{194\bibrangedash 205}
    \field{title}{Distributed average consensus with quantization refinement}
    \field{volume}{61}
    \field{journaltitle}{IEEE Trans. Signal Process.}
    \field{month}{01}
    \field{year}{2013}
  \endentry

  \entry{Tan2014}{article}{}
    \name{author}{3}{}{%
      {{hash=TJ}{%
         family={Tan},
         familyi={T\bibinitperiod},
         given={J.},
         giveni={J\bibinitperiod},
      }}%
      {{hash=CD}{%
         family={Carmon},
         familyi={C\bibinitperiod},
         given={D.},
         giveni={D\bibinitperiod},
      }}%
      {{hash=BD}{%
         family={Baron},
         familyi={B\bibinitperiod},
         given={D.},
         giveni={D\bibinitperiod},
      }}%
    }
    \strng{namehash}{TJCDBD1}
    \strng{fullhash}{TJCDBD1}
    \field{labelnamesource}{author}
    \field{labeltitlesource}{title}
    \field{number}{1}
    \field{pages}{150\bibrangedash 158}
    \field{title}{Signal estimation with additive error metrics in compressed
  sensing}
    \field{volume}{60}
    \field{journaltitle}{IEEE Trans. Inf. Theory}
    \field{month}{01}
    \field{year}{2014}
  \endentry

  \entry{Tan2014Infty}{article}{}
    \name{author}{3}{}{%
      {{hash=TJ}{%
         family={Tan},
         familyi={T\bibinitperiod},
         given={J.},
         giveni={J\bibinitperiod},
      }}%
      {{hash=BD}{%
         family={Baron},
         familyi={B\bibinitperiod},
         given={D.},
         giveni={D\bibinitperiod},
      }}%
      {{hash=DL}{%
         family={Dai},
         familyi={D\bibinitperiod},
         given={L.},
         giveni={L\bibinitperiod},
      }}%
    }
    \strng{namehash}{TJBDDL1}
    \strng{fullhash}{TJBDDL1}
    \field{labelnamesource}{author}
    \field{labeltitlesource}{title}
    \field{number}{10}
    \field{pages}{6626\bibrangedash 6635}
    \field{title}{Wiener filters in {G}aussian mixture signal estimation
  with~{$\ell_\infty$}-norm error}
    \field{volume}{60}
    \field{journaltitle}{IEEE Trans. Inf. Theory}
    \field{month}{10}
    \field{year}{2014}
  \endentry

  \entry{Berger71}{book}{}
    \name{author}{1}{}{%
      {{hash=BT}{%
         family={Berger},
         familyi={B\bibinitperiod},
         given={T.},
         giveni={T\bibinitperiod},
      }}%
    }
    \list{publisher}{1}{%
      {Prentice-Hall Englewood Cliffs, NJ}%
    }
    \strng{namehash}{BT1}
    \strng{fullhash}{BT1}
    \field{labelnamesource}{author}
    \field{labeltitlesource}{title}
    \field{pages}{xiii, 311 p.}
    \field{title}{Rate {D}istortion {T}heory: {M}athematical {B}asis for {D}ata
  {C}ompression}
    \field{type}{Book}
    \field{year}{1971}
  \endentry

  \entry{GershoGray1993}{book}{}
    \name{author}{2}{}{%
      {{hash=GA}{%
         family={Gersho},
         familyi={G\bibinitperiod},
         given={A.},
         giveni={A\bibinitperiod},
      }}%
      {{hash=GRM}{%
         family={Gray},
         familyi={G\bibinitperiod},
         given={R.\bibnamedelima M.},
         giveni={R\bibinitperiod\bibinitdelim M\bibinitperiod},
      }}%
    }
    \list{publisher}{1}{%
      {Kluwer}%
    }
    \strng{namehash}{GAGRM1}
    \strng{fullhash}{GAGRM1}
    \field{labelnamesource}{author}
    \field{labeltitlesource}{title}
    \field{title}{{Vector {Q}uantization and {S}ignal {C}ompression}}
    \field{year}{1993}
  \endentry

  \entry{WeidmannVetterli2012}{article}{}
    \name{author}{2}{}{%
      {{hash=WC}{%
         family={Weidmann},
         familyi={W\bibinitperiod},
         given={C.},
         giveni={C\bibinitperiod},
      }}%
      {{hash=VM}{%
         family={Vetterli},
         familyi={V\bibinitperiod},
         given={M.},
         giveni={M\bibinitperiod},
      }}%
    }
    \strng{namehash}{WCVM1}
    \strng{fullhash}{WCVM1}
    \field{labelnamesource}{author}
    \field{labeltitlesource}{title}
    \field{number}{8}
    \field{pages}{4969\bibrangedash 4992}
    \field{title}{Rate distortion behavior of sparse sources}
    \field{volume}{58}
    \field{journaltitle}{IEEE Trans. Inf. Theory}
    \field{month}{08}
    \field{year}{2012}
  \endentry

  \entry{DMM2009}{article}{}
    \name{author}{3}{}{%
      {{hash=DDL}{%
         family={Donoho},
         familyi={D\bibinitperiod},
         given={D.\bibnamedelima L.},
         giveni={D\bibinitperiod\bibinitdelim L\bibinitperiod},
      }}%
      {{hash=MA}{%
         family={Maleki},
         familyi={M\bibinitperiod},
         given={A.},
         giveni={A\bibinitperiod},
      }}%
      {{hash=MA}{%
         family={Montanari},
         familyi={M\bibinitperiod},
         given={A.},
         giveni={A\bibinitperiod},
      }}%
    }
    \strng{namehash}{DDLMAMA1}
    \strng{fullhash}{DDLMAMA1}
    \field{labelnamesource}{author}
    \field{labeltitlesource}{title}
    \field{number}{45}
    \field{pages}{18914\bibrangedash 18919}
    \field{title}{{Message passing algorithms for compressed sensing}}
    \field{volume}{106}
    \field{journaltitle}{Proc. Nat. Academy Sci.}
    \field{month}{11}
    \field{year}{2009}
  \endentry

  \entry{Bayati2011}{article}{}
    \name{author}{2}{}{%
      {{hash=BM}{%
         family={Bayati},
         familyi={B\bibinitperiod},
         given={M.},
         giveni={M\bibinitperiod},
      }}%
      {{hash=MA}{%
         family={Montanari},
         familyi={M\bibinitperiod},
         given={A.},
         giveni={A\bibinitperiod},
      }}%
    }
    \list{publisher}{1}{%
      {IEEE}%
    }
    \strng{namehash}{BMMA1}
    \strng{fullhash}{BMMA1}
    \field{labelnamesource}{author}
    \field{labeltitlesource}{title}
    \field{number}{2}
    \field{pages}{764\bibrangedash 785}
    \field{title}{The dynamics of message passing on dense graphs, with
  applications to compressed sensing}
    \field{volume}{57}
    \field{journaltitle}{IEEE Trans. Inf. Theory}
    \field{month}{02}
    \field{year}{2011}
  \endentry

  \entry{Montanari2012}{article}{}
    \name{author}{1}{}{%
      {{hash=MA}{%
         family={Montanari},
         familyi={M\bibinitperiod},
         given={A.},
         giveni={A\bibinitperiod},
      }}%
    }
    \list{publisher}{1}{%
      {Cambridge University Press}%
    }
    \strng{namehash}{MA1}
    \strng{fullhash}{MA1}
    \field{labelnamesource}{author}
    \field{labeltitlesource}{title}
    \field{pages}{394\bibrangedash 438}
    \field{title}{Graphical models concepts in compressed sensing}
    \field{journaltitle}{Compressed Sensing: Theory and Applications}
    \field{year}{2012}
  \endentry

  \entry{MaZhuBaronAllerton2014}{inproceedings}{}
    \name{author}{3}{}{%
      {{hash=MY}{%
         family={Ma},
         familyi={M\bibinitperiod},
         given={Y.},
         giveni={Y\bibinitperiod},
      }}%
      {{hash=ZJ}{%
         family={Zhu},
         familyi={Z\bibinitperiod},
         given={J.},
         giveni={J\bibinitperiod},
      }}%
      {{hash=BD}{%
         family={Baron},
         familyi={B\bibinitperiod},
         given={D.},
         giveni={D\bibinitperiod},
      }}%
    }
    \strng{namehash}{MYZJBD1}
    \strng{fullhash}{MYZJBD1}
    \field{labelnamesource}{author}
    \field{labeltitlesource}{title}
    \field{booktitle}{Proc. Allerton Conference Commun., Control, and Comput.}
    \field{title}{Compressed sensing via universal denoising and approximate
  message passing}
    \field{month}{10}
    \field{year}{2014}
  \endentry

  \entry{MaZhuBaron2016TSP}{article}{}
    \name{author}{3}{}{%
      {{hash=MY}{%
         family={Ma},
         familyi={M\bibinitperiod},
         given={Y.},
         giveni={Y\bibinitperiod},
      }}%
      {{hash=ZJ}{%
         family={Zhu},
         familyi={Z\bibinitperiod},
         given={J.},
         giveni={J\bibinitperiod},
      }}%
      {{hash=BD}{%
         family={Baron},
         familyi={B\bibinitperiod},
         given={D.},
         giveni={D\bibinitperiod},
      }}%
    }
    \strng{namehash}{MYZJBD1}
    \strng{fullhash}{MYZJBD1}
    \field{labelnamesource}{author}
    \field{labeltitlesource}{title}
    \field{number}{21}
    \field{pages}{5611\bibrangedash 5622}
    \field{title}{Approximate message passing algorithm with universal
  denoising and {G}aussian mixture learning}
    \field{volume}{65}
    \field{journaltitle}{IEEE Trans. Signal Process.}
    \field{month}{11}
    \field{year}{2016}
  \endentry

  \entry{ZhuBaronCISS2013}{inproceedings}{}
    \name{author}{2}{}{%
      {{hash=ZJ}{%
         family={Zhu},
         familyi={Z\bibinitperiod},
         given={J.},
         giveni={J\bibinitperiod},
      }}%
      {{hash=BD}{%
         family={Baron},
         familyi={B\bibinitperiod},
         given={D.},
         giveni={D\bibinitperiod},
      }}%
    }
    \strng{namehash}{ZJBD1}
    \strng{fullhash}{ZJBD1}
    \field{labelnamesource}{author}
    \field{labeltitlesource}{title}
    \field{booktitle}{Proc. IEEE Conf. Inf. Sci. Syst. (CISS)}
    \field{title}{Performance regions in compressed sensing from noisy
  measurements}
    \list{location}{1}{%
      {Baltimore, MD}%
    }
    \field{month}{03}
    \field{year}{2013}
  \endentry

  \entry{ZhuBaronKrzakala2016}{article}{}
    \name{author}{3}{}{%
      {{hash=ZJ}{%
         family={Zhu},
         familyi={Z\bibinitperiod},
         given={J.},
         giveni={J\bibinitperiod},
      }}%
      {{hash=BD}{%
         family={Baron},
         familyi={B\bibinitperiod},
         given={D.},
         giveni={D\bibinitperiod},
      }}%
      {{hash=KF}{%
         family={Krzakala},
         familyi={K\bibinitperiod},
         given={F.},
         giveni={F\bibinitperiod},
      }}%
    }
    \strng{namehash}{ZJBDKF1}
    \strng{fullhash}{ZJBDKF1}
    \field{labelnamesource}{author}
    \field{labeltitlesource}{title}
    \field{number}{9}
    \field{pages}{2444\bibrangedash 2454}
    \field{title}{Performance limits for noisy multimeasurement vector
  problems}
    \field{volume}{65}
    \field{journaltitle}{IEEE Trans. Signal Process.}
    \field{month}{05}
    \field{year}{2017}
  \endentry

  \entry{ZhuBeiramiBaron2016ISIT}{inproceedings}{}
    \name{author}{3}{}{%
      {{hash=ZJ}{%
         family={Zhu},
         familyi={Z\bibinitperiod},
         given={J.},
         giveni={J\bibinitperiod},
      }}%
      {{hash=BA}{%
         family={Beirami},
         familyi={B\bibinitperiod},
         given={A.},
         giveni={A\bibinitperiod},
      }}%
      {{hash=BD}{%
         family={Baron},
         familyi={B\bibinitperiod},
         given={D.},
         giveni={D\bibinitperiod},
      }}%
    }
    \strng{namehash}{ZJBABD1}
    \strng{fullhash}{ZJBABD1}
    \field{labelnamesource}{author}
    \field{labeltitlesource}{title}
    \field{booktitle}{Proc. IEEE Int. Symp. Inf. Theory (ISIT)}
    \field{pages}{680\bibrangedash 684}
    \field{title}{Performance trade-offs in multi-processor approximate message
  passing}
    \list{location}{1}{%
      {Barcelona, Spain}%
    }
    \field{month}{07}
    \field{year}{2016}
  \endentry

  \entry{ZhuBaronMPAMP2016ArXiv}{article}{}
    \name{author}{3}{}{%
      {{hash=ZJ}{%
         family={Zhu},
         familyi={Z\bibinitperiod},
         given={J.},
         giveni={J\bibinitperiod},
      }}%
      {{hash=BD}{%
         family={Baron},
         familyi={B\bibinitperiod},
         given={D.},
         giveni={D\bibinitperiod},
      }}%
      {{hash=BA}{%
         family={Beirami},
         familyi={B\bibinitperiod},
         given={A.},
         giveni={A\bibinitperiod},
      }}%
    }
    \strng{namehash}{ZJBDBA1}
    \strng{fullhash}{ZJBDBA1}
    \field{labelnamesource}{author}
    \field{labeltitlesource}{title}
    \field{title}{Optimal trade-offs in multi-processor approximate message
  passing}
    \field{journaltitle}{Arxiv preprint arXiv:1601.03790}
    \field{month}{11}
    \field{year}{2016}
  \endentry

  \entry{JZ2014SSP}{inproceedings}{}
    \name{author}{3}{}{%
      {{hash=ZJ}{%
         family={Zhu},
         familyi={Z\bibinitperiod},
         given={J.},
         giveni={J\bibinitperiod},
      }}%
      {{hash=BD}{%
         family={Baron},
         familyi={B\bibinitperiod},
         given={D.},
         giveni={D\bibinitperiod},
      }}%
      {{hash=DMF}{%
         family={Duarte},
         familyi={D\bibinitperiod},
         given={M.\bibnamedelima F.},
         giveni={M\bibinitperiod\bibinitdelim F\bibinitperiod},
      }}%
    }
    \strng{namehash}{ZJBDDMF1}
    \strng{fullhash}{ZJBDDMF1}
    \field{labelnamesource}{author}
    \field{labeltitlesource}{title}
    \field{booktitle}{Proc. IEEE Stat. Signal Process. Workshop (SSP)}
    \field{pages}{416\bibrangedash 419}
    \field{title}{Complexity--adaptive universal signal estimation for
  compressed sensing}
    \list{location}{1}{%
      {Gold Coast, Australia}%
    }
    \field{month}{06}
    \field{year}{2014}
  \endentry

  \entry{ZhuBaronDuarte2014_SLAM}{article}{}
    \name{author}{3}{}{%
      {{hash=ZJ}{%
         family={Zhu},
         familyi={Z\bibinitperiod},
         given={J.},
         giveni={J\bibinitperiod},
      }}%
      {{hash=BD}{%
         family={Baron},
         familyi={B\bibinitperiod},
         given={D.},
         giveni={D\bibinitperiod},
      }}%
      {{hash=DMF}{%
         family={Duarte},
         familyi={D\bibinitperiod},
         given={M.\bibnamedelima F.},
         giveni={M\bibinitperiod\bibinitdelim F\bibinitperiod},
      }}%
    }
    \strng{namehash}{ZJBDDMF1}
    \strng{fullhash}{ZJBDDMF1}
    \field{labelnamesource}{author}
    \field{labeltitlesource}{title}
    \field{number}{6}
    \field{pages}{1512\bibrangedash 1527}
    \field{title}{Recovery from linear measurements with complexity--matching
  universal signal estimation}
    \field{volume}{63}
    \field{journaltitle}{IEEE Trans. Signal Process.}
    \field{month}{03}
    \field{year}{2015}
  \endentry

  \entry{NeriBook}{article}{}
    \name{author}{1}{}{%
      {{hash=MN}{%
         family={Merhav},
         familyi={M\bibinitperiod},
         given={Neri},
         giveni={N\bibinitperiod},
      }}%
    }
    \strng{namehash}{MN1}
    \strng{fullhash}{MN1}
    \field{labelnamesource}{author}
    \field{labeltitlesource}{title}
    \field{number}{1--2}
    \field{pages}{1\bibrangedash 212}
    \field{title}{Statistical physics and information theory}
    \field{volume}{6}
    \field{journaltitle}{Foundations and Trends in Communications and
  Information Theory}
    \field{year}{2010}
  \endentry

  \entry{LBG1980}{article}{}
    \name{author}{3}{}{%
      {{hash=LY}{%
         family={Linde},
         familyi={L\bibinitperiod},
         given={Y.},
         giveni={Y\bibinitperiod},
      }}%
      {{hash=BA}{%
         family={Buzo},
         familyi={B\bibinitperiod},
         given={A.},
         giveni={A\bibinitperiod},
      }}%
      {{hash=GRM}{%
         family={Gray},
         familyi={G\bibinitperiod},
         given={R.\bibnamedelima M.},
         giveni={R\bibinitperiod\bibinitdelim M\bibinitperiod},
      }}%
    }
    \strng{namehash}{LYBAGRM1}
    \strng{fullhash}{LYBAGRM1}
    \field{labelnamesource}{author}
    \field{labeltitlesource}{title}
    \field{number}{1}
    \field{pages}{84\bibrangedash 95}
    \field{title}{An algorithm for vector quantizer design}
    \field{volume}{28}
    \field{journaltitle}{IEEE Trans. Commun.}
    \field{month}{01}
    \field{year}{1980}
  \endentry

  \entry{Gray1984}{article}{}
    \name{author}{1}{}{%
      {{hash=GRM}{%
         family={Gray},
         familyi={G\bibinitperiod},
         given={R.\bibnamedelima M.},
         giveni={R\bibinitperiod\bibinitdelim M\bibinitperiod},
      }}%
    }
    \strng{namehash}{GRM1}
    \strng{fullhash}{GRM1}
    \field{labelnamesource}{author}
    \field{labeltitlesource}{title}
    \field{number}{2}
    \field{pages}{4\bibrangedash 29}
    \field{title}{Vector quantization}
    \field{volume}{1}
    \field{journaltitle}{IEEE ASSP Mag.}
    \field{month}{04}
    \field{year}{1984}
  \endentry

  \entry{FunctionalAnalysisBook}{book}{}
    \name{author}{1}{}{%
      {{hash=KE}{%
         family={Kreyszig},
         familyi={K\bibinitperiod},
         given={Erwin},
         giveni={E\bibinitperiod},
      }}%
    }
    \list{publisher}{1}{%
      {Wiley}%
    }
    \strng{namehash}{KE1}
    \strng{fullhash}{KE1}
    \field{labelnamesource}{author}
    \field{labeltitlesource}{title}
    \field{title}{Introductory Functional Analysis with Applications}
    \field{year}{1989}
  \endentry

  \entry{Arimoto72}{article}{}
    \name{author}{1}{}{%
      {{hash=AS}{%
         family={Arimoto},
         familyi={A\bibinitperiod},
         given={S.},
         giveni={S\bibinitperiod},
      }}%
    }
    \strng{namehash}{AS1}
    \strng{fullhash}{AS1}
    \field{labelnamesource}{author}
    \field{labeltitlesource}{title}
    \field{number}{1}
    \field{pages}{14\bibrangedash 20}
    \field{title}{An algorithm for computing the capacity of an arbitrary
  discrete memoryless channel}
    \field{volume}{18}
    \field{journaltitle}{IEEE Trans. Inf. Theory}
    \field{month}{01}
    \field{year}{1972}
  \endentry

  \entry{Blahut72}{article}{}
    \name{author}{1}{}{%
      {{hash=BRE}{%
         family={Blahut},
         familyi={B\bibinitperiod},
         given={R.\bibnamedelima E.},
         giveni={R\bibinitperiod\bibinitdelim E\bibinitperiod},
      }}%
    }
    \strng{namehash}{BRE1}
    \strng{fullhash}{BRE1}
    \field{labelnamesource}{author}
    \field{labeltitlesource}{title}
    \field{number}{4}
    \field{pages}{460\bibrangedash 473}
    \field{title}{Computation of channel capacity and rate-distortion
  functions}
    \field{volume}{18}
    \field{journaltitle}{IEEE Trans. Inf. Theory}
    \field{month}{07}
    \field{year}{1972}
  \endentry

  \entry{Rose94}{article}{}
    \name{author}{1}{}{%
      {{hash=RK}{%
         family={Rose},
         familyi={R\bibinitperiod},
         given={K.},
         giveni={K\bibinitperiod},
      }}%
    }
    \strng{namehash}{RK1}
    \strng{fullhash}{RK1}
    \field{labelnamesource}{author}
    \field{labeltitlesource}{title}
    \field{number}{6}
    \field{pages}{1939\bibrangedash 1952}
    \field{title}{A mapping approach to rate-distortion computation and
  analysis}
    \field{volume}{40}
    \field{journaltitle}{IEEE Trans. Inf. Theory}
    \field{month}{11}
    \field{year}{1994}
  \endentry

  \entry{DuarteWakinBaronSarvothamBaraniuk2013}{article}{}
    \name{author}{5}{}{%
      {{hash=DMF}{%
         family={Duarte},
         familyi={D\bibinitperiod},
         given={M.\bibnamedelima F.},
         giveni={M\bibinitperiod\bibinitdelim F\bibinitperiod},
      }}%
      {{hash=WMB}{%
         family={Wakin},
         familyi={W\bibinitperiod},
         given={M.\bibnamedelima B.},
         giveni={M\bibinitperiod\bibinitdelim B\bibinitperiod},
      }}%
      {{hash=BD}{%
         family={Baron},
         familyi={B\bibinitperiod},
         given={D.},
         giveni={D\bibinitperiod},
      }}%
      {{hash=SS}{%
         family={Sarvotham},
         familyi={S\bibinitperiod},
         given={S.},
         giveni={S\bibinitperiod},
      }}%
      {{hash=BRG}{%
         family={Baraniuk},
         familyi={B\bibinitperiod},
         given={R.\bibnamedelima G.},
         giveni={R\bibinitperiod\bibinitdelim G\bibinitperiod},
      }}%
    }
    \strng{namehash}{DMFWMBBDSSBRG1}
    \strng{fullhash}{DMFWMBBDSSBRG1}
    \field{labelnamesource}{author}
    \field{labeltitlesource}{title}
    \field{number}{7}
    \field{pages}{4280\bibrangedash 4289}
    \field{title}{Measurement bounds for sparse signal ensembles via graphical
  models}
    \field{volume}{59}
    \field{journaltitle}{IEEE Trans. Inf. Theory}
    \field{month}{07}
    \field{year}{2013}
  \endentry

  \entry{tropp2006ass}{article}{}
    \name{author}{3}{}{%
      {{hash=TJA}{%
         family={Tropp},
         familyi={T\bibinitperiod},
         given={J.\bibnamedelima A.},
         giveni={J\bibinitperiod\bibinitdelim A\bibinitperiod},
      }}%
      {{hash=GAC}{%
         family={Gilbert},
         familyi={G\bibinitperiod},
         given={A.\bibnamedelima C.},
         giveni={A\bibinitperiod\bibinitdelim C\bibinitperiod},
      }}%
      {{hash=SMJ}{%
         family={Strauss},
         familyi={S\bibinitperiod},
         given={M.\bibnamedelima J.},
         giveni={M\bibinitperiod\bibinitdelim J\bibinitperiod},
      }}%
    }
    \strng{namehash}{TJAGACSMJ1}
    \strng{fullhash}{TJAGACSMJ1}
    \field{labelnamesource}{author}
    \field{labeltitlesource}{title}
    \field{number}{3}
    \field{pages}{572\bibrangedash 588}
    \field{title}{Algorithms for simultaneous sparse approximation. Part {I}:
  Greedy pursuit.}
    \field{volume}{86}
    \field{journaltitle}{Signal Process.}
    \field{month}{03}
    \field{year}{2006}
  \endentry

  \entry{malioutov2005ssr}{article}{}
    \name{author}{3}{}{%
      {{hash=MD}{%
         family={Malioutov},
         familyi={M\bibinitperiod},
         given={D.},
         giveni={D\bibinitperiod},
      }}%
      {{hash=CM}{%
         family={Cetin},
         familyi={C\bibinitperiod},
         given={M.},
         giveni={M\bibinitperiod},
      }}%
      {{hash=WAS}{%
         family={Willsky},
         familyi={W\bibinitperiod},
         given={A.\bibnamedelima S.},
         giveni={A\bibinitperiod\bibinitdelim S\bibinitperiod},
      }}%
    }
    \strng{namehash}{MDCMWAS1}
    \strng{fullhash}{MDCMWAS1}
    \field{labelnamesource}{author}
    \field{labeltitlesource}{title}
    \field{number}{8}
    \field{pages}{3010\bibrangedash 3022}
    \field{title}{A sparse signal reconstruction perspective for source
  localization with sensor arrays}
    \field{volume}{53}
    \field{journaltitle}{IEEE Trans. Signal Process.}
    \field{month}{08}
    \field{year}{2005}
  \endentry

  \entry{tropp2006ass2}{article}{}
    \name{author}{1}{}{%
      {{hash=TJA}{%
         family={Tropp},
         familyi={T\bibinitperiod},
         given={J.\bibnamedelima A.},
         giveni={J\bibinitperiod\bibinitdelim A\bibinitperiod},
      }}%
    }
    \strng{namehash}{TJA1}
    \strng{fullhash}{TJA1}
    \field{labelnamesource}{author}
    \field{labeltitlesource}{title}
    \field{number}{3}
    \field{pages}{589\bibrangedash 602}
    \field{title}{Algorithms for simultaneous sparse approximation. Part {II}:
  Convex relaxation}
    \field{volume}{86}
    \field{journaltitle}{Signal Process.}
    \field{month}{03}
    \field{year}{2006}
  \endentry

  \entry{LeeBreslerJunge2012}{article}{}
    \name{author}{3}{}{%
      {{hash=LK}{%
         family={Lee},
         familyi={L\bibinitperiod},
         given={K.},
         giveni={K\bibinitperiod},
      }}%
      {{hash=BY}{%
         family={Bresler},
         familyi={B\bibinitperiod},
         given={Y.},
         giveni={Y\bibinitperiod},
      }}%
      {{hash=JM}{%
         family={Junge},
         familyi={J\bibinitperiod},
         given={M.},
         giveni={M\bibinitperiod},
      }}%
    }
    \strng{namehash}{LKBYJM1}
    \strng{fullhash}{LKBYJM1}
    \field{labelnamesource}{author}
    \field{labeltitlesource}{title}
    \field{number}{6}
    \field{pages}{3613\bibrangedash 3641}
    \field{title}{Subspace methods for joint sparse recovery}
    \field{volume}{58}
    \field{journaltitle}{IEEE Trans. Inf. Theory}
    \field{month}{06}
    \field{year}{2012}
  \endentry

  \entry{YeKimBresler2015}{article}{}
    \name{author}{3}{}{%
      {{hash=YJC}{%
         family={Ye},
         familyi={Y\bibinitperiod},
         given={J.\bibnamedelima C.},
         giveni={J\bibinitperiod\bibinitdelim C\bibinitperiod},
      }}%
      {{hash=KJM}{%
         family={Kim},
         familyi={K\bibinitperiod},
         given={J.\bibnamedelima M.},
         giveni={J\bibinitperiod\bibinitdelim M\bibinitperiod},
      }}%
      {{hash=BY}{%
         family={Bresler},
         familyi={B\bibinitperiod},
         given={Y.},
         giveni={Y\bibinitperiod},
      }}%
    }
    \strng{namehash}{YJCKJMBY1}
    \strng{fullhash}{YJCKJMBY1}
    \field{labelnamesource}{author}
    \field{labeltitlesource}{title}
    \field{number}{24}
    \field{pages}{6595\bibrangedash 6605}
    \field{title}{Improving {M-SBL} for joint sparse recovery using a subspace
  penalty}
    \field{volume}{63}
    \field{journaltitle}{IEEE Trans. Signal Process.}
    \field{month}{12}
    \field{year}{2015}
  \endentry

  \entry{ZinielSchniter2011}{inproceedings}{}
    \name{author}{2}{}{%
      {{hash=ZJ}{%
         family={Ziniel},
         familyi={Z\bibinitperiod},
         given={J.},
         giveni={J\bibinitperiod},
      }}%
      {{hash=SP}{%
         family={Schniter},
         familyi={S\bibinitperiod},
         given={P.},
         giveni={P\bibinitperiod},
      }}%
    }
    \strng{namehash}{ZJSP1}
    \strng{fullhash}{ZJSP1}
    \field{labelnamesource}{author}
    \field{labeltitlesource}{title}
    \field{booktitle}{Proc. IEEE Asilomar Conf. Signals, Syst., and Comput.}
    \field{pages}{1447\bibrangedash 1451}
    \field{title}{Efficient message passing-based inference in the multiple
  measurement vector problem}
    \field{month}{11}
    \field{year}{2011}
  \endentry

  \entry{pottie2000}{article}{}
    \name{author}{2}{}{%
      {{hash=PGJ}{%
         family={Pottie},
         familyi={P\bibinitperiod},
         given={G.\bibnamedelima J.},
         giveni={G\bibinitperiod\bibinitdelim J\bibinitperiod},
      }}%
      {{hash=KWJ}{%
         family={Kaiser},
         familyi={K\bibinitperiod},
         given={W.\bibnamedelima J.},
         giveni={W\bibinitperiod\bibinitdelim J\bibinitperiod},
      }}%
    }
    \strng{namehash}{PGJKWJ1}
    \strng{fullhash}{PGJKWJ1}
    \field{labelnamesource}{author}
    \field{labeltitlesource}{title}
    \field{number}{5}
    \field{pages}{51\bibrangedash 58}
    \field{title}{{Wireless integrated network sensors}}
    \field{volume}{43}
    \field{journaltitle}{Commun. ACM}
    \field{month}{05}
    \field{year}{2000}
  \endentry

  \entry{BDMK2016}{article}{}
    \name{author}{4}{}{%
      {{hash=BJ}{%
         family={Barbier},
         familyi={B\bibinitperiod},
         given={J.},
         giveni={J\bibinitperiod},
      }}%
      {{hash=DM}{%
         family={Dia},
         familyi={D\bibinitperiod},
         given={M.},
         giveni={M\bibinitperiod},
      }}%
      {{hash=MN}{%
         family={Macris},
         familyi={M\bibinitperiod},
         given={N.},
         giveni={N\bibinitperiod},
      }}%
      {{hash=KF}{%
         family={Krzakala},
         familyi={K\bibinitperiod},
         given={F.},
         giveni={F\bibinitperiod},
      }}%
    }
    \strng{namehash}{BJDMMNKF1}
    \strng{fullhash}{BJDMMNKF1}
    \field{labelnamesource}{author}
    \field{labeltitlesource}{title}
    \field{title}{The mutual information in random linear estimation}
    \field{journaltitle}{Arxiv preprint arXiv:1607.02335}
    \field{month}{07}
    \field{year}{2016}
  \endentry

  \entry{ReevesPfister2016}{article}{}
    \name{author}{2}{}{%
      {{hash=RG}{%
         family={Reeves},
         familyi={R\bibinitperiod},
         given={G.},
         giveni={G\bibinitperiod},
      }}%
      {{hash=PHD}{%
         family={Pfister},
         familyi={P\bibinitperiod},
         given={H.\bibnamedelima D.},
         giveni={H\bibinitperiod\bibinitdelim D\bibinitperiod},
      }}%
    }
    \strng{namehash}{RGPHD1}
    \strng{fullhash}{RGPHD1}
    \field{labelnamesource}{author}
    \field{labeltitlesource}{title}
    \field{title}{The replica-symmetric prediction for compressed sensing with
  {G}aussian matrices is exact}
    \field{journaltitle}{Arxiv preprint arXiv:1607.02524}
    \field{month}{07}
    \field{year}{2016}
  \endentry

  \entry{ZinielSchniter2013MMV}{article}{}
    \name{author}{2}{}{%
      {{hash=ZJ}{%
         family={Ziniel},
         familyi={Z\bibinitperiod},
         given={J.},
         giveni={J\bibinitperiod},
      }}%
      {{hash=SP}{%
         family={Schniter},
         familyi={S\bibinitperiod},
         given={P.},
         giveni={P\bibinitperiod},
      }}%
    }
    \strng{namehash}{ZJSP1}
    \strng{fullhash}{ZJSP1}
    \field{labelnamesource}{author}
    \field{labeltitlesource}{title}
    \field{number}{2}
    \field{pages}{340\bibrangedash 354}
    \field{title}{Efficient high-dimensional inference in the multiple
  measurement vector problem}
    \field{volume}{61}
    \field{journaltitle}{IEEE Trans. Signal Process.}
    \field{month}{01}
    \field{year}{2013}
  \endentry

  \entry{WuVerdu2012}{article}{}
    \name{author}{2}{}{%
      {{hash=WY}{%
         family={Wu},
         familyi={W\bibinitperiod},
         given={Y.},
         giveni={Y\bibinitperiod},
      }}%
      {{hash=VS}{%
         family={Verd{\'u}},
         familyi={V\bibinitperiod},
         given={S.},
         giveni={S\bibinitperiod},
      }}%
    }
    \strng{namehash}{WYVS1}
    \strng{fullhash}{WYVS1}
    \field{labelnamesource}{author}
    \field{labeltitlesource}{title}
    \field{number}{10}
    \field{pages}{6241\bibrangedash 6263}
    \field{title}{Optimal phase transitions in compressed sensing}
    \field{volume}{58}
    \field{journaltitle}{{IEEE} Trans. Inf. Theory}
    \field{month}{10}
    \field{year}{2012}
  \endentry

  \entry{KimChangJungBaronYe2011}{article}{}
    \name{author}{5}{}{%
      {{hash=KJ}{%
         family={Kim},
         familyi={K\bibinitperiod},
         given={J.},
         giveni={J\bibinitperiod},
      }}%
      {{hash=CW}{%
         family={Chang},
         familyi={C\bibinitperiod},
         given={W.},
         giveni={W\bibinitperiod},
      }}%
      {{hash=JB}{%
         family={Jung},
         familyi={J\bibinitperiod},
         given={B.},
         giveni={B\bibinitperiod},
      }}%
      {{hash=BD}{%
         family={Baron},
         familyi={B\bibinitperiod},
         given={D.},
         giveni={D\bibinitperiod},
      }}%
      {{hash=YJC}{%
         family={Ye},
         familyi={Y\bibinitperiod},
         given={J.\bibnamedelima C.},
         giveni={J\bibinitperiod\bibinitdelim C\bibinitperiod},
      }}%
    }
    \strng{namehash}{KJCWJBBDYJC1}
    \strng{fullhash}{KJCWJBBDYJC1}
    \field{labelnamesource}{author}
    \field{labeltitlesource}{title}
    \field{title}{Belief propagation for jointly sparse recovery}
    \field{journaltitle}{Arxiv preprint arXiv:1102.3289}
    \field{month}{02}
    \field{year}{2011}
  \endentry

  \entry{DMM2011}{article}{}
    \name{author}{3}{}{%
      {{hash=DDL}{%
         family={Donoho},
         familyi={D\bibinitperiod},
         given={D.\bibnamedelima L.},
         giveni={D\bibinitperiod\bibinitdelim L\bibinitperiod},
      }}%
      {{hash=MA}{%
         family={Maleki},
         familyi={M\bibinitperiod},
         given={A.},
         giveni={A\bibinitperiod},
      }}%
      {{hash=MA}{%
         family={Montanari},
         familyi={M\bibinitperiod},
         given={A.},
         giveni={A\bibinitperiod},
      }}%
    }
    \strng{namehash}{DDLMAMA1}
    \strng{fullhash}{DDLMAMA1}
    \field{labelnamesource}{author}
    \field{labeltitlesource}{title}
    \field{number}{10}
    \field{pages}{6920\bibrangedash 6941}
    \field{title}{The noise-sensitivity phase transition in compressed sensing}
    \field{volume}{57}
    \field{journaltitle}{IEEE Trans. Inf. Theory}
    \field{month}{10}
    \field{year}{2011}
  \endentry

  \entry{JavanmardMontanari2012}{article}{}
    \name{author}{2}{}{%
      {{hash=JA}{%
         family={Javanmard},
         familyi={J\bibinitperiod},
         given={A.},
         giveni={A\bibinitperiod},
      }}%
      {{hash=MA}{%
         family={Montanari},
         familyi={M\bibinitperiod},
         given={A.},
         giveni={A\bibinitperiod},
      }}%
    }
    \strng{namehash}{JAMA1}
    \strng{fullhash}{JAMA1}
    \field{labelnamesource}{author}
    \field{labeltitlesource}{title}
    \field{title}{State evolution for general approximate message passing
  algorithms, with applications to spatial coupling}
    \field{journaltitle}{Arxiv preprint arXiv:1211.5164}
    \field{month}{12}
    \field{year}{2012}
  \endentry

  \entry{Donoho2013}{article}{}
    \name{author}{3}{}{%
      {{hash=DD}{%
         family={Donoho},
         familyi={D\bibinitperiod},
         given={D.},
         giveni={D\bibinitperiod},
      }}%
      {{hash=JI}{%
         family={Johnstone},
         familyi={J\bibinitperiod},
         given={I.},
         giveni={I\bibinitperiod},
      }}%
      {{hash=MA}{%
         family={Montanari},
         familyi={M\bibinitperiod},
         given={A.},
         giveni={A\bibinitperiod},
      }}%
    }
    \strng{namehash}{DDJIMA1}
    \strng{fullhash}{DDJIMA1}
    \field{labelnamesource}{author}
    \field{labeltitlesource}{title}
    \field{number}{6}
    \field{pages}{3396\bibrangedash 3433}
    \field{title}{Accurate prediction of phase transitions in compressed
  sensing via a connection to minimax denoising}
    \field{volume}{59}
    \field{journaltitle}{IEEE Trans. Inf. Theory}
    \field{month}{06}
    \field{year}{2013}
  \endentry

  \entry{Bayati2015}{article}{}
    \name{author}{3}{}{%
      {{hash=BM}{%
         family={Bayati},
         familyi={B\bibinitperiod},
         given={M.},
         giveni={M\bibinitperiod},
      }}%
      {{hash=LM}{%
         family={Lelarge},
         familyi={L\bibinitperiod},
         given={M.},
         giveni={M\bibinitperiod},
      }}%
      {{hash=MA}{%
         family={Montanari},
         familyi={M\bibinitperiod},
         given={A.},
         giveni={A\bibinitperiod},
      }}%
    }
    \strng{namehash}{BMLMMA1}
    \strng{fullhash}{BMLMMA1}
    \field{labelnamesource}{author}
    \field{labeltitlesource}{title}
    \field{number}{2}
    \field{pages}{753\bibrangedash 822}
    \field{title}{Universality in polytope phase transitions and message
  passing algorithms}
    \field{volume}{25}
    \field{journaltitle}{Ann. Appl. Probability}
    \field{month}{02}
    \field{year}{2015}
  \endentry

  \entry{Rush_ISIT2016_arxiv}{article}{}
    \name{author}{2}{}{%
      {{hash=RC}{%
         family={Rush},
         familyi={R\bibinitperiod},
         given={C.},
         giveni={C\bibinitperiod},
      }}%
      {{hash=VR}{%
         family={Venkataramanan},
         familyi={V\bibinitperiod},
         given={R.},
         giveni={R\bibinitperiod},
      }}%
    }
    \strng{namehash}{RCVR1}
    \strng{fullhash}{RCVR1}
    \field{labelnamesource}{author}
    \field{labeltitlesource}{title}
    \field{title}{Finite-Sample Analysis of Approximate Message Passing}
    \field{journaltitle}{Arxiv preprint arXiv:1606.01800}
    \field{month}{06}
    \field{year}{2016}
  \endentry

  \entry{GuoBaronShamai2009}{inproceedings}{}
    \name{author}{3}{}{%
      {{hash=GD}{%
         family={Guo},
         familyi={G\bibinitperiod},
         given={D.},
         giveni={D\bibinitperiod},
      }}%
      {{hash=BD}{%
         family={Baron},
         familyi={B\bibinitperiod},
         given={D.},
         giveni={D\bibinitperiod},
      }}%
      {{hash=SS}{%
         family={Shamai},
         familyi={S\bibinitperiod},
         given={S.},
         giveni={S\bibinitperiod},
      }}%
    }
    \strng{namehash}{GDBDSS1}
    \strng{fullhash}{GDBDSS1}
    \field{labelnamesource}{author}
    \field{labeltitlesource}{title}
    \field{booktitle}{Proc. Allerton Conference Commun., Control, and Comput.}
    \field{pages}{52\bibrangedash 59}
    \field{title}{A Single-letter Characterization of Optimal Noisy Compressed
  Sensing}
    \field{month}{09}
    \field{year}{2009}
  \endentry

  \entry{estrin2002}{article}{}
    \name{author}{4}{}{%
      {{hash=ED}{%
         family={Estrin},
         familyi={E\bibinitperiod},
         given={D.},
         giveni={D\bibinitperiod},
      }}%
      {{hash=CD}{%
         family={Culler},
         familyi={C\bibinitperiod},
         given={D.},
         giveni={D\bibinitperiod},
      }}%
      {{hash=PK}{%
         family={Pister},
         familyi={P\bibinitperiod},
         given={K.},
         giveni={K\bibinitperiod},
      }}%
      {{hash=SG}{%
         family={Sukhatme},
         familyi={S\bibinitperiod},
         given={G.},
         giveni={G\bibinitperiod},
      }}%
    }
    \strng{namehash}{EDCDPKSG1}
    \strng{fullhash}{EDCDPKSG1}
    \field{labelnamesource}{author}
    \field{labeltitlesource}{title}
    \field{number}{1}
    \field{pages}{59\bibrangedash 69}
    \field{title}{{Connecting the physical world with pervasive networks}}
    \field{volume}{1}
    \field{journaltitle}{IEEE Pervasive Comput.}
    \field{month}{01}
    \field{year}{2002}
  \endentry

  \entry{EC2}{inproceedings}{}
    \field{labeltitlesource}{title}
    \field{note}{https://aws.amazon.com/ec2/}
    \field{title}{Amazon {E}{C}2}
  \endentry

  \entry{bertsekas1995}{book}{}
    \name{author}{1}{}{%
      {{hash=BDP}{%
         family={Bertsekas},
         familyi={B\bibinitperiod},
         given={D.\bibnamedelima P.},
         giveni={D\bibinitperiod\bibinitdelim P\bibinitperiod},
      }}%
    }
    \list{publisher}{1}{%
      {Athena Scientific Belmont, MA}%
    }
    \strng{namehash}{BDP1}
    \strng{fullhash}{BDP1}
    \field{labelnamesource}{author}
    \field{labeltitlesource}{title}
    \field{title}{Dynamic {P}rogramming and {O}ptimal {C}ontrol}
    \field{volume}{1}
    \field{year}{1995}
  \endentry

  \entry{DasDennisPareto1998}{article}{}
    \name{author}{2}{}{%
      {{hash=DI}{%
         family={Das},
         familyi={D\bibinitperiod},
         given={I.},
         giveni={I\bibinitperiod},
      }}%
      {{hash=DJE}{%
         family={Dennis},
         familyi={D\bibinitperiod},
         given={J.\bibnamedelima E.},
         giveni={J\bibinitperiod\bibinitdelim E\bibinitperiod},
      }}%
    }
    \strng{namehash}{DIDJE1}
    \strng{fullhash}{DIDJE1}
    \field{labelnamesource}{author}
    \field{labeltitlesource}{title}
    \field{number}{3}
    \field{pages}{631\bibrangedash 657}
    \field{title}{Normal-boundary intersection: A new method for generating the
  {P}areto surface in nonlinear multicriteria optimization problems}
    \field{volume}{8}
    \field{journaltitle}{SIAM J. Optimization}
    \field{month}{08}
    \field{year}{1998}
  \endentry

  \entry{Frasca2008}{article}{}
    \name{author}{4}{}{%
      {{hash=FP}{%
         family={Frasca},
         familyi={F\bibinitperiod},
         given={P.},
         giveni={P\bibinitperiod},
      }}%
      {{hash=CR}{%
         family={Carli},
         familyi={C\bibinitperiod},
         given={R.},
         giveni={R\bibinitperiod},
      }}%
      {{hash=FF}{%
         family={Fagnani},
         familyi={F\bibinitperiod},
         given={F.},
         giveni={F\bibinitperiod},
      }}%
      {{hash=ZS}{%
         family={Zampieri},
         familyi={Z\bibinitperiod},
         given={S.},
         giveni={S\bibinitperiod},
      }}%
    }
    \strng{namehash}{FPCRFFZS1}
    \strng{fullhash}{FPCRFFZS1}
    \field{labelnamesource}{author}
    \field{labeltitlesource}{title}
    \field{number}{16}
    \field{pages}{1787\bibrangedash 1816}
    \field{title}{Average consensus on networks with quantized communication}
    \field{volume}{19}
    \field{journaltitle}{Int. J. Robust Nonlinear Control}
    \field{month}{11}
    \field{year}{2008}
  \endentry

  \entry{MaBaronBeirami2015ISIT}{inproceedings}{}
    \name{author}{3}{}{%
      {{hash=MY}{%
         family={Ma},
         familyi={M\bibinitperiod},
         given={Y.},
         giveni={Y\bibinitperiod},
      }}%
      {{hash=BD}{%
         family={Baron},
         familyi={B\bibinitperiod},
         given={D.},
         giveni={D\bibinitperiod},
      }}%
      {{hash=BA}{%
         family={Beirami},
         familyi={B\bibinitperiod},
         given={A.},
         giveni={A\bibinitperiod},
      }}%
    }
    \strng{namehash}{MYBDBA1}
    \strng{fullhash}{MYBDBA1}
    \field{labelnamesource}{author}
    \field{labeltitlesource}{title}
    \field{booktitle}{Proc. IEEE Int. Symp. Inf. Theory (ISIT)}
    \field{pages}{760\bibrangedash 764}
    \field{title}{Mismatched Estimation in Large Linear Systems}
    \list{location}{1}{%
      {Hong Kong, China}%
    }
    \field{month}{07}
    \field{year}{2015}
  \endentry

  \entry{widrow2008quantization}{book}{}
    \name{author}{2}{}{%
      {{hash=WB}{%
         family={Widrow},
         familyi={W\bibinitperiod},
         given={B.},
         giveni={B\bibinitperiod},
      }}%
      {{hash=KI}{%
         family={Koll{\'a}r},
         familyi={K\bibinitperiod},
         given={I.},
         giveni={I\bibinitperiod},
      }}%
    }
    \list{publisher}{1}{%
      {Cambridge University press}%
    }
    \strng{namehash}{WBKI1}
    \strng{fullhash}{WBKI1}
    \field{labelnamesource}{author}
    \field{labeltitlesource}{title}
    \field{title}{Quantization Noise: Roundoff Error in Digital Computation,
  Signal Processing, Control, and Communications}
    \field{year}{2008}
  \endentry

  \entry{GrayNeuhoff1998}{article}{}
    \name{author}{2}{}{%
      {{hash=GRM}{%
         family={Gray},
         familyi={G\bibinitperiod},
         given={R.\bibnamedelima M.},
         giveni={R\bibinitperiod\bibinitdelim M\bibinitperiod},
      }}%
      {{hash=NDL}{%
         family={Neuhoff},
         familyi={N\bibinitperiod},
         given={D.\bibnamedelima L.},
         giveni={D\bibinitperiod\bibinitdelim L\bibinitperiod},
      }}%
    }
    \strng{namehash}{GRMNDL1}
    \strng{fullhash}{GRMNDL1}
    \field{labelnamesource}{author}
    \field{labeltitlesource}{title}
    \field{pages}{2325\bibrangedash 2383}
    \field{title}{Quantization}
    \field{volume}{IT-44}
    \field{journaltitle}{IEEE Trans. Inf. Theory}
    \field{month}{10}
    \field{year}{1998}
  \endentry

  \entry{WuVerdu2011}{article}{}
    \name{author}{2}{}{%
      {{hash=WY}{%
         family={Wu},
         familyi={W\bibinitperiod},
         given={Y.},
         giveni={Y\bibinitperiod},
      }}%
      {{hash=VS}{%
         family={Verd{\'u}},
         familyi={V\bibinitperiod},
         given={S.},
         giveni={S\bibinitperiod},
      }}%
    }
    \strng{namehash}{WYVS1}
    \strng{fullhash}{WYVS1}
    \field{labelnamesource}{author}
    \field{labeltitlesource}{title}
    \field{number}{8}
    \field{pages}{4857\bibrangedash 4879}
    \field{title}{{MMSE} Dimension}
    \field{volume}{57}
    \field{journaltitle}{{IEEE} Trans. Inf. Theory}
    \field{month}{08}
    \field{year}{2011}
  \endentry

  \entry{CC2530}{manual}{}
    \list{organization}{1}{%
      {Texas Instruments}%
    }
    \field{labeltitlesource}{title}
    \field{note}{Rev. B}
    \field{number}{SWRS081B}
    \field{title}{A true system-on-chip solution for 2.4-{GH}z {IEEE} 802.15.4
  and {Z}ig{B}ee applications}
    \field{month}{04}
    \field{year}{2009}
  \endentry

  \entry{Tan_CompressiveImage2014}{article}{}
    \name{author}{3}{}{%
      {{hash=TJ}{%
         family={Tan},
         familyi={T\bibinitperiod},
         given={J.},
         giveni={J\bibinitperiod},
      }}%
      {{hash=MY}{%
         family={Ma},
         familyi={M\bibinitperiod},
         given={Y.},
         giveni={Y\bibinitperiod},
      }}%
      {{hash=BD}{%
         family={Baron},
         familyi={B\bibinitperiod},
         given={D.},
         giveni={D\bibinitperiod},
      }}%
    }
    \strng{namehash}{TJMYBD1}
    \strng{fullhash}{TJMYBD1}
    \field{labelnamesource}{author}
    \field{labeltitlesource}{title}
    \field{number}{8}
    \field{pages}{2085\bibrangedash 2092}
    \field{title}{Compressive imaging via approximate message passing with
  image denoising}
    \field{volume}{63}
    \field{journaltitle}{IEEE Trans. Signal Process.}
    \field{month}{04}
    \field{year}{2015}
  \endentry

  \entry{LZ77}{article}{}
    \name{author}{2}{}{%
      {{hash=ZJ}{%
         family={Ziv},
         familyi={Z\bibinitperiod},
         given={J.},
         giveni={J\bibinitperiod},
      }}%
      {{hash=LA}{%
         family={Lempel},
         familyi={L\bibinitperiod},
         given={A.},
         giveni={A\bibinitperiod},
      }}%
    }
    \strng{namehash}{ZJLA1}
    \strng{fullhash}{ZJLA1}
    \field{labelnamesource}{author}
    \field{labeltitlesource}{title}
    \field{number}{3}
    \field{pages}{337\bibrangedash 343}
    \field{title}{{A universal algorithm for sequential data compression}}
    \field{volume}{23}
    \field{journaltitle}{IEEE Trans. Inf. Theory}
    \field{month}{05}
    \field{year}{1977}
  \endentry

  \entry{Rissanen1983}{article}{}
    \name{author}{1}{}{%
      {{hash=RJ}{%
         family={Rissanen},
         familyi={R\bibinitperiod},
         given={J.},
         giveni={J\bibinitperiod},
      }}%
    }
    \strng{namehash}{RJ1}
    \strng{fullhash}{RJ1}
    \field{labelnamesource}{author}
    \field{labeltitlesource}{title}
    \field{number}{5}
    \field{pages}{656\bibrangedash 664}
    \field{title}{{A universal data compression system}}
    \field{volume}{29}
    \field{journaltitle}{IEEE Trans. Inf. Theory}
    \field{month}{09}
    \field{year}{1983}
  \endentry

  \entry{Ramirez2010}{article}{}
    \name{author}{2}{}{%
      {{hash=RI}{%
         family={Ramirez},
         familyi={R\bibinitperiod},
         given={I.},
         giveni={I\bibinitperiod},
      }}%
      {{hash=SG}{%
         family={Sapiro},
         familyi={S\bibinitperiod},
         given={G.},
         giveni={G\bibinitperiod},
      }}%
    }
    \strng{namehash}{RISG2}
    \strng{fullhash}{RISG2}
    \field{labelnamesource}{author}
    \field{labeltitlesource}{title}
    \field{number}{9}
    \field{pages}{3850\bibrangedash 3864}
    \field{title}{Universal regularizers for robust sparse coding and modeling}
    \field{volume}{21}
    \field{journaltitle}{IEEE Trans. Image Process.}
    \field{month}{09}
    \field{year}{2012}
  \endentry

  \entry{DonohoKolmogorov}{report}{}
    \name{author}{1}{}{%
      {{hash=DDL}{%
         family={Donoho},
         familyi={D\bibinitperiod},
         given={D.\bibnamedelima L.},
         giveni={D\bibinitperiod\bibinitdelim L\bibinitperiod},
      }}%
    }
    \strng{namehash}{DDL1}
    \strng{fullhash}{DDL1}
    \field{labelnamesource}{author}
    \field{labeltitlesource}{title}
    \field{number}{2002-4}
    \field{title}{The {K}olmogorov sampler}
    \list{location}{1}{%
      {Stanford, CA}%
    }
    \list{institution}{1}{%
      {Stanford University}%
    }
    \field{type}{Department of Statistics Technical Report}
    \field{month}{01}
    \field{year}{2002}
  \endentry

  \entry{Chaitin1966}{article}{}
    \name{author}{1}{}{%
      {{hash=CGJ}{%
         family={Chaitin},
         familyi={C\bibinitperiod},
         given={G.\bibnamedelima J.},
         giveni={G\bibinitperiod\bibinitdelim J\bibinitperiod},
      }}%
    }
    \strng{namehash}{CGJ1}
    \strng{fullhash}{CGJ1}
    \field{labelnamesource}{author}
    \field{labeltitlesource}{title}
    \field{number}{4}
    \field{pages}{547\bibrangedash 569}
    \field{title}{On the length of programs for computing finite binary
  sequences}
    \field{volume}{13}
    \field{journaltitle}{J. ACM}
    \field{year}{1966}
  \endentry

  \entry{Solomonoff1964}{article}{}
    \name{author}{1}{}{%
      {{hash=SRJ}{%
         family={Solomonoff},
         familyi={S\bibinitperiod},
         given={R.\bibnamedelima J.},
         giveni={R\bibinitperiod\bibinitdelim J\bibinitperiod},
      }}%
    }
    \strng{namehash}{SRJ1}
    \strng{fullhash}{SRJ1}
    \field{labelnamesource}{author}
    \field{labeltitlesource}{title}
    \field{number}{1}
    \field{pages}{1\bibrangedash 22}
    \field{title}{A formal theory of inductive inference. {Part I}}
    \field{volume}{7}
    \field{journaltitle}{Inf. and Control}
    \field{month}{03}
    \field{year}{1964}
  \endentry

  \entry{Kolmogorov1965}{article}{}
    \name{author}{1}{}{%
      {{hash=KAN}{%
         family={Kolmogorov},
         familyi={K\bibinitperiod},
         given={A.\bibnamedelima N.},
         giveni={A\bibinitperiod\bibinitdelim N\bibinitperiod},
      }}%
    }
    \strng{namehash}{KAN1}
    \strng{fullhash}{KAN1}
    \field{labelnamesource}{author}
    \field{labeltitlesource}{title}
    \field{number}{1}
    \field{pages}{1\bibrangedash 7}
    \field{title}{Three approaches to the quantitative definition of
  information}
    \field{volume}{1}
    \field{journaltitle}{Problems Inf. Transmission}
    \field{year}{1965}
  \endentry

  \entry{JalaliMaleki2011}{inproceedings}{}
    \name{author}{2}{}{%
      {{hash=JS}{%
         family={Jalali},
         familyi={J\bibinitperiod},
         given={S.},
         giveni={S\bibinitperiod},
      }}%
      {{hash=MA}{%
         family={Maleki},
         familyi={M\bibinitperiod},
         given={A.},
         giveni={A\bibinitperiod},
      }}%
    }
    \strng{namehash}{JSMA1}
    \strng{fullhash}{JSMA1}
    \field{labelnamesource}{author}
    \field{labeltitlesource}{title}
    \field{booktitle}{Proc. Allerton Conference Commun., Control, Comput.}
    \field{pages}{1764\bibrangedash 1770}
    \field{title}{Minimum complexity pursuit}
    \field{month}{09}
    \field{year}{2011}
  \endentry

  \entry{JalaliMalekiRichB2014}{article}{}
    \name{author}{3}{}{%
      {{hash=JS}{%
         family={Jalali},
         familyi={J\bibinitperiod},
         given={S.},
         giveni={S\bibinitperiod},
      }}%
      {{hash=MA}{%
         family={Maleki},
         familyi={M\bibinitperiod},
         given={A.},
         giveni={A\bibinitperiod},
      }}%
      {{hash=BRG}{%
         family={Baraniuk},
         familyi={B\bibinitperiod},
         given={R.\bibnamedelima G.},
         giveni={R\bibinitperiod\bibinitdelim G\bibinitperiod},
      }}%
    }
    \strng{namehash}{JSMABRG1}
    \strng{fullhash}{JSMABRG1}
    \field{labelnamesource}{author}
    \field{labeltitlesource}{title}
    \field{number}{4}
    \field{pages}{2253\bibrangedash 2268}
    \field{title}{Minimum complexity pursuit for universal compressed sensing}
    \field{volume}{60}
    \field{journaltitle}{IEEE Trans. Inf. Theory}
    \field{month}{04}
    \field{year}{2014}
  \endentry

  \entry{BaronFinland2011}{inproceedings}{}
    \name{author}{1}{}{%
      {{hash=BD}{%
         family={Baron},
         familyi={B\bibinitperiod},
         given={D.},
         giveni={D\bibinitperiod},
      }}%
    }
    \strng{namehash}{BD1}
    \strng{fullhash}{BD1}
    \field{labelnamesource}{author}
    \field{labeltitlesource}{title}
    \field{booktitle}{Workshop Inf. Theoretic Methods Sci. Eng. (WITMSE)}
    \field{title}{Information complexity and estimation}
    \list{location}{1}{%
      {Helsinki, Finland}%
    }
    \field{month}{08}
    \field{year}{2011}
  \endentry

  \entry{BaronDuarteAllerton2011}{inproceedings}{}
    \name{author}{2}{}{%
      {{hash=BD}{%
         family={Baron},
         familyi={B\bibinitperiod},
         given={D.},
         giveni={D\bibinitperiod},
      }}%
      {{hash=DMF}{%
         family={Duarte},
         familyi={D\bibinitperiod},
         given={M.\bibnamedelima F.},
         giveni={M\bibinitperiod\bibinitdelim F\bibinitperiod},
      }}%
    }
    \strng{namehash}{BDDMF1}
    \strng{fullhash}{BDDMF1}
    \field{labelnamesource}{author}
    \field{labeltitlesource}{title}
    \field{booktitle}{Proc. Allerton Conference Commun., Control, and Comput.}
    \field{pages}{768\bibrangedash 775}
    \field{title}{Universal {MAP} estimation in compressed sensing}
    \field{month}{09}
    \field{year}{2011}
  \endentry

  \entry{Rissanen1978}{article}{}
    \name{author}{1}{}{%
      {{hash=RJ}{%
         family={Rissanen},
         familyi={R\bibinitperiod},
         given={J.},
         giveni={J\bibinitperiod},
      }}%
    }
    \strng{namehash}{RJ1}
    \strng{fullhash}{RJ1}
    \field{labelnamesource}{author}
    \field{labeltitlesource}{title}
    \field{number}{5}
    \field{pages}{465\bibrangedash 471}
    \field{title}{Modeling by shortest data description}
    \field{volume}{14}
    \field{journaltitle}{Automatica}
    \field{month}{09}
    \field{year}{1978}
  \endentry

  \entry{schwarz1978estimating}{article}{}
    \name{author}{1}{}{%
      {{hash=SG}{%
         family={Schwarz},
         familyi={S\bibinitperiod},
         given={G.},
         giveni={G\bibinitperiod},
      }}%
    }
    \list{publisher}{1}{%
      {Institute of Mathematical Statistics}%
    }
    \strng{namehash}{SG1}
    \strng{fullhash}{SG1}
    \field{labelnamesource}{author}
    \field{labeltitlesource}{title}
    \field{number}{2}
    \field{pages}{461\bibrangedash 464}
    \field{title}{Estimating the dimension of a model}
    \field{volume}{6}
    \field{journaltitle}{Ann. Stat.}
    \field{month}{03}
    \field{year}{1978}
  \endentry

  \entry{Wallace1968}{article}{}
    \name{author}{2}{}{%
      {{hash=WCS}{%
         family={Wallace},
         familyi={W\bibinitperiod},
         given={C.\bibnamedelima S.},
         giveni={C\bibinitperiod\bibinitdelim S\bibinitperiod},
      }}%
      {{hash=BDM}{%
         family={Boulton},
         familyi={B\bibinitperiod},
         given={D.\bibnamedelima M.},
         giveni={D\bibinitperiod\bibinitdelim M\bibinitperiod},
      }}%
    }
    \strng{namehash}{WCSBDM1}
    \strng{fullhash}{WCSBDM1}
    \field{labelnamesource}{author}
    \field{labeltitlesource}{title}
    \field{number}{2}
    \field{pages}{185\bibrangedash 194}
    \field{title}{An information measure for classification}
    \field{volume}{11}
    \field{journaltitle}{Comput. J.}
    \field{year}{1968}
  \endentry

  \entry{BRY98}{article}{}
    \name{author}{3}{}{%
      {{hash=BA}{%
         family={Barron},
         familyi={B\bibinitperiod},
         given={A.},
         giveni={A\bibinitperiod},
      }}%
      {{hash=RJ}{%
         family={Rissanen},
         familyi={R\bibinitperiod},
         given={J.},
         giveni={J\bibinitperiod},
      }}%
      {{hash=YB}{%
         family={Yu},
         familyi={Y\bibinitperiod},
         given={B.},
         giveni={B\bibinitperiod},
      }}%
    }
    \strng{namehash}{BARJYB1}
    \strng{fullhash}{BARJYB1}
    \field{labelnamesource}{author}
    \field{labeltitlesource}{title}
    \field{number}{6}
    \field{pages}{2743\bibrangedash 2760}
    \field{title}{The minimum description length principle in coding and
  modeling}
    \field{volume}{44}
    \field{journaltitle}{IEEE Trans. Inf. Theory}
    \field{month}{10}
    \field{year}{1998}
  \endentry

  \entry{Geman1984}{article}{}
    \name{author}{2}{}{%
      {{hash=GS}{%
         family={Geman},
         familyi={G\bibinitperiod},
         given={S.},
         giveni={S\bibinitperiod},
      }}%
      {{hash=GD}{%
         family={Geman},
         familyi={G\bibinitperiod},
         given={D.},
         giveni={D\bibinitperiod},
      }}%
    }
    \strng{namehash}{GSGD1}
    \strng{fullhash}{GSGD1}
    \field{labelnamesource}{author}
    \field{labeltitlesource}{title}
    \field{pages}{721\bibrangedash 741}
    \field{title}{Stochastic relaxation, {G}ibbs distributions, and the
  {B}ayesian restoration of images}
    \field{volume}{6}
    \field{journaltitle}{IEEE Trans. Pattern Anal. Mach. Intell.}
    \field{month}{11}
    \field{year}{1984}
  \endentry

  \entry{Rangan2010CISS}{inproceedings}{}
    \name{author}{1}{}{%
      {{hash=RS}{%
         family={Rangan},
         familyi={R\bibinitperiod},
         given={S.},
         giveni={S\bibinitperiod},
      }}%
    }
    \strng{namehash}{RS1}
    \strng{fullhash}{RS1}
    \field{labelnamesource}{author}
    \field{labeltitlesource}{title}
    \field{booktitle}{Proc. IEEE Conf. Inf. Sci. Syst. (CISS)}
    \field{title}{Estimation with random linear mixing, belief propagation and
  compressed sensing}
    \list{location}{1}{%
      {Princeton, NJ}%
    }
    \field{month}{03}
    \field{year}{2010}
  \endentry

  \entry{Turing1950}{article}{}
    \name{author}{1}{}{%
      {{hash=TAM}{%
         family={Turing},
         familyi={T\bibinitperiod},
         given={A.\bibnamedelima M.},
         giveni={A\bibinitperiod\bibinitdelim M\bibinitperiod},
      }}%
    }
    \list{publisher}{1}{%
      {JSTOR}%
    }
    \strng{namehash}{TAM1}
    \strng{fullhash}{TAM1}
    \field{labelnamesource}{author}
    \field{labeltitlesource}{title}
    \field{number}{236}
    \field{pages}{433\bibrangedash 460}
    \field{title}{Computing machinery and intelligence}
    \field{volume}{59}
    \field{journaltitle}{Mind}
    \field{month}{10}
    \field{year}{1950}
  \endentry

  \entry{BaronWeissman2012}{article}{}
    \name{author}{2}{}{%
      {{hash=BD}{%
         family={Baron},
         familyi={B\bibinitperiod},
         given={D.},
         giveni={D\bibinitperiod},
      }}%
      {{hash=WT}{%
         family={Weissman},
         familyi={W\bibinitperiod},
         given={T.},
         giveni={T\bibinitperiod},
      }}%
    }
    \list{publisher}{1}{%
      {IEEE}%
    }
    \strng{namehash}{BDWT1}
    \strng{fullhash}{BDWT1}
    \field{labelnamesource}{author}
    \field{labeltitlesource}{title}
    \field{number}{10}
    \field{pages}{5230\bibrangedash 5240}
    \field{title}{An {MCMC} approach to universal lossy compression of analog
  sources}
    \field{volume}{60}
    \field{journaltitle}{IEEE Trans. Signal Process.}
    \field{month}{10}
    \field{year}{2012}
  \endentry

  \entry{Jalali2008}{inproceedings}{}
    \name{author}{2}{}{%
      {{hash=JS}{%
         family={Jalali},
         familyi={J\bibinitperiod},
         given={S.},
         giveni={S\bibinitperiod},
      }}%
      {{hash=WT}{%
         family={Weissman},
         familyi={W\bibinitperiod},
         given={T.},
         giveni={T\bibinitperiod},
      }}%
    }
    \strng{namehash}{JSWT1}
    \strng{fullhash}{JSWT1}
    \field{labelnamesource}{author}
    \field{labeltitlesource}{title}
    \field{booktitle}{Proc. IEEE Int. Symp. Inf. Theory (ISIT)}
    \field{pages}{852\bibrangedash 856}
    \field{title}{Rate-distortion via {M}arkov chain {M}onte {C}arlo}
    \list{location}{1}{%
      {Toronto, Ontario, Canada}%
    }
    \field{month}{07}
    \field{year}{2008}
  \endentry

  \entry{Jalali2012}{article}{}
    \name{author}{2}{}{%
      {{hash=JS}{%
         family={Jalali},
         familyi={J\bibinitperiod},
         given={S.},
         giveni={S\bibinitperiod},
      }}%
      {{hash=WT}{%
         family={Weissman},
         familyi={W\bibinitperiod},
         given={T.},
         giveni={T\bibinitperiod},
      }}%
    }
    \strng{namehash}{JSWT1}
    \strng{fullhash}{JSWT1}
    \field{labelnamesource}{author}
    \field{labeltitlesource}{title}
    \field{number}{8}
    \field{pages}{2187\bibrangedash 2198}
    \field{title}{{Block and sliding-block lossy compression via MCMC}}
    \field{volume}{60}
    \field{journaltitle}{IEEE Trans. Commun.}
    \field{month}{08}
    \field{year}{2012}
  \endentry

  \entry{Yang1997}{article}{}
    \name{author}{3}{}{%
      {{hash=YE}{%
         family={Yang},
         familyi={Y\bibinitperiod},
         given={E.},
         giveni={E\bibinitperiod},
      }}%
      {{hash=ZZ}{%
         family={Zhang},
         familyi={Z\bibinitperiod},
         given={Z.},
         giveni={Z\bibinitperiod},
      }}%
      {{hash=BT}{%
         family={Berger},
         familyi={B\bibinitperiod},
         given={T.},
         giveni={T\bibinitperiod},
      }}%
    }
    \strng{namehash}{YEZZBT1}
    \strng{fullhash}{YEZZBT1}
    \field{labelnamesource}{author}
    \field{labeltitlesource}{title}
    \field{number}{5}
    \field{pages}{1465\bibrangedash 1476}
    \field{title}{{Fixed-slope universal lossy data compression}}
    \field{volume}{43}
    \field{journaltitle}{IEEE Trans. Inf. Theory}
    \field{month}{09}
    \field{year}{1997}
  \endentry

  \entry{Willems1995CTW}{article}{}
    \name{author}{3}{}{%
      {{hash=WFMJ}{%
         family={Willems},
         familyi={W\bibinitperiod},
         given={F.\bibnamedelima M.\bibnamedelima J.},
         giveni={F\bibinitperiod\bibinitdelim M\bibinitperiod\bibinitdelim
  J\bibinitperiod},
      }}%
      {{hash=SYM}{%
         family={Shtarkov},
         familyi={S\bibinitperiod},
         given={Y.\bibnamedelima M.},
         giveni={Y\bibinitperiod\bibinitdelim M\bibinitperiod},
      }}%
      {{hash=TTJ}{%
         family={Tjalkens},
         familyi={T\bibinitperiod},
         given={T.\bibnamedelima J.},
         giveni={T\bibinitperiod\bibinitdelim J\bibinitperiod},
      }}%
    }
    \strng{namehash}{WFMJSYMTTJ1}
    \strng{fullhash}{WFMJSYMTTJ1}
    \field{labelnamesource}{author}
    \field{labeltitlesource}{title}
    \field{number}{3}
    \field{pages}{653\bibrangedash 664}
    \field{title}{The context tree weighting method: {B}asic properties}
    \field{volume}{41}
    \field{journaltitle}{IEEE Trans. Inf. Theory}
    \field{month}{05}
    \field{year}{1995}
  \endentry

  \entry{Cosamp08}{article}{}
    \name{author}{2}{}{%
      {{hash=ND}{%
         family={Needell},
         familyi={N\bibinitperiod},
         given={D.},
         giveni={D\bibinitperiod},
      }}%
      {{hash=TJA}{%
         family={Tropp},
         familyi={T\bibinitperiod},
         given={J.\bibnamedelima A.},
         giveni={J\bibinitperiod\bibinitdelim A\bibinitperiod},
      }}%
    }
    \strng{namehash}{NDTJA1}
    \strng{fullhash}{NDTJA1}
    \field{labelnamesource}{author}
    \field{labeltitlesource}{title}
    \field{number}{3}
    \field{pages}{301\bibrangedash 321}
    \field{title}{Co{S}a{MP}: Iterative signal recovery from incomplete and
  inaccurate samples}
    \field{volume}{26}
    \field{journaltitle}{Appl. Computational Harmonic Anal.}
    \field{month}{05}
    \field{year}{2009}
  \endentry

  \entry{CandesCSdictonary2011}{article}{}
    \name{author}{4}{}{%
      {{hash=CEJ}{%
         family={Cand\`es},
         familyi={C\bibinitperiod},
         given={E.\bibnamedelima J.},
         giveni={E\bibinitperiod\bibinitdelim J\bibinitperiod},
      }}%
      {{hash=EYC}{%
         family={Eldar},
         familyi={E\bibinitperiod},
         given={Y.\bibnamedelima C.},
         giveni={Y\bibinitperiod\bibinitdelim C\bibinitperiod},
      }}%
      {{hash=ND}{%
         family={Needell},
         familyi={N\bibinitperiod},
         given={D.},
         giveni={D\bibinitperiod},
      }}%
      {{hash=RP}{%
         family={Randall},
         familyi={R\bibinitperiod},
         given={P.},
         giveni={P\bibinitperiod},
      }}%
    }
    \strng{namehash}{CEJEYCNDRP1}
    \strng{fullhash}{CEJEYCNDRP1}
    \field{labelnamesource}{author}
    \field{labeltitlesource}{title}
    \field{number}{1}
    \field{pages}{59\bibrangedash 73}
    \field{title}{Compressed sensing with coherent and redundant dictionaries}
    \field{volume}{31}
    \field{journaltitle}{Appl. Computational Harmonic Anal.}
    \field{month}{07}
    \field{year}{2011}
  \endentry

  \entry{Sivaramakrishnan2008}{article}{}
    \name{author}{2}{}{%
      {{hash=SK}{%
         family={Sivaramakrishnan},
         familyi={S\bibinitperiod},
         given={K.},
         giveni={K\bibinitperiod},
      }}%
      {{hash=WT}{%
         family={Weissman},
         familyi={W\bibinitperiod},
         given={T.},
         giveni={T\bibinitperiod},
      }}%
    }
    \strng{namehash}{SKWT1}
    \strng{fullhash}{SKWT1}
    \field{labelnamesource}{author}
    \field{labeltitlesource}{title}
    \field{number}{12}
    \field{pages}{5632\bibrangedash 5660}
    \field{title}{Universal denoising of discrete-time continuous-amplitude
  signals}
    \field{volume}{54}
    \field{journaltitle}{IEEE Trans. Inf. Theory}
    \field{month}{12}
    \field{year}{2008}
  \endentry

  \entry{SW_Context2009}{article}{}
    \name{author}{2}{}{%
      {{hash=SK}{%
         family={Sivaramakrishnan},
         familyi={S\bibinitperiod},
         given={K.},
         giveni={K\bibinitperiod},
      }}%
      {{hash=WT}{%
         family={Weissman},
         familyi={W\bibinitperiod},
         given={T.},
         giveni={T\bibinitperiod},
      }}%
    }
    \strng{namehash}{SKWT1}
    \strng{fullhash}{SKWT1}
    \field{labelnamesource}{author}
    \field{labeltitlesource}{title}
    \field{number}{6}
    \field{pages}{2110\bibrangedash 2129}
    \field{title}{A context quantization approach to universal denoising}
    \field{volume}{57}
    \field{journaltitle}{IEEE Trans. Signal Process.}
    \field{month}{06}
    \field{year}{2009}
  \endentry

  \entry{FigueiredoJain2002}{article}{}
    \name{author}{2}{}{%
      {{hash=FM}{%
         family={Figueiredo},
         familyi={F\bibinitperiod},
         given={M.},
         giveni={M\bibinitperiod},
      }}%
      {{hash=JA}{%
         family={Jain},
         familyi={J\bibinitperiod},
         given={A.},
         giveni={A\bibinitperiod},
      }}%
    }
    \strng{namehash}{FMJA1}
    \strng{fullhash}{FMJA1}
    \field{labelnamesource}{author}
    \field{labeltitlesource}{title}
    \field{number}{3}
    \field{pages}{381\bibrangedash 396}
    \field{title}{Unsupervised learning of finite mixture models}
    \field{volume}{24}
    \field{journaltitle}{IEEE Trans. Pattern Anal. Mach. Intell.}
    \field{month}{03}
    \field{year}{2002}
  \endentry

  \entry{JalaliPoor2014}{article}{}
    \name{author}{2}{}{%
      {{hash=JS}{%
         family={Jalali},
         familyi={J\bibinitperiod},
         given={S.},
         giveni={S\bibinitperiod},
      }}%
      {{hash=PHV}{%
         family={Poor},
         familyi={P\bibinitperiod},
         given={H.\bibnamedelima V.},
         giveni={H\bibinitperiod\bibinitdelim V\bibinitperiod},
      }}%
    }
    \strng{namehash}{JSPHV1}
    \strng{fullhash}{JSPHV1}
    \field{labelnamesource}{author}
    \field{labeltitlesource}{title}
    \field{title}{Universal compressed sensing of {M}arkov sources}
    \field{journaltitle}{Arxiv preprint arXiv:1406.7807}
    \field{month}{06}
    \field{year}{2014}
  \endentry

  \entry{Stratanovitch-Wiki}{inproceedings}{}
    \field{labeltitlesource}{title}
  \field{note}{https://en.wikipedia.org/wiki/Hubbard-\\Stratonovich\_transformation}
    \field{title}{Hubbard--{S}tratonovich transformation}
  \endentry

  \entry{PCC}{inproceedings}{}
    \field{labeltitlesource}{title}
  \field{note}{https://en.wikipedia.org/wiki/\\Pearson\_product-moment\_correlation\_coefficient}
    \field{title}{Pearson product-moment correlation coefficient}
  \endentry

  \entry{Bremaud1999}{book}{}
    \name{author}{1}{}{%
      {{hash=BP}{%
         family={Br{\'e}maud},
         familyi={B\bibinitperiod},
         given={P.},
         giveni={P\bibinitperiod},
      }}%
    }
    \list{publisher}{1}{%
      {Springer Verlag}%
    }
    \strng{namehash}{BP1}
    \strng{fullhash}{BP1}
    \field{labelnamesource}{author}
    \field{labeltitlesource}{title}
    \field{title}{{Markov {C}hains: Gibbs {F}ields, {M}onte {C}arlo
  {S}imulation, and {Q}ueues}}
    \field{volume}{31}
    \field{year}{1999}
  \endentry
\endsortlist

 \blx@bblend
  \endgroup
  \csnumgdef{blx@labelnumber@\the\c@refsection}{0}%
  \iftoggle{blx@reencode}{\blx@reencode}{}}
\newcolumntype{L}{@{}>{\kern\tabcolsep}l<{\kern\tabcolsep}}
\def \x {\mathbf{x}}
\def \y {\mathbf{y}}
\def \v {\mathbf{v}}
\def \z {\mathbf{z}}
\def \g {\mathbf{g}}
\def \h {\mathbf{h}}
\def \l {\left}
\def \r {\right}
\def \T {\mathbf{T}}
\def \G {\mathbf{G}}
\def \w {\mathbf{w}}
\def \W {\mathbf{W}}
\def \u {\mathbf{u}}
\def \a {\mathbf{a}}
\def \q {\mathbf{q}}
\def \R {\mathbf{R}}
\def \ss {\mathcal{S}}
\def \P {\mathbf{P}}
\def \Z {{\cal{Z}}}
\def \X {{\cal{X}}}
\def \alphabet {\mathcal{Z}}
\def \map {{{\mathcal{A}}}}
\def \N {\mathcal{L}}
\def \replevels{{\mathcal{R}}}
\def \breplevels{\mathcal{R}_F}
\newcommand{\sij}[2] {\sum_{#1}^{#2}}
\newcommand{\pij}[2] {\prod_{#1}^{#2}}
\newcommand{\n}{{\bf n}}
\newcommand{\A}{{\bf A}}
\newcommand{\f}{{\bf f}}
\algnewcommand{\LineComment}[1]{\State \(\triangleright\) #1}
\algnewcommand\algorithmicswitch{\textbf{switch}}
\algnewcommand\algorithmiccase{\textbf{case}}
\algnewcommand\algorithmicdowhile{\textbf{dowhile}}
\newtheorem{myDef}{Definition}[chapter]
\newtheorem{myTheorem}{Theorem}[chapter]
\newtheorem{myLemma}{Lemma}[chapter]
\newtheorem{myCoro}[myTheorem]{Corollary}
\newtheorem{COND}{Condition}[chapter]
\newtheorem{myRemark}{Remark}[chapter]
\newtheorem{myConj}{Conjecture}[chapter] 
\newlength{\chaptercapitalheight}
\newlength{\chapterfootskip}
\renewcommand{\bibname}{BIBLIOGRAPHY}
\newlength\graphht
\begin{document}
\pagestyle{plain}
\frontmatter

\begin{abstract}
Many real-world problems in machine learning, signal processing, and communications assume that an unknown vector $\x$ is measured by a matrix $\A$, resulting in a vector $\y=\A\x+\z$, where $\z$ denotes the noise; we call this a single measurement vector (SMV) problem. Sometimes, multiple dependent vectors $\x^{(j)},\ j\in\{1,\cdots,J\}$, are measured at the same time, forming the so-called multi-measurement vector (MMV) problem. Both SMV and MMV are linear models (LM's), and the process of estimating the underlying vector(s) $\x$ from an LM 
given the matrices, noisy measurements, and knowledge of the noise statistics, is called a linear inverse problem. In some scenarios, the matrix $\A$ is stored in a single processor and this processor also records its measurements $\y$; this is called centralized LM. In other scenarios, multiple sites are measuring the same underlying unknown vector $\x$, where each site only possesses part of the matrix $\A$; we call this multi-processor LM. Recently, due to an ever-increasing amount of data and ever-growing dimensions in LM's, 
it has become more important to study large-scale linear inverse problems. In this dissertation, we take advantage of  tools in statistical physics and information theory to advance the understanding of large-scale linear inverse problems. The intuition of the application of statistical physics to our problem is that statistical physics deals with large-scale problems, and we can make an analogy between an LM and a thermodynamic system~\cite{Tanaka2002,GuoVerdu2005,Krzakala2012probabilistic,krzakala2012statistical,MezardMontanariBook,Barbier2015}. Therefore, we can apply statistical physics analysis tools as well as algorithmic tools into understanding large-scale LM's and their corresponding linear inverse problems. In terms of information theory~\cite{Cover06}, although it was originally developed to characterize the theoretic limits of digital communication systems, information theory was later found to be rather useful in analyzing and understanding other inference problems. We use some of the concepts and ideas of information theory to understand the theoretic performance limits in various aspects of linear inverse problems.

There exist numerous algorithms for solving linear inverse problems. However, 
only a partial understanding of the theoretic characterization of the minimum mean squared error (MMSE) when solving linear inverse problems appears in the literature~\cite{RFG2012,Tanaka2002,GuoVerdu2005}. Such a theoretic analysis helps practitioners appreciate the gap between their estimation quality and the theoretically optimal quality. Therefore, in this dissertation we use the replica analysis~\cite{Tanaka2002,GuoVerdu2005,Montanari2006,Krzakala2012probabilistic,
krzakala2012statistical,MezardMontanariBook,Barbier2015,Lesieur2015} from statistical physics to study the MMSE in MMV problems. We obtain different performance regions in which the MMSE behaves differently. Besides the quality of the estimation, there are also other ``costs'' that practitioners might care about, especially in the big data era. 
Some prior art has focused on reducing certain costs such as the communication cost~\cite{Han2014} and the computation cost~\cite{MaBaronNeedell2014}, but there has been less progress relating different costs and achieving optimal trade-offs among them. Despite the lack of such works, these trade-offs are important to system designers in order to produce efficient systems.
To address these issues, in this dissertation we use a distributed algorithm  as an example and study the behavior of the optimal communication scheme in the limit of low excess mean squared error beyond the MMSE for that distributed algorithm. Furthermore, we study the optimal trade-offs among the computation cost, the communication cost, and the quality of the estimate.

Finally, we discuss estimation algorithm design for an SMV setting.
There are numerous estimation algorithms for SMV in the prior art, but they all require some statistical knowledge about the underlying vector $\x$; in a practical setting, such knowledge might be inaccurate or unavailable. Therefore, it is important to design a {\em universal} estimation algorithm that is more agnostic to the prior knowledge of the unknown vector $\x$. In this dissertation, we design an algorithmic framework based on Markov chain Monte Carlo (MCMC) borrowed from statistical physics, and in extensive numerical experiments the algorithm achieves a mean squared error that is close to the MMSE. 
\end{abstract}

\makecopyrightpage

\maketitlepage

\begin{dedication}
 \centering To people who care about me and people who I care about. To  world peace.
\end{dedication}

\begin{biography}
Junan Zhu is currently pursuing the Ph.D. degree in the Department of Electrical and Computer Engineering (ECE) at North Carolina State University (NCSU), Raleigh, North Carolina, U.S. His research interests include compressed sensing, statistical signal processing, information theory, statistical physics, machine learning, optimization, distributed algorithms, and computational imaging.
Before joining NCSU, Mr. Zhu received the B.E. degree in Electrical Engineering with a focus on optoelectronics from the University of Shanghai for Science and Technology (USST), Shanghai, China in 2011. His research in USST focused on Terahertz waveguides and black silicon. 

Junan Zhu received the Graduate Student Fellowship at NCSU in 2011, which was awarded to the top 3 incoming ECE graduate students. He also received the National Scholarship in 2009 and the Baosteel Scholarship in 2010, both at USST.

\end{biography}

\begin{acknowledgements}
First, I would like to express my sincere gratitude to my advisor Dr. Dror Baron. It is his patient guidance and advice in research that has enlightened me and made my research life easier. It is his helpful mentoring about life in the U.S. that has provided me with enough information to merge into this new society. It is his abundant financial support that has allowed me to focus on research. (In particular, I would like to thank the generous support of the National Science Foundation and Army Research Office.\footnote{More specifically, the author was supported in part  by  the National Science Foundation under the Grants CCF-1217749 and ECCS-1611112, and by the U.S. Army Research Office under the Grants W911NF-04-D-0003 and W911NF-14-1-0314.}) Dr. Baron is more than an academic advisor. He is a mentor and a friend. I am very grateful for Dr. Baron's help and advice, and I hope to work on research projects with him in the future as well. 

Next, I would like to thank my committee, in alphabetical order: Dr. Huaiyu Dai, Dr. Karen Daniels, Dr. Brian Hughes, and Dr. David Ricketts, as well as former committee members Dr. W. Rhett Davis and Dr. Edgar Lobaton. Their helpful comments about my work and enlightening feedback greatly improved the quality of my work and dissertation. Besides my committee, I would like to thank Dr. Ahmad Beirami, Dr. Marco F. Duarte, Dr. Florent Krzakala, and Dr. Lenka Zdeborova for their advice and collaboration.
I also want to thank the lecturers of all the courses I attended. It is their clear explanations that granted me a solid understanding of various subjects in my field.

I also want to thank my dear roommates, Dr. Shikai Luo and Shuiqing Wang, whom I started my endeavor in the U.S. with and whom I shared joy and sadness with.
Also, I would like to thank them for their help on technical subjects. In addition, I would like to thank my colleagues and friends, in alphabetical order, Nicholas Casale, Miao Feng, Qian Ge, Dr. Fengyuan Gong, Dr. Xiaofan He, Yufan Huang, Richeng Jin, Nikhil Krishnan, Dr. Chengzhi Li, Wuyuan Li, Feier Lian, Dr. Juan Liu, Dr. Yuan Lu, Yanting Ma, Ryan Pilgrim, Macey Ruble, Rafael Silva, Dr. Jin Tan, Joseph Young, and Dr. Huazi Zhang. Without their help and friendship, I could not have lived a happy life while I am working toward my Ph.D. 

Furthermore, I would like to thank my college buddies, Shijie Li and Jiaming Xu, who are now pursuing their Ph.D.s as well. Without their encouragement and help, I could not have even dreamed of coming to the U.S. to pursue my Ph.D. I hope their research progress goes well and that they graduate soon. I am also very grateful to Dr. Yiming Zhu, my advisor in China, who changed my life.

At last, I would like to thank my dear parents. Whenever I need them, they are ready to help. It is their unconditional love and support that enable the endeavor of my life. They give me the courage to conquer every difficulty in the pursuit of my dream and teach me to love this world so that I am not alone. My special thanks goes to my beloved Meizhu, who accompanied me when I felt lonely, encouraged me when I was lost, and shared happiness with me whenever there were good news; life is like a box of chocolate, and you are the sweetest one.
\end{acknowledgements}

\thesistableofcontents

\thesislistoftables

\thesislistoffigures

\mainmatter

\newgeometry{margin=1in,lmargin=1.25in,footskip=\chapterfootskip, includefoot}
\chapter{Introduction}
\label{chap-intro}

Many problems in science and engineering can be approximated as linear, where an unknown vector $\x \in \mathbb{R}^N$ is measured via a matrix multiplication, $\w=\A\x$, with $\A$ being an $M \times N$ matrix. The measurements $\y$ are collected after $\w$ is corrupted by measurement noise $\z\in \mathbb{R}^M$,
\begin{equation}\label{eq:SMV}
\y = \A \x + \z.
\end{equation}
In some machine learning problems, the training set consists of $\A$ and $\y$, where $\A$ contains the features and $\y$ contains the outcomes~\cite{clickPrediction_MS2007,clickPrediction_Google2013}; $\x$ is usually called the coefficient vector that describes the relation between the features and the outcomes. In signal processing, $\A$ describes the signal acquisition system, $\y$ contains the measurements, and $\x$ is the underlying signal~\cite{DonohoCS}. For communication systems such as CDMA, the matrix $\A$ contains the spreading sequences that spread the input (channel) symbol from each user, and then the receiver mixes the spread symbols from different users and obtains $\y$~\cite{GuoVerdu2005}. The input symbols from different users at a certain time interval form the vector $\x$. For  ease of presentation, we call the underlying input vector $\x$ the {\em signal}, $\A$ the {\em measurement matrix}, and  $\y$ the {\em measurements} vector.
In the following, we introduce several variants of our setting~\eqref{eq:SMV} and then discuss the prior art in solving the linear models.

\section{Linear Models and Linear Inverse Problems}
\subsection{Problem setting}\label{sec:Chap1-setting}
There are some variants of linear models (LM's). Based on how the measurements $\y$ and the matrix $\A$ are stored, we form centralized LM's or multi-processor LM's. We can also define linear models based on the number of underlying unknown vectors $\x$: if there is only one unknown vector $\x$, then it is a single measurement vector (SVM) problem; if there are more than one unknown vector $\x$, then we form a multi-measurement vector (MMV) problem.

{\bf Centralized vs. multi-processor LM's:}
If the matrix $\A$ and the measurements $\y$ in~\eqref{eq:SMV} are stored in a single processor, then we call the LM a  {\em centralized LM}.
Recently, there is an increasing amount of data being generated in various applications. For example, the trend of relying on Internet services and social networks is more prevalent than ever before; users of web services are generating numerous log files daily. As another example, financial analysts need to predict the changes in prices based on historical price information. Given the amount of financial derivatives and the high frequency of changes in prices, financial institutions are also overwhelmed by a vast amount of data. Another example involves recent advances in wearable devices. Health care providers can provide patients with wearable sensors that record and report the health status of patients frequently, so that the health care providers can react quickly once there is an emergency. With these ever-growing amounts of data, it is no longer practical to fit these data into a single machine, and distributed and scalable file systems such as Hadoop Distributed File Systems (HDFS)~\cite{DeanGhemawat2008} have been developed. For the case of LM, if the matrix $\A$ and the measurements $\y$ are so big that they have to be stored in a distributed file system such as HDFS, then we form a {\em multi-processor (MP) LM}~\cite{Mota2012,Patterson2013,Patterson2014,Han2014,Han2015ICASSP,Ravazzi2015,Han2015SPARS,HanZhuNiuBaron2016ICASSP}. Consider an MP-LM with $P$ distributed {\em processor nodes} and a {\em fusion center}. Each distributed processor node stores $\frac{M}{P}$ rows of the matrix $\A$, and acquires the corresponding measurements of the underlying signal $\x$. Without loss of generality, the LM in distributed processor node $p\in \{1,\cdots,P\}$ can be written as
 \begin{equation}\label{eq:one-node-meas_intro}
    y_i=\A_i \x+z_i,\ i\in \left\{\frac{M(p-1)}{P}+1,\cdots,\frac{Mp}{P}\right\},
 \end{equation}
 where $\A_i$ is the $i$-th row of $\A$, and $y_i$ and $z_i$ are the $i$-th entries of $\y$ and $\z$, respectively.

{\bf Single measurement vector vs. multiple measurement vectors:}
Apart from the MP-LM, another type of distributed linear model involves multiple sensors. Using multiple sensors can accelerate the sensing speed by pointing different sensors at different regions of interest, which we call {\em distributed sensing}~\cite{Duarte2006IPSN,HN05,BaronDCStech}.
In distributed sensing, suppose that $J$ sensors are measuring $J$ signal vectors, $\x^{(1)},\cdots,\x^{(J)}$. Each signal vector $\x^{(j)}$ is measured by
a matrix $\A^{(j)}$, which models the sensing mechanism of each sensor, and the measurements $\y^{(j)}$ are corrupted by independent and identically distributed (i.i.d.) noise $\z^{(j)}$,
\begin{equation}\label{eq:MMVmodel_intro}
\y^{(j)}=\A^{(j)}\x^{(j)}+\z^{(j)},\quad j\in\{1,\cdots,J\},
\end{equation}
where the $(j)$ in the super-script denotes the index of the corresponding sensor.
Of particular interest in reducing the number of measurements while achieving similar signal estimation quality, distributed sensing leads to a proliferation of research on the MMV problem~\cite{chen2006trs,cotter2005ssl,Mishali08rembo,Berg09jrmm}, in which
the $J$ sparse signal vectors $\x^{(j)},\ j\in\{1,\cdots,J\}$, share common non-zero supports, as explained below. Let us construct a {\em super-symbol} $\x_l=\l[x_l^{(1)},\cdots,x_l^{(J)}\r]^{\top}$, where $\{\cdot\}^{\top}$ denotes the transpose, and $x^{(j)}_l$ is the $l$-th entry of the signal vector $\x^{(j)}$. The super-symbols $\x_l,\ l\in\{1,\cdots,N\}$, follow an i.i.d.  $J$-dimensional joint distribution,
\begin{equation}\label{eq:chap1-jsm}
f(\x_l)=\rho \phi(\x_l)+(1-\rho)\delta(\x_l),
\end{equation}
where $\rho$ is the {\em sparsity rate}, $\phi(\x_l)$ is a $J$-dimensional joint distribution, and $\delta(\x_l)$ is the Dirac delta function for $J$-dimensional vectors.
When the number of signal vectors becomes 1, i.e., $J=1$, this MMV problem~\eqref{eq:MMVmodel_intro} becomes an SMV problem.
The MMV problem has many
applications such as radar array signal processing,
acoustic sensing with multiple speakers, magnetic resonance imaging
with multiple coils~\cite{JuYeKi07,JuSuNaKiYe09}, and diffuse optical tomography using multiple
illumination patterns~\cite{LeeKimBreslerYe2011}.

{\bf Linear inverse problem:} Usually, estimation algorithms need to be designed to estimate the
signal $\x$ given the matrix $\A$, noisy measurements $\y$, and possible statistical knowledge about the noise $\z$. We call this a linear inverse problem.

In this work, we focus on the {\em large system limit} defined below.
\begin{myDef}[Large system limit~\cite{GuoWang2008}]\label{def:chap1-largeSystemLimit}
The signal length $N$ scales to infinity, and the
number of measurements $M=M(N)$ depends on $N$ and also scales to infinity, where
the ratio approaches a positive constant $\kappa$,
\begin{equation*}
\lim_{N\rightarrow\infty} \frac{M(N)}{N} = \kappa>0.
\end{equation*}
\end{myDef}
We call $\kappa$ the measurement rate.

\subsection{Prior art and open questions}\label{sec:priorArt}
Linear models are widely studied and find extensive real-world applications. Over the years, people have developed various algorithms to solve the underlying signal vectors for linear models.
Many estimation algorithms pose a sparsity prior on the signal $\x$ or the coefficient vector $\theta$~\cite{CandesRUP,DonohoCS,GPSR2007}, where
$\theta=\W^{-1} \x$, and $\W$ is called the {\em sparsifying transform} that renders a sparse coefficient vector $\theta$.
A second, separate class of Bayesian algorithms to solve the linear inverse problem poses a
probabilistic prior for the coefficients of $\x$ in a known transform
domain~\cite{DMM2010ITW1,RanganGAMP2011ISIT,BCS2008,BCSEx2008,CSBP2010}. Given a probabilistic model, some related message passing approaches
learn the parameters of the signal model and achieve the minimum mean squared error (MMSE) in some settings; examples include EM-GM-AMP-MOS~\cite{EMGMTSP}, turboGAMP~\cite{turboGAMP}, and AMP-MixD~\cite{MTKB2014ITA}. As a third alternative,
complexity-penalized least square methods~\cite{Figueiredo2003,DonohoKolmogorovCS2006,HN05,HN11,Ramirez2011} can use arbitrary prior information on the
signal model and provide analytical guarantees, but are only computationally
efficient for specific signal models, such as the independent-entry Laplacian
model~\cite{HN05}. For example, Donoho et al.~\cite{DonohoKolmogorovCS2006} relies on Kolmogorov complexity, which cannot be computed~\cite{Cover06,LiVitanyi2008}.
As a fourth alternative, there exist algorithms that can formulate dictionaries that yield
sparse representations for the signals of interest when a large amount of training data is
available~\cite{Ramirez2011,AharoEB_KSVD,Mairal2008,Zhoul2011}.
When the signal is non-i.i.d., existing algorithms require either prior knowledge of the probabilistic model~\cite{turboGAMP} or the use of training data~\cite{Garrigues07learninghorizontal}. In spite of the numerous algorithms to solve the linear inverse problem, there are many important gaps in the prior art, such as those listed below.
\begin{enumerate}
\item {\bf What is the best we can do?} Along with existing algorithms for solving linear inverse problems, researchers often provide theoretic estimation accuracy guarantees for these algorithms. However, what is often missing is the optimal estimation quality associated with the linear inverse problem itself, instead of the optimal estimation quality for a specific algorithm. Such a theoretic analysis will help us evaluate the quality of each algorithm and identify the gap between a specific algorithm and the theoretically optimal estimation quality.
\item {\bf What are the costs of running an algorithm?} Nowadays, due to the large amounts of data mentioned in Section~\ref{sec:Chap1-setting}, many systems are designed in a distributed fashion. Hence, estimation algorithms need to run in a distributed network and thus incur communication costs. There exists some work trying to save communication by designing cache systems so that each node in the network does not need to send every piece of data every time~\cite{CodedMapReduce2015Allerton,LMYA2016ISIT}. There are also some works using heuristics in reducing the precision of the floating-point numbers sent across the network~\cite{clickPrediction_Google2013,Thanou2013}. However, there is little prior art discussing the ``optimal'' communication scheme.
\item {\bf Better algorithms?} At the beginning of this section, we briefly discussed some classes of algorithms. In certain cases, one might not be certain about the structure or statistics of the signal prior to estimation. Uncertainty about such structure may result in a sub-optimal choice of
the sparsifying transform $\W$, yielding a coefficient vector $\theta$  that requires more measurements to achieve reasonable
estimation quality; uncertainty about the statistics of the signal will make it difficult to
select a prior or model for Bayesian algorithms. Thus, we think that a ``better'' algorithm should be more agnostic to the particular statistics of the signal while still achieving reasonable estimation results.
\end{enumerate}

\subsection{Contributions}
In the following, we briefly discuss our contributions corresponding to each of the unsolved problems raised in Section~\ref{sec:priorArt}. Most of our contributions are made possible by taking advantage of  statistical physics tools and information theory.
\begin{enumerate}
\item {\bf Characterizing the optimal estimation quality:}  In Chapter~\ref{chap-MMV}, we make an analogy between the MMV problem~\eqref{eq:MMVmodel_intro} and a thermodynamic system and use the
replica analysis~\cite{Tanaka2002,GuoVerdu2005,Montanari2006,Krzakala2012probabilistic,
krzakala2012statistical,MezardMontanariBook,Barbier2015,Lesieur2015} from statistical physics to analyze the information theoretic MMSE for MMV problems with i.i.d. Gaussian measurement matrices and i.i.d. Gaussian noise. Our analysis is readily extended to other i.i.d. measurement matrices and i.i.d. measurement noise.
Note that the MMSE is associated with the MMV problem~\eqref{eq:MMVmodel_intro} itself and is not associated with any specific estimation algorithms. Realizing that mean squared error (MSE) might not be the only metric that is of interest, we propose a future direction to extend the work of Tan and coauthors~\cite{Tan2014,Tan2014Infty} to analyze the average error based on arbitrary user-defined error metrics for MMV problems.
\item {\bf Optimal trade-offs among different costs:} In Chapter~\ref{chap-MP-AMP}, we
apply rate-distortion theory~\cite{Cover06,Berger71,GershoGray1993,WeidmannVetterli2012} to optimize the communication cost in a specific distributed algorithm, and propose a method to find the optimal combined cost of computation and communication. In addition, we study the asymptotic behavior of the optimal communication scheme in the limit of
low excess MSE beyond the MMSE. Also, recognizing that we cannot minimize the computation cost, communication cost, and the quality of the estimate simultaneously, we study the optimal trade-offs among these different costs.
\item {\bf Designing better algorithms:} In Chapter~\ref{chap-SLAM}, we propose a {\em universal} algorithm that is based on the mild assumption of the signal being ``simple,'' i.e., there is some structure in the signal that is simple. Our algorithm is based on ``simulated annealing,'' a mathematical analogy to a statistical physics concept, and achieves favorable estimation accuracy while using limited prior information about the signal models. In Chapter~\ref{chap-SLAM}, we also briefly discuss another universal algorithm that is based on belief propagation~\cite{DMM2009,CSBP2010,Bayati2011,Montanari2012,Krzakala2012probabilistic,krzakala2012statistical,Barbier2015}, which originates from statistical physics and information theory. We refer interested readers to Ma et al.~\cite{MaZhuBaronAllerton2014,MaZhuBaron2016TSP}.
\end{enumerate}

The underlying intuition of why statistical physics and information theory can be useful in tackling our problems is that they both deal with large systems, and fortunately, the problems that we are targeting in this dissertation are indeed large systems. Moreover, the general formulations of our problems create analogies between our problems and thermodynamic systems and communication systems, so that we can take advantage of the existing analytical and algorithmic tools in the rich fields of statistical physics and information theory.

\section{Organization, Notations, and Acronyms}
\subsection{Organization}
This dissertation is organized as follows. Chapter~\ref{chap-basics} introduces some background on statistical physics and information theory. Chapter~\ref{chap-MMV} studies the MMSE and its behavior for MMV problems; we also propose a future direction to study arbitrary user-defined error metrics for MMV problems. The  limiting behavior of the optimal communication scheme and the optimal trade-offs among different costs in MP-LM's are discussed in Chapter~\ref{chap-MP-AMP}. In Chapter~\ref{chap-SLAM}, we propose a universal algorithmic framework that achieves favorable estimation quality. Chapter~\ref{chap-discuss} concludes the dissertation and proposes some future directions. Details about some proofs appear in the appendices. 

Note that Chapter~\ref{chap-MMV} is based on our work with Baron~\cite{ZhuBaronCISS2013} and with Baron and Krzakala~\cite{ZhuBaronKrzakala2016}. Chapter~\ref{chap-MP-AMP} is based on our work with Han et al.~\cite{HanZhuNiuBaron2016ICASSP} and with Baron and Beirami~\cite{ZhuBeiramiBaron2016ISIT,ZhuBaronMPAMP2016ArXiv}. Chapter~\ref{chap-SLAM} is based on our work with Baron and Duarte~\cite{JZ2014SSP,ZhuBaronDuarte2014_SLAM}.

\subsection{Notations}
In this dissertation, bold capital letters represent matrices, bold lower case letters represent vectors, and normal font letters represent scalars. The entry (scalar) in the $i$-th row, $j$-th column of a matrix $\A$ is denoted by $A_{i,j}$, where the comma is often omitted. The $i$-th entry (scalar) in a vector $\z$ is denoted by $z_i$. Following are some frequently used notations.
\begin{itemize}
\item $\A$: Measurement matrix
\item $\mathbb{C}$: The set of complex numbers
\item $D$: Distortion
\item $\delta(\cdot)$: Dirac delta function
\item $f(\cdot)$: Probability density function (continuous variable)
\item $\mathbb{E}[\cdot]$: Expectation
\item $\kappa$: Measurement rate
\item $M$: Number of measurements
\item $N$: Signal length
\item $\mathbb{N}$: The set of natural numbers, i.e., $\{0,1,\cdots \}$
\item $\mathcal{N}(\mu,\sigma^2)$: Gaussian distribution with mean $\mu$ and variance $\sigma^2$
\item $R$: Coding rate
\item $\mathbb{R}$: The set of real numbers
\item $\mathbb{P}$: Probability
\item $\mathbb{P}(\cdot)$: Probability mass function (discrete variable)
\item $\rho$: Sparsity rate (percentage of non-zeros in a vector)
\item $\sigma_Z^2$: Variance of the noise $\z$
\item $t$: Iteration index
\item $\A^{\top}$: Transpose of matrix $\A$
\item $\x$: Signal
\item $\|\x\|_p$: $\ell_p$ norm of a vector $\x$; if $p$ is not specified, then we refer to $\ell_2$ norm
\item $\y$: Measurements
\item $\z$: Noise
\item $[x_1,x_2,\cdots,x_N]$: The vector consists of $x_1,x_2,\cdots,x_N$
\item $\{1,2,\cdots,N\}$: The set consists of $1,2,\cdots,N$
\end{itemize}

\subsection{Acronyms}
\begin{itemize}
\item AMP: Approximate message passing
\item BP: Belief propagation
\item CS: Compressed sensing
\item i.i.d.: Independent and identically distributed
\item LM: Linear model
\item MMSE: Minimum mean squared error
\item MMV: Multi-measurement vector
\item MP: Multi-processor
\item MSE: Mean squared error
\item PMF: Probability mass function
\item RD: Rate-distortion
\item SDR: Signal-to-distortion ratio
\item SMV: Single measurement vector
\item SNR: Signal-to-noise ratio
\end{itemize}

\chapter{Statistical Physics and Information Theory Background}
\label{chap-basics}
\chaptermark{Stat. Phys. \& Inf. Theory}

In Chapter~\ref{chap-intro}, we discussed the prior art and mentioned that our contributions are made possible by tools in statistical physics and information theory. Due to the interdisciplinary nature of this dissertation, this chapter briefly reviews some concepts and methodologies that are used in our work. We refer readers who are interested in delving into these subjects to the books by M{\'e}zard and Montanari~\cite{MezardMontanariBook} and by Cover and Thomas~\cite{Cover06}.

\section{Relevant Statistical Physics Concepts}\label{sec:background_statPhys}
\sectionmark{Relevant Stat. Phys. Concepts}
Statistical physics studies a disordered thermodynamic system containing a large number of particles that are interacting with each other by the internal force between (among) the particles as well as the external force applied to the entire disordered system.

\subsection{Basics}\label{sec:statPhysBasics}
In this section, we briefly introduce some concepts that are frequently used in statistical physics.

{\bf Entropy (thermodynamics):} Entropy quantifies the amount of disorder of a thermodynamic system,
\begin{equation}\label{eq:def_entropy}
\mathcal{S}(\x)=-\sum_{\x} \mathbb{P}(\x)\log \mathbb{P}(\x),
\end{equation}
where the vector $\x$ describes the {\em configuration} of a certain thermodynamic system and $\mathbb{P}(\x)$ is the probability of a certain configuration existing in the disordered system. By summing over all possible configurations and accounting for their corresponding probability, we are able to obtain the level of disorder, or the {\em entropy} of this particular thermodynamic system.

{\bf Boltzmann distribution:} In a thermodynamic system, the higher the temperature is, the more disordered the system is. The Boltzmann distribution is a probability distribution used to describe various possible configurations in a thermodynamic system,
\begin{equation}\label{eq:def_Boltzmann_background}
\mathbb{P}(\x)=\frac{1}{Z}\text{exp}\l(-\frac{H(\x)}{T}\r),
\end{equation}
where the vector $\x$ describes the configuration of a thermodynamic system, $T$ is the temperature of this system, $H(\x)$ is the energy for a certain configuration, and $Z$ is a normalizer called the {\em partition function}. If the thermodynamic system is in a high temperature, i.e., $T$ is large, then the probabilities for configurations with different energy are approximately the same and the system reaches the maximum entropy~\eqref{eq:def_entropy}, which corresponds to the greatest amount of disorder.

{\bf Annealing and quench:} The configuration associated with the lowest energy can be obtained through a process called annealing, where a disordered system gradually cools down. Intuitively, when the temperature $T$ decreases, the configurations with lower energy becomes more and more likely in the disordered system, according to~\eqref{eq:def_Boltzmann_background}. Given enough time that allows a slow enough decrease in the temperature, we can guarantee to obtain the globally minimum energy configuration. A related concept is {\em quench}, in which the temperature is quickly decreased, so that the disordered system is likely to achieve a local minimum energy configuration. Since the temperature is quickly decreased, once a local minimum energy configuration appears, it will be difficult to generate other lower energy configurations according to~\eqref{eq:def_Boltzmann_background}.

\subsection{Spin glass theory basics}
A basic understanding of spin glass theory provides new perspectives when solving linear inverse problems. In the following, we introduce some basics of spin glass theory. The goal is to provide intuition, and we refer interested readers to M{\'e}zard and Montanari~\cite{MezardMontanariBook} for rigorous and detailed explanations.

{\bf Mean-field spin glasses:} As discussed in Section~\ref{sec:statPhysBasics}, the thermodynamic system we are interested in contains many particles. A {\em simple model} in the mean-field spin glass theory models each of the particles as a spinning glass, where each glass has two spinning states. {\em In this simple model}, there exist internal forces between {\em each pair} of the spinning glasses. Moreover, we assume that there is an external force that can affect the states of the glasses. Hence, the overall energy
of a specific thermodynamic system for a specific {\em configuration} $\x$ is
\begin{equation}\label{eq:def_Hamiltonian}
H(\x)=-\sum_{i}\sum_{j<i} r_{ij}x_ix_j-\sum_{i} h_i x_i,
\end{equation}
where $x_i$ is the $i$-th element of the configuration (vector) $\x$ and it represents the state of the $i$-th glass, $r_{ij}$ models the force between glass $i$ and glass $j$, and $h_i$ models the external force applied to glass $i$. This model is illustrated in Figure~\ref{fig:spinGlass}, where each dot represents a glass, and the vertical arrows denote the state of each glass. The remaining arrows illustrate the internal forces between pairs of spin glasses and the curve in the bottom panel illustrates the external force. The energy function~\eqref{eq:def_Hamiltonian} is often called the {\em Hamiltonian}. Note that the Hamiltonian~\eqref{eq:def_Hamiltonian} is {\em quenched}, because we assume that $r_{ij}$ and $h_i$ are constant.

\begin{figure}
  \centering
  \includegraphics[width=8cm]{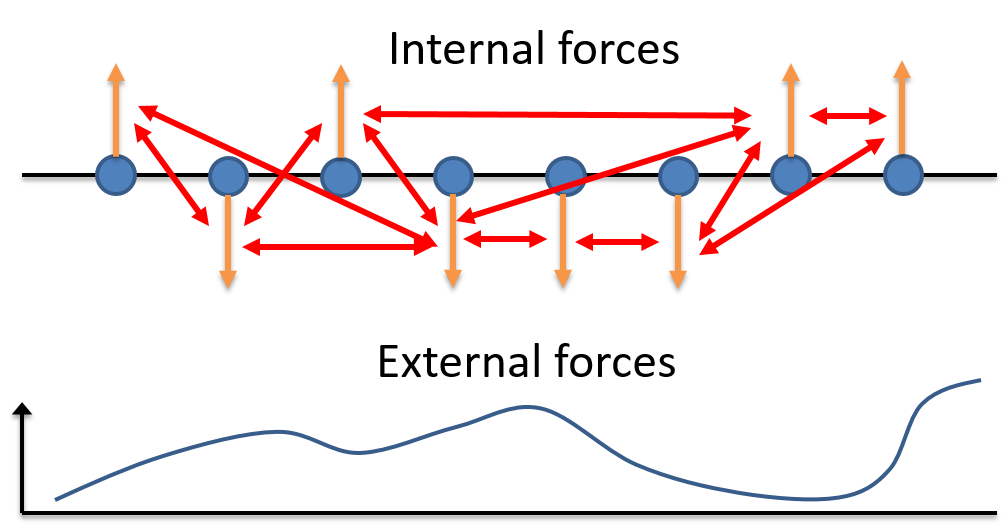}
  \caption{Illustration of spin glasses with internal and external forces. Each dot represents a spin glass. Vertical arrows denote the state of each glass. The remaining arrows illustrate the internal forces between pairs of spin glasses and the  curve in the bottom panel illustrates the external force. Figure inspired by Ralf R. M\"{u}ller.}\label{fig:spinGlass}
\end{figure}

One of the things that nature does is maximizing the entropy~\eqref{eq:def_entropy} of a thermodynamic system for a given energy (because energy is assumed to be conserved),
\begin{equation}\label{eq:def_energy}
\mathcal{E}=\sum_{\x} \mathbb{P}(\x)H(\x).
\end{equation}
It can be proved that the Boltzmann distribution~\eqref{eq:def_Boltzmann_background} maximizes the entropy~\eqref{eq:def_entropy} for a given energy~\eqref{eq:def_energy}. Moreover, the energy $H(\x)$ in the Boltzmann distribution~\eqref{eq:def_Boltzmann_background} is the Hamiltonian for configuration $\x$~\eqref{eq:def_Hamiltonian}.

{\bf Free energy and self-averaging:} Sometimes, instead of (mathematically) evaluating the maximum entropy~\eqref{eq:def_entropy}, it is more convenient to evaluate the minimum {\em free energy} given by
\begin{equation}\label{eq:def_freeEnergy_0}
\mathcal{F}=\mathcal{E}-T\mathcal{S}.
\end{equation}
Using~\eqref{eq:def_entropy},~\eqref{eq:def_Boltzmann_background}, and~\eqref{eq:def_energy} with normalization by the number of spin glasses $N$, we simplify~\eqref{eq:def_freeEnergy_0} as
\begin{equation}\label{eq:def_freeEnergy}
\mathcal{F}=-\frac{T}{N}\log Z,
\end{equation}
where the partition function $Z$ is the normalizer in~\eqref{eq:def_Boltzmann_background}.
Note that because the Hamiltonian~\eqref{eq:def_Hamiltonian} is quenched, the free energy~\eqref{eq:def_freeEnergy} is quenched.

The expression in~\eqref{eq:def_freeEnergy} is undesirable, because we have to calculate the free energy for each of the quenched Hamiltonians. Physically, it means that we need to carry out this calculation for every specific piece of material. It turns out that when the size of the system is sufficiently large, the properties of the system do not depend on the specific settings of $r_{ij}$ and $h_i$ any more~\eqref{eq:def_Hamiltonian}, which is the so-called {\em self-averaging} property of a thermodynamic system, given sufficiently many particles. Hence, we define the free energy as
\begin{equation}\label{eq:def_freeEnergy_selfAveraging}
\mathcal{F}=-\lim_{N\rightarrow \infty}\frac{T}{N} \mathbb{E}\l[\log Z\r].
\end{equation}

\sectionmark{Relevant Stat. Phys. Concepts}
\section{Information Theory and Coding Theory}
\sectionmark{Inf. \& Coding Theory}

This section discusses some important results from information theory and coding theory that are relevant to this dissertation. The author refers interested readers to the book by Cover and Thomas~\cite{Cover06} for further details and more comprehensive explanations. Coding theory and information theory are quite related and are both widely used in digital communication systems, and we simply call them ``information theory'' for brevity. Seeing that information theory is widely used in digital communication systems, we start by introducing the components of a typical digital communication system. But before that, we must understand the 
most basic of concepts: the bit.

{\bf Bit:} A bit is a unit that can represent two states. We could call these two states  0 and 1, or -1 and +1, and so on. Why are bits so important? Before entering the digital world, people used analog electronics. One of the key challenges was the noise in the signal. For example, in order to represent a number $1.2$, a waveform of magnitude 1.2 needs to be formed and transmitted. However, due to various noise and distortions, what the receiver receives is not exactly 1.2, which is undesirable. In the digital world, devices use sequences of bits to represent a number such as 1.2. The advantage of digital electronics is that they use ``bits'' that only have two states: the circuit is either on or off. The recognition and identification of a bit are much easier than recognizing and identifying analog waveforms. Information theory provides theoretical bounds for various errors when using bits, and proves that by using bits a digital communication system can exploit the communication channel as well as an analog communication system does. Moreover, information theory develops many techniques to achieve these theoretical bounds.

{\bf Components of a digital communication system:}
As illustrated in Figure~\ref{fig:commSys}, there are 7 key components of a typical digital communication system. First, the signal is encoded (compressed), so that the communication system does not need to send as many bits as required by the original signal; this step is called {\em source encoding}. Then, the encoded (compressed) signal is passed through a channel encoder, in which redundancy is introduced to the bit sequence. This redundancy is crucial to better utilize the energy of the transmitter and the channel. Next, the redundant sequence is modulated to an analog waveform by one of the available modulation schemes. After modulation, the transmitter sends the modulated signals (analog) through a noisy channel and the receiver receives a noisy sequence that contains the information of the original signal. Then, the receiver demodulates the noisy analog waveform into a sequence of bits. After that, the receiver decodes (channel decoder) the sequence to remove redundancy.\footnote{There will be errors in the demodulated sequence. By introducing redundancy in the channel encoding step, the channel decoder can identify and correct errors due to the noisy channel.} Finally, with an error-free (hopefully) sequence of bits, the last step is to decompress the data. 

\begin{figure}
  \centering
  \includegraphics[width=8cm]{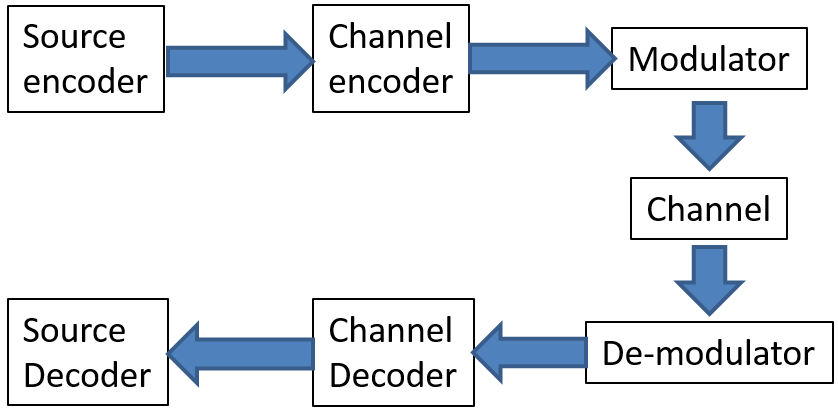}
  \caption{Illustration of a typical digital communication system. Figure inspired by Brian Hughes' slides.}\label{fig:commSys}
\end{figure}

{\bf Link between statistical physics and information theory:}\footnote{Interested readers may want to refer to Merhav~\cite{NeriBook}.} In Section~\ref{sec:background_statPhys}, we denote the configuration of a thermodynamic system by a vector $\x=[x_1,\cdots,x_N]$, where $x_i,\ i\in\{1,\cdots,N\}$, represents the state of the $i$-th spin glass. In information theory, we typically use $\x=[x_1,\cdots,x_N]$ to represent a length-$N$ signal. This signal $\x$ is passed through a channel. The counterparts of the channel in digital communication systems for statistical physics are the internal and external forces that interact with the particles of the thermodynamic system. With this brief analogy, we start introducing some important concepts and results in information theory.

{\bf Entropy (information theory):}
We have introduced entropy~\eqref{eq:def_entropy} in statistical physics. In information theory, entropy quantifies the amount of information carried by a certain signal $\x$.
If the entries of $\x$ take discrete values, then the expression for entropy in information theory is the same as~\eqref{eq:def_entropy}, and the only difference is that $\mathbb{P}(\x)$ represents the joint probability mass function of a signal $\x$. If the entries of $\x$ are continuous, then the entropy in information theory for a signal $\x$ is
\begin{equation}\label{eq:entropy_continuous}
\mathcal{S}(\x)=-\int_{\x} f(\x)\log [f(\x)] d \x,
\end{equation}
where $f(\x)$ is the joint probability density function of $\x$. 

{\bf Coding rate (source encoder):}
Before transmitting the signal $\x\in\mathbb{R}^N$ to the receiver, a communication system typically first compresses the signal, so that it can save in communication load. The coding rate is defined as
\begin{equation}\label{eq:codingRate}
R=\frac{\text{Number of bits after compression}}{N}.
\end{equation}

{\bf Distortion:} After receiving the encoded signal,\footnote{According to Figure~\ref{fig:commSys}, after data compression and before transmitting the sequence, there is typically a channel encoding step, which helps to exploit the channel to a greater extent. Here, we assume perfect channel decoding. Interested readers can refer to Cover and Thomas~\cite{Cover06}.} the receiver needs to decode it. There are two types of data compression that can be used in the source encoder. One is {\em lossless compression} and the other is {\em lossy compression}. In lossless compression, after the source decoder decodes the data sequence, it obtains a signal that is identical to the original signal. In lossy compression, the signal obtained after decoding is somewhat distorted from the original signal. The cause of this {\em distortion} is the {\em quantization} process when encoding the signal in a lossy way. A typical quantizer builds a ``grid'' in the space of  value(s) to be quantized. Next, the quantizer rounds the value(s) to the nearest point on the grid. As an example, the scalar quantizer~\cite{GershoGray1993,Cover06} rounds each (scalar) entry in the signal to the nearest grid point. The vector quantizer~\cite{LBG1980,Gray1984,GershoGray1993} rounds sequence of scalars to the nearest hyper-grid point.

Denote the distance between a certain entry in the original signal $x_i$ and the corresponding entry in the decoded signal $\widehat{x}_i$ by $d(x_i,\widehat{x}_i)$, where we can use various distance functions~\cite{FunctionalAnalysisBook} for $d(\cdot,\cdot)$. The average distortion of the entire signal is given by
\begin{equation}\label{eq:avgDistortion}
D=\frac{1}{N}\sum_{i=1}^N d(x_i,\widehat{x}_i).
\end{equation}

\begin{figure}[t]
  \centering
  \includegraphics[width=6cm]{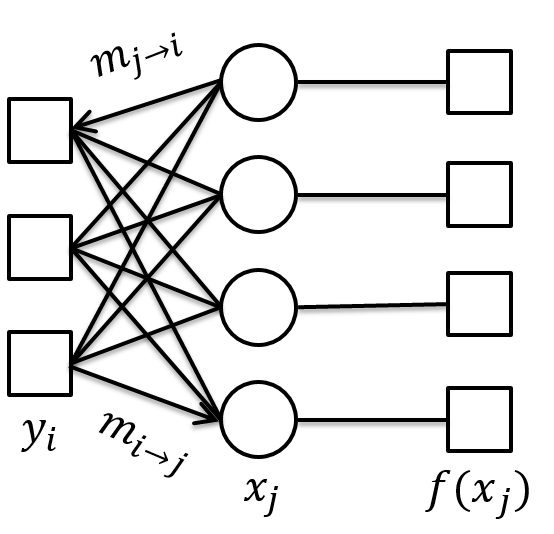}
  \caption{Illustration of belief propagation. The boxes are called the factor nodes and the circles are called the variable nodes.}\label{fig:TannerGraph}
\end{figure}

{\bf Rate-distortion theory:} There is a fundamental information theoretic relation between the rate~\eqref{eq:codingRate} and distortion~\eqref{eq:avgDistortion}. With a certain quantization scheme and  knowledge about the distribution of the signal, we can calculate the coding rate $R$~\eqref{eq:codingRate} and the expected distortion $D$~\eqref{eq:avgDistortion}. Although this calculation is not always an easy task~\cite{Arimoto72,Blahut72,Rose94}, a pivotal message from rate-distortion theory is that we can save a lot in the coding rate $R$~\eqref{eq:codingRate} by allowing a small distortion $D$~\eqref{eq:avgDistortion}.

{\bf Cavity method and belief propagation:} We can regard the linear model in~\eqref{eq:SMV} as a communication channel, where $\x$ is the signal to be transmitted, $\A$ models the transmission scheme, $\z$ is the noise in the receiver, and $\y$ is the received sequence.
Belief propagation (BP)~\cite{DMM2009,CSBP2010,Bayati2011,Montanari2012,Krzakala2012probabilistic,krzakala2012statistical,Barbier2015}  is an algorithm that can be used to infer the underlying signal $\x$ in the channel~\eqref{eq:SMV}. BP was invented independently by researchers in coding theory, statistical physics, and artificial intelligence. First of all, we represent the channel~\eqref{eq:SMV} as a Tanner graph in Figure~\ref{fig:TannerGraph}, where we express each entry $x_j$ of the signal $\x$ by a variable node (circles in Figure~\ref{fig:TannerGraph}), driven by its distribution $f(x_j)$ from a factor node (boxes in Figure~\ref{fig:TannerGraph}). Then, variable nodes are interacting with the factor nodes $y_i$'s. 

The messages $m_{i\rightarrow j}(x_j)$ and $m_{j\rightarrow i}(x_j)$ given by the canonical BP updating rules for the posterior distribution $f(\x|\y)$ are as follows,
\begin{equation}\label{eq:canonicalBP}
\begin{split}
m_{i\rightarrow j}(x_j)&=\frac{1}{Z_{i\rightarrow j}} \bigintss \l[\prod_{k\neq j} m_{k\rightarrow i}(x_k)\r]\operatorname{e}^{-\frac{1}{2\sigma_Z^2}\l(\sum_{k\neq j} A_{ik}x_k+A_{ik}x_k-y_i\r)^2}\l[\prod_{k\neq j}d x_k\r],\\
m_{j\rightarrow i}(x_j)&=\frac{1}{Z_{j\rightarrow i}}f(x_j)\prod_{q\neq j} m_{q\rightarrow j}(x_j).
\end{split}
\end{equation}
Note that in statistical physics, the factor nodes model the forces between (or among) spin glasses (variable nodes). When $\A$ is sparse or locally tree-like, BP yields an estimate that converges to the true posterior distribution $f(\x|\y)$. With this posterior distribution, we obtain the estimate $\widehat{\x}=\mathbb{E}[\x|\y]$ of the original signal that achieves the smallest mean squared error~\cite{RanganGAMP2011ISIT}.

\chapter{Minimum Mean Squared Error for Multi-measurement Vector Problem}
\label{chap-MMV}
\chaptermark{MMSE for MMV}

The multi-measurement vector (MMV) problem~\eqref{eq:MMVmodel_intro} considers the estimation of a set of sparse signal vectors
that share common supports, and has applications such as radar array signal processing,
acoustic sensing with multiple speakers, magnetic resonance imaging
with multiple coils~\cite{JuYeKi07,JuSuNaKiYe09}, and diffuse optical tomography using multiple
illumination patterns~\cite{LeeKimBreslerYe2011}.  
In this chapter, which is based on our work with Baron~\cite{ZhuBaronCISS2013} and with Baron and Krzakala~\cite{ZhuBaronKrzakala2016}, two related MMV settings are studied. In the first setting, each signal vector is measured by a different independent and identically distributed (i.i.d.) measurement matrix,
while in the second setting, all signal vectors are measured by the same i.i.d. matrix. Although there are many algorithms~\cite{DuarteWakinBaronSarvothamBaraniuk2013,tropp2006ass,chen2006trs,malioutov2005ssr,tropp2006ass2,cotter2005ssl, Mishali08rembo,LeeBreslerJunge2012,YeKimBresler2015,ZinielSchniter2011} for solving the unknown vectors in the MMV problem~\eqref{eq:MMVmodel_intro}, the performance limits of MMV signal estimation
in the presence of measurement noise have not been studied. In this chapter, replica analysis~\cite{Tanaka2002,GuoVerdu2005,Montanari2006,Krzakala2012probabilistic,
krzakala2012statistical,MezardMontanariBook,Barbier2015,Lesieur2015}, borrowed from statistical physics, is performed for these two MMV settings, and the minimum mean squared error (MMSE), which turns out to be identical for both settings, is obtained as a function of the noise variance and number of measurements. To showcase the application of MMV models, the MMSE's of complex single measurement vector (SMV) problems with both real and complex measurement matrices are also analyzed. Multiple performance regions for MMV are identified where the MMSE behaves differently as a function of the noise variance and the number of measurements.

Belief propagation (BP) is a signal estimation framework for linear inverse problems that often achieves the MMSE asymptotically. A phase transition for BP is identified. This phase transition, verified by numerical results, separates the regions where BP achieves the MMSE and where it is sub-optimal. Numerical results also illustrate that more signal vectors in the jointly sparse signal ensemble lead to a better phase transition.

Realizing that the mean squared error might not be the only error metric that is of interest, we propose some future directions involving the study of optimal performance for arbitrary user-defined additive error metrics for MMV problems by extending the work of Tan and coauthors~\cite{Tan2014,Tan2014Infty}.

\section{Related Work and Contributions}\label{sec:MMVintro}

In multi-measurement vector (MMV) problems, thanks to the common
support, the number of sparse coefficients that can be successfully estimated
increases with the number of
measurements. This property was evaluated rigorously for noiseless
measurements using
$\ell_0$ minimization~\cite{DuarteWakinBaronSarvothamBaraniuk2013}.
To address measurement noise, estimation approaches for MMV problems
have included greedy algorithms such as SOMP~\cite{tropp2006ass,chen2006trs},
$\ell_1$ convex relaxation~\cite{malioutov2005ssr,tropp2006ass2}, and M-FOCUSS~\cite{cotter2005ssl}. REduce MMV and BOost (ReMBo) has
been shown to outperform conventional methods~\cite{Mishali08rembo}, and subspace methods have also
been used to solve MMV problems~\cite{LeeBreslerJunge2012,YeKimBresler2015}.
Statistical approaches~\cite{ZinielSchniter2011} often achieve the oracle minimum mean squared error (MMSE).
However, the performance limits of MMV signal estimation in the presence
of measurement noise have not been studied.

Replica analysis is a statistical physics method that can be used to analyze the MMSE and phase transition for inverse problems~\cite{Tanaka2002,GuoVerdu2005,Montanari2006,Krzakala2012probabilistic,krzakala2012statistical,MezardMontanariBook,Barbier2015,Lesieur2015}.
Barbier and Krzakala~\cite{Barbier2015} studied the MMSE for estimating superposition codes using replica analysis. In this chapter, we extend the derivation in Barbier and Krzakala~\cite{Barbier2015} to two related yet different MMV settings: ({\em i}) $J$ jointly sparse signals are measured by $J$ different dense matrices that are independent and identically distributed (i.i.d.), and ({\em ii}) $J$ jointly sparse signals are measured by $J$ identical i.i.d. matrices. We only consider dense i.i.d. Gaussian matrices in this work, while our analysis can be extended to other i.i.d. matrices easily.

We make several contributions in this chapter. First, we obtain the information theoretic MMSE for the two MMV settings above under the Bayesian setting. Second, we show that in the large system limit (defined in Definition~\ref{def:chap1-largeSystemLimit}) the MMSE's for these two settings are identical to the single measurement vector (SMV) problem with a dense measurement matrix and a block sparse signal with fixed length blocks. Third, we derive the MMSE for complex SMV problems by noticing that complex SMV is essentially an MMV problem. Fourth,
we identify several performance regions for MMV, where the MMSE
has different characteristics based on the channel noise variance and measurement rate. Finally, we find a phase transition for belief propagation algorithms (BP)~\cite{DMM2009,CSBP2010,Bayati2011,Montanari2012,Krzakala2012probabilistic,krzakala2012statistical,Barbier2015} applied to MMV problems, which separates regions where BP achieves the MMSE asymptotically and where it is sub-optimal. BP simulation results
confirm the phase transition results. Seeing that the mean squared error (MSE) might not be the only error metric that is of interest, we propose a future direction to extend the work of Tan and coauthors~\cite{Tan2014,Tan2014Infty} to MMV settings, so that we can analyze the performance limits for arbitrary user-defined additive error metrics, as well as design an algorithmic framework that can achieve such performance limits.

The remainder of this chapter is organized as follows.
We introduce our signal and measurement models in Section~\ref{sec:model},
followed by replica analysis for two MMV settings as well as two complex SMV problems in Section~\ref{sec:main}. Section~\ref{sec:proof} proves the results of Section~\ref{sec:main}.
Numerical results are discussed in Section~\ref{sec:numeric}. Section~\ref{sec:errorMetric} proposes some future directions to study the performance of arbitrary user-defined additive error metrics for MMV problems and we conclude in Section~\ref{sec:conclusion}. Some detailed derivations appear in Appendix~\ref{chap:append-A}.

\section{Signal and Measurement Models}
\label{sec:model}

\setcounter{equation}{0} \indent

{\bf Signal model}:
We consider an ensemble of $J$ signal vectors, $\underline{\x}^{(j)}\in\mathbb{R}^N,\ j\in\{1,\cdots,J\}$, where $j$ is the index of the signal. As in Section~\ref{sec:Chap1-setting}, we consider
a {\em super-symbol} $\x_l=\l[\underline{x}_l^{(1)},\cdots,\underline{x}_l^{(J)}\r]^{\top},\ l\in\{1,\cdots,N\}$, where $\{\cdot\}^{\top}$ denotes the transpose. The super-symbol $\x_l$ follows a $J$-dimensional Bernoulli-Gaussian distribution (defined in~\eqref{eq:chap1-jsm}),
\begin{equation}\label{eq:jsm}
f(\x_l)=\rho \phi(\x_l)+(1-\rho)\delta(\x_l),
\end{equation}
where $\rho$ is the sparsity rate, $\phi(\x_l)$ is a $J$-dimensional Gaussian distribution with zero mean and identity covariance matrix, and $\delta(\x_l)$ is the delta function for $J$-dimensional vectors.

\begin{myDef}[Jointly sparse]\label{def:jointly_sparse}
{\em Ensembles of signals that obey~\eqref{eq:jsm} are called jointly sparse.}
\end{myDef}
{\bf Measurement models}:
Each signal $\underline{\x}^{(j)}$ is measured by
an i.i.d. Gaussian measurement matrix $\underline{\A}^{(j)}\in\mathbb{R}^{M\times N}$, $\underline{A}_{\mu l}^{(j)} \sim \mathcal{N}(0,\frac{1}{N})$, where $\mu$ refers to the row index and $l$ is the column index. The measurements $\underline{\y}^{(j)}$ are corrupted by i.i.d. Gaussian noise $\underline{\z}^{(j)}$ consisting of entries $\underline{z}_{\mu}^{(j)}\sim \mathcal{N}(0,\sigma_Z^2)$,
\begin{equation}\label{eq:MMVmodel}
\underline{\y}^{(j)}=\underline{\A}^{(j)}\underline{\x}^{(j)}+\underline{\z}^{(j)},\quad j\in\{1,\cdots,J\}.
\end{equation}
When the number of signal vectors becomes $J=1$, this MMV model~\eqref{eq:MMVmodel} becomes an SMV problem. Note that SMV and MMV problems were motivated in~\eqref{eq:SMV} and~\eqref{eq:MMVmodel_intro}, respectively.
Our analysis in this chapter is readily extended to other i.i.d. matrices, jointly sparse signals~\eqref{eq:jsm}, and other i.i.d. noise distributions.

\begin{myDef}[MMV-1]\label{def:MMV_set1}
{\em The setting MMV-1 refers to the measurement model in~\eqref{eq:MMVmodel} with all matrices $\underline{\A}^{(j)}$ being different.}
\end{myDef}
\begin{myDef}[MMV-2]\label{def:MMV_set2}
{\em The setting MMV-2 refers to the measurement model in~\eqref{eq:MMVmodel} with all matrices $\underline{\A}^{(j)}$ being equal.}
\end{myDef}

In the signal model~\eqref{eq:jsm} and measurement model~\eqref{eq:MMVmodel}, the sparsity rate $\rho$, channel noise variance $\sigma_Z^2$, and number of channels $J$ are constant. We are interested in the large system limit, which has been defined in Definition~\ref{def:chap1-largeSystemLimit} in Section~\ref{sec:Chap1-setting}. For readers' convenience, we restate the definition of the large system limit as follows.

\begin{myDef}[Large system limit~\cite{GuoWang2008}]\label{def:largeSystemLimit}
The signal length $N$ scales to infinity, and the
number of measurements $M=M(N)$ depends on $N$ and also scales to infinity, where
the ratio approaches a positive constant $\kappa$,
\begin{equation}\label{eq:measurementRate}
\lim_{N\rightarrow\infty} \frac{M(N)}{N} = \kappa>0.
\end{equation}
\end{myDef}
We call $\kappa$ the measurement rate.

\section{Replica Analysis for MMV Settings}
\label{sec:main}

Section~\ref{sec:model} discussed two MMV settings. Both settings have applications in real-world problems such as magnetic resonance imaging~\cite{JuYeKi07,JuSuNaKiYe09} and sensor networks~\cite{pottie2000}. Although numerous algorithms for MMV signal estimation have been proposed~\cite{tropp2006ass,chen2006trs,malioutov2005ssr,tropp2006ass2,cotter2005ssl,Mishali08rembo,ZinielSchniter2011}, what is often missing is an information theoretic analysis of the best possible MSE performance. In this chapter, we only consider the MSE as our performance metric, except for Section~\ref{sec:errorMetric}.

\subsection{Statistical physics background and replica method}\label{sec:set1}
\begin{figure}[t]
\centering
\includegraphics[width=8.5cm]{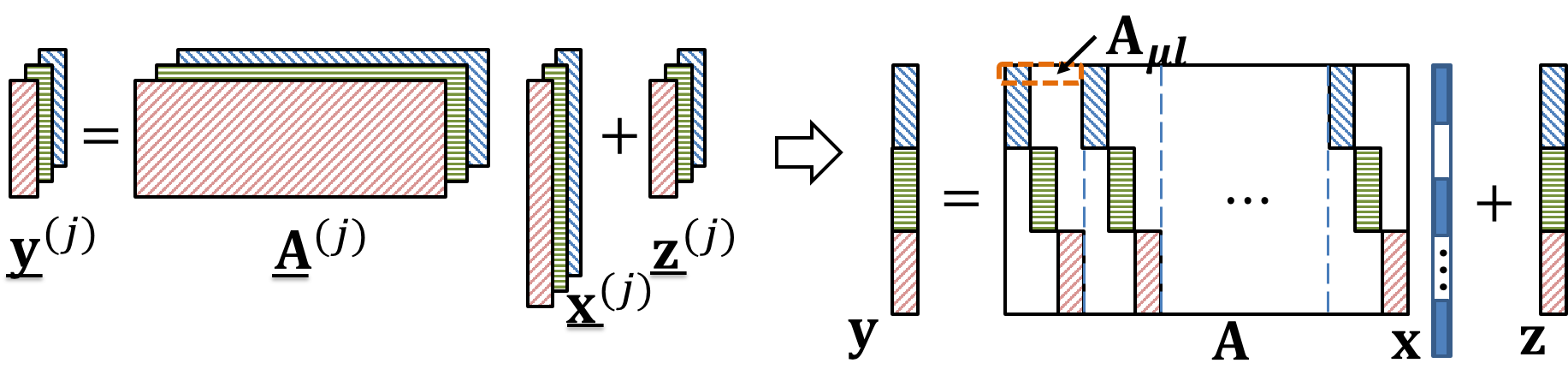}
\caption{Illustration of MMV channel~\eqref{eq:MMVmodel} with $J=3$ signal vectors (left), and one of its possible SMV forms (right). Different background patterns differentiate entries from different channels, and blank space denotes zeros.}\label{fig:channel}
\end{figure}

In order to express~\eqref{eq:MMVmodel} using a single channel, we transform it to an SMV form. One possible way to do so is illustrated in Figure~\ref{fig:channel}.
The equivalent SMV problem is
\begin{equation}\label{eq:MMVchannel}
  \y=\A\x+\z,
\end{equation}
where $\A\in\mathbb{R}^{MJ\times NJ}$ is the matrix, $\y\in\mathbb{R}^{MJ}$ are the measurements, and the noise is $\z\in\mathbb{R}^{MJ}$.
Entries of the signal vectors $\underline{\x}^{(j)}$, measurement vectors $\underline{\y}^{(j)}$, and noise vectors $\underline{\z}^{(j)}$ in~\eqref{eq:MMVmodel} form the SMV signal $\x$, measurements $\y$, and noise $\z$~\eqref{eq:MMVchannel} with
\begin{equation*}
x_{(l-1)J+j}=\underline{x}^{(j)}_l,\ y_{(j-1)M+\mu}=\underline{y}^{(j)}_{\mu},\ \text{and}\ z_{(j-1)M+\mu}=\underline{z}^{(j)}_{\mu},
\end{equation*}
respectively.
Entries of the matrix $\underline{\A}^{(j)}$~\eqref{eq:MMVmodel} form the SMV matrix $\A$~\eqref{eq:MMVchannel} with
$A_{(j-1)M+\mu,(l-1)J+j}=\underline{A}^{(j)}_{\mu l}$; other entries of $\A$ are zeros.
The posterior for the estimate $\widehat{\x} \in\mathbb{R}^{NJ}$, comprised of super-symbols $\widehat{\x}_l=\l[\widehat{x}_{(l-1)J+1},\cdots,\widehat{x}_{lJ}\r]^{\top},\ l\in\{1,\cdots,N\}$, is
\begin{equation}\label{eq:pxy2}
  f(\widehat{\x}|\y)=\frac{1}{Z}\pij{l=1}{N}f(\widehat{\x}_l)\pij{\mu=1}{MJ}\l[\frac{\operatorname{e}^{-\frac{1}{2\sigma_Z^2}\l(y_{\mu}-\sij{l=1}{N}\A_{\mu l}\widehat{\x}_l\r)^2}}{\sqrt{2\pi\sigma_Z^2}}\r],
\end{equation}
where $\A_{\mu l}=[A_{\mu,(l-1)J+1}, \cdots, A_{\mu,lJ}]$ is a super-symbol highlighted by the dashed area in Figure~\ref{fig:channel}, and the denominator $Z$ is the partition function~\cite{Tanaka2002,GuoVerdu2005,Krzakala2012probabilistic,krzakala2012statistical,MezardMontanariBook,Barbier2015},
\begin{equation}\label{eq:partition}
  Z=\bigintss \pij{l=1}{N}f(\widehat{\x}_l)\pij{\mu=1}{MJ}\l[\frac{\operatorname{e}^{-\frac{1}{2\sigma_Z^2}\l(y_{\mu}-\sij{l=1}{N}\A_{\mu l}\widehat{\x}_l\r)^2}}{\sqrt{2\pi\sigma_Z^2}}\r]\pij{l=1}{N}d\widehat{\x}_l.
\end{equation}
Note that multi-dimensional integrations such as~\eqref{eq:partition} are denoted by a single $\int$ operator for brevity.
Confining our attention to the Bayesian setting~\cite{Krzakala2012probabilistic,krzakala2012statistical,Barbier2015}, $f(\widehat{\x}_l)$ follows the true distribution~\eqref{eq:jsm},
$f(\widehat{\x}_l)=\rho \phi(\widehat{\x}_l)+(1-\rho)\delta(\widehat{\x}_l)$.

By creating an analogy between the channel~\eqref{eq:MMVchannel} and a many-body thermodynamic system~\cite{Tanaka2002,GuoVerdu2005,Krzakala2012probabilistic,krzakala2012statistical,MezardMontanariBook,Barbier2015}, the posterior~\eqref{eq:pxy2} can be interpreted as the Boltzmann measure on a disordered system with the following Hamiltonian,
\begin{equation}\label{eq:Hamiltonian}
H(\widehat{\x})=\sum_{l=1}^N \log [f(\widehat{\x}_l)]+\sum_{\mu=1}^{MJ} \frac{1}{2\sigma_Z^2}\l(y_{\mu}-\sum_{l=1}^N\A_{\mu l} \widehat{\x}_l\r)^2.
\end{equation}

The averaged free energy of the disordered system given by~\eqref{eq:Hamiltonian} characterizes the thermodynamic properties of the system. Evaluating the fixed points (local maxima) in the free energy expression provides the MMSE for the channel~\eqref{eq:MMVchannel}~\cite{Tanaka2002,GuoVerdu2005,Krzakala2012probabilistic,krzakala2012statistical,MezardMontanariBook,Barbier2015}. {\em Under the assumption of self-averaging}~\cite{Tanaka2002,GuoVerdu2005,Krzakala2012probabilistic,krzakala2012statistical,MezardMontanariBook,Barbier2015}, the free energy is defined as\footnote{Part of the literature~\cite{Tanaka2002,GuoVerdu2005}, including~\eqref{eq:def_freeEnergy_selfAveraging} in this dissertation, defines the free energy as the negative of~\eqref{eq:free_energy}, so that fixed points of the free energy correspond to local minima.}
\begin{equation}\label{eq:free_energy}
  \mathcal{F}=\lim_{N\rightarrow\infty}\frac{1}{N}\mathbb{E}_{\A,\x,\z}[\log (Z)],
\end{equation}
which is difficult to evaluate. 
Note that $\mathbb{E}_{\A,\x,\z}[\cdot]$ denotes expectation with respect to (w.r.t.) $\A,\x$, and $\z$.
The replica method~\cite{Tanaka2002,GuoVerdu2005,Krzakala2012probabilistic,krzakala2012statistical,MezardMontanariBook,Barbier2015} introduces $n$ replicas of the estimate $\widehat{\x}$ as $\widehat{\x}^a,\ a\in\{1,\cdots,n\}$, and the free energy~\eqref{eq:free_energy} can be approximated by the replica trick~\cite{Krzakala2012probabilistic,krzakala2012statistical,MezardMontanariBook,Barbier2015},
\begin{equation}\label{eq:replicaTrick}
  \mathcal{F}=\lim_{N\rightarrow\infty}\lim_{n\rightarrow 0} \frac{\mathbb{E}_{\A,\x,\z}[ Z^n]-1}{Nn}.
\end{equation}
Note that the self-averaging property that leads to~\eqref{eq:free_energy} and the replica trick~\eqref{eq:replicaTrick}, as well as the replica symmetry assumptions that appear in latter parts of this chapter, are assumed to be valid in this work, and their rigorous justification is still an open problem in mathematical physics~\cite{Tanaka2002,GuoVerdu2005,Krzakala2012probabilistic,krzakala2012statistical,MezardMontanariBook,Barbier2015}.\footnote{Recently, the replica Gibbs free
energy has been proven rigorously for the SMV case
by Barbier et al.~\cite{BDMK2016} and
Reeves and Pfister~\cite{ReevesPfister2016}. We conjecture that by generalizing these two works~\cite{BDMK2016,ReevesPfister2016},
our MMV analysis can be made rigorous; we leave it for future work.}

\textbf{Evaluating the free energy}:
To evaluate the free energy~\eqref{eq:replicaTrick}, we calculate $\mathbb{E}_{\A,\x,\z}\l[Z^n\r]$ as follows,
\begin{equation}\label{eq:EZn1}
  \mathbb{E}_{\A,\x,\z}\l[Z^n\r]=(2\pi\sigma_Z^2)^{-\frac{nMJ}{2}}\times \mathbb{E}_{\x}\l[\displaystyle{\int \pij{l=1}{N}\pij{a=1}{n} f(\widehat{\x}_l^a)\pij{\mu=1}{M}\mathbb{X}_{\mu}\pij{l=1}{N}\pij{a=1}{n}d\widehat{\x}_l^a}\r],
\end{equation}
where $Z$ is given in~\eqref{eq:partition},
\begin{equation}\label{eq:Xmu}
  \mathbb{X}_{\mu}=\mathbb{E}_{\A,\z}\l[\operatorname{e}^{-\frac{1}{2\sigma_Z^2}\sij{j=1}{J}\sij{a=1}{n}(v_{\mu j}^a)^2}\r],
\end{equation}
$a$ is the replica index, $\widehat{\x}^a_l$ is the $l$-th  super-symbol of $\widehat{\x}^a$, and
\begin{equation}\label{eq:v_mu_a}
v_{\mu j}^a=\sij{l=1}{N}\A_{\mu+M(j-1),l}(\x_l-\widehat{\x}_l^a)+z_{\mu+M(j-1)}.
\end{equation}

\begin{myLemma}\label{lemma:covIsSame}
In the large system limit, the quantity $\mathbb{X}_{\mu}$~\eqref{eq:Xmu} is the same for both MMV-1 and MMV-2.
\end{myLemma}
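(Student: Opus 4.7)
The plan is to rewrite $\mathbb{X}_{\mu}$ in a form that makes its dependence on the settings MMV-1 versus MMV-2 transparent, namely a Gaussian integral whose kernel is determined by a small (finite-sized) covariance matrix of the variables $v_{\mu j}^{a}$. Observe that for fixed $\x$ and $\widehat{\x}^{a}$, each $v_{\mu j}^{a}$ from~\eqref{eq:v_mu_a} is a sum of $N$ independent Gaussian contributions from the row of $\A$ plus an independent noise term, so the random vector $(v_{\mu j}^{a})_{j,a}\in\mathbb{R}^{Jn}$ is jointly Gaussian with mean zero. Consequently, $\mathbb{X}_{\mu}$ reduces to a finite-dimensional Gaussian moment generating function and depends on $\A$ and $\z$ only through the covariance matrix $\mathbf{C}$ of this vector.

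Next I would compute $\mathbf{C}$ explicitly in both settings. A convenient route is a Hubbard--Stratonovich decoupling, introducing auxiliary variables $u_{\mu j}^{a}$ via
\begin{equation*}
\exp\!\l(-\tfrac{1}{2\sigma_{Z}^{2}}(v_{\mu j}^{a})^{2}\r)=\int\frac{d u_{\mu j}^{a}}{\sqrt{2\pi}}\exp\!\l(-\tfrac{1}{2}(u_{\mu j}^{a})^{2}-\tfrac{i}{\sigma_{Z}}u_{\mu j}^{a}v_{\mu j}^{a}\r),
\end{equation*}
and then performing the Gaussian integration over the matrix entries and noise. The resulting quadratic form in $u_{\mu j}^{a}$ is governed by the sample cross-correlations $Q_{ja,j'a'}=\frac{1}{N}\sum_{l}(\underline{x}_{l}^{(j)}-\underline{\widehat{x}}_{l}^{(j),a})(\underline{x}_{l}^{(j')}-\underline{\widehat{x}}_{l}^{(j'),a'})$ plus diagonal noise contributions. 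In MMV-1 the independence of $\underline{\A}^{(j)}$ and $\underline{\A}^{(j')}$ for $j\neq j'$ makes the off-diagonal (cross-channel) blocks of $\mathbf{C}$ vanish identically, while in MMV-2 they retain the value $Q_{ja,j'a'}$.

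The crux is to show that for $j\neq j'$ these cross-channel terms vanish in the large system limit. Because the super-symbol prior~\eqref{eq:jsm} uses a $\phi$ with identity covariance, the components $\underline{x}_{l}^{(j)}$ and $\underline{x}_{l}^{(j')}$ of a super-symbol are uncorrelated with zero mean, and similarly for the independently drawn replica variables $\underline{\widehat{x}}_{l}^{(j),a}$ that parameterize $\mathbb{X}_{\mu}$ before being integrated against the prior. The summands in $Q_{ja,j'a'}$ are i.i.d. across $l$ with mean zero, so by the law of large numbers the cross-channel entries of $\mathbf{C}$ are $O(N^{-1/2})$ for typical $\x$ and $\widehat{\x}^{a}$. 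Since $\mathbf{C}$ is a finite $Jn\times Jn$ matrix and $\mathbb{X}_{\mu}=|\mathbf{I}+\mathbf{C}/\sigma_{Z}^{2}|^{-1/2}$ depends continuously on $\mathbf{C}$, this vanishing perturbation produces an $O(N^{-1/2})$ discrepancy between $\mathbb{X}_{\mu}^{\text{MMV-1}}$ and $\mathbb{X}_{\mu}^{\text{MMV-2}}$, which disappears as $N\to\infty$.

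The main obstacle will be to make the concentration argument rigorous at the level required by the replica computation, since $\widehat{\x}^{a}$ are not yet fixed at this stage of the derivation but are integration variables that later become coupled to $\x$ at the saddle point. The resolution is to note that within~\eqref{eq:EZn1} the $\widehat{\x}^{a}$ enter only through the i.i.d. prior in each super-symbol $l$ (independent across replicas), so the cross-channel sample averages $Q_{ja,j'a'}$ with $j\neq j'$ are sums of i.i.d. zero-mean random variables across $l$, and concentration holds uniformly on the set of configurations that dominates the integral. This subtlety — matching the sense in which ``large system limit'' applies to a functional of $\x$ and $\widehat{\x}^{a}$ — is, under the standard self-averaging and replica-symmetry assumptions noted after~\eqref{eq:replicaTrick}, enough to complete the proof.
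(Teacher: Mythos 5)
Your proposal is correct and follows the same skeleton as the paper's proof: both arguments reduce $\mathbb{X}_{\mu}$ to a functional of the finite $nJ\times nJ$ covariance matrix of the zero-mean jointly Gaussian vector $(v_{\mu j}^{a})$ (cf.~\eqref{eq:Xmu2}), so the lemma comes down to showing that the two settings yield the same covariance in the large system limit. The one substantive difference is how the cross-channel blocks are handled. The paper writes $w_2$ and $w_4$ as empirical cross-correlations, converts them to the order parameters $m_a,q_{ab},q_0$ via the i.i.d.-within-super-symbol and replica-symmetry step~\eqref{eq:sx_iid1}, and argues through the transform matrices $\T_j$ that MMV-2 produces the same expressions; the fact that these entries actually vanish is only established later from the Bayesian identities $m=q_0=q$. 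You instead note that in MMV-1 the cross-channel covariances are \emph{identically} zero because the $\underline{\A}^{(j)}$ are independent and zero-mean, while in MMV-2 they are empirical averages of i.i.d.\ zero-mean summands (since $\phi$ in~\eqref{eq:jsm} has identity covariance and the replicas are drawn independently from the prior), hence $O(N^{-1/2})$ by the law of large numbers, and continuity of the determinant closes the gap. Your route is self-contained in that it proves the lemma without invoking the Bayesian saddle-point conditions, at the price of the caveat you already identify: the concentration must hold uniformly over the configurations of $\widehat{\x}^{a}$ that dominate the replica integral, which is precisely the self-averaging/replica-symmetry assumption the paper also relies on in~\eqref{eq:sx_iid1}. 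The Hubbard--Stratonovich decoupling you propose is an equivalent, slightly more roundabout, way of obtaining the Gaussian-determinant form that the paper gets by direct integration.
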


Lemma~\ref{lemma:covIsSame} is proved in Section~\ref{sec:proof}. Because of Lemma~\ref{lemma:covIsSame}, the free energy expressions for MMV-1 and MMV-2 should be identical in the large system limit. We state the result as a theorem and the detailed derivations appear in Appendix~\ref{chap:append-A}.

\begin{myTheorem}[Free energy for MMV]\label{th:free_energy}
For settings MMV-1 and MMV-2, the free energy expressions as functions of $E$ are identical in the large system limit and are given below,
\begin{eqnarray}
  \mathcal{F}(E)&=&-\frac{J}{2}\kappa\l\{\log[2\pi(\sigma_Z^2+E)]+\frac{\rho+\sigma_Z^2}{E+\sigma_Z^2}\r\}\!+\nonumber\\
  & &\quad \int f(\x_1) \int \log \l[ \int f(\widehat{\x}_1)\operatorname{e}^{-\frac{\widehat{Q}+\widehat{q}}{2}\widehat{\x}_1^{\top}\widehat{\x}_1+\widehat{m}\widehat{\x}_1^{\top}\x_1+
  \sqrt{\widehat{q}}\h^{\top}\widehat{\x}_1} d\widehat{\x}_1\r]\mathcal{D}\h \ d\x_1\label{eq:free_energy3}\\
  &=&-\frac{J}{2}\kappa\l\{\log[2\pi(\sigma_Z^2+E)]+\frac{\sigma_Z^2}{E+\sigma_Z^2}\r\}+\frac{JR(1-\rho)}{2(\kappa+E+\sigma_Z^2)}+\nonumber\\
  & &\quad \rho\int \log \Bigg[ \rho \l(\frac{E+\sigma_Z^2}{\kappa+E+\sigma_Z^2}\r)^{J/2}+(1-\rho)\operatorname{e}^{-\frac{\kappa}{2(E+\sigma_Z^2)}\g^{\top}\g}\Bigg]\mathcal{D}\g+\nonumber\\
  & &\quad (1-\rho)\int \log \l[ \rho \l(\frac{E+\sigma_Z^2}{\kappa+E+\sigma_Z^2}\r)^{J/2}+(1-\rho)\operatorname{e}^{-\frac{\kappa}{2(\kappa+E+\sigma_Z^2)}\h^{\top}\h}\r]\mathcal{D}\h,\label{eq:free_energy4}
\end{eqnarray}
where $\h,\x_1$, and $\g$ are $J$-dimensional super-symbols, and the differential $\mathcal{D}\h=\prod_{j=1}^J\frac{1}{\sqrt{2\pi}}\operatorname{e}^{-h_j^2/2}d h_j$; the same rule applies to $\mathcal{D}\g$.\footnote{The $J$-dimensional integrals in~\eqref{eq:free_energy4} can be simplified to one-dimensional integrals using a change of coordinates to $J$-sphere coordinates. Note also that $E$ approaches the MSE in the large system limit; details appear in Appendix~\ref{chap:append-A}.}
\end{myTheorem}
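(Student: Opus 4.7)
The plan is to apply the standard replica-symmetric computation of M\'ezard--Montanari and Krzakala et al., adapted to the $J$-dimensional super-symbol structure, and to leverage Lemma~\ref{lemma:covIsSame} so that MMV-1 and MMV-2 can be treated simultaneously. By Lemma~\ref{lemma:covIsSame}, the quantity $\mathbb{X}_\mu$ in~\eqref{eq:Xmu} has the same large-$N$ limit in both settings, so $\mathbb{E}_{\A,\x,\z}[Z^n]$ in~\eqref{eq:EZn1} agrees between MMV-1 and MMV-2, and hence by~\eqref{eq:replicaTrick} so do their free energies. It therefore suffices to carry out the computation once.

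First, I would compute $\mathbb{X}_\mu$ explicitly by recognizing that $v^a_{\mu j}$ in~\eqref{eq:v_mu_a} is, conditional on $\x$ and the replicas $\widehat{\x}^a$, a centered Gaussian whose $(a,b)$ covariance (for fixed $j$) is expressible through the replica overlaps
\begin{equation*}
m^a=\tfrac{1}{N}\sum_l \x_l^{\top}\widehat{\x}^a_l,\qquad Q^a=\tfrac{1}{N}\sum_l (\widehat{\x}^a_l)^{\top}\widehat{\x}^a_l,\qquad q^{ab}=\tfrac{1}{N}\sum_l (\widehat{\x}^a_l)^{\top}\widehat{\x}^b_l,
\end{equation*}
plus the signal second moment and $\sigma_Z^2$. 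Performing the Gaussian expectation over $\A,\z$ produces a quadratic form in the $J\times n$ vector of residuals that factorizes over $j=1,\dots,J$, giving a determinant raised to the power $J/2$. Inserting Dirac deltas to enforce the overlap definitions, together with their Fourier conjugates $(\widehat{m}^a,\widehat{Q}^a,\widehat{q}^{ab})$, turns the integral into one where site and measurement indices decouple and are amenable to saddle-point evaluation as $N\to\infty$.

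Next I would impose the replica-symmetric ansatz $m^a=m$, $Q^a=Q$, $q^{ab}=q$ (and similarly for hats) and take $n\to 0$ via~\eqref{eq:replicaTrick}. Define $E\equiv \rho+Q-2m$ (which one checks approaches the per-component MSE in the Bayes-optimal setting, where Nishimori identities force $Q=m+q$ and $q=m$). The measurement term then yields the first bracketed quantity in~\eqref{eq:free_energy3}, namely $-\tfrac{J\kappa}{2}\{\log[2\pi(\sigma_Z^2+E)]+(\rho+\sigma_Z^2)/(E+\sigma_Z^2)\}$. The site term becomes the single-letter expectation
\[
\int f(\x_1)\int \log\!\left[\int f(\widehat{\x}_1)\,\operatorname{e}^{-\tfrac{\widehat{Q}+\widehat{q}}{2}\widehat{\x}_1^{\top}\widehat{\x}_1+\widehat{m}\widehat{\x}_1^{\top}\x_1+\sqrt{\widehat{q}}\,\h^{\top}\widehat{\x}_1}d\widehat{\x}_1\right]\mathcal{D}\h\,d\x_1,
\]
where the $\sqrt{\widehat{q}}\,\h^{\top}\widehat{\x}_1$ term arises from a $J$-dimensional Hubbard--Stratonovich transformation used to decouple replicas. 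This gives~\eqref{eq:free_energy3}.

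Finally, to obtain~\eqref{eq:free_energy4} I would substitute the Bernoulli--Gaussian prior~\eqref{eq:jsm} for both $f(\x_1)$ and $f(\widehat{\x}_1)$, perform the inner $J$-dimensional Gaussian integrals in closed form, and split the outer $\x_1$-expectation into the Gaussian component (weight $\rho$, contributing the $\mathcal{D}\g$ term) and the delta-at-zero component (weight $1-\rho$, contributing the $\mathcal{D}\h$ term). The Bayes-optimal values of the conjugate parameters, obtained from the saddle-point equations on $\mathcal{F}$, collapse the hats to simple functions of $(E,\sigma_Z^2,\kappa)$, after which the remaining algebra reproduces the prefactors $(E+\sigma_Z^2)/(\kappa+E+\sigma_Z^2)$ and the linear-in-$(1-\rho)$ constant. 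I expect the main obstacle to be bookkeeping: carrying the factor $J$ cleanly through the Gaussian determinant, Hubbard--Stratonovich, and Nishimori reductions so that the two expressions~\eqref{eq:free_energy3} and~\eqref{eq:free_energy4} match exactly; the conceptual content (replica symmetry, $n\to 0$, saddle point) is standard once Lemma~\ref{lemma:covIsSame} has identified the two settings.
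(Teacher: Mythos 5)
Your proposal follows essentially the same route as the paper's own derivation in Section~\ref{sec:proof} and Appendix~\ref{chap:append-A}: Lemma~\ref{lemma:covIsSame} identifies the two settings through $\mathbb{X}_\mu$, the Gaussian average over $\A,\z$ is expressed via replica overlaps and their Fourier conjugates, a $J$-dimensional Hubbard--Stratonovich transform decouples the replicas, and the replica-symmetric saddle point under the Bayes-optimal conditions yields~\eqref{eq:free_energy3}, after which substituting the Bernoulli--Gaussian prior gives~\eqref{eq:free_energy4}. The one slip is in your Nishimori bookkeeping: the correct identities are $Q=\rho$ and $q=m$, so that $E=Q-q=\rho-m$ is the per-component MSE, whereas your stated $Q=m+q$ together with $q=m$ would collapse $E=\rho+Q-2m$ to the constant $\rho$.
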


\textbf{MMSE}: The $E$ that maximizes the free energy~\eqref{eq:free_energy4} {\em corresponds to} the MMSE~\cite{Krzakala2012probabilistic,krzakala2012statistical,Barbier2015}.
After finding the $E_0$ that maximizes the free energy~\eqref{eq:free_energy4}, we obtain the MMSE, $D_0=E_0$, in the large system limit.

\begin{myCoro}
The MMSE for MMV-1 and MMV-2 is the same for the same measurement rate $\kappa$, noise variance $\sigma_Z^2$, and number of signal vectors $J$.
\end{myCoro}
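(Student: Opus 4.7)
The plan is to derive the corollary as an immediate consequence of Theorem~\ref{th:free_energy}, since the replica framework already reduces the identification of the MMSE to a scalar optimization over $E$.

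First, I would invoke Theorem~\ref{th:free_energy}, which asserts that the free energy $\mathcal{F}(E)$ in~\eqref{eq:free_energy4}, regarded as a function of the scalar order parameter $E$, has exactly the same functional form for MMV-1 and MMV-2 in the large system limit. This coincidence ultimately traces back to Lemma~\ref{lemma:covIsSame}: the Gaussian quantity $\mathbb{X}_{\mu}$ in~\eqref{eq:Xmu} has the same limit in both settings, so every subsequent step in the replica computation (averaging over the disorder, introducing the overlap order parameters, and applying the saddle-point / replica-symmetric ansatz) produces the same expression. Because the parameters $\kappa$, $\sigma_Z^2$, $\rho$, and $J$ enter $\mathcal{F}(E)$ only through this common formula, fixing these parameters fixes the free energy function identically for MMV-1 and MMV-2.

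Next, I would recall the replica prescription stated right after Theorem~\ref{th:free_energy}: the MMSE is identified with $D_0 = E_0$, where $E_0$ is the maximizer of $\mathcal{F}(E)$ (or, in case of multiple local maxima, the physically relevant one selected by the usual free-energy criterion). Since two functions that agree pointwise share their sets of global maximizers, the maximizing $E_0$ is the same in both settings. Consequently, $D_0^{\text{MMV-1}} = E_0^{\text{MMV-1}} = E_0^{\text{MMV-2}} = D_0^{\text{MMV-2}}$, which is the claim.

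I do not expect a genuine technical obstacle, because the corollary is essentially a bookkeeping statement given the theorem. The only point requiring mild care is when $\mathcal{F}(E)$ admits multiple stationary points; there one should argue that the selection rule (largest free energy, or the relevant fixed point reached by the information-theoretically optimal estimator) is itself a function only of $\mathcal{F}(E)$ and therefore returns the same $E_0$ in both settings. Once this is noted, the proof reduces to one line: identical free energy functions yield identical maximizers, hence identical MMSE.
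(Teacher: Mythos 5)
Your proposal is correct and follows exactly the paper's route: the corollary is stated as an immediate consequence of Theorem~\ref{th:free_energy} (itself resting on Lemma~\ref{lemma:covIsSame}), since identical free energy functions $\mathcal{F}(E)$ yield the same maximizing $E_0$ and hence the same MMSE $D_0=E_0$. Your added remark about the multi-maximum case (the selection rule depends only on $\mathcal{F}(E)$, so both settings pick the same $E_0$) is consistent with the paper's treatment of Regions~2 and~3, where the global maximum of the free energy determines the MMSE.
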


\begin{myRemark} As the reader can see from the proof of Lemma~\ref{th:free_energy} in Section~\ref{sec:proof}, the key reason that both MMV-1 and MMV-2 have an identical MMSE is that the entries in the super-symbols $\x_l$ and $\widehat{\x}_l^{\{\cdot\}}$ are i.i.d. That said, we suspect that the MMSE for MMV-1 and MMV-2 could differ by some higher order terms.  If the entries of these super-symbols are not i.i.d., which is true in some practical MMV applications~\cite{ZinielSchniter2013MMV}, then it becomes  more difficult to analyze the covariance matrix $\G_{\mu}$ as in Section~\ref{sec:proof}. Therefore, we do not have an analysis for non-i.i.d. entries within $\x_l$ and $\widehat{\x}_l^{\{\cdot\}}$. However, we speculate that MMV-1 might have lower MMSE than MMV-2 in that case.
\end{myRemark}

{\bf Link to SMV with block sparse signal:}
The signal $\x$ in~\eqref{eq:MMVchannel} is a block sparse signal comprised of $N$ blocks of length $J$.
We study an SMV problem by replacing the measurement matrix $\A$ in~\eqref{eq:MMVchannel} with an i.i.d. Gaussian matrix $\widehat{\A}\in\mathbb{R}^{MJ\times NJ}$, i.e., $\y=\widehat{\A}\x+\z$. The entries of $\widehat{\A}$ follow the distribution, $\widehat{A}_{\mu l}\sim \mathcal{N}(0,\frac{1}{NJ})$. This SMV is similar to the setting in Barbier and Krzakala~\cite{Barbier2015}, except for the different priors and different $\ell_2$ norms in each row of $\widehat{\A}$. We consider these differences while following their derivation~\cite{Barbier2015}, and obtain the same free energy expression as~\eqref{eq:free_energy4}. We have also shown that MMV-1 and MMV-2 have the same MMSE in the large system limit.
Hence, the three settings have the same free energy expression and their MMSE's are the same under the same noise variance $\sigma_Z^2$ and measurement rate $\kappa$ in the large system limit.

\subsection{Extension to complex SMV}\label{sec:complex}
The MMV model with jointly sparse signals is a versatile model that can be adapted to other problems. As an example, we show how the MMV model can be used to analyze the MMSE of a complex SMV.
Consider the complex SMV, $\y^{\mathcal{C}}=\A^{\mathcal{C}}\x^{\mathcal{C}}+\z^{\mathcal{C}}$,
where $\x^{\mathcal{C}}=\x^{\mathcal{R}}+i\x^{\mathcal{I}}\in\mathbb{C}^N$, $\A^{\mathcal{C}}=\A^{\mathcal{R}}+i\A^{\mathcal{I}}\in\mathbb{C}^{M\times N}$, $\z^{\mathcal{C}}=\z^{\mathcal{R}}+i\z^{\mathcal{I}}\in\mathbb{C}^M$, $\y^{\mathcal{C}}=\y^{\mathcal{R}}+i\y^{\mathcal{I}}\in\mathbb{C}^M$, $i=\sqrt{-1}$, and $\mathcal{R}$ and $\mathcal{I}$ refer to the real and imaginary parts, respectively. The real and imaginary parts of the entries of $\z^{\mathcal{C}}$ both follow a Gaussian distribution, $z_l^{\mathcal{R}}, z_l^{\mathcal{I}}\sim\mathcal{N}(0,\sigma_Z^2), l\in\{1,\cdots,M\}$.
Assume that the complex signal $\x^{\mathcal{C}}$ is comprised of two jointly sparse signals, $\x^{\mathcal{R}}$ and $\x^{\mathcal{I}}$, that satisfy the $J=2$ dimensional Bernoulli-Gaussian distribution~\eqref{eq:jsm}.
We can extend the analysis of Section~\ref{sec:set1} to two settings of complex SMV:
({\em i}) the measurement matrix $\A^{\mathcal{C}}$ is real and ({\em ii}) $\A^{\mathcal{C}}$ is complex.\footnote{A replica analysis for complex SMV with a real measurement matrix appears in Guo and Verd{\'u}~\cite{GuoVerdu2005}. Their derivation does not cover complex matrices.}

{\bf Real measurement matrix:}
Suppose that $\A^{\mathcal{C}}$ is real, $\A^{\mathcal{C}}=\A^{\mathcal{R}}\in \mathbb{R}^{M\times N}$, and the entries of $\A^{\mathcal{R}}$ follow a Gaussian distribution, $A^{\mathcal{R}}_{\mu l}\sim \mathcal{N}(0,\frac{1}{N})$. Complex SMV with a real measurement matrix can be written as real-valued MMV,
\begin{equation}\label{eq:complexRealMat}
    \y^{\mathcal{R}}=\A^{\mathcal{R}}\x^{\mathcal{R}}+\z^{\mathcal{R}}\ \text{and}\     \y^{\mathcal{I}}=\A^{\mathcal{R}}\x^{\mathcal{I}}+\z^{\mathcal{I}},
\end{equation}
where $\x^{\mathcal{R}}$ and $\x^{\mathcal{I}}$ are jointly sparse and follow~\eqref{eq:jsm}.
This formulation~\eqref{eq:complexRealMat} fits into MMV-2 for $J=2$. Hence, we can obtain the MMSE according to~\eqref{eq:free_energy4}.\footnote{As a reminder, the free energy of MMV-2 is identical to that of MMV-1 in the large system limit.}

{\bf Complex measurement matrix:}
Consider a complex $\A^{\mathcal{C}}=\A^{\mathcal{R}}+i\A^{\mathcal{I}}\in\mathbb{C}^{M\times N}$ with entries $A_{\mu l}^{\mathcal{R}}, A_{\mu l}^{\mathcal{I}}\sim \mathcal{N}(0,\frac{1}{2N})$.
Expanding out the complex channel, $\y^{\mathcal{C}}=\A^{\mathcal{C}}\x^{\mathcal{C}}+\z^{\mathcal{C}}$, we obtain the equivalent real-valued SMV channel,
\begin{equation}\label{eq.realComplexChannel}
  \begin{bmatrix}
    \y^{\mathcal{R}} \\
    \y^{\mathcal{I}}
  \end{bmatrix}
=
  \begin{bmatrix}
    \A^{\mathcal{R}} & -\A^{\mathcal{I}}  \\
    \A^{\mathcal{I}}  & \A^{\mathcal{R}}
  \end{bmatrix}
    \begin{bmatrix}
    \x^{\mathcal{R}} \\
    \x^{\mathcal{I}}
  \end{bmatrix}
  +
    \begin{bmatrix}
    \z^{\mathcal{R}} \\
    \z^{\mathcal{I}}
  \end{bmatrix}.
\end{equation}

We rearrange~\eqref{eq.realComplexChannel} as follows,
\begin{equation}\label{eq:complexMatRearrange}
\underbrace{\begin{bmatrix}
    \y^{\mathcal{R}} \\
    \y^{\mathcal{I}}
  \end{bmatrix}}_{\overline{\y}}
=
\underbrace{\begin{bmatrix}
    \A_{:,1}^{\mathcal{R}},-\A_{:,1}^{\mathcal{I}},\cdots,\A_{:,N}^{\mathcal{R}}, -\A_{:,N}^{\mathcal{I}} \\
    \A_{:,1}^{\mathcal{I}},\ \ \A_{:,1}^{\mathcal{R}},\cdots,\A_{:,N}^{\mathcal{I}},\ \  \A_{:,N}^{\mathcal{R}}
  \end{bmatrix}}_{\overline{\A}}
    \underbrace{\begin{bmatrix}
    x_1^{\mathcal{R}}\\
    x_1^{\mathcal{I}}\\
    \vdots\\
    x_N^{\mathcal{R}}\\
    x_N^{\mathcal{I}}
  \end{bmatrix}}_{\overline{\x}}
  \!+\!
    \underbrace{\begin{bmatrix}
    \z^{\mathcal{R}} \\
    \z^{\mathcal{I}}
  \end{bmatrix}}_{\overline{\z}},
\end{equation}
where $\{:\}$ refers to all the rows.
In the rearranged channel~\eqref{eq:complexMatRearrange}, the measurement matrix $\overline{\A}$ consists of super-symbols,
\begin{equation}\label{eq:SMV_F}
    \overline{\A}_{\mu l}=\left\{
                \begin{array}{ll}
                 &[A_{\mu l}^{\mathcal{R}},-A_{\mu l}^{\mathcal{I}}],\ \mu\in\{1,\cdots,M\}\\
                 &[A_{\mu l}^{\mathcal{I}}, A_{\mu l}^{\mathcal{R}}],\ \mu\in\{M+1,\cdots,2M\}
                \end{array}
      		\right.,\\
\end{equation}      		
and the signal $\overline{\x}$ consists of
$\overline{\x}_{l}=\begin{bmatrix}
    x_l^{\mathcal{R}} \\
    x_l^{\mathcal{I}}
  \end{bmatrix},\ l\in\{1,\cdots,N\}$.
The measurements and noise are $\overline{\y}=\begin{bmatrix}
    \y^{\mathcal{R}} \\
    \y^{\mathcal{I}}
  \end{bmatrix}$ and
$\overline{\z}=\begin{bmatrix}
    \z^{\mathcal{R}} \\
    \z^{\mathcal{I}}
  \end{bmatrix}$, respectively.
Hence, $\overline{y}_{\mu}=\sum_{l=1}^N \overline{\A}_{\mu l}\overline{\x}_l+\overline{z}_{\mu},\ \mu\in\{1,\cdots,2M\}$.

Section~\ref{sec:proof} shows that the free energy and MMSE for complex SMV with complex measurement matrices are the same as MMV-1 with $J=2$.
Note that in the free energy expression~\eqref{eq:free_energy4}, the MSE, $D=E$~\eqref{eq:DandE}, is the average MSE of the $J$ entries of $\x_l$. Therefore, in this complex SMV setting, $D$ is the average MSE of the real and imaginary parts of the signal entries.

\section[Proof of Lemma~3.1]{Proof of Lemma~\ref{lemma:covIsSame}}\label{sec:proof}
In this section, we show that the quantity $\mathbb{X_{\mu}}$~\eqref{eq:Xmu} is the same for MMV-1 and MMV-2. Moreover, we show that  complex SMV with a complex measurement matrix also yields the same $\mathbb{X_{\mu}}$ with $J=2$.

First, we rewrite~\eqref{eq:Xmu} in the vector form
\begin{equation}\label{eq:XMuVector}
\mathbb{X}_{\mu}\!=\!\mathbb{E}_{\v_{\mu}}\!\left[\operatorname{e}^{-\frac{1}{2\sigma_Z^2}\sij{j=1}{J}\sij{a=1}{n}(v_{\mu j}^a)^2}\right]\!
=\!\mathbb{E}_{\v_{\mu}}\!\left[\operatorname{e}^{-\frac{1}{2\sigma_Z^2}\v_{\mu}^{\top}\v_{\mu}}\right],
\end{equation}
where $\v_{\mu}=[v_{\mu 1}^1,\cdots,v_{\mu 1}^a,\cdots,v_{\mu J}^1$, $\cdots,v_{\mu J}^n]^{\top}$ and $v_{\mu j}^a$ is given in~\eqref{eq:v_mu_a}.
In order to calculate the expectation w.r.t. $\v_{\mu}$ in~\eqref{eq:XMuVector}, we calculate the distribution of $\v_{\mu}$, which is approximated by a Gaussian distribution, due to the central limit theorem. The mean is $\mathbb{E}_{\A,\z}[v_{\mu j}^a]=0$.

We now calculate the covariance matrix, $\G_{\mu}=\mathbb{E}[\v_{\mu}\v_{\mu}^{\top}]$. The matrix $\G_{\mu}$ is separated into $J\times J$ blocks of size $n\times n$, as shown in Figure~\ref{fig.cov}. The main diagonal of $\G_{\mu}$ consists of entries $w_1=\mathbb{E}_{\A,\z}[(v_{\mu j}^a)^2]$. The entries in the blocks along the main diagonal (other than entries along the main diagonal itself) are $w_3=\mathbb{E}_{\A,\z}[v_{\mu j}^a v_{\mu j}^b]$. The main diagonals of other blocks have entries $w_2=\mathbb{E}_{\A,\z}[v_{\mu j}^a v_{\mu\eta}^a]$, and other entries in these blocks are $w_4=\mathbb{E}_{\A,\z}[v_{\mu j}^a v_{\mu \eta}^b]$. We now calculate each of these values as follows for MMV-1, MMV-2, and complex SMV with a complex measurement matrix.

\begin{figure}[t]
\centering
\includegraphics[width=3.5cm]{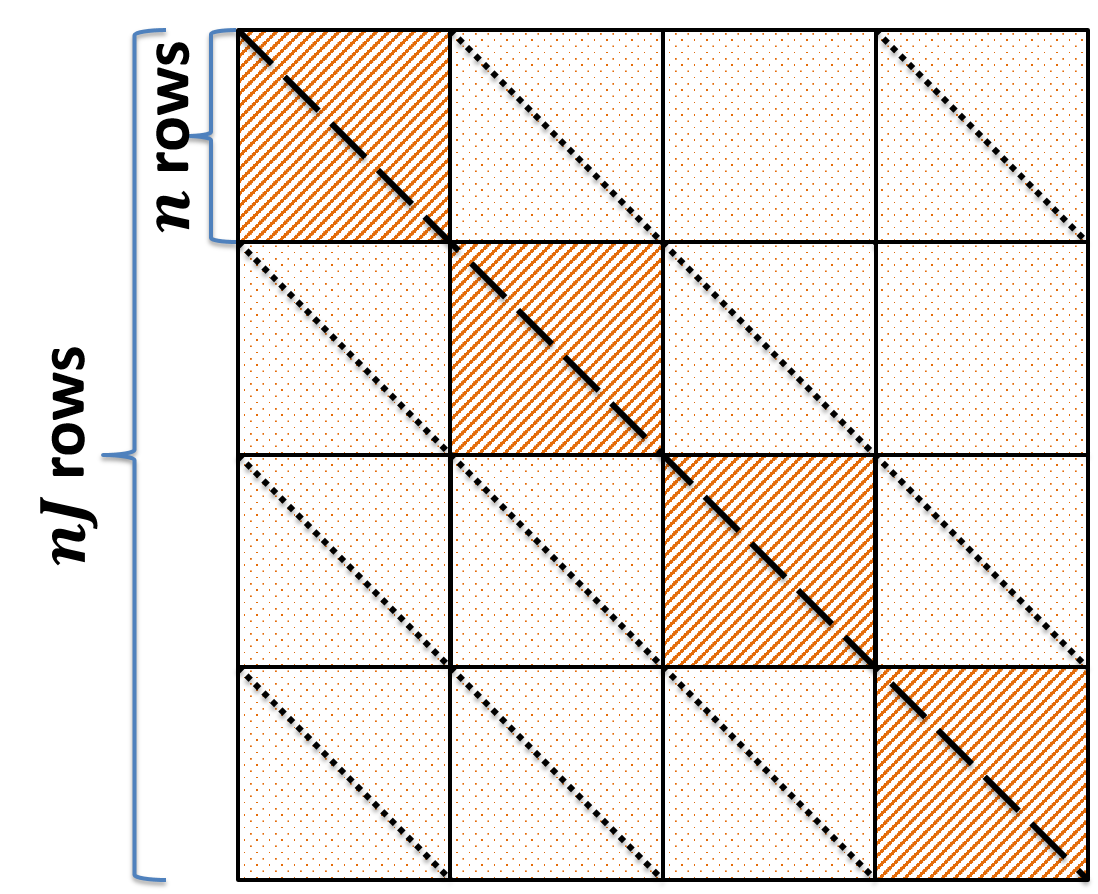}
\caption{Covariance matrix $\G_{\mu}\in\mathbb{R}^{nJ\times nJ}$. Each block in $\G_{\mu}$ has a size of $n\times n$.
The entries in the heavily marked blocks take the value $w_3$, except that entries along the dashed diagonal are $w_1$. The entries in the lightly marked blocks take the value $w_4$, except that entries along the dotted diagonals are $w_2$.}\label{fig.cov}
\end{figure}

{\bf MMV-1:}
We begin by calculating the diagonal entries of the covariance matrix $\G_{\mu}=\mathbb{E}[\v_{\mu}\v_{\mu}^{\top}]$,
\begin{equation}\label{eq:vVar1}
    w_1=\mathbb{E}_{\A,\z}\l[(v_{\mu j}^a)^2\r]=\sij{l,k=1}{N,N}\Bigg\{(\x_l-\widehat{\x}_l^a)^{\top}
     \mathbb{E}_{\A}\l[\A_{\mu+M(j-1), l}^{\top}\A_{\mu+M(j-1), k}\r](\x_k-\widehat{\x}_k^a)\Bigg\}+\sigma_Z^2.
\end{equation}
In~\eqref{eq:vVar1}, $\mathbb{E}_{\A}\l[\A_{\mu+M(j-1), l}^{\top}\A_{\mu+M(j-1), k}\r]=\frac{\delta_{k,l}}{N}\widetilde{\mathbf{I}}_J$ (cf. Figure~\ref{fig:channel}), where $\widetilde{\mathbf{I}}_J$ is a $J\times J$ matrix with only one 1 located at the $j$-th row, $j$-th column, and $\delta_{k,l}=1$ when $k=l$, else zero. Hence,~\eqref{eq:vVar1} becomes
\begin{eqnarray}
 w_1&=& \mathbb{E}_{\A,\z}\l[(v_{\mu j}^a)^2\r]=\frac{1}{N}\sij{l=1}{N} (x_{l,j}-\widehat{x}_{l,j}^a)^2+\sigma_Z^2\label{eq:vVar2}\\
  &=&\frac{1}{NJ}\sij{l=1}{N} (\x_l-\widehat{\x}_l^a)^{\top} (\x_l-\widehat{\x}_l^a)+\sigma_Z^2,\label{eq:vVar2_1}
\end{eqnarray}
where $x_{l,j}$ and $\widehat{x}_{l,j}^a$~\eqref{eq:vVar2} denote the $j$-th entries in super-symbols $\x_l$ and $\widehat{\x}_l^a$, respectively, and~\eqref{eq:vVar2_1} holds because all $J$ entries within the same super-symbol ($\x_l$ or $\widehat{\x}_l^a$) are i.i.d.

Similarly, we obtain
\begin{equation}
\begin{split}
w_2&=\mathbb{E}_{\A,\z}[v_{\mu j}^a v_{\mu\eta}^{a}] = \frac{1}{N}\sij{l=1}{N}(x_{l,j}-\widehat{x}_{l,j}^a)(x_{l,\eta}-\widehat{x}_{l,\eta}^{a})\\
&= \frac{1}{NJ}\sij{l=1}{N}(\x_l-\widehat{\x}_l^a)^{\top} (\x_l^a-\widehat{\x}_l^b),\label{eq:sx_iid1}
\end{split}
\end{equation}
where entries of $\x_l^{\{\cdot\}}$ and $\widehat{\x}_l^{\{\cdot\}}$ follow the same distribution as entries of $\x_l$ given $l$, and~\eqref{eq:sx_iid1} is due to
({\em i}) entries of $\x_l$ being i.i.d., ({\em ii}) entries of $\widehat{\x}_l^{\{\cdot\}}$ being i.i.d. for fixed $l$, and ({\em iii}) the replica symmetry assumption~\cite{Krzakala2012probabilistic,krzakala2012statistical}.
We also obtain
\begin{equation}\label{eq:v_j_eta_a}
\begin{split}
w_3=\mathbb{E}_{\A,\z}[v_{\mu j}^a v_{\mu j}^b]&=\frac{1}{NJ}\sij{l=1}{N}(\x_l-\widehat{\x}_l^a)^{\top} (\x_l-\widehat{\x}_l^b)+\sigma_Z^2,\\
w_4=\mathbb{E}_{\A,\z}[v_{\mu j}^a v_{\mu\eta}^{b}]&=\frac{1}{NJ}\sij{l=1}{N}(\x_l-\widehat{\x}_l^a)^{\top} (\x_l^a-\widehat{\x}_l^b).
\end{split}
\end{equation}

We now define the following auxiliary parameters
\begin{equation}\label{eq:auxParamsSet1}
m_a=\frac{\displaystyle\sij{l=1}{N} (\widehat{\x}_l^a)^{\top}\x_l}{NJ},\quad Q_a=\frac{\displaystyle\sij{l=1}{N} (\widehat{\x}_l^a)^{\top}\widehat{\x}_l^a}{NJ},\quad q_{ab}=\frac{\displaystyle\sij{l=1}{N} (\widehat{\x}_l^a)^{\top}\widehat{\x}_l^b}{NJ},\quad
q_0=\frac{1}{NJ}\sij{l=1}{N}(\x_l^a)^{\top} \x_l,
\end{equation}
which allow us to express \eqref{eq:vVar2_1}--\eqref{eq:v_j_eta_a} as
\begin{equation*}
w_1=\rho-2m_a+Q_a+\sigma_Z^2,
\end{equation*}
\begin{equation}\label{eq:ws2}
w_2=q_0-(m_a+m_b)+q_{ab},
\end{equation}
\begin{equation*}
w_3 = \rho-(m_a+m_b)+q_{ab}+\sigma_Z^2,
\end{equation*}
\begin{equation}\label{eq:ws4}
w_4=q_0-(m_a+m_b)+q_{ab}.
\end{equation}

Up to this point, we have obtained the entries of $\G_{\mu}$.
Plugging the distribution of $\v_{\mu}$, approximated by $f(\v_{\mu})=[(2\pi)^n\det (\G_{\mu})]^{-\frac{1}{2}}\text{exp}(-\frac{1}{2}\v_{\mu}^{\top}\G_{\mu}^{-1}\v_{\mu})$, into~\eqref{eq:XMuVector}, we obtain
\begin{equation}\label{eq:Xmu2}
\mathbb{X}_{\mu}=\l[\det\l(\mathbb{I}_n+\frac{1}{\sigma_Z^2}\G_{\mu}\r)\r]^{-1/2},
\end{equation}
where $\mathbb{I}_n$ denotes an identity matrix of size $n\times n$ and $\det(\cdot)$ is the determinant of a matrix.

{\bf MMV-2:}
For the matrix $\A$~\eqref{eq:MMVchannel} in MMV-2, rows $jM+1,\cdots,(j+1)M,\ 2\leq j \leq J$, will be the right-shift of rows $(j-1)M+1,\cdots,jM$.
We express $v_{\mu j}^a$~\eqref{eq:v_mu_a} as
\begin{equation}\label{eq:v_mu_j_a_mmv2}
  v_{\mu j}^a=\sij{l=1}{N}\A_{\mu l}{\bf T}_{j}(\x_l-\widehat{\x}_l^a)+z_{\mu+M(j-1)},\ \mu\in \{1,\cdots,M\},
\end{equation}
where $\T_{j}$ is a $J\times J$ transform matrix with the $j$-th entry of the first row being one and all other entries in $\T_j$ being zeros. Using the same derivations as in MMV-1, it can be proved that the covariance matrix $\G_{\mu}=\mathbb{E}[\v_{\mu}\v_{\mu}^{\top}]$  in MMV-2 is identical to that of MMV-1.
Therefore, $\mathbb{X}_{\mu}$ in MMV-1 and MMV-2 are identical in the large system limit.

{\bf Complex SMV with complex measurement matrix:}
The derivations are the same as in MMV-2 above, except that we need to change $\A_{\mu l}$
in~\eqref{eq:v_mu_j_a_mmv2}  to $\overline{\A}_{\mu l}$~\eqref{eq:SMV_F} and
replace $\T_j$ by
\begin{equation*}
{\bf T}=\begin{bmatrix}
    0 & 1 \\
    -1 & 0
  \end{bmatrix},
\end{equation*}
because $\overline{\A}_{(\mu+M)l}=\overline{\A}_{\mu l}{\bf T},\ \mu\in\{1,\cdots,M\}$.
Using similar steps as above, we obtain that the covariance matrix $\G_{\mu}$ in this case is also the same as that of MMV-1 with $J=2$.

\textbf{Solving $\mathbb{X}_{\mu}$}: For such a structured matrix $\G_{\mu}$ (Figure~\ref{fig.cov}), elementary transforms show that the eigenvalues (EV's) are comprised of one EV equal to $\alpha_1=[w_1+(J-1)w_2]+(n-1)[w_3+(J-1)w_4],\ (J-1)$ EV's equal to $\alpha_2=(w_1-w_2)+(n-1)(w_3-w_4),\ (n-1)$ EV's equal to $\alpha_3=[w_1+(J-1)w_2]-[w_3+(J-1)w_4]$, and $(J-1)(n-1)$ EV's equal to $\alpha_4=(w_1-w_2)-(w_3-w_4)$.

Owing to replica symmetry~\cite{Krzakala2012probabilistic,krzakala2012statistical}, we have $m_a=m_b=m$, $Q_a=Q$, and $q_{ab}=q$, cf.~\eqref{eq:auxParamsSet1}. Also, in the Bayesian setting, we have $m=q_0=q$ and $Q=\rho$.
Thus, $w_2=w_4=0$ (\eqref{eq:ws2} and~\eqref{eq:ws4}),
and
\begin{equation}\label{eq:detSet2}
\begin{split}
\det \l(\mathbb{I}_{nJ}+\frac{1}{\sigma_Z^2}\G_{\mu}\r)&= \l(1+\frac{\alpha_1}{\sigma_Z^2}\r)\l(1+\frac{\alpha_2}{\sigma_Z^2}\r)^{J-1} \l(1+\frac{\alpha_1}{\sigma_Z^2}\r)^{n-1}\l(1+\frac{\alpha_1}{\sigma_Z^2}\r)^{(n-1)(J-1)}\\
&=\l(1+n\frac{w_3}{\sigma_Z^2+\alpha_4}\r)^J\!\l(1+\frac{1}{\sigma_Z^2}\alpha_4\r)^{Jn}\!.
\end{split}
\end{equation}
Considering~\eqref{eq:detSet2}, we simplify~\eqref{eq:Xmu2},
\begin{equation}\label{eq:XmuNew}
\lim_{n\rightarrow 0}\mathbb{X}_{\mu}=\operatorname{e}^{-\frac{nJ}{2}\l[\frac{\rho-2m+\sigma_Z^2+q}{Q-q+\sigma_Z^2}+\log(Q-q+\sigma_Z^2)-\log(\sigma_Z^2)\r]},
\end{equation}
where we rely on the following Taylor series,
\begin{equation*}
\operatorname{e}^{nk}\approx 1+nk\Rightarrow \operatorname{e}^{-\frac{n}{2}k}\approx (1+nk)^{-1/2},\ n\rightarrow 0.
\end{equation*}

\begin{figure}[t] 
\centering
\includegraphics[width=8cm]{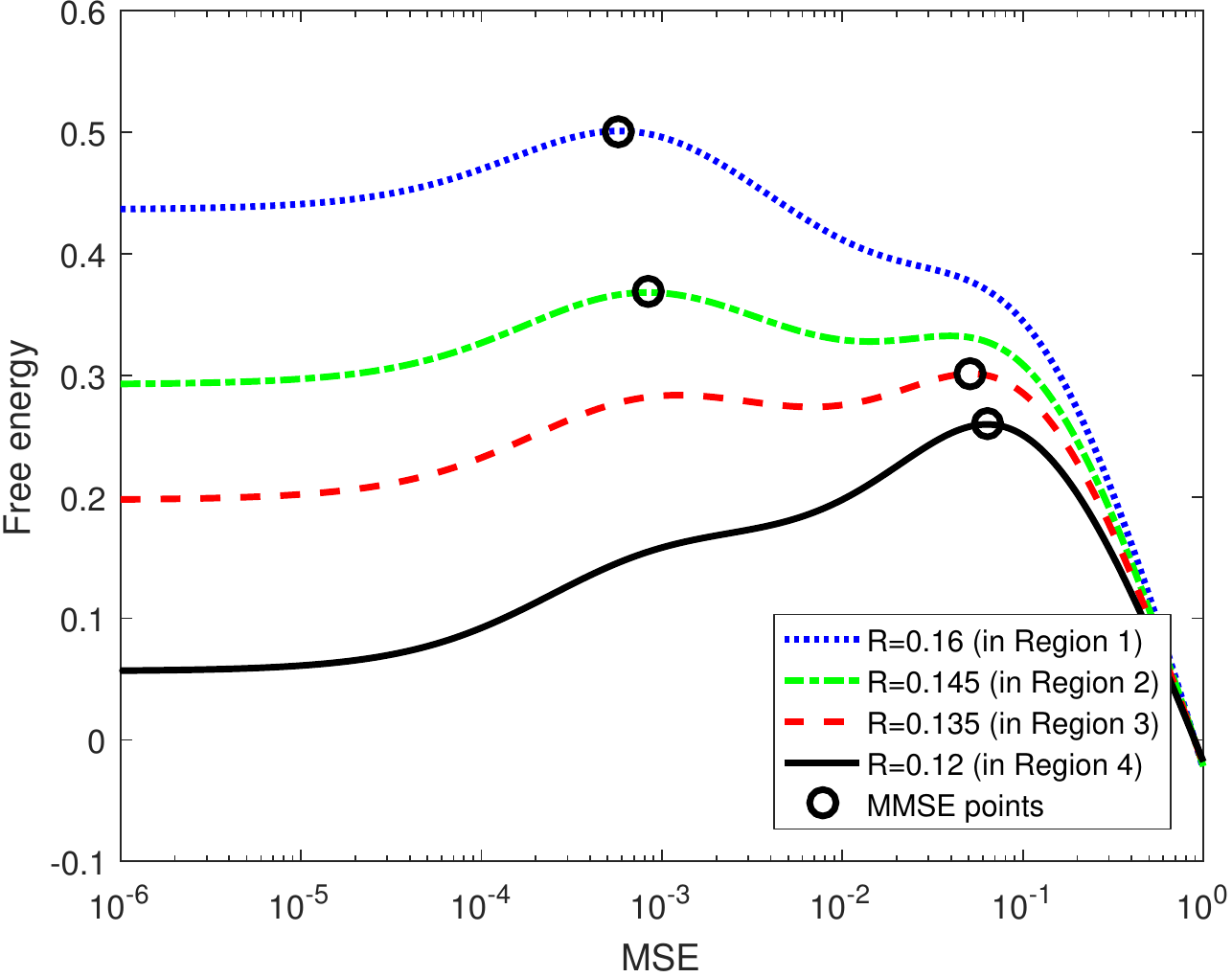}
\caption{Free energy as a function of the MSE for different measurement rates $\kappa$  (number of jointly sparse signal vectors $J=3$ and noise variance $\sigma_Z^2=-35$ dB). The black circles mark the largest free energy, and so they correspond to the MMSE.}\label{fig:freeEnergyProf}
\end{figure}

\section{Numerical Results}\label{sec:numeric}
Given a free energy expression for an MMV problem, the MMSE can be obtained by evaluating the largest free energy~\cite{Tanaka2002,GuoVerdu2005,Krzakala2012probabilistic,krzakala2012statistical,MezardMontanariBook,Barbier2015}.  Having derived the free energy for the two  MMV settings in Section~\ref{sec:main}, this section calculates the MMSE under various cases. Different performance regions of MMV are identified, where the MMSE behaves differently as a function of the noise variance $\sigma_Z^2$ and measurement rate $\kappa$. We identify a phase transition of belief propagation (BP) that separates regions where BP is optimal asymptotically or not. Simulation results match the performance predicted for BP.

\subsection{Performance regions: Definitions and numerical results}\label{sec:PerfRegion}
When calculating the MMSE~\eqref{eq:DandE} for different settings from the free energy expression~\eqref{eq:free_energy4}, four different {\em performance regions} will appear, as  discussed below; the free energy as a function of the MSE is shown in Figure~\ref{fig:freeEnergyProf} for different performance regions.

\begin{figure*}[t]
  \subfloat[$J=1$\label{fig:MMV_J1}]{
    \includegraphics[width=0.33\textwidth]{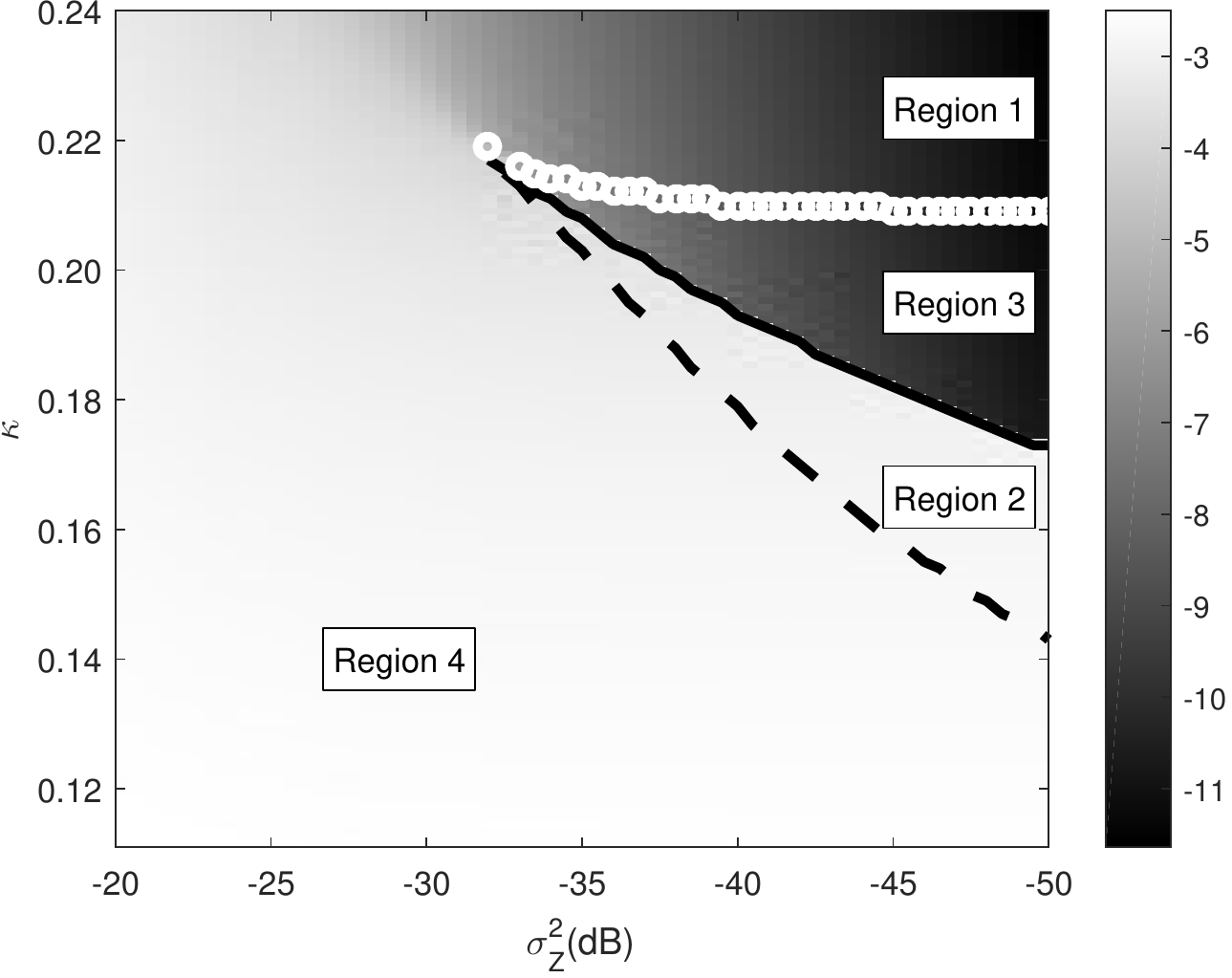}}
  \subfloat[$J=3$\label{fig:MMV_J3}]{
    \includegraphics[width=0.33\textwidth]{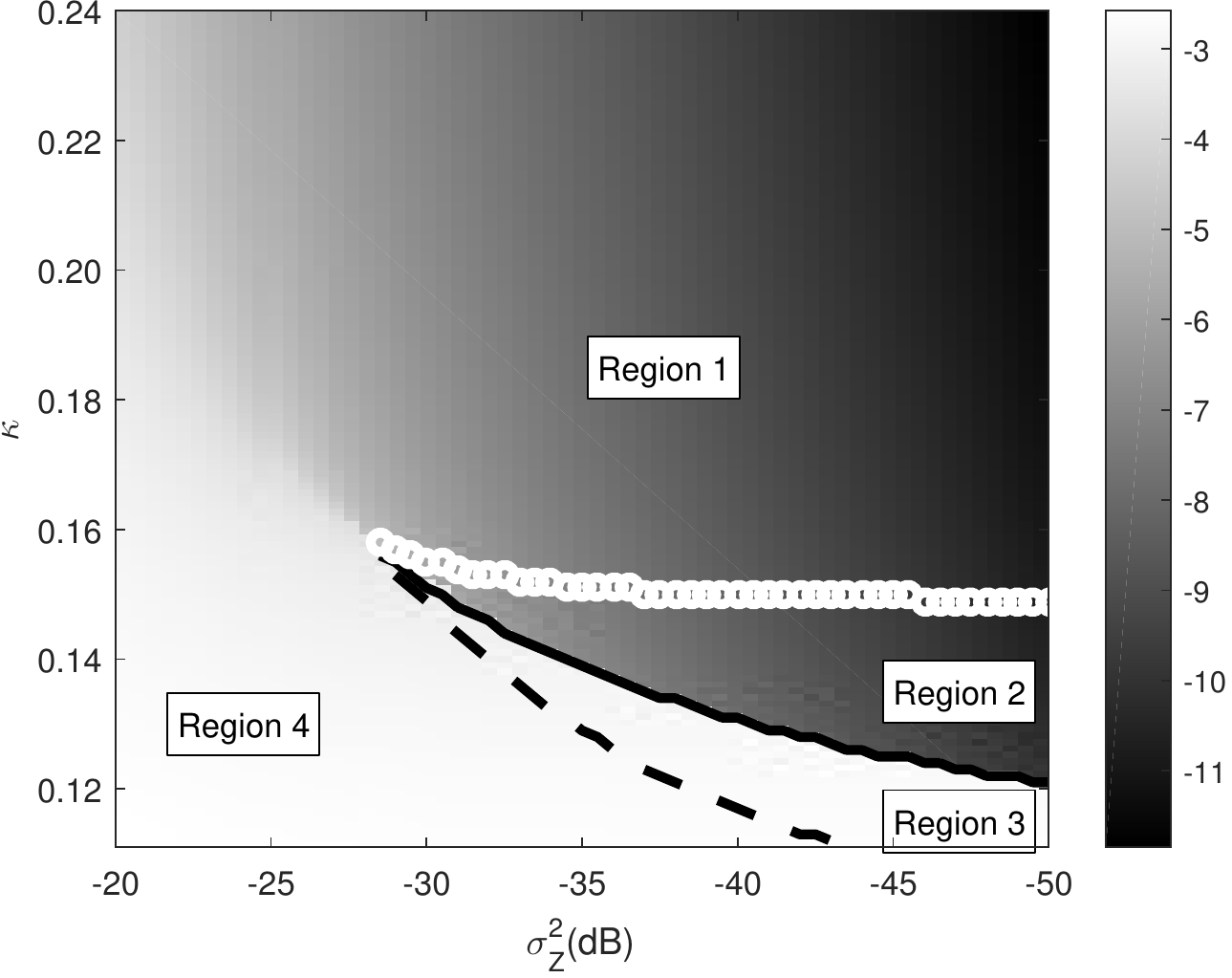}}
  \subfloat[$J=5$\label{fig:MMV_J5}]{
    \includegraphics[width=0.33\textwidth]{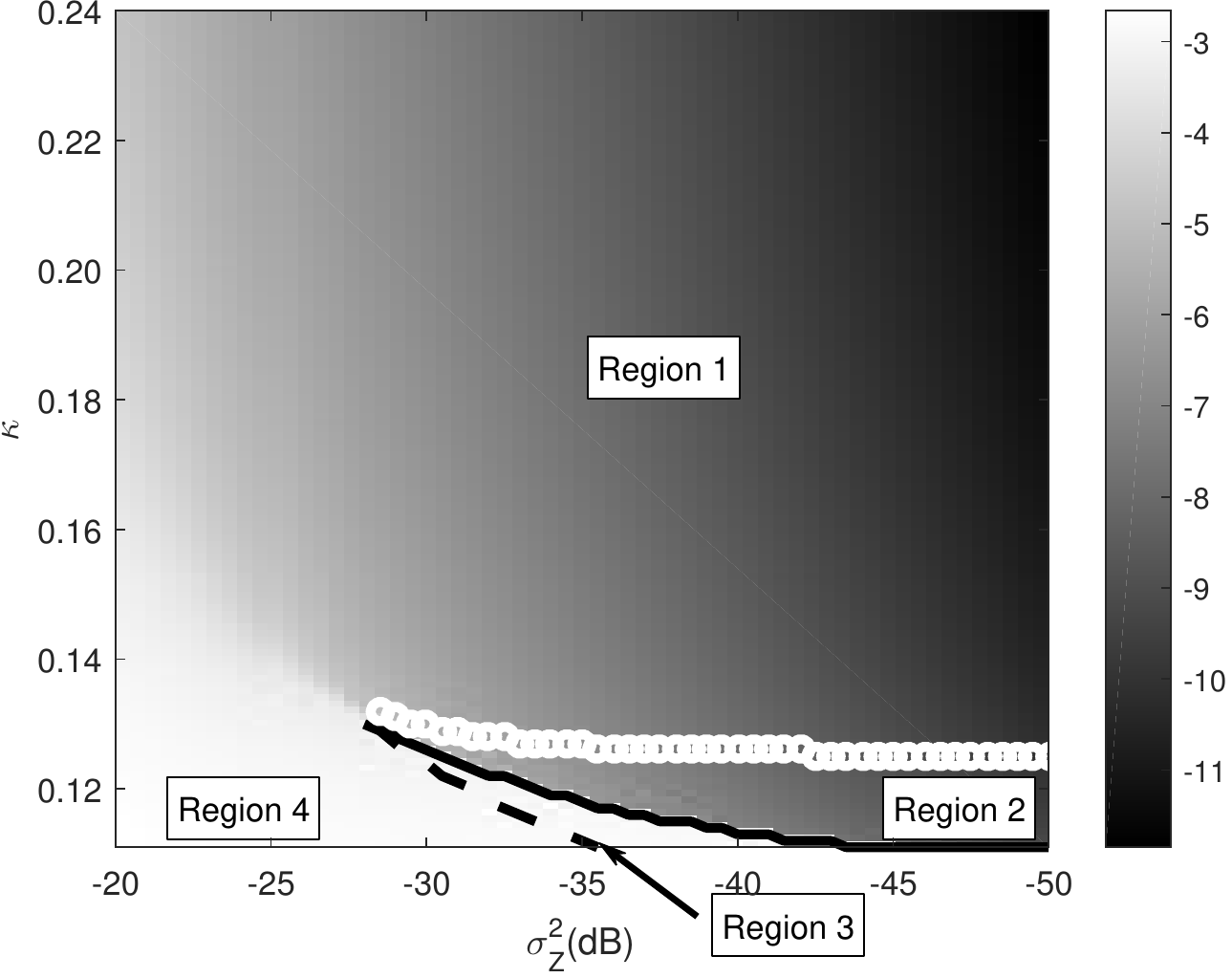}}
\caption{Performance regions for MMV with different $J$. The darkness of the shades corresponds to $\ln$(MMSE) for a certain noise variance $\sigma_Z^2$ and measurement rate $\kappa$. There are 4 regions, Regions~1 to~4, where the MMSE as a function of the noise variance $\sigma_Z^2$ and measurement rate $\kappa$ behaves differently. Regions~1 to~4 are separated by 3 thresholds, $\kappa_c(\sigma_Z^2)$ (the dashed curves), $\kappa_l(\sigma_Z^2)$ (the solid curves), and $\kappa_{BP}(\sigma_Z^2)$ (the curves comprised of little white circles); note that Section~\ref{sec:PerfRegion} discusses how to obtain these thresholds. (a) MMV with $J=1$, (b) MMV with $J=3$, and (c) MMV with $J=5$.}\label{fig:PerformanceRegions}
\end{figure*}

{\bf Regions 1 and 4:} The free energy~\eqref{eq:free_energy4} has one local maximum point w.r.t. the MSE $D$~\eqref{eq:DandE}. This $D$ leads to the globally maximum free energy and is the MMSE.

{\bf Regions~2 and~3:} There are 2 local maxima in the free energy, $D_1$ and $D_2$, where $D_1<D_2$. In Region~2, the smaller MSE, $D_1$, leads to the larger local maximum free energy~\eqref{eq:free_energy4} (hence, $\mathcal{F}(D_1)$ is the global maximum), and is the MMSE. In Region~3, the larger MSE, $D_2$, is the MMSE.

{\bf Boundaries between regions:} We denote the boundary separating regions~1 and~2 by the {\em BP threshold} $\kappa_{BP}(\sigma_Z^2)$, the boundary separating regions~2 and~3 by the {\em low noise threshold} $\kappa_l(\sigma_Z^2)$, and  the boundary separating regions~3 and~4 by the {\em critical threshold} $\kappa_c(\sigma_Z^2)$.

{\bf Numerical results:}
Consider $J$-dimensional Bernoulli-Gaussian signals~\eqref{eq:jsm} with sparsity rate $\rho=0.1$.
Evaluating the free energy~\eqref{eq:free_energy4} with the noise variance $\sigma_Z^2$ from -20 dB to -50 dB and measurement rate $\kappa$ from 0.11 to 0.24, we obtain the MMSE as a function of $\sigma_Z^2$ and $\kappa$ for $J=1,3$, and $5$, as shown in Figure~\ref{fig:PerformanceRegions}.\footnote{The MMV with $J=1$ becomes an SMV. The MMSE results in Figure~\ref{fig:MMV_J1} match with the SMV MMSE in Krzakala et. al.~\cite{Krzakala2012probabilistic,krzakala2012statistical} and Zhu and Baron~\cite{ZhuBaronCISS2013}.} The darkness of the shades represents the natural logarithm of the MMSE, $\ln$(MMSE). In all panels,
the critical threshold $\kappa_c(\sigma_Z^2)$, low noise threshold $\kappa_l(\sigma_Z^2)$, and BP threshold $\kappa_{BP}(\sigma_Z^2)$, as well as Regions~1-4, are marked.

\begin{figure}[t] 
\centering
\includegraphics[width=8cm]{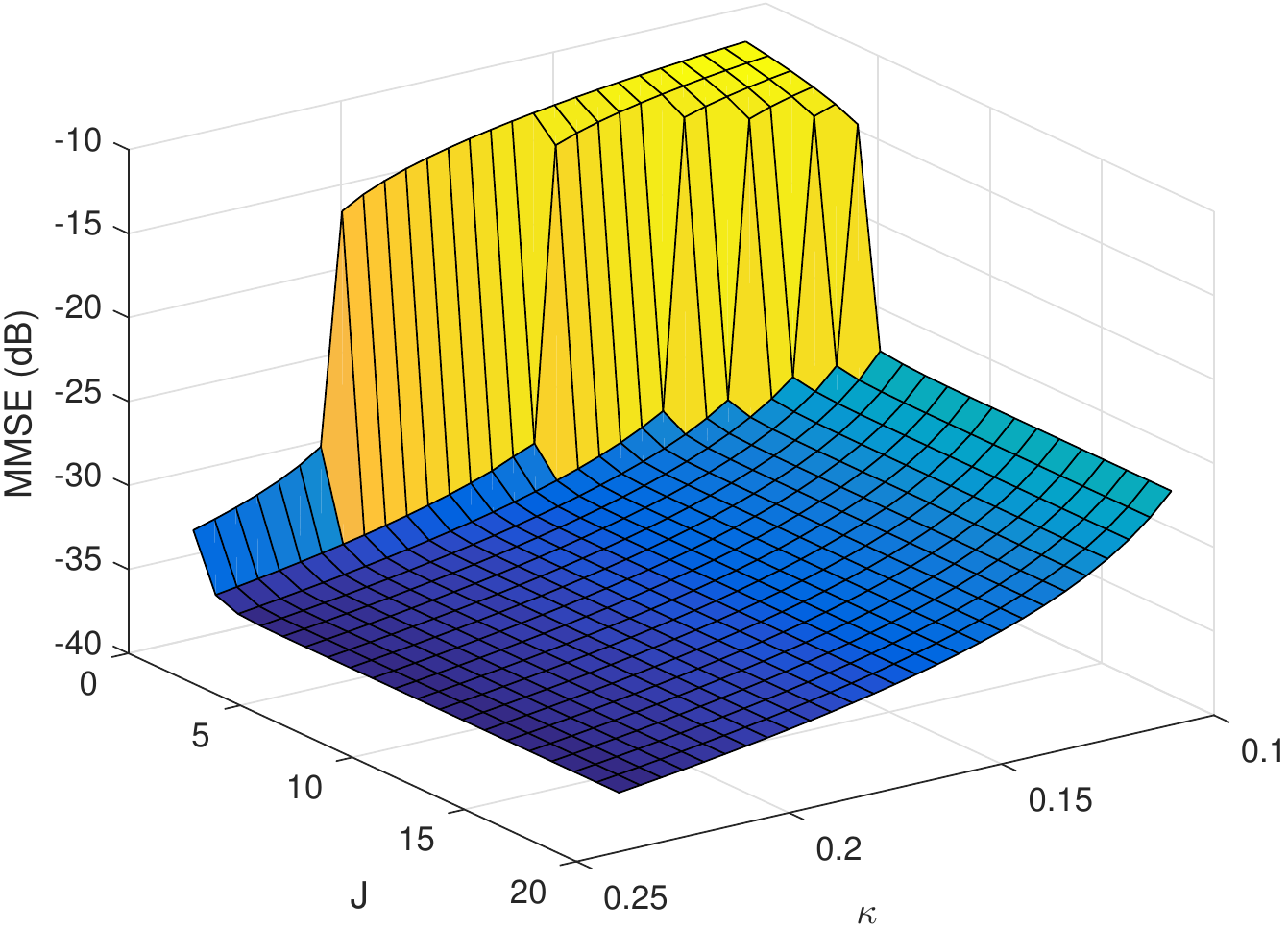}
\caption{MMSE in dB as a function of the number of jointly sparse signal vectors $J$ and the measurement rate $\kappa$ (noise variance $\sigma_Z^2=-35$ dB).}\label{fig:MMSE_R_J}
\end{figure}

In Regions~3 and~4, the best-possible algorithm yields a large MMSE for all noise variances. In contrast, in Regions~1 and~2, the optimal algorithm yields an MMSE that decreases with the noise variance $\sigma_Z^2$. To summarize, the optimal algorithm yields poor estimation performance below the low noise threshold $\kappa_{l}(\sigma_Z^2)$, and good performance above $\kappa_{l}(\sigma_Z^2)$.

We further examine the MMSE as a function of the number of jointly sparse signal vectors $J$ and the measurement rate $\kappa$. We plot the MMSE in dB scale in Figure~\ref{fig:MMSE_R_J}. The noise variance is -35 dB. We can see that the MMSE decreases with more signal vectors $J$ and greater measurement rate $\kappa$. However, the MMSE depends less on $J$ as $J$ is increased. Note that the discontinuity in the MMSE surface in Figure~\ref{fig:MMSE_R_J} is a result of the different performance regions that the various settings (different $J$ and $\kappa$) lie in.

\subsection{BP phase transition}\label{sec:phaseTrans}
Belief propagation (BP)~\cite{DMM2009,CSBP2010,Montanari2012,Bayati2011,Krzakala2012probabilistic,krzakala2012statistical,Barbier2015} is an algorithmic framework invented independently by researchers in coding theory, statistical physics, and artificial intelligence, which can often achieve the optimal estimation performance (MMSE) for linear inverse problems. The canonical BP updating rules appeared in~\eqref{eq:canonicalBP}.
When there are multiple local maxima $D_1<D_2$ in the free energy~\eqref{eq:free_energy4}, BP converges to the local maximum with the larger MSE, $D_2$~\cite{DMM2009,Montanari2012,Bayati2011,Krzakala2012probabilistic,krzakala2012statistical}. Hence, $D_2$ characterizes the MSE {\em predicted} for BP. Moving from Region~1 to Region~2 by decreasing the measurement rate $\kappa$ with fixed noise variance $\sigma_Z^2$, the number of local maxima increases from 1 to 2. Therefore, BP estimation performance experiences a sudden deterioration (increase in MSE) when the measurement rate $\kappa$ drops such that the combination of the noise variance $\sigma_Z^2$ and measurement rate $\kappa$ moves from Region~1 to Region~2. The BP threshold, $\kappa_{BP}(\sigma_Z^2)$, is the boundary between Regions~1 and~2, and is where the BP phase transition happens. That is, BP achieves poor estimation performance below $\kappa_{BP}(\sigma_Z^2)$, and good performance above $\kappa_{BP}(\sigma_Z^2)$.

\begin{myRemark} In Figure~\ref{fig:PerformanceRegions}, we see that increasing $J$ reduces the BP threshold $\kappa_{BP}(\sigma_Z^2)$. Since BP achieves the MMSE when $\kappa>\kappa_{BP}(\sigma_Z^2)$, increasing $J$ is beneficial to applications that use BP as the estimation algorithm.
\end{myRemark}

\begin{myRemark} We further numerically analyzed the low noise ($\sigma_Z^2\rightarrow 0$) and zero noise ($\sigma_Z^2=0$) cases. The low noise threshold $\kappa_l(\sigma_Z^2)$ converges to $\rho$ as the noise variance $\sigma_Z^2$ is decreased for $J=1,3$, and $5$. We believe that this numerical result holds for every $J$. Moreover, this result matches the theoretical robust threshold of Wu and Verd{\'u}~\cite{WuVerdu2012} for $J=1$ in the low noise limit. Our numerical results also show that the BP threshold $\kappa_{BP}(\sigma_Z^2)$ converges to some value for different $J$ as $\sigma_Z^2\rightarrow 0$. Analyzing these observations rigorously is left for future work.
\end{myRemark}

\begin{algorithm}[t]
\caption{AMP for MMV}
\label{algo:AMP_MMV}
\begin{algorithmic}[1]
\\{\bf Inputs:} Maximum number of iterations $T$, threshold $\epsilon$, sparsity rate $\rho$, noise variance $\sigma_Z^2$, measurements $\y^{(j)}$, and measurement matrices $\A^{(j)}, \forall j$
\\{\bf Initialize:} $t=1,\delta=\infty,\w^{(j)}=\y^{(j)},\Theta_j=0,v^{(j)}_l=\rho\sigma_Z^2,a^{(j)}_l=0,\forall l,j$
\While{$t<T$ and $\delta>\epsilon$}
\For{$j\leftarrow 1$ to $J$}
\\\quad\quad\quad$\q^{(j)}=\frac{\y^{(j)}-\w^{(j)}}{\sigma_Z^2+\Theta_j}$
\\\quad\quad\quad$\Theta_j=\frac{1}{N}\sum_{l=1}^N v^{(j)}_l$
\\\quad\quad\quad$\w^j=\A^{(j)} \a^{(j)}-\Theta_j \q^{(j)}$
\\\quad\quad\quad$\Sigma_j=\frac{N(\sigma_Z^2+\Theta_j)}{M}$ \Comment{Scalar channel noise variance}
\\\quad\quad\quad$\R^{(j)}=\a^{(j)}+\Sigma_j \l(\A^{(j)}\r)^{\top} \frac{\y^{(j)}-\w^{(j)}}{\sigma_Z^2+\Theta_j}$ \Comment{Pseudodata}
\\\quad\quad\quad$\widehat{\a}^{(j)}=\a^{(j)}$ \Comment{Save current estimate}
\EndFor
\For{$l\leftarrow 1$ to $N$}
\\\quad\quad\quad$\l\{v^{(j)}_l\r\}_{j=1}^J=f_{v_l}\l(\{\Sigma_j\}_{j=1}^J,\l\{R^{(j)}_l\r\}_{j=1}^J\r)$ \Comment{Variance}
\\\quad\quad\quad$\l\{a^{(j)}_l\r\}_{j=1}^J=f_{a_l}\l(\{\Sigma_j\}_{j=1}^J,\l\{R^{(j)}_l\r\}_{j=1}^J\r)$ \Comment{Estimate}
\EndFor
\\\quad\ \ $t=t+1$ \Comment{Increment iteration index.}
\\\quad\ \ $\delta=\frac{1}{NJ}\sum_{l=1}^N\sum_{j=1}^J\l(\widehat{a}^{(j)}_l-a^{(j)}_l\r)^2$ \Comment{Change in estimate}
\EndWhile
\\{\bf Outputs:} Estimate $\a^{(j)},\forall j$
\end{algorithmic}
\end{algorithm}

\subsection{BP simulation}\label{sec:AMPsim}

After obtaining the theoretic MMSE for MMV, as well as the MSE predicted for BP, we run some simulations to estimate the $\underline{\x}^{(j)}$ of channel~\eqref{eq:MMVmodel} in a Bayesian setting.
The algorithm we use is approximate message passing (AMP)~\cite{DMM2009,Montanari2012,Bayati2011,Krzakala2012probabilistic,krzakala2012statistical,Barbier2015}, which is an approximation to the BP algorithm; related algorithms have been proposed by Ziniel and Schniter~\cite{ZinielSchniter2013MMV} and Kim et al.~\cite{KimChangJungBaronYe2011}.
In the SMV case, when the measurement matrix and the signal have i.i.d. entries, AMP has the state evolution (SE) formalism~\cite{DMM2011,Bayati2011,JavanmardMontanari2012,Donoho2013,Bayati2015} that tracks the evolution of the MSE at each iteration. Recently, Javanmard and Montanari proved that SE tracks AMP rigorously in an SMV setting with a spatially coupled measurement matrix~\cite{JavanmardMontanari2012}. According to our transform in Figure~\ref{fig:channel}, we can see that the proof~\cite{JavanmardMontanari2012} could be extended to the MMV setting. Note that SE allows to compute the
highest equilibrium of Gibbs free energy~\cite{DMM2011,Bayati2011,JavanmardMontanari2012,Donoho2013,Bayati2015}, which corresponds to the local optimum $D_2$ in Section~\ref{sec:phaseTrans}. Hence, AMP often achieves the same MSE as BP
and we use AMP simulation results to demonstrate that the MMSE can often be achieved.\footnote{When the assumptions about the measurement matrix and signal~\cite{DMM2009,Montanari2012,Bayati2011,Krzakala2012probabilistic,krzakala2012statistical,Barbier2015} are violated, AMP might suffer from divergence issues.}
Considering~\eqref{eq:MMVmodel}, we simplify the AMP algorithm in Barbier and Krzakala~\cite{Barbier2015} to obtain Algorithm~\ref{algo:AMP_MMV},\footnote{Note that Algorithm~\ref{algo:AMP_MMV} is a straightforward simplification of the AMP algorithm by Barbier and Krzakala~\cite{Barbier2015}.} where $\{\Sigma_j\}_{j=1}^J$, $\l\{R^{(j)}_l\r\}_{j=1}^J$, $\l\{a_l^{(j)}\r\}_{j=1}^J$ and $\l\{v_l^{(j)}\r\}_{j=1}^J$ refer to sets of all intermediate variables $\Sigma_j$, pseudodata $R^{(j)}_l$, estimates $a_l^{(j)}$, and variances $v^{(j)}_l,\ j\in\{1,\cdots,J\},\ l\in\{1,\cdots,N\}$, respectively.
The current iteration $t$, change in the estimate $\delta$, and intermediate variables $\Theta_j,\ j\in\{1,\cdots,J\}$, are scalars. The intermediate variables $\q^{(j)}$ and $\w^{(j)}$ are vectors of length $M$. The functions $f_{a_l}\l(\{\Sigma_j\}_{j=1}^J,\l\{R^{(j)}_l\r\}_{j=1}^J\r)$ and $f_{v_l}\l(\{\Sigma_j\}_{j=1}^J,\l\{R^{(j)}_l\r\}_{j=1}^J\r)$ are given by
\begin{equation*}
f_{a_l}\l(\{\Sigma_j\}_{j=1}^J,\l\{R^{(j)}_l\r\}_{j=1}^J\r)=\frac{\rho\frac{1}{\Sigma_j+1}\l\{R^{(j)}_l\r\}_{j=1}^J}{\rho+(1-\rho)\pij{j=1}{J}\l\{\sqrt{1+\frac{1}{\Sigma_j}}\text{exp}\l[-\frac{\l(R^{(j)}_l\r)^2}{2\Sigma_j(\Sigma_j+1)}\r]\r\}},
\end{equation*}
\begin{equation*}
f_{v_l}\l(\{\Sigma_j\}_{j=1}^J,\l\{R^{(j)}_l\r\}_{j=1}^J\r)\!=\!-\l[f_{a_l}\l(\{\Sigma_j\}_{j=1}^J,\l\{R^{(j)}_l\r\}_{j=1}^J\r)\r]^2
+\frac{\rho\frac{1}{\Sigma_j+1}\l[\l(\l\{R^{(j)}_l\r\}_{j=1}^J\r)^2\frac{1}{\Sigma_j+1}\!+\!\Sigma_j\r]}{\rho+(1-\rho)\pij{j=1}{J}\!\l\{\!\sqrt{1+\frac{1}{\Sigma_j}}\text{exp}\l[-\frac{\l(R^{(j)}_l\r)^2}{2\Sigma_j(\Sigma_j+1)}\r]\r\}},
\end{equation*}
for $J$-dimensional Bernoulli-Gaussian signals~\eqref{eq:jsm}.

We simulated the signals in~\eqref{eq:jsm} with $J=3$ signal vectors and sparsity rate $\rho=0.1$ measured by a channel~\eqref{eq:MMVmodel} with measurement rate $\kappa\in[0.11,0.24]$ and noise variance $\sigma_Z^2\in[-20,-50]$ dB. For each setting, we generated 50 signals of length $N=5000$, and the resulting MSE compared to the MSE predicted for BP is shown in Figure~\ref{fig:AMPoverMSE}.\footnote{We simulated both $J$ different measurement matrices $\underline{\A}^{(j)}$ and $J$ identical $\underline{\A}^{(j)}$. Both results match the MSE predicted for BP, which support our conclusion that the MMSE's of both settings are the same. Figure~\ref{fig:AMPoverMSE} is with $J$ different $\underline{\A}^{(j)}$.}
\begin{figure}[t] 
\centering
\includegraphics[width=8cm]{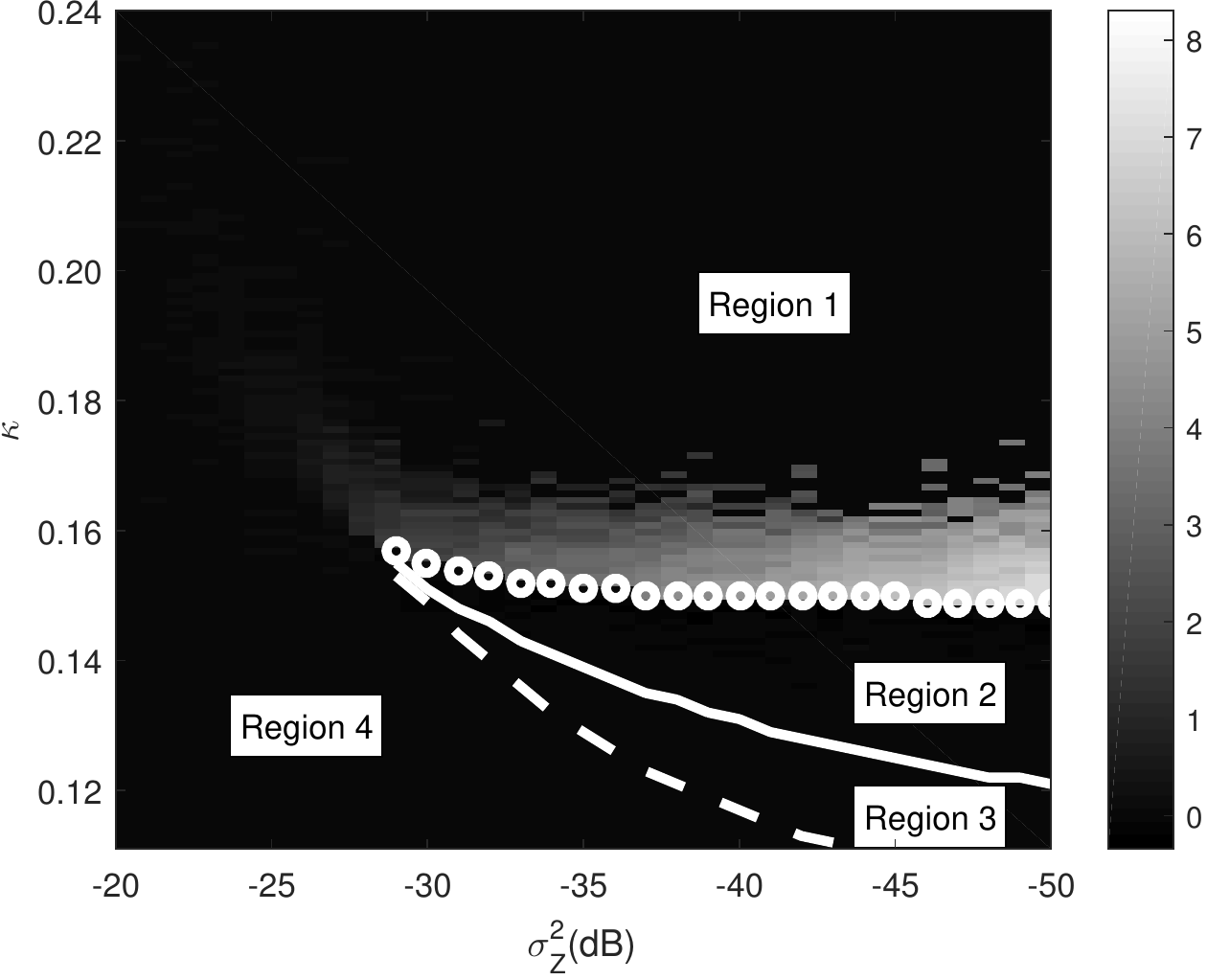}
\caption{AMP simulation results ($\text{MSE}_{\text{AMP}}$) compared to the MSE predicted for BP ($\text{MSE}_{\text{BP}}$) with $J=3$ jointly sparse signal vectors. The dashed curve, solid curve, and the curve comprised of little circles correspond to thresholds $\kappa_c(\sigma_Z^2),\ \kappa_l(\sigma_Z^2)$, and $\kappa_{BP}(\sigma_Z^2)$, respectively. Regions 1-4 are also marked. The darkness of the shades denotes $\ln \l(\frac{\text{MSE}_{\text{AMP}}}{\text{MSE}_{\text{BP}}}\r)$, which we expect to be zero (completely dark shades) in the entire $\kappa$ versus $\sigma_Z^2$ plane. The narrow bright band above the BP threshold indicates the mismatch between the MSE from the simulation and the MSE predicted for BP.}\label{fig:AMPoverMSE}
\end{figure}
The labels of the thresholds are omitted for brevity. We can see that AMP simulation results match with the MSE predicted for BP and BP phase transition from the replica analysis of Section~\ref{sec:phaseTrans}. Note that there is a narrow band of light shades above the BP threshold, $\kappa_{BP}(\sigma_Z^2)$ (the top threshold), meaning that the MSE from the simulation is greater than the MSE predicted for BP; this is due to randomness in our generated signals and channels. Note that we also compared the AMP simulation results to that of the M-SBL algorithm~\cite{YeKimBresler2015}, a widely used algorithm to solve the MMV problem. The M-SBL results were not as good. Indeed, because AMP is often an
approach that achieves the MMSE, other algorithms are expected to provide greater MSE.

\section{Extension to Arbitrary Error Metrics}\label{sec:errorMetric}
In this chapter, we have obtained the MMSE for MMV problems.
As mentioned in Section~\ref{sec:MMVintro}, there are many estimation approaches for MMV problems~\cite{tropp2006ass,chen2006trs,malioutov2005ssr,tropp2006ass2,cotter2005ssl,
Mishali08rembo,LeeBreslerJunge2012,YeKimBresler2015,ZinielSchniter2011}. However, when running estimation algorithms for MMV problems, people might be interested in obtaining an estimate whose ``user-defined'' error is as small as possible. For example, if estimating the underlying signal is important, people may use the MSE metric; when there might be outliers in the estimate, using the mean absolute error metric might be more appropriate. For applications such as compressive diffuse optical tomography~\cite{LeeKimBreslerYe2011}, estimating the support set of the jointly sparse underlying signals is of more interest. Seeing that there are different algorithms minimizing different error metrics, but there is no prior work discussing the optimal performance with user-defined (arbitrary) error metrics in MMV, it is of interest to study the optimal performance with user-defined error metrics in MMV problems and also design algorithms to achieve such optimal performance.

Tan and coauthors~\cite{Tan2014,Tan2014Infty} studied the optimal performance for arbitrary additive error metrics for an SMV problem~\eqref{eq:SMV} by taking advantage of the properties of BP~\cite{DMM2009,CSBP2010,Bayati2011,Montanari2012,Krzakala2012probabilistic,krzakala2012statistical,Barbier2015}: BP yields an equivalent scalar channel
\begin{equation}\label{eq:equivScalarChannel}
\widetilde{\y}=\x+\widetilde{\z},
\end{equation}
whose posterior $f(\x|\widetilde{\y})$ approaches the true posterior distribution $f(\x|\y)$ under certain conditions~\cite{RanganGAMP2011ISIT}. Using $f(\x|\widetilde{\y})$, Tan and coauthors designed the denoiser that minimizes the (additive) user-defined error metrics for~\eqref{eq:equivScalarChannel}.

According to Section~\ref{sec:model} and Figure~\ref{fig:channel}, we can transform the MMV problem~\eqref{eq:MMVmodel} into an SMV problem~\eqref{eq:MMVchannel}. Hence, we can extend the work of Tan and coauthors~\cite{Tan2014,Tan2014Infty} to study the optimal performance for arbitrary additive error metrics, as well as to build algorithms that achieve the optimal performance for MMV~\eqref{eq:MMVmodel}. The
details are left for future work.

\section{Conclusion}\label{sec:conclusion}
We analyzed the minimum mean squared error (MMSE) for two settings of multi-measurement vector (MMV) problems, where the entries in the signal vectors are independent and identically distributed (i.i.d.), and share the same support.
One MMV setting has i.i.d. Gaussian measurement matrices, while the other MMV setting has identical i.i.d. Gaussian measurement matrices. Replica analysis yields identical free energy expressions for these two settings in the large system limit when the signal length goes to infinity and the number of measurements scales with the signal length. Because of the identical free energy expressions, the MMSE's for both MMV settings are identical. By numerically evaluating the free energy expression, we identified different performance regions for MMV where the MMSE as a function of the channel noise variance and the measurement rate behaves differently. We also identified a phase transition for belief propagation algorithms (BP) that separates regions where BP achieves the MMSE asymptotically and where it is sub-optimal. Simulation results of an approximated version of BP matched with the mean squared error (MSE) predicted by replica analysis. As a special case of MMV, we extended our replica analysis to complex single measurement vector (SMV) problems, so that we can calculate the MMSE for complex SMV with real or complex measurement matrices. Seeing that the MSE might not be the only error metric that is of interest, we proposed to extend the work of Tan and coauthors~\cite{Tan2014,Tan2014Infty} to MMV problems, so that we can optimize over different user-defined additive error metrics in MMV applications.

\chapter{Performance Trade-offs in Multi-Processor Approximate Message Passing}
\label{chap-MP-AMP}
\chaptermark{Performance Trade-offs}

In Chapter~\ref{chap-MMV}, we focused on analyzing the information theoretic performance limits for multi-measurement vector problems~\eqref{eq:MMVmodel_intro}. Our analysis is readily extended to single measurement vector problems~\eqref{eq:SMV}.
In practice, many algorithms run in distributed networks, especially as we are entering the ``big data'' era.
Running estimation algorithms across distributed networks can incur different costs besides the quality of the estimation.
Some prior art has focused on reducing certain costs such as the communication cost~\cite{Han2014} and the computation cost~\cite{MaBaronNeedell2014}, but there has been less
progress relating different costs and achieving optimal trade-offs among them. Despite the lack of
such works, these trade-offs are important to system designers in order to produce efficient systems.
Studying the relation between different costs is a broad problem with a rich design space. Therefore, in this chapter, we focus our discussion on  one specific distributed algorithm as an example: the ``multi-processor approximate message passing'' algorithm (MP-AMP)~\cite{Han2014,HanZhuNiuBaron2016ICASSP}, and study the optimal trade-offs among different costs.
In each MP-AMP iteration, nodes of the multi-processor system and its fusion center
exchange lossily compressed messages pertaining to their estimates of the input.
In this setup, we derive the optimal per-iteration coding rates using dynamic programming.
We analyze the excess mean squared error (EMSE) beyond the minimum mean squared error, and
prove that, in the limit of low EMSE,
the optimal coding rates increase approximately linearly per iteration. Additionally, we obtain that the combined cost of computation
and communication scales with the desired estimation quality according to $O(\log^2(1/\text{EMSE}))$.
Finally, we study trade-offs between the physical
costs of the estimation process including computation time,
communication loads, and the estimation quality as a multi-objective optimization problem,
and characterize the properties of the Pareto optimal surfaces. This chapter is based on our work with Han et al.~\cite{HanZhuNiuBaron2016ICASSP} and with Baron and Beirami~\cite{ZhuBeiramiBaron2016ISIT,ZhuBaronMPAMP2016ArXiv}.

\section{Related Work and Contributions}

\subsection{Related work}
Many scientific and engineering problems~\cite{DonohoCS,CandesRUP} can be approximated
using a linear model,
\begin{equation}
\y = \A\x + \z,
\label{eq:matrix_channel}
\end{equation}
where $\x\in\mathbb{R}^N$ is the unknown input signal, $\A\in\mathbb{R}^{M\times N}$ is the matrix that characterizes the linear model, and $\z\in\mathbb{R}^M$ is measurement noise.
The goal is to estimate $\x$ from the noisy measurements $\y$ given $\A$ and statistical information about $\z$; this is a {\em linear inverse problem}. Alternately, one could view the estimation of $\x$ as fitting or learning a linear model for the data comprised of $\y$ and $\A$.

When $M\ll N$, the setup~\eqref{eq:matrix_channel} is known as compressed sensing (CS)~\cite{DonohoCS,CandesRUP}; by posing a sparsity or compressibility
requirement on the signal,
it is indeed possible to accurately recover $\x$ from the ill-posed linear model~\cite{DonohoCS,CandesRUP} when the number of measurements $M$ is large enough, and the noise level is modest. However, we might need $M>N$ when the signal is dense or the noise is substantial. Hence, we do not constrain ourselves to the case of $M\ll N$.

Approximate message passing (AMP)~\cite{DMM2009,Montanari2012,Bayati2011,Rush_ISIT2016_arxiv} is an iterative framework that solves  linear inverse problems by successively decoupling~\cite{Tanaka2002,GuoVerdu2005,GuoWang2008} the problem in~\eqref{eq:matrix_channel} into scalar denoising
problems with additive white Gaussian noise (AWGN). AMP has received considerable attention, because of its fast convergence and the state evolution (SE) formalism~\cite{DMM2009,Bayati2011,Rush_ISIT2016_arxiv}, which offers a precise
characterization of the AWGN denoising problem in each iteration.
In the Bayesian setting, AMP often achieves the minimum mean squared error
(MMSE)~\cite{GuoBaronShamai2009,RFG2012,ZhuBaronCISS2013,Krzakala2012probabilistic} in the limit of large linear systems ($N\rightarrow\infty, \frac{M}{N}\rightarrow \kappa$, cf. Definition~\ref{def:chap1-largeSystemLimit}).

In real-world applications, a multi-processor (MP) version of the linear model could be of interest, due to either storage limitations in each individual processor node, or the need for fast computation. This chapter considers  multi-processor linear model (MP-LM)~\cite{Mota2012,Patterson2014,Han2014,Ravazzi2015,
Han2015SPARS,HanZhuNiuBaron2016ICASSP}, in which there are $P$ {\em processor nodes} and a {\em fusion center}.
Recall from~\eqref{eq:one-node-meas_intro} that in an MP-LM, each
processor node stores $\frac{M}{P}$ rows of the matrix $\A$, and acquires the corresponding linear measurements of the underlying signal $\x$. Without loss of generality, we model the measurement system in processor node $p\in \{1,\cdots,P\}$ as
 \begin{equation}\label{eq:one-node-meas}
    y_i=\A_i \x+z_i,\ i\in \left\{\frac{M(p-1)}{P}+1,\cdots,\frac{Mp}{P}\right\},
 \end{equation}
 where $\A_i$ is the $i$-th row of $\A$, and $y_i$ and $z_i$ are the $i$-th entries of $\y$ and $\z$, respectively.
Once every $y_i$ is collected, we run distributed algorithms among the fusion center and $P$ processor nodes to estimate the signal $\x$.
MP versions of AMP (MP-AMP) for MP-LM have been studied in the literature~\cite{Han2014,HanZhuNiuBaron2016ICASSP}.
Usually, MP platforms are designed for distributed settings such as sensor networks~\cite{pottie2000,estrin2002} or large-scale ``big data" computing systems~\cite{EC2}, where the computational and communication burdens can differ among different settings. We reduce the communication costs of MP platforms by applying lossy compression~\cite{Berger71,Cover06,GershoGray1993} to the communication portion of MP-AMP.
Our key idea in this work is to minimize the total communication and computation costs by varying the lossy compression schemes in different iterations of MP-AMP.

\subsection{Contributions}
Rate-distortion (RD) theory suggests that we can transmit data with greatly reduced coding rates, if we allow some distortion at the output.
However, the MP-AMP problem does
not directly fall into the RD framework, because the quantization error in the current iteration feeds into estimation errors in future iterations. We quantify the interaction between these two forms of error by studying the excess mean squared error (EMSE)
of MP-AMP above the MMSE (EMSE=MSE-MMSE, where MSE denotes the mean squared error).
Our first contribution (Section~\ref{sec:DP}) is to use dynamic programming (DP, cf. Bertsekas~\cite{bertsekas1995}) to find a sequence of coding rates that yields a desired EMSE while achieving the smallest combined cost of
communication and computation; our DP-based scheme is proved to yield optimal coding rates.

Our second contribution (Section~\ref{sec:linRateTh}) is to pose the task of finding the optimal coding rate at each iteration in the low EMSE limit as a convex optimization problem. We prove that the optimal coding rate grows approximately linearly in the low EMSE limit. At the same time, we also provide
the theoretic asymptotic growth rate of the optimal coding rates in the limit of low EMSE.
This provides practitioners with a heuristic to find a near-optimal coding rate sequence without solving the optimization problem.
The linearity of the  optimal coding rate sequence (defined in Section~\ref{sec:DP}) is also illustrated numerically.
With the rate being approximately linear, we obtain that the combined cost of computation and communication scales as $O(\log^2(1/\text{EMSE}))$.

In Section~\ref{sec:Pareto}, we further consider a rich design space that includes various costs, such as the number of iterations $T$, aggregate coding rate $R_{agg}$, which is the sum of the coding rates in all iterations and is formally defined in~\eqref{eq:R_agg}, and the MSE achieved by the estimation algorithm. In such a rich design space, reducing any cost is likely to incur an increase in other costs, and it is impossible to simultaneously minimize all the costs.
Han et al.~\cite{Han2014} reduce the communication costs, and Ma et al.~\cite{MaBaronNeedell2014} develop an algorithm with reduced computation; both works~\cite{Han2014,MaBaronNeedell2014} achieve a reasonable MSE. However, the optimal trade-offs in this rich design space have not been studied.
Our third contribution is to pose the problem of finding the best trade-offs among the individual costs $T,\ R_{agg}$, and $\text{MSE}$ as a multi-objective optimization problem (MOP), and study the properties of Pareto optimal tuples~\cite{DasDennisPareto1998} of this MOP. These properties are verified numerically using
the DP-based scheme developed in this chapter.

Finally, we emphasize that although this chapter is presented for the specific framework of MP-AMP,
similar methods could be applied to other iterative distributed  algorithms, such as
consensus averaging~\cite{Frasca2008,Thanou2013}, to obtain the optimal coding rate as well as
optimal trade-offs between communication and computation costs.

{\bf  Organization:}
The rest of the chapter is organized as follows. Section~\ref{sec:setting_MP_Chap} provides
background content. Section~\ref{sec:DP} formulates a DP scheme that finds an optimal coding rate. Section~\ref{sec:linRateTh} proves that any optimal coding rate in the low EMSE limit grows approximately linearly as iterations proceed. Section~\ref{sec:Pareto} studies the optimal trade-offs among the computation cost, communication cost, and the MSE of the estimate. Section~\ref{sec:realworld} uses some real-world examples to showcase the different trade-offs between communication and computation costs, and Section~\ref{sec:conclude} concludes the chapter.

\section{Background}\label{sec:setting_MP_Chap}
\subsection{Centralized linear model using AMP}\label{sec:centralAMP}
In our linear model~\eqref{eq:matrix_channel}, we consider an independent and identically distributed (i.i.d.) Gaussian measurement matrix $\A$, i.e.,
$A_{i,j}\sim\mathcal{N}(0,\frac{1}{M})$, where $\mathcal{N}(\mu,\sigma^2)$
denotes a Gaussian distribution with mean $\mu$ and variance $\sigma^2$.
The signal entries follow an i.i.d. distribution, $f_X(x)$.
The noise entries obey $z_i\sim\mathcal{N}(0,\sigma_Z^2)$, where $\sigma_Z^2$ is the noise variance.

Starting from ${\bf x}_0={\bf 0}$, the AMP framework~\cite{DMM2009} proceeds iteratively according to\footnote{AMP is an approximation to the belief propagation algorithm~\eqref{eq:canonicalBP}.}
\begin{align}
{\bf x}_{t+1}&=\eta_t({\bf A}^{\top}{\bf r}_t+{\bf x}_t)\label{eq:AMPiter1},\\
{\bf r}_t&={\bf y}-{\bf Ax}_t+\frac{1}{\kappa}{\bf r}_{t-1}
\langle \eta_{t-1}'({\bf A}^{\top}{\bf r}_{t-1}+{\bf x}_{t-1})\rangle\label{eq:AMPiter2},
\end{align}
where $\eta_t(\cdot)$ is a denoising function, $\eta_{t}'(\cdot)=\frac{d \eta_t({\cdot})}{d\{\cdot\}}$ is the derivative of $\eta_t(\cdot)$, and~$\langle{\bf u}\rangle=\frac{1}{N}\sum_{i=1}^N u_i$
for any vector~${\bf u}\in\mathbb{R}^N$. The subscript $t$ represents the iteration index, ${\{\cdot\}}^\top$ denotes the matrix transpose operation, and $\kappa=\frac{M}{N}$ is the measurement rate.
Owing to the decoupling effect~\cite{Tanaka2002,GuoVerdu2005,GuoWang2008}, in each AMP iteration~\cite{Bayati2011,Montanari2012,Rush_ISIT2016_arxiv},
the vector~$\f_t={\bf A}^{\top}{\bf r}_t+{\bf x}_t$
in (\ref{eq:AMPiter1}) is statistically equivalent to
the input signal ${\bf x}$ corrupted by AWGN $\w_t$ generated by a source $W\sim \mathcal{N}(0,\sigma_t^2)$,
\begin{equation}\label{eq:equivalent_scalar_channel}
\f_t=\x+\w_t.
\end{equation}
We call~\eqref{eq:equivalent_scalar_channel} the {\em equivalent scalar channel}.
In large systems ($N\rightarrow\infty, \frac{M}{N}\rightarrow \kappa$),\footnote{Note that the results of this chapter only hold for large systems.} a useful property of AMP~\cite{Bayati2011,Montanari2012,Rush_ISIT2016_arxiv} is that
the noise variance $\sigma_t^2$ evolves following state evolution (SE):
\begin{equation}
\sigma_{t+1}^2=\sigma^2_Z+\frac{1}{\kappa}\text{MSE}(\eta_t,\sigma_t^2),\label{eq:ori_SE}
\end{equation}
where
$\text{MSE}(\eta_t,\sigma_t^2)=\mathbb{E}_{X,W}\left[\left( \eta_t\left( X+W \right)-X \right)^2\right]$, $\mathbb{E}_{X,W}(\cdot)$ is expectation with respect to (w.r.t.) $X$ and $W$,
and $X$ is the source that generates $\x$. Note that $\sigma_1^2=\sigma_Z^2+\frac{\mathbb{E}[X^2]}{\kappa}$, because of the all-zero initial estimate for $\x$.
Formal statements for SE appear
in prior work~\cite{Bayati2011,Montanari2012,Rush_ISIT2016_arxiv}.

In this chapter, we confine ourselves to the Bayesian setting, in which we assume knowledge
of the true prior, $f_X(x)$, for the signal $\x$.
Therefore, throughout this chapter we use conditional expectation, $\eta_t(\cdot)=\mathbb{E}[\x|\f_t]$, as the MMSE-achieving
denoiser.\footnote{Tan et al.~\cite{Tan2014} showed that AMP with MMSE-achieving denoisers can be used as a building block for algorithms that minimize arbitrary user-defined error metrics.} The derivative of $\eta_t(\cdot)$, which is continuous, can be easily obtained, and is omitted for brevity.
Other denoisers such as soft thresholding~\cite{DMM2009,Montanari2012,Bayati2011} yield MSE's that are larger than that of the MMSE denoiser, $\eta_t(\cdot)=\mathbb{E}[\x|\f_t]$.
When the true prior for $\x$ is unavailable, parameter estimation techniques
can be used~\cite{MaZhuBaron2016TSP}; Ma et al.~\cite{MaBaronBeirami2015ISIT} study the behavior of AMP when the denoiser uses a mismatched prior.

\subsection{MP-LM using lossy MP-AMP}\label{sec:MP-CS_for_MP-AMP}
In the sensing problem formulated in~\eqref{eq:one-node-meas}, the measurement matrix is stored in a distributed manner in each processor node. Lossy MP-AMP~\cite{HanZhuNiuBaron2016ICASSP} iteratively solves MP-LM using lossily compressed messages:
\begin{equation}
\mbox{Processor nodes:}\ {\bf r}_t^p={\bf y}^p-\A^p\x_t+\frac{1}{\kappa}{\bf r}_{t-1}^p
\omega_{t-1},\label{eq:slave1}
\end{equation}
\begin{equation}
\quad \quad \quad {\bf f}_t^p=\frac{1}{P}\x_t+(\A^p)^{\top}{\bf r}_t^p,\label{eq:slave2}
\end{equation}
\begin{equation}
\mbox{Fusion center:}\ {\bf f}_{Q,t}=\sum_{p=1}^P Q({\bf f}_{t}^p),\ \omega_{t}=\langle d\eta_{t}({\bf f}_{Q,t})\rangle,\label{eq:master0}
\end{equation}
\begin{equation}
 \x_{t+1}=\eta_{t}( {\bf f}_{Q,t}),\label{eq:master}
\end{equation}
where $Q(\cdot)$ denotes quantization, and
an MP-AMP iteration refers to the process from~\eqref{eq:slave1} to~\eqref{eq:master}.
The processor nodes send quantized (lossily compressed) messages, $Q(\f_t^p)$, to the fusion center. The reader might notice that the fusion center also needs to transmit the denoised signal vector $\x_t$ and a scalar $\omega_{t-1}$ to the processor nodes. The transmission of
$\omega_{t-1}$ is negligible, and the fusion center may broadcast $\x_t$ so that naive compression of $\x_t$, such as compression with a fixed quantizer, is sufficient. Hence, we will not discuss possible  compression of messages transmitted by the fusion center.

Assume that we quantize $\f_t^p, \forall p$, and use $C$ bits to encode the quantized vector $Q(\f_t^p)\in\mathbb{R}^N$. According to~\eqref{eq:codingRate}, the {\em coding rate} is $R=\frac{C}{N}$. We incur an {\em expected distortion}
\begin{equation*}
D_t^p=\mathbb{E}\left[\frac{1}{N}\sum_{i=1}^N(Q(f_{t,i}^p)-f_{t,i}^p)^2\right]
\end{equation*}
at iteration $t$ in each processor node,\footnote{Because we assume that
$\A$ and $\z$ are both i.i.d., the expected distortions are the same over
all $P$ nodes, and can be denoted by $D_t$ for simplicity.
Note also that $D_t=\mathbb{E}[(Q(f_{t,i}^p)-f_{t,i}^p)^2]$
due to $\x$ being i.i.d.}
where $Q(f_{t,i}^p)$ and $f_{t,i}^p$ are the $i$-th entries of the vectors $Q(\f_t^p)$ and $\f_t^p$, respectively,
and the expectation is over $\f_t^p$.
When the size of the problem grows, i.e., $N\rightarrow\infty$, the rate-distortion (RD) function, denoted by $R(D)$, offers the fundamental information theoretic limit on the coding rate $R$ for communicating a long sequence up to distortion $D$~\cite{Cover06,Berger71,GershoGray1993,WeidmannVetterli2012}.
A pivotal conclusion from RD theory is that coding rates can be greatly reduced even if $D$ is small.
The function $R(D)$ can be computed in various ways~\cite{Arimoto72,Blahut72,Rose94}, and can be achieved by an RD-optimal quantization scheme in the limit of large $N$.
Other quantization schemes may require larger coding rates to achieve the same expected distortion $D$.

The goal of this chapter is to understand the fundamental trade-offs for MP-LM using MP-AMP. Hence,
unless otherwise stated, we assume that
appropriate vector quantization (VQ)
schemes~\cite{LBG1980,Gray1984,GershoGray1993}, which achieve $R(D)$,
are applied within each MP-AMP iteration, although our analysis is readily extended to practical quantizers such as entropy coded scalar quantization (ECSQ)~\cite{GershoGray1993,Cover06}. (Note that the cost of running quantizers
in each processor node is not considered, because
the cost of processing a bit is usually much smaller than the cost of transmitting it.)
Therefore, the signal {\em at the fusion center} before denoising can be modeled as
\begin{align}
\f_{Q,t}=\sum_{p=1}^P Q(\f_t^p)=\x+\w_t+\n_t,\label{eq:indpt_noises}
\end{align}
where $\w_t$ is the equivalent scalar channel noise~\eqref{eq:equivalent_scalar_channel} and $\n_t$ is the overall quantization error whose entries follow $\mathcal{N}(0,PD_t)$.
Because the quantization error, $\n_t$, is a sum of quantization errors in the $P$ processor nodes,
$\n_t$ resembles Gaussian noise due to the central limit theorem.
Han et al. suggest that SE for lossy MP-AMP~\cite{HanZhuNiuBaron2016ICASSP} (called lossy SE) follows
\begin{equation}\label{eq:SE_Q}
\sigma_{t+1}^2=\sigma^2_Z+\frac{1}{\kappa}\text{MSE}(\eta_t,\sigma_t^2+PD_t),
\end{equation}
where $\sigma_t^2$ can be estimated by
$\widehat{\sigma}_t^2 = \frac{1}{M}\|{\bf r}_t\|_2^2$ with $\|\cdot\|_p$ denoting the $\ell_p$ norm~\cite{Bayati2011,Montanari2012}, and $\sigma_{t+1}^2$ is the variance of $\w_{t+1}$.

The rigorous justification of~\eqref{eq:SE_Q} by extending the framework put forth by Bayati and Montanari~\cite{Bayati2011} and Rush and Venkataramanan~\cite{Rush_ISIT2016_arxiv} is left for future work. Instead, we argue that lossy SE~\eqref{eq:SE_Q} asymptotically tracks the evolution of $\sigma_t^2$ in lossy MP-AMP in the limit of $\frac{PD_t}{\sigma_t^2}\rightarrow 0$.
Our argument is comprised of three parts: ({\em i}) $\w_t$ and $\n_t$~\eqref{eq:indpt_noises} are approximately independent in the limit of $\frac{PD_t}{\sigma_t^2}\rightarrow 0$,   ({\em ii})  $\w_t+\n_t$ is approximately independent of $\x$ in the limit of $\frac{PD_t}{\sigma_t^2}\rightarrow 0$, and ({\em iii}) lossy SE~\eqref{eq:SE_Q} holds if ({\em i}) and ({\em ii}) hold.
The first part ($\w_t$ and $\n_t$ are independent) ensures that we can track the variance of $\w_t+\n_t$ with $\sigma_t^2+PD_t$. The second part ($\w_t+\n_t$ is independent of $\x$) ensures that lossy MP-AMP
follows lossy SE~\eqref{eq:SE_Q} as it falls under the general framework discussed in Bayati and Montanari~\cite{Bayati2011} and Rush and Venkataramanan~\cite{Rush_ISIT2016_arxiv}. Hence, the third part of our argument holds.
The first two parts are backed up by extensive numerical evidence in Appendix~\ref{app:verifyIndpt}, where ECSQ~\cite{GershoGray1993,Cover06} is used; ECSQ
approaches $R(D)$ within 0.255 bits in the high rate limit (corresponds to small distortion)~\cite{GershoGray1993}. Furthermore, Appendix~\ref{app:verifyLossySE} provides extensive numerical evidence to show that lossy SE~\eqref{eq:SE_Q} indeed tracks the evolution of the MSE when $\w_t$ and $\n_t$ are independent and $\w_t+\n_t$ and $\x$ are independent.

Although lossy SE~\eqref{eq:SE_Q} requires $\frac{PD_t}{\sigma_t^2}\rightarrow 0$, if scalar quantization is used in
a practical implementation,
then lossy SE approximately holds when $\gamma<\frac{2\sigma_t}{\sqrt{P}}$, where $\gamma$ is the quantization bin size of the scalar quantizer (details in
Appendices~\ref{app:verifyIndpt} and~\ref{app:verifyLossySE}).
Note that the condition $\gamma<\frac{2\sigma_t}{\sqrt{P}}$ is motivated by Widrow and Koll{\'a}r~\cite{widrow2008quantization}. If appropriate VQ schemes~\cite{LBG1980,Gray1984,GershoGray1993} are used, then we might need milder requirements than $\frac{PD_t}{\sigma_t^2}\rightarrow 0$ in the scalar quantizer case, in order for $\w_t$ and $\n_t$ to be independent and for $\w_t+\n_t$ and $\x$ to be independent.

Denote the coding rate used to transmit $Q(\f^p_t)$ at iteration $t$ by $R_t$. The sequence $\mathbf{R}=(R_1,\cdots,R_T)$ is called
the {\em coding rate sequence}, where $T$ is the total number of MP-AMP iterations. Given
$\mathbf{R}$, the distortion $D_t$ can be evaluated with $R(D)$, and the scalar channel noise variance $\sigma_t^2$ can be evaluated with~\eqref{eq:SE_Q}.
Hence, the MSE for $\mathbf{R}$ can be predicted. The MSE at the last iteration is called the {\em final MSE}.

\sectionmark{Background}
\section{Optimal Rates Using Dynamic Programming}\label{sec:DP}
\sectionmark{Optimal Rates Using DP}

In this section, we first define the cost of running MP-AMP. We then use DP to find an optimal coding rate sequence with minimum cost, while achieving a desired EMSE.

\begin{myDef}[Combined cost]\label{def:costFunc}
Define the cost of estimating a signal in an MP system as
\begin{equation}\label{eq:cost}
C^b(\mathbf{R})=b \|\mathbf{R}\|_0+ \|\mathbf{R}\|_1,
\end{equation}
where $\|\mathbf{R}\|_0=T$ is the number of iterations to run, and
$\|\mathbf{R}\|_1$ is
 the aggregate coding rate, denoted also by $R_{agg}$,
\begin{equation}\label{eq:R_agg}
R_{agg}=\| \mathbf{R} \|_1=\sum_{t=1}^T R_t.
\end{equation}
The parameter $b$ is the cost of computation
in one MP-AMP iteration normalized by the cost of transmitting $Q(\f_t^p)$~\eqref{eq:master0} at a coding rate of 1 bit/entry.
Also, the cost at iteration $t$ is
\begin{equation}\label{eq:cost_one_iter}
C^b_t(R_t)  = b\times {\mathbbm{1}}_{R_t \neq 0} + R_t,
\end{equation}
where the indicator function ${\mathbbm{1}}_{\mathcal{A}}$
is 1 if the condition $\mathcal{A}$ is met,
else 0.
Hence, $C^b(\mathbf{R}) = \sum_{t =1}^T C_t^b (R_t)$.
\end{myDef}

In some applications, we may want to obtain a sufficiently small EMSE at minimum cost~\eqref{eq:cost}, where the physical meaning of the cost varies in different problems (cf. Section~\ref{sec:realworld}).
Denote the EMSE at iteration $t$ by $\epsilon_t$. Hence, the {\em final EMSE} at the output of MP-AMP is $\epsilon_T$.

Let us formally state the problem.
Our goal is to obtain a coding rate sequence $\mathbf{R}$ for MP-AMP iterations, which is the solution of the following optimization problem:
\begin{equation}\label{eq:optimizationSetup}
\text{minimize } C^b(\mathbf{R}) \quad \quad
\text{subject to }  \epsilon_T\leq \Delta.
\end{equation}
We now have a definition for the optimal coding rate sequence.

\begin{myDef}[Optimal coding rate sequence]\label{def:optRate}
An optimal coding rate sequence $\mathbf{R}^*$ is a solution of~\eqref{eq:optimizationSetup}.
\end{myDef}

To compute $\mathbf{R}^*$, we derive a dynamic programming (DP)~\cite{bertsekas1995}
scheme, and then prove that it is optimal.

{\bf Dynamic programming scheme:}
Suppose that MP-AMP is at iteration $t$. Define the smallest cost for the $(T-t)$ remaining iterations to achieve the EMSE constraint, $\epsilon_T\leq \Delta$, as $\Phi_{T-t}(\sigma_t^2)$, which is a function of the  scalar channel noise variance at iteration $t$, $\sigma^2_t$~\eqref{eq:indpt_noises}.
Hence, $\Phi_{T-1}(\sigma_1^2)$ is the cost for solving~\eqref{eq:optimizationSetup}, where $\sigma_1^2=\sigma_Z^2+\frac{1}{\kappa}\mathbb{E}[X^2]$ is due to the all-zero initialization of the signal estimate.

DP uses a base case and recursion steps to find $\Phi_{T-1}(\sigma_1^2)$.
In the base case of DP, $T-t=0$, the cost of running MP-AMP is $C^b_T(R_T)=b\times \mathbbm{1}_{R_{T}\neq 0}+R_{T}$~\eqref{eq:cost_one_iter}.
If $\sigma^2_{T}$ is not too large, then there exist some values for $R_T$ that satisfy $\epsilon_T\leq \Delta$; for these $\sigma^2_{T}$ and $R_T$, we have $\Phi_{0}(\sigma^2_{T})=\min_{R_T} C^b_T(R_T)$.
If $\sigma^2_{T}$ is too large, even lossless transmission of $\f_{T}^p$ during the single remaining MP-AMP iteration~\eqref{eq:SE_Q} does not yield an EMSE that satisfies the constraint, $\epsilon_T\leq \Delta$, and
we assign $\Phi_{0}(\sigma^2_{T})=\infty$ for such $\sigma_{T}^2$.

Next, in the recursion steps of DP,
we iterate back in time by decreasing $t$ (equivalently, increasing $T-t$),
\begin{equation}\label{eq:DPrecursion}
 \Phi_{T-t}(\sigma^2_t)\! =\! \min_{\widehat{R}}\left\{ C_t^b(\widehat{R})+ \Phi_{T-(t+1)}(\sigma^2_{t+1}(\widehat{R}))\right\},
\end{equation}
where $\widehat{R}$ is the coding rate used in the current
MP-AMP iteration~$t$,
the equivalent scalar channel noise variance at the fusion center is
$\sigma_t^2$~\eqref{eq:indpt_noises},
and $\sigma^2_{t+1}(\widehat{R})$, which is obtained from (\ref{eq:SE_Q}),
is the variance of the scalar channel noise~\eqref{eq:indpt_noises} in the
next iteration after transmitting $\f_t^p$ at rate $\widehat{R}$.
The terms on the right hand side are the current cost of MP-AMP~\eqref{eq:cost_one_iter}
(including computational and communication costs) and the minimum combined cost
in all later iterations, $t+1,\cdots,T$.

The coding rates $\widehat{R}$ that yield the smallest cost $\Phi_{T-t}(\sigma_t^2)$ for different $t$ and $\sigma^2_t$ are stored in a table $\mathcal{R}(t,\sigma^2_t)$.
After DP finishes, we obtain the coding rate for the first MP-AMP iteration as
$R_1=\mathcal{R}(1,\sigma_Z^2+\frac{1}{\kappa}\mathbb{E}[X^2])$.
Using $R_1$, we calculate $\sigma_t^2$ from~\eqref{eq:SE_Q} for $t=2$ {and
find $R_2=\mathcal{R}(2,\sigma^2_2)$. Iterating from $t=1$ to $T$, we obtain $\mathbf{R}=(R_1,\cdots,R_T)$.

To be computationally tractable, the proposed DP scheme should operate in
discretized search spaces for $\sigma^2_{\{\cdot\}}$ and $R_{\{\cdot\}}$.
Details about the resolutions of $\sigma^2_{\{\cdot\}}$ and $R_{\{\cdot\}}$ appear in
Appendix~\ref{app:Integrity}.

In the following, we state that our DP scheme yields the optimal solution. The proof appears in Appendix~\ref{app:proofDPoptimal}.

\begin{myLemma}\label{lemma:DPoptimal}
The dynamic programming formulation in~\eqref{eq:DPrecursion} yields an optimal coding rate sequence $\mathbf{R}^*$, which is a solution of~\eqref{eq:optimizationSetup} for the discretized search spaces
of $R_t$ and $\sigma_t^2,\ \forall t$.
\end{myLemma}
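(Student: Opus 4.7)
The plan is to invoke Bellman's principle of optimality for finite-horizon deterministic dynamic programming with a terminal constraint. Two properties, both already established earlier in the chapter, make this immediately applicable: (i) the lossy state evolution recursion~\eqref{eq:SE_Q} together with the rate-distortion function $R(D)$ make $\sigma_t^2$ a Markov state, since given $\sigma_t^2$ and the current rate $R_t$ (which determines $D_t$), the successor $\sigma_{t+1}^2$ is determined independently of the past rates $R_1,\ldots,R_{t-1}$; and (ii) the objective $C^b(\mathbf{R}) = \sum_{t=1}^{T} C_t^b(R_t)$ is additive across iterations.

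First, I would recast~\eqref{eq:optimizationSetup} as a deterministic shortest-path problem on the discretized state space of admissible $\sigma_t^2$ and action space of discretized $R_t$. The initial state is $\sigma_1^2 = \sigma_Z^2 + \tfrac{1}{\kappa}\mathbb{E}[X^2]$, transitions are given by~\eqref{eq:SE_Q}, and the terminal feasible set is determined by the constraint $\epsilon_T \le \Delta$, which depends only on $\sigma_T^2$ and the last rate $R_T$. States from which no feasible terminal rate exists are assigned cost $\infty$, matching the construction of $\Phi_0$ in the excerpt.

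The core argument is backward induction on $k = T - t$ to show that $\Phi_k(\sigma_t^2)$ equals the minimum of $\sum_{s=t}^{T} C_s^b(R_s)$ over all suffixes $(R_t,\ldots,R_T)$ that drive $\sigma_t^2$ into the terminal feasible set. The base case $k = 0$ holds by the definition of $\Phi_0$. For the inductive step, fix any feasible suffix from $\sigma_t^2$; its cost splits as $C_t^b(R_t) + \sum_{s=t+1}^{T} C_s^b(R_s)$, and by the induction hypothesis the second term is bounded below by $\Phi_{T-(t+1)}(\sigma_{t+1}^2(R_t))$. Minimizing over $R_t$ then gives the lower bound $\Phi_{T-t}(\sigma_t^2)$, while equality is realized by selecting the minimizer stored in $\mathcal{R}(t,\sigma_t^2)$ and concatenating an optimal continuation guaranteed by the induction hypothesis. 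Applying this at $t=1$ yields optimality of the forward-reconstructed sequence $\mathbf{R}^*$.

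The main subtlety --- not a deep obstacle --- is the discretization. The successor $\sigma_{t+1}^2(\widehat{R})$ produced by~\eqref{eq:SE_Q} need not land exactly on the $\sigma^2$-grid, so the induction above is genuinely an equality only if Appendix~\ref{app:Integrity} supplies a consistent projection of successor states onto grid points; with such a rule in place, the DP is optimal within the discretized policy class, which is exactly what the lemma claims. A secondary point is tie-breaking: if several rates attain the minimum in~\eqref{eq:DPrecursion}, any deterministic selection rule produces a valid optimal policy, so $\mathbf{R}^*$ is an element of a possibly non-unique argmin set.
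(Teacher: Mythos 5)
Your proof is correct and takes essentially the same route as the paper: both identify $\sigma_t^2$ as the state, $R_t$ as the decision variable, lossy SE~\eqref{eq:SE_Q} as the next-state map, and $C_t^b(R_t)$ as the additive per-stage cost, and then appeal to the principle of optimality for finite-horizon deterministic DP. The only difference is that the paper cites Bertsekas~\cite{bertsekas1995} for the optimality of this formulation, whereas you carry out the backward induction explicitly and additionally flag the grid-projection issue for $\sigma_{t+1}^2(\widehat{R})$ (which the paper defers to Appendix~\ref{app:Integrity}).
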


Lemma~\ref{lemma:DPoptimal} focuses on the optimality of our DP scheme in
discretized search spaces for $R_t$ and $\sigma_t^2$. It can be shown that we can achieve a desired accuracy level in $\mathbf{R}^*$ by adjusting the resolutions of the discretized search spaces for $R_t$ and $\sigma_t^2$.
Suppose that the discretized search spaces for $\sigma^2_{\{\cdot\}}$ and $R_{\{\cdot\}}$ have
$K_1$ and $K_2$ different values, respectively. Then, the computational complexity of our DP scheme is $O(TK_1K_2)$.

\begin{figure}[t]
\begin{center}
\includegraphics[width=8cm]{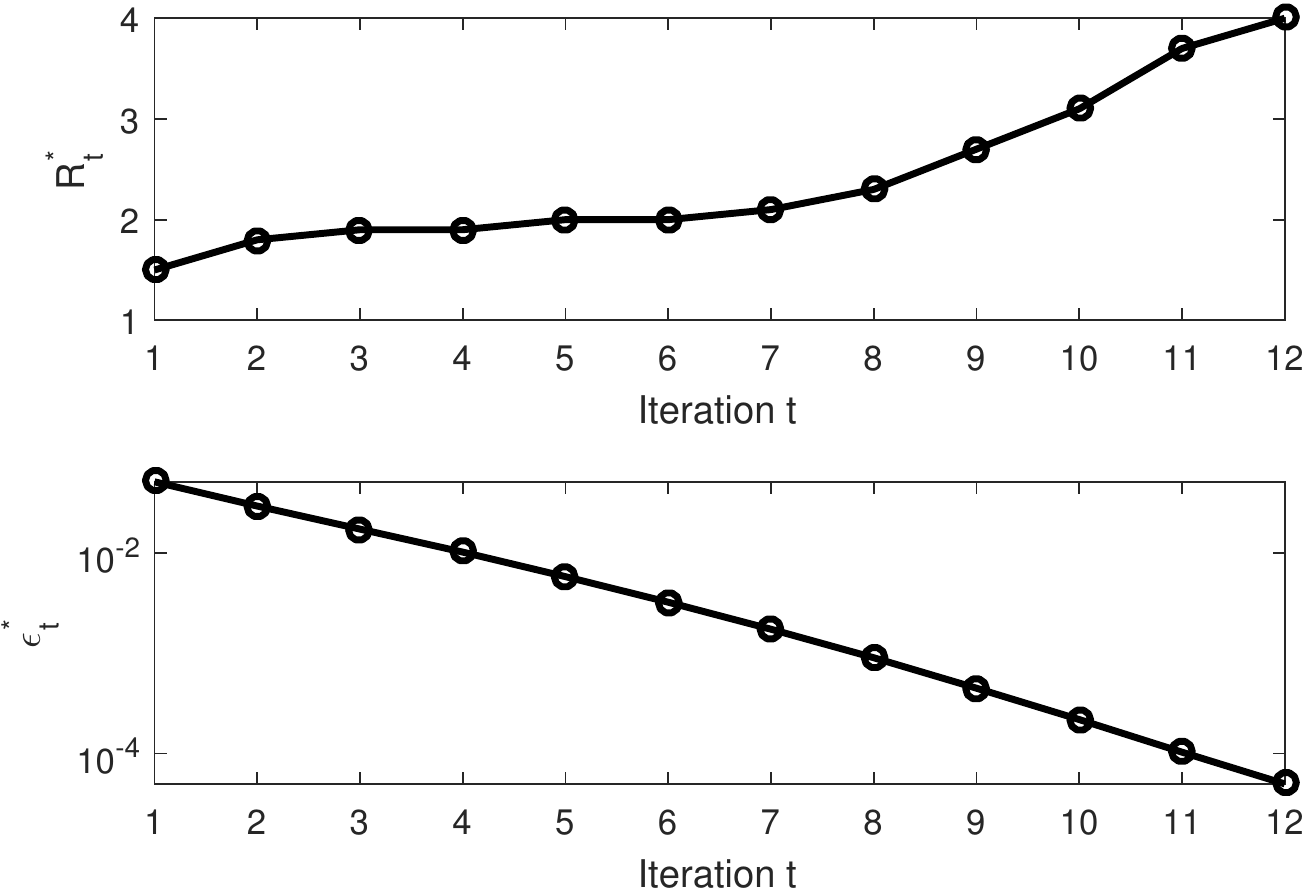}
\end{center}
\caption{The optimal coding rate sequence $\mathbf{R}^*$ (top panel) and optimal EMSE $\epsilon_t^*$ (bottom) given by DP are  shown as functions of $t$. (Bernoulli-Gaussian signal~\eqref{eq:BG} with $\rho=0.1$, $\kappa=0.4$, $P=100$, $\sigma_Z^2=\frac{1}{400}$, and $b=2$.)}
\label{fig:RtAndEMSE}
\end{figure}

{\bf Optimal coding rate sequence given by DP:}
Consider estimating a {\em Bernoulli-Gaussian} signal,
\begin{equation}
X=X_BX_G,\label{eq:BG}
\end{equation}
where $X_B\sim \text{Ber}(\rho)$ is a Bernoulli random variable,
$\rho$ is called the {\em sparsity rate} of the signal,
and $X_G\sim {\cal N}(0,1)$;
here we use $\rho=0.1$. Note that the results in this chapter apply to priors, $f_X(x)$,
other than~\eqref{eq:BG}.

We run our DP scheme on a problem with relatively small
desired EMSE,  $\Delta=5\times10^{-5}$, in the last iteration $T$.
The signal is measured in an MP platform with $P=100$ processor nodes according to~\eqref{eq:one-node-meas}.
The measurement rate is $\kappa=\frac{M}{N}=0.4$, and the noise variance is $\sigma_Z^2=\frac{1}{400}$. The parameter $b=2$~\eqref{eq:cost}.
We use ECSQ~\cite{GershoGray1993,Cover06} as the quantizer in each processor node, and use  the corresponding relation between the rate $R_t$ and distortion $D_t$ of ECSQ in our DP scheme. Note that we require the quantization bin size to be smaller than $\frac{2\sigma_t}{\sqrt{P}}$, according to Section~\ref{sec:MP-CS_for_MP-AMP}.
Figure~\ref{fig:RtAndEMSE} illustrates the optimal coding rate sequence $\mathbf{R}^*$ and optimal EMSE $\epsilon_t^*$ given by DP as functions of the iteration number $t$.

It is readily seen that after the first 5--6 iterations the coding rate seems near-linear.
The next section proves that any optimal coding rate sequence $\mathbf{R}^*$
is approximately linear in the limit of EMSE$\rightarrow 0$.
However, our proof involves the large $t$ limit, and does not provide insights for small $t$.
We ran DP for various configurations.
Examining all $\mathbf{R}^*$ from our DP results, we notice that the coding rate is
monotone non-decreasing, i.e., $R^*_1\leq R^*_2\leq \cdots\leq R^*_T$. This seems intuitive, because in
early iterations of (MP-)AMP, the scalar channel noise $\w_t$ is large, which does not require
transmitting $\f_t^p$ (cf.~\eqref{eq:slave2}) at high fidelity. Hence, a
low rate $R^*_t$ suffices. As the iterations proceed, the scalar channel noise
$\w_t$ in~\eqref{eq:indpt_noises} decreases, and the large quantization
error $\n_t$ would be unfavorable for the final MSE.
Hence, higher rates are needed in later iterations.

\sectionmark{Optimal Rates Using DP}
\section{Properties of Optimal Coding Rate Sequences}\label{sec:linRateTh}
\sectionmark{Properties of Optimal Rates}

\subsection{Intuition}\label{sec:intuition}

We start this section by providing some brief intuitions about why optimal coding rate sequences are approximately linear when the EMSE is small.

Consider a case where we aim to reach a low $\text{EMSE}$. Montanari~\cite{Montanari2012}
provided a geometric interpretation of the relation between the MSE performance of AMP at
iteration $t$ and
the denoiser $\eta_t(\cdot)$ being used.\footnote{We will also provide such an interpretation in Section~\ref{sec:geoInterp}.}
In the limit of small $\text{EMSE}$, the $\text{EMSE}$ decreases by a nearly-constant multiplicative factor
per AMP iteration, yielding a geometric decay of the EMSE.
In MP-AMP, in addition to the equivalent scalar channel noise $\w_t$, we have additive quantization error $\n_t$~\eqref{eq:indpt_noises}.
In order for the $\text{EMSE}$ in an MP-AMP system to decay geometrically, the distortion $D_t$ must decay at least as quickly. To obtain this geometric decay in $D_t$,
recall that in the high rate limit, the distortion-rate function typically takes the form $D(R)\approx C_1 2^{-2R}$~\cite{GrayNeuhoff1998} for some positive constant $C_1$.
We propose for $R_t$ to have the form,
$R_t\approx C_2+C_3 t$,
where $C_2$ and $C_3$ are constants. In the remainder of this section, we first discuss the geometric interpretation of AMP state evolution, followed by our results about the linearity of optimal coding rate sequences. The detailed proofs appear in the appendices.

\begin{figure*}[t]
\centering
  \subfloat[]{
    \label{fig:losslessSE} 
    \includegraphics[height=0.23\textwidth]{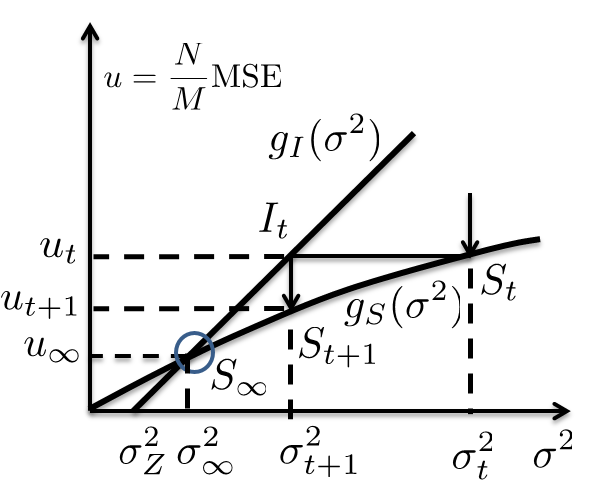}}
    \subfloat[]{
    \label{fig:zoomIn} 
    \includegraphics[height=0.23\textwidth]{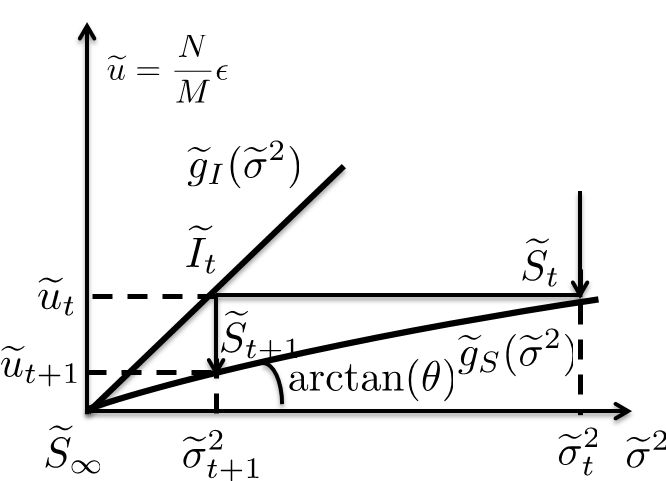}}
  \subfloat[]{
    \label{fig:lossySE} 
    \includegraphics[height=0.23\textwidth]{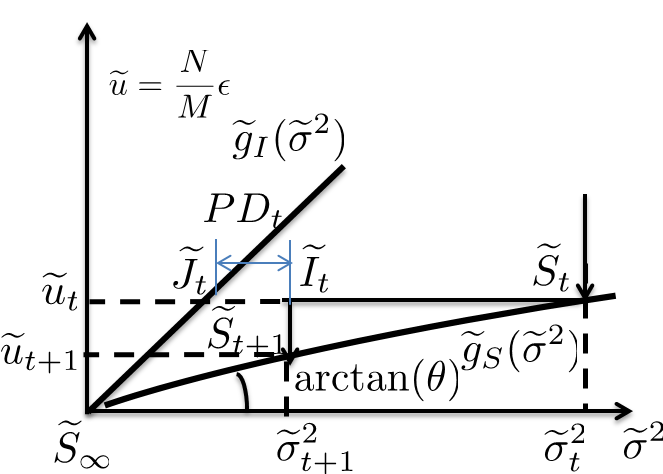}}
\caption{Geometric interpretation of SE. In all panels, the thick solid curves correspond to $g_I(\cdot)$ and $g_S(\cdot)$, and their offset versions $\widetilde{g}_I(\cdot)$ and $\widetilde{g}_S(\cdot)$. The solid lines with arrows correspond to the SE of AMP.   Dashed lines without arrows are auxiliary lines. Panel (a): Illustration of centralized SE. Panel (b): Zooming in to the small region just above point $S_\infty$. Panel (c): Illustration of lossy SE.}\label{fig:AMP_SE_geo}
\end{figure*}

\subsection{Geometric interpretation of AMP state evolution}\label{sec:geoInterp}
{\bf Centralized SE:}
The equivalent scalar channel of AMP is given by~\eqref{eq:equivalent_scalar_channel}.
We rewrite the centralized AMP SE~\eqref{eq:ori_SE} as follows~\cite{DMM2009,Bayati2011,Rush_ISIT2016_arxiv},
\begin{equation}\label{eq:SE}
\underbrace{\sigma_{t+1}^2-\sigma_Z^2}_{g_I(\sigma_{t+1}^2)}=\underbrace{\frac{N}{M}\text{MSE}_{\eta_t}(\sigma_t^2)}_{g_S(\sigma_t^2)},
\end{equation}
where $\text{MSE}_{\eta_t}(\sigma_t^2)$ denotes the MSE after denoising $\f_t$~\eqref{eq:equivalent_scalar_channel}
using $\eta_t(\cdot)$.
The functions $g_I(\cdot)$ and $g_S(\cdot)$ are illustrated in Figure~\ref{fig:losslessSE}
with solid curves; the meanings of $I$ and $S$ will become clear below.
We see that $g_I(\sigma_t^2)$ is an affine function with unit slope, whereas $g_S(\sigma_t^2)$ is generally a non-linear function of $\sigma_t^2$ (see Figure~\ref{fig:losslessSE}). The lines with arrows illustrate the state evolution (SE). Details appear below.

In Figure~\ref{fig:losslessSE}, we present a
geometric interpretation of SE. The horizontal axis is
the scalar channel noise variance $\sigma^2$ and the vertical axis represents the scaled MSE, $u=\frac{N}{M}\text{MSE}$.
Let $S_t = (\sigma^2_t, u_t)$ be the {\em state} point that is reached by SE in
iteration $t$.
We follow the SE trajectory $S_t \to I_t \to S_{t+1} \to \cdots$
in Figure~\ref{fig:losslessSE},  
where $I_t = (\sigma^2_{t+1}, u_t)$ represents the {\em intermediate} point in the transition between states $S_t$ and $S_{t+1}$ corresponding to iterations $t$ and $t+1$, respectively. 
Observe that the points $S_t$ and $I_t$ have the same ordinate ($u_t$), while $S_{t+1}$ and $I_t$ have the same abscissa ($\sigma_{t+1}^2$), which are related as
$\sigma^2_{t+1} = g^{-1}_I(u_t)$
and
$u_{t+1} = g_S(\sigma^2_{t+1})$.
As $t$ grows, $\sigma^2_t$  converges to $\sigma^2_\infty$, which is the abscissa of the point $S_\infty$. The ordinate of point $S_\infty$ is $u_\infty=\frac{N}{M}$MSE$_\infty$, where $\text{MSE}_\infty = \text{MMSE}$. If we stop the algorithm at iteration $T$, or equivalently at point $S_T = (\sigma^2_T, u_T)$,
the corresponding MSE, MSE$_T$, has an EMSE of $\epsilon_T=\text{MSE}_T-\text{MMSE}$.

In Figure~\ref{fig:zoomIn}, we zoom into the neighborhood of point $S_\infty$. To make the presentation more concise, we vertically offset $g_I(\cdot)$ and $g_S(\cdot)$ by $\frac{N}{M}$MMSE and horizontally offset them by $\sigma_\infty^2$; we call the resulting functions $\widetilde{g}_I(\cdot)$ and $\widetilde{g}_S(\cdot)$, respectively. Hence, the vertical axis in Figure~\ref{fig:zoomIn} represents the scaled EMSE, $\widetilde{u}=\frac{N}{M}\text{EMSE}=\frac{N}{M}\epsilon$, and we have $\widetilde{g}_I(\widetilde{\sigma}^2_t)=g_I(\widetilde{\sigma}^2_t+\sigma_\infty^2)-\frac{N}{M}$MMSE and $\widetilde{g}_S(\widetilde{\sigma}^2_t)=g_S(\widetilde{\sigma}^2_t+\sigma_\infty^2)-\frac{N}{M}$MMSE.
Observe that $\widetilde{g}_I(0) = \widetilde{g}_S(0) = 0$.
Additionally, the slope of $\widetilde{g}_I(\widetilde{\sigma}^2_t)$ is $\widetilde{g}_I'(\widetilde{\sigma}^2_t)=1$, where $\widetilde{g}_I'(\cdot)$ is the first-order derivative of $\widetilde{g}_I(\cdot)$ w.r.t. $\widetilde{\sigma}^2_t$ (Figure~\ref{fig:zoomIn}).
Because the MSE function for the MMSE-achieving denoiser is continuous and differentiable twice~\cite{WuVerdu2011}, we can invoke Taylor's theorem to express
\begin{equation}
\widetilde{g}_S(\widetilde{\sigma}^2_t)
=\widetilde{g}_S'(0)\widetilde{\sigma}^2_t+\frac{1}{2}\widetilde{g}_S''(\zeta_t)\widetilde{\sigma}^4_t,
\label{eq:Taylor}
\end{equation}
where $\zeta_t\in (0,\widetilde{\sigma}^2_t),$ and $\widetilde{g}_S'(\widetilde{\sigma}^2_t)$ and $\widetilde{g}_S''(\widetilde{\sigma}^2_t)$ are the first- and second-order derivatives of $\widetilde{g}_S(\cdot)$ w.r.t.  $\widetilde{\sigma}^2_t$, respectively.
Due to continuity and differentiability of the denoising function, $\widetilde{g}_S(\cdot)$ is invertible in a neighborhood around $0$, and its inverse is denoted by $\widetilde{g}^{-1}_S(\cdot).$
Invoking Taylor's theorem,
\begin{equation}
\widetilde{g}_S^{-1}(\widetilde{u}_t)=(\widetilde{g}_S^{-1})'(0) \widetilde{u}_t+\frac{1}{2}(\widetilde{g}_S^{-1})''(\zeta_t)\widetilde{u}_t^2,
\label{eq:Taylor_inverse}
\end{equation}
where $\zeta_t\in (0,\widetilde{u}_t)$, and
$(\widetilde{g}^{-1}_S)'(\widetilde{u}_t)$ and
$(\widetilde{g}^{-1}_S)''(\widetilde{u}_t)$ are the first- and second-order derivatives of $\widetilde{g}^{-1}_S(\cdot)$ w.r.t.  $\widetilde{u}_t$, respectively. When $t\rightarrow \infty$, $\widetilde{\sigma}^2_t \to 0$ and $\widetilde{u}_t \to 0$,
and the higher-order terms become
$\frac{1}{2}\widetilde{g}_S''(\xi_t)\widetilde{\sigma}_t^4=O(\widetilde{\sigma}_t^4)$ and
$\frac{1}{2}(\widetilde{g}_S^{-1})''(\zeta_t)\widetilde{u}_t^2=O(\widetilde{u}_t^2)$. 
In other words, both $\widetilde{g}_S(\widetilde{\sigma}^2_t)$ and $\widetilde{g}^{-1}_S(\widetilde{u}_t)$ become approximately linear functions, as shown in Figure~\ref{fig:zoomIn}. We further denote the slope of $\widetilde{g}_S(0)$ by $\theta$, i.e.,
\begin{equation}
\theta=\widetilde{g}_S'(0) = \frac{1}{(\widetilde{g}_S^{-1})'(0)}.
\label{eq:theta}
\end{equation}

To calculate the slope $\theta$, we first calculate the scalar channel noise variance for point $S_\infty$, $\sigma_\infty^2$, by using replica analysis~\cite{ZhuBaronCISS2013,Krzakala2012probabilistic},\footnote{The outcome of replica analysis~\cite{ZhuBaronCISS2013,Krzakala2012probabilistic} is close to simulating SE~\eqref{eq:SE} with a large number of iterations.} and obtain
$\theta=g_S'(\sigma_\infty^2)=\widetilde{g}_S'(0)$.
Moreover, the slope of $\widetilde{g}_S(0)$ satisfies $\theta=\widetilde{g}_S'(0)\in(0,1)$; otherwise, the curves $\widetilde{g}_I(\cdot)$ and $\widetilde{g}_S(\cdot)$ would not intersect at point $S_\infty$.

{\bf Lossy SE:}
Considering lossy SE~\eqref{eq:SE_Q}, we have
\begin{equation}\label{eq:lossySE}
\underbrace{\sigma_{t+1}^2-\sigma_Z^2}_{g_I(\sigma_{t+1}^2)}=\underbrace{\frac{N}{M}\text{MSE}_{\eta_t}(\sigma_t^2+PD_t)}_{g_S(\sigma_t^2+PD_t)},
\end{equation}
where $P$ is the number of processor  nodes in an MP network, and $D_t$ is the expected
distortion incurred by each node at iteration $t$.
Note that lossy SE has not been rigorously proved in the literature, although we argued in Section~\ref{sec:MP-CS_for_MP-AMP} that it tracks the evolution of the equivalent scalar channel noise variance $\sigma_t^2$ when $D_t\ll \frac{1}{P}\sigma_t^2$.

We notice the additional term $PD_t$, which corresponds to the distortion {\em at the fusion center}.
Because the $P$ nodes transmit their signals $\f_t^p$
with distortion $D_t$, and their messages are independent,
the fusion center's signal has distortion $PD_t$.
The lines with arrows in Figure~\ref{fig:lossySE} illustrate the lossy SE after vertically offsetting $g_I(\cdot)$ and $g_S(\cdot)$ by $\frac{N}{M}$MMSE and horizontally offsetting $g_I(\cdot)$ and $g_S(\cdot)$ by $\sigma_\infty^2$.
After arriving at point $\widetilde{S}_t$, we move horizontally to
$\widetilde{J}_t$, and obtain the ordinate of $\widetilde{I}_t$, $\widetilde{u}_t$, from $\widetilde{g}_S(\widetilde{\sigma}_t^2+PD_t)=\widetilde{u}_t$. Geometrically,
SE is dragged to the right by distance $PD_t$ from point $\widetilde{J}_t$ to
$\widetilde{I}_t$, and then SE descends from $\widetilde{I}_t$
to $\widetilde{S}_{t+1}$.

\subsection{Asymptotic linearity of the optimal coding rate sequence}\label{sec:linearRateTheorem_subsection}

Recall from~\eqref{eq:Taylor} that
 $\lim_{t\rightarrow\infty}\widetilde{\sigma}^2_t=0$. 
Hence, as $t$ grows, $f_{t,i}$~\eqref{eq:equivalent_scalar_channel} converges in distribution to
$x_i +\mathcal{N}(0, \sigma^2_{\infty})$.
Therefore, the RD function converges to some fixed function as
$t$ grows. For large coding rate $R$, this function} has the form
\begin{equation}\label{eq:DR}
R_t=\frac{1}{2}\log_2\l(\frac{C_1}{D_t}\r) (1+ o_t(1)),
\end{equation}
for some constant $C_1$ that does not depend on $t$~\cite{GrayNeuhoff1998}.
Note that the assumption of $\widetilde{\sigma}^2_t$ being small implicitly requires the
coding rate used in the corresponding iteration to be large.

For an optimal coding rate sequence $\mathbf{R}^*$, we call the distortion $D_t^*$, derived from~\eqref{eq:DR}, incurred by the optimal coding rate $R_t^*$ at a certain iteration $t$ the {\em optimal distortion}. Correspondingly,
we call the EMSE achieved by MP-AMP with $\mathbf{R}^*$,
denoted by $\epsilon_t^*$, the {\em optimal EMSE} at iteration $t$.
In the following, we state our main results on the optimal coding rate, the optimal distortion, and the optimal EMSE.

\begin{myTheorem}[Linearity of the optimal coding rate sequence]\label{th:optRateLinear}
Supposing that lossy SE~\eqref{eq:lossySE} holds, we have 
\begin{equation}
\lim_{t \to \infty} \frac{D_{t+1}^*}{D_t^*} = \theta,
\label{eq:theorem1-1}
\end{equation}
where $\theta$ is defined in~\eqref{eq:theta}. Furthermore,
\begin{equation}\label{eq:theorem1}
\lim_{t\rightarrow\infty} \l(R^*_{t+1} - R^*_{t }\r)= \frac{1}{2}\log_2 \l(\frac{1}{\theta}\r).
\end{equation}
\end{myTheorem}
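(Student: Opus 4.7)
\emph{Proof proposal.}
The plan is to exploit the small-$\widetilde{\sigma}_t^2$ regime where both the lossy state evolution and the distortion--rate relation linearize, then cast the problem of choosing $\mathbf{R}^*$ as a smooth constrained optimization whose stationarity condition directly yields~\eqref{eq:theorem1-1}, from which~\eqref{eq:theorem1} will follow by substituting the high-rate asymptotic $R_t = \tfrac{1}{2}\log_2(C_1/D_t)(1+o_t(1))$ from~\eqref{eq:DR}.

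First, I would fix a large cut-off iteration $t_0$ and work with the shifted variables $\widetilde{\sigma}_t^2 = \sigma_t^2-\sigma_\infty^2$ for $t \geq t_0$. Because $\lim_{t\to\infty}\widetilde{\sigma}_t^2=0$ and (by our hypothesis of small target EMSE) $PD_t \to 0$, the Taylor expansion~\eqref{eq:Taylor} applied to the lossy SE~\eqref{eq:lossySE} gives
\begin{equation*}
\widetilde{\sigma}_{t+1}^2 = \theta\bigl(\widetilde{\sigma}_t^2 + PD_t\bigr) + O\bigl((\widetilde{\sigma}_t^2+PD_t)^2\bigr).
\end{equation*}
Iterating this recursion from $t_0$ up to $T$, the contribution of the initial condition decays as $\theta^{T-t_0}$ (recall $\theta\in(0,1)$), so up to higher-order terms
\begin{equation*}
\widetilde{\sigma}_T^2 \;\approx\; \theta^{T-t_0}\widetilde{\sigma}_{t_0}^2 \;+\; P\sum_{t=t_0}^{T-1} \theta^{T-t}\, D_t .
\end{equation*}
In the limit of low target EMSE the second term dominates, and the constraint $\epsilon_T\le\Delta$ can be rewritten (via the one-to-one correspondence between $\epsilon_T$ and $\widetilde{\sigma}_T^2$ given by $\widetilde{g}_I$ and the MMSE denoiser) as an affine upper bound of the form $\sum_{t} \theta^{T-t}D_t \le \Delta'$.

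Second, I would use~\eqref{eq:DR} to write $\sum_t R_t = -\tfrac{1}{2}\sum_t \log_2 D_t + \text{const} + o(1)$, so that, for the tail iterations, minimizing the aggregate rate $R_{agg}=\|\mathbf{R}\|_1$ in $C^b(\mathbf{R})$ is equivalent to minimizing $-\sum_t \log_2 D_t$ over $\{D_t\}$. (The computation term $b\|\mathbf{R}\|_0$ only depends on the index set of nonzero rates and is irrelevant once $T$ is fixed and all tail rates are positive, which is the case when EMSE is driven to zero.) Forming the Lagrangian
\begin{equation*}
\mathcal{L}\bigl(\{D_t\},\lambda\bigr) = -\sum_{t} \log_2 D_t + \lambda\Bigl(\sum_{t}\theta^{T-t}D_t - \Delta'\Bigr)
\end{equation*}
and setting $\partial\mathcal{L}/\partial D_t = 0$ gives $D_t^\star = 1/(\lambda \ln 2 \cdot \theta^{T-t})$, so $D_{t+1}^\star/D_t^\star = \theta$. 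Making this limit rigorous requires showing that the $o(\cdot)$ corrections from both the Taylor remainder in lossy SE and the high-rate expansion of $R(D)$ do not shift the ratio of consecutive optimal distortions away from $\theta$ in the $t\to\infty$ limit; I would do this by bounding the perturbation to the Lagrangian gradient and appealing to continuity of the minimizer, using that Lemma~\ref{lemma:DPoptimal} already guarantees existence of an optimizer in the discretized problem and that refining the discretization recovers the continuous optimum.

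Finally, having established~\eqref{eq:theorem1-1}, the coding-rate statement~\eqref{eq:theorem1} follows by subtracting two consecutive instances of~\eqref{eq:DR}:
\begin{equation*}
R^\star_{t+1} - R^\star_t = \tfrac{1}{2}\log_2(D_t^\star/D_{t+1}^\star)\bigl(1+o_t(1)\bigr) \;\longrightarrow\; \tfrac{1}{2}\log_2(1/\theta).
\end{equation*}
The main obstacle I anticipate is the rigorous control of the higher-order terms: the Taylor remainder in~\eqref{eq:Taylor} is quadratic in $\widetilde{\sigma}_t^2+PD_t$, while the error in~\eqref{eq:DR} is only known to be $o_t(1)$, so I would need a uniform bound (over the tail iterations) to conclude that the near-optimal $D_t^\star$ still satisfies the ratio condition to leading order. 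A secondary technical nuisance is that the optimization is over a discrete grid, which I would handle by letting the grid resolution refine as the target EMSE decreases, invoking the accuracy discussion following Lemma~\ref{lemma:DPoptimal}.
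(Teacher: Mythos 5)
Your argument is correct in substance but follows a genuinely different route from the paper's. You linearize the lossy SE recursion, iterate it to express the terminal state as $\widetilde{\sigma}_T^2 \approx \theta^{T-t_0}\widetilde{\sigma}_{t_0}^2 + P\sum_t \theta^{T-t}D_t$, and then read off the geometric decay $D_{t+1}^*/D_t^* = \theta$ from the KKT stationarity of the global problem ``minimize $-\sum_t\log_2 D_t$ subject to an affine constraint on $\{D_t\}$.'' The paper instead makes a purely local two-iteration variation: it fixes the optimal states $\widetilde{u}_t$ and $\widetilde{u}_{t+2}$, observes that optimality forces the intermediate state $\widetilde{u}_{t+1}$ to maximize the product $D_t D_{t+1}$ (equivalently minimize $R_t+R_{t+1}$ in the high-rate regime), and solves the resulting scalar stationarity condition $F'(\widetilde{u}_{t+1})=0$ with the Taylor remainder of $\widetilde{g}_S^{-1}$ carried explicitly and bounded between two extremes $C\in[-B,B]$. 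What your global approach buys is a cleaner structural picture --- the entire optimal distortion schedule is geometric with ratio matched to the contraction factor $\theta$ of the SE map, which makes the result feel inevitable; what the paper's local approach buys is that the higher-order error control you correctly identify as the main obstacle essentially disappears, since only a single Taylor step is ever invoked and no remainders need to be propagated through $T-t_0$ compositions of the recursion. To close your version you would need the uniform bound you mention: each remainder is $O(\widetilde{\sigma}_t^4)$ and the states decay geometrically, so the accumulated perturbation to the constraint is second order and the perturbed KKT point still satisfies $D_{t+1}^*/D_t^* = \theta(1+o_t(1))$; this is doable but is precisely the bookkeeping the paper's two-point argument avoids. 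The final passage from \eqref{eq:theorem1-1} to \eqref{eq:theorem1} via \eqref{eq:DR} is identical in both treatments.
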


Theorem~\ref{th:optRateLinear} is
proved in Appendix~\ref{app:optRateLinear}.

\begin{myRemark}
Define the additive growth rate of an optimal coding rate sequence $\mathbf{R}^*$ at iteration $t$ as $R_{t+1}^*-R_{t}^*$.
Theorem~\ref{th:optRateLinear} not only shows that any optimal coding rate sequence grows approximately linearly in the low EMSE limit, but also provides a way to calculate its additive growth rate in the low EMSE limit. Hence, if the goal is to achieve a low EMSE, practitioners could simply use a coding rate sequence that has a fixed coding rate in the first few iterations and then increases linearly with additive growth rate $\frac{1}{2}\log_2\l(\frac{1}{\theta}\r)$.
\end{myRemark}

The following theorem provides ({\em i}) the relation between the optimal distortion $D_{t+1}^*$ and the optimal EMSE $\epsilon_t^*$ in the large $t$ limit, and ({\em ii}) the convergence rate of the optimal EMSE $\epsilon_t^*$.
\begin{myTheorem}\label{th:convergence}
Assuming that lossy SE~\eqref{eq:lossySE} holds, we have
\begin{equation}\label{eq:theorem2_2}
\lim_{t\rightarrow\infty} \frac{D_t^*}{\epsilon_{t}^*} =0.
\end{equation}
Furthermore, the convergence rate of the optimal EMSE is
\begin{equation}\label{eq:theorem2_1}
\lim_{t\rightarrow\infty} \frac{\epsilon_{t+1}^*}{\epsilon_{t}^*}=\theta.
\end{equation}
\end{myTheorem}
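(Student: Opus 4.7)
The plan is to linearize the lossy SE~\eqref{eq:lossySE} about the fixed point $S_\infty$ and couple it with the distortion scaling already furnished by Theorem~\ref{th:optRateLinear}. Since $\widetilde{g}_I$ has unit slope, the identity $\widetilde{\sigma}_{t+1}^2 = \epsilon_t^*/\kappa$ holds exactly. Substituting into~\eqref{eq:lossySE} and applying the Taylor expansion~\eqref{eq:Taylor} around zero yields the scalar recursion
\begin{equation*}
\epsilon_t^* \;=\; \theta\,\epsilon_{t-1}^* \;+\; \theta\kappa P\,D_t^* \;+\; O\!\bigl((\epsilon_{t-1}^* + \kappa P D_t^*)^2\bigr),
\end{equation*}
where $\theta\in(0,1)$ is defined in~\eqref{eq:theta}.

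I would then unroll the linear part to get $\epsilon_t^*\approx \theta^t\epsilon_0^* + \theta\kappa P\sum_{k=1}^t\theta^{t-k}D_k^*$, and make the change of variable $\alpha_k:=D_k^*/\theta^k$. By~\eqref{eq:theorem1-1} one has $\alpha_{k+1}/\alpha_k\to 1$, so $\{\alpha_k\}$ is slowly varying in the discrete sense. Writing $A_t:=\sum_{k=1}^t\alpha_k$, the unrolled recursion becomes
\begin{equation*}
\tfrac{\epsilon_t^*}{\theta^t}\;\approx\;\epsilon_0^* + \theta\kappa P\,A_t, \qquad \tfrac{\epsilon_{t+1}^*}{\epsilon_t^*}\;\approx\;\theta\!\left(1 + \tfrac{\theta\kappa P\,\alpha_{t+1}}{\epsilon_0^* + \theta\kappa P\,A_t}\right), \qquad \tfrac{D_t^*}{\epsilon_t^*}\;\approx\;\tfrac{\alpha_t}{\epsilon_0^* + \theta\kappa P\,A_t}.
\end{equation*}
Both target limits then reduce to a single auxiliary fact: $\alpha_{t+1}/(\epsilon_0^* + \theta\kappa P\,A_t)\to 0$.

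This auxiliary fact splits into two cases. If $A_t\to\infty$, a Stolz--Ces\`{a}ro argument applied to the slowly-varying $\{\alpha_k\}$ gives $\alpha_{t+1}/A_t\to 0$ and hence the limit. If $A_t$ converges to a finite limit $S$, then the vanishing-terms condition for a convergent positive series forces $\alpha_t\to 0$, while the denominator tends to the positive constant $\epsilon_0^* + \theta\kappa P\,S$; the ratio again tends to zero. Either way, $\epsilon_{t+1}^*/\epsilon_t^*\to\theta$, establishing~\eqref{eq:theorem2_1}, and $D_t^*/\epsilon_t^*\to 0$, establishing~\eqref{eq:theorem2_2}.

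The main obstacle is ensuring that the quadratic Taylor remainder does not disturb the leading-order asymptotics once the recursion is unrolled. A bootstrap argument, using that both $\epsilon_{t-1}^*$ and $D_t^*$ are $O(\theta^t)$, shows the per-iteration remainder is $O(\theta^{2t})$; its accumulated contribution through $t$ iterations is $O(\theta^{t+1})$, which is either strictly dominated by the linear main term (when $A_t$ diverges) or at worst of the same order (when $A_t$ converges, where it perturbs the finite constants but does not affect the limits of the ratios above). Formalizing this bookkeeping, together with the slow-variation dichotomy, completes the proof.
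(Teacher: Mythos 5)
Your proof is correct and reaches both limits by a genuinely different route from the paper. The paper's argument (Appendix~\ref{app:convergence}) never solves the recursion: it writes the ratio $\widetilde{u}_{t+1}^*/\widetilde{u}_t^* = \theta + \theta PD_t^*/\widetilde{u}_t^* + O(\widetilde{u}_t^*)$ from~\eqref{eq:Taylor_u1}, writes the analogous ratio one step later, substitutes $D_{t+1}^*/D_t^*\to\theta$ from~\eqref{eq:theorem1-1} to re-express the second ratio in terms of $D_t^*$ and $\widetilde{u}_t^*$, and then invokes the self-consistency condition $\lim \widetilde{u}_{t+1}^*/\widetilde{u}_t^* = \lim \widetilde{u}_{t+2}^*/\widetilde{u}_{t+1}^*$ to force $\lim \theta PD_t^*/\widetilde{u}_t^* = \lim \theta PD_t^*/(\widetilde{u}_t^*+PD_t^*)$, whence $D_t^*/\epsilon_t^*\to 0$ and then~\eqref{eq:theorem2_1}. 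That argument is shorter but tacitly assumes the two limits exist. You instead unroll the linearized recursion, substitute $\alpha_k = D_k^*/\theta^k$ with $\alpha_{k+1}/\alpha_k\to 1$, and reduce both claims to the single fact $\alpha_{t+1}/A_t\to 0$, which you establish in both the divergent- and convergent-$A_t$ cases (in fact the last-$m$-terms comparison you use in the Stolz--Ces\`{a}ro step already handles both cases uniformly). This buys an existence proof for the limits rather than a consistency check, at the cost of more bookkeeping. Two small inaccuracies to fix: the unit slope of $\widetilde{g}_I$ gives $\widetilde{\sigma}_{t+1}^2 = \widetilde{u}_t^* + PD_t^*$, not $\widetilde{\sigma}_{t+1}^2 = \epsilon_t^*/\kappa$ exactly (the quantization term enters the abscissa), and your recursion carries $D_t^*$ where the paper's~\eqref{eq:Taylor_u1} carries $D_{t-1}^*$; since $D_{t}^*/D_{t-1}^*\to\theta$ is a fixed constant, neither affects the limits. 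The residual worry in your remainder bookkeeping --- that the accumulated $O(\theta^t)$ correction could in principle drive $\epsilon_t^*/\theta^t$ to zero when $A_t$ converges --- is real but no worse than the existence-of-limits assumption the paper's own proof makes; positivity of $\epsilon_t^*$ plus the summability of $\rho_k/\theta^k$ (so that $R_t/\theta^t$ converges) is the cleanest way to close it.
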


Theorem~\ref{th:convergence} is proved in Appendix~\ref{app:convergence}. Note that $\lim_{t\rightarrow\infty} \frac{D_t^*}{\epsilon_{t}^*} =0$ meets the
requirement $\frac{PD_t}{\sigma_t^2}\rightarrow 0$ discussed
in Section~\ref{sec:MP-CS_for_MP-AMP}.
Extending Theorems~\ref{th:optRateLinear} and~\ref{th:convergence}, we have the following result.

\begin{myCoro}
Assuming that lossy SE~\eqref{eq:SE_Q} holds, the combined computation and communication cost~\eqref{eq:cost} scales as $O(\log^2(1/\Delta))$, $\forall b>0$, where $\Delta$ is the desired EMSE.
\end{myCoro}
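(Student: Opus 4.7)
The plan is to combine Theorems~\ref{th:optRateLinear} and~\ref{th:convergence} to bound separately the number of iterations $T$ required to reach EMSE $\Delta$ and the aggregate coding rate $R_{agg}$, and then assemble these bounds through the definition of $C^b(\mathbf{R})$ in~\eqref{eq:cost}.

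First I would use Theorem~\ref{th:convergence}, specifically the geometric decay rate $\lim_{t\rightarrow\infty} \epsilon_{t+1}^*/\epsilon_t^* = \theta$ with $\theta\in(0,1)$, to bound $T$. For any $\theta' \in (\theta, 1)$, there exists an iteration index $t_0$ (depending only on the source, noise variance, $\kappa$, and $P$, not on $\Delta$) such that $\epsilon_{t+1}^* \leq \theta' \epsilon_t^*$ for all $t \geq t_0$. Iterating this bound starting from $\epsilon_{t_0}^*$ gives $\epsilon_T^* \leq (\theta')^{T-t_0} \epsilon_{t_0}^*$, so requiring $\epsilon_T^* \leq \Delta$ can be achieved with
\[
T \leq t_0 + \left\lceil \frac{\log(\epsilon_{t_0}^*/\Delta)}{\log(1/\theta')} \right\rceil = O(\log(1/\Delta)).
\]
This shows $\|\mathbf{R}^*\|_0 = T = O(\log(1/\Delta))$, giving a contribution $b \|\mathbf{R}^*\|_0 = O(\log(1/\Delta))$ to the cost.

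Next I would use Theorem~\ref{th:optRateLinear} to bound the aggregate rate $\|\mathbf{R}^*\|_1 = \sum_{t=1}^T R_t^*$. Since $\lim_{t \to \infty}(R_{t+1}^* - R_t^*) = \tfrac{1}{2}\log_2(1/\theta)$, the optimal coding rate satisfies $R_t^* \leq C_2 + C_3\, t$ for some constants $C_2, C_3 > 0$ and all $t$. Summing over $t = 1, \ldots, T$,
\[
\|\mathbf{R}^*\|_1 \leq \sum_{t=1}^T (C_2 + C_3 t) = C_2 T + C_3 \frac{T(T+1)}{2} = O(T^2) = O(\log^2(1/\Delta)).
\]
Combining the two contributions yields $C^b(\mathbf{R}^*) = b \|\mathbf{R}^*\|_0 + \|\mathbf{R}^*\|_1 = O(\log(1/\Delta)) + O(\log^2(1/\Delta)) = O(\log^2(1/\Delta))$, which is valid for every fixed $b > 0$ since $b$ only enters as a multiplicative constant on the dominated lower-order term.

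The main obstacle I anticipate is handling the ``burn-in'' phase carefully: Theorems~\ref{th:optRateLinear} and~\ref{th:convergence} are asymptotic statements in $t$, so one must exhibit an explicit finite $t_0$ after which the bounds $\epsilon_{t+1}^* \leq \theta' \epsilon_t^*$ and $R_{t+1}^* - R_t^* \leq C_3$ hold uniformly, and argue that the finite initial segment $t < t_0$ contributes only an $O(1)$ additive term independent of $\Delta$. This is straightforward once continuity of the MSE function and finiteness of $\epsilon_{t_0}^*$ are invoked, but it is the only nontrivial step separating the asymptotic rate statements from the uniform-in-$\Delta$ scaling conclusion required by the corollary.
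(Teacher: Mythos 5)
Your proposal is correct and follows essentially the same route as the paper's own proof: Theorem~\ref{th:convergence} gives geometric decay of the EMSE, hence $T=O(\log(1/\Delta))$, and Theorem~\ref{th:optRateLinear} gives linear growth of $R_t^*$, hence $R_{agg}=O(T^2)$, so $C^b(\mathbf{R}^*)=O(\log^2(1/\Delta))$. The only difference is that you spell out the burn-in argument (choosing $\theta'\in(\theta,1)$ and a finite $t_0$) explicitly, which the paper leaves implicit.
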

\begin{proof}
Given Theorem~\ref{th:convergence},
we obtain that the optimal EMSE, $\epsilon_t^*$,
indeed decreases geometrically in the large $t$ limit (as a reminder, we provided such intuition in Section~\ref{sec:intuition}). Considering~\eqref{eq:R_agg} and Theorem~\ref{th:optRateLinear}, the total computation and communication cost~\eqref{eq:cost} for running $T$ iterations is
$C^b(\mathbf{R^*})=O(T^2)=O(\log^2(1/\epsilon_T^*)) = O(\log^2(1/\Delta))$.
\end{proof}

\begin{myRemark}
The key to the proofs of Theorems~\ref{th:optRateLinear} and~\ref{th:convergence} is lossy SE~\eqref{eq:lossySE}. We expect that the linearity of the optimal coding rate sequence could be extended to other iterative distributed algorithms provided that ({\em i}) they have formulations similar to lossy SE~\eqref{eq:lossySE} that track their estimation errors
and ({\em ii}) their estimation errors converge geometrically.
Moreover, formulations that track the estimation error in such
algorithms might require less restrictive constraints than AMP. For example,
consensus averaging~\cite{Frasca2008,Thanou2013}
only requires i.i.d. entries in the vector that each node in the network averages.
\end{myRemark}

\begin{figure}
  \centering
  \includegraphics[width=8cm]{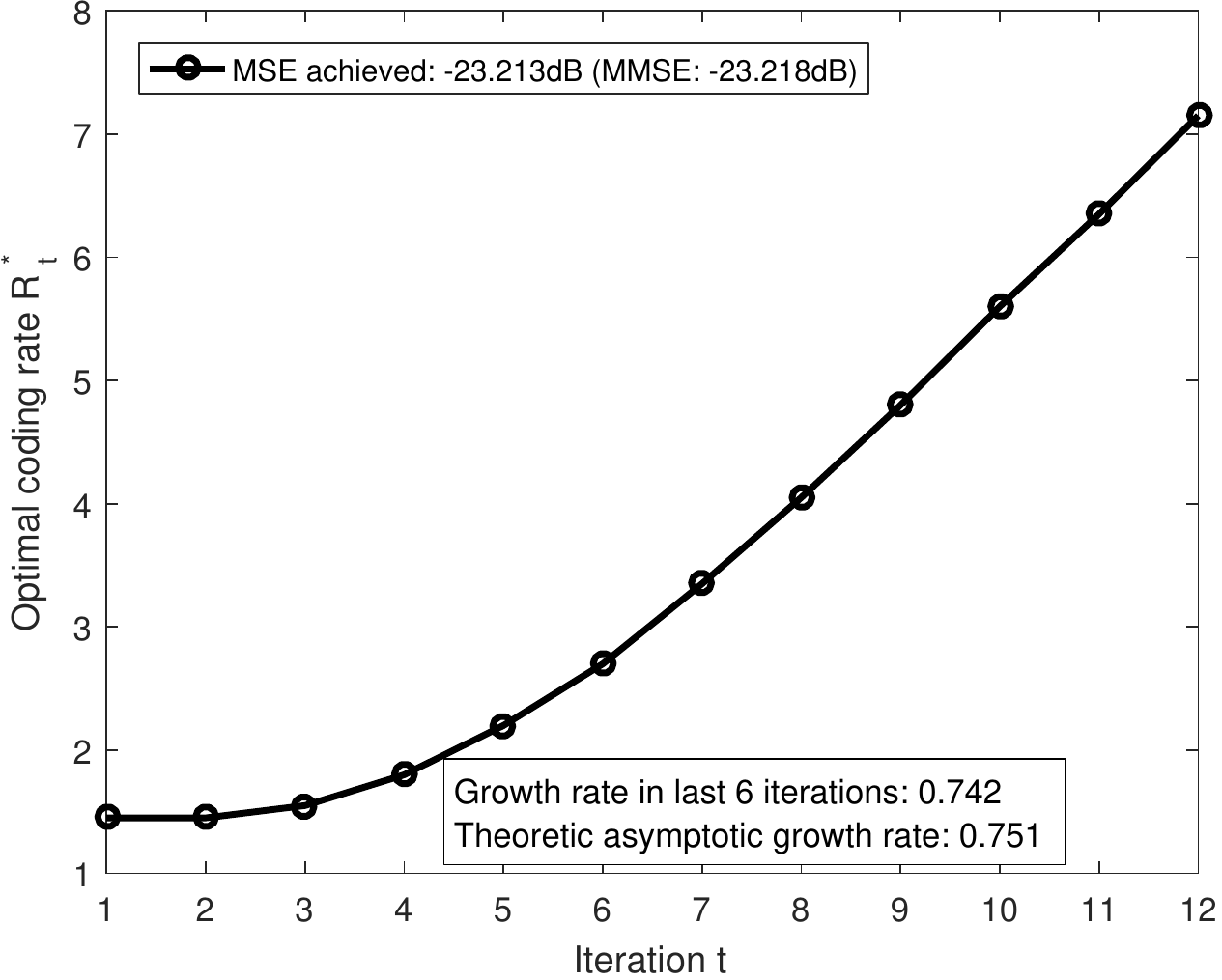}
  \caption{Comparison of the additive growth rate of the optimal coding rate sequence given by DP at low EMSE and the asymptotic  additive growth rate $\frac{1}{2}\log_2\l(\frac{1}{\theta}\r)$. (Bernoulli-Gaussian signal~\eqref{eq:BG} with $\rho=0.2,\ \kappa=1,\ P=100, \sigma_Z^2=0.01,\ b=0.782$.)}\label{fig:asympSlope}
\end{figure}

\subsection{Comparison of DP results to Theorem~\ref{th:optRateLinear}}

We run DP (cf. Section~\ref{sec:DP}) to find an optimal coding rate sequence $\mathbf{R}^*$ for the setting of $P=100$ nodes, a Bernoulli-Gaussian signal~\eqref{eq:BG} with sparsity rate $\rho=0.2$, measurement rate $\kappa=1$, noise variance $\sigma_Z^2=0.01$, and  parameter $b=0.782$. The goal is to achieve a desired EMSE of 0.005 dB, i.e.,
$10\log_{10}\l(1 + \frac{\Delta}{\text{MMSE}}\r)=0.005$.
We use ECSQ~\cite{GershoGray1993,Cover06} as the quantizer in each processor node and
use the corresponding relation between the
rate $R_t$ and distortion $D_t$ of ECSQ in the DP scheme. Note that we require the
quantization bin size $\gamma$ to be smaller than
$\frac{2\sigma_t}{\sqrt{P}}$, according to Section~\ref{sec:MP-CS_for_MP-AMP}. We know that ECSQ achieves a coding rate
within an additive constant of the RD function $R(D)$~\cite{GershoGray1993}. Therefore, the additive
growth rate of the optimal coding rate sequence
obtained for ECSQ will be the same as the additive growth rate if the RD relation is modeled by $R(D)$~\cite{Cover06,Berger71,GershoGray1993,WeidmannVetterli2012}.

The resulting optimal coding rate sequence is plotted in Figure~\ref{fig:asympSlope}. The additive growth rate of the last six iterations is $\frac{1}{6}(R_{12}^*-R_{6}^*)=0.742$, and the asymptotic additive growth rate according to Theorem~\ref{th:optRateLinear} is $\frac{1}{2}\log_2\l(\frac{1}{\theta}\r)\approx 0.751$.
Note that we use $\Delta R_t=0.05$ in the discretized search space for $R_t$. Hence, the discrepancy of 0.009 between the additive growth rate from the simulation and the asymptotic additive growth rate is
within our numerical precision.
In conclusion, our numerical result matches the theoretical prediction of Theorem~\ref{th:optRateLinear}.

\section{Achievable Performance Region}\label{sec:Pareto}

Following the discussion of Section~\ref{sec:setting_MP_Chap}, we can see that the lossy compression of
${\bf f}_t^p, \forall p \in \{1,\cdots,P\}$, can reduce communication costs. On the other hand,
the greater the savings in the coding rate sequence $\mathbf{R}$, the worse the final MSE is expected to be.
If a certain level of final MSE is desired
despite a small coding rate budget, then more iterations $T$ will be needed.
As mentioned above, there is a trade-off between $T$, $R_{agg}$, and the final
MSE, i.e., $\text{MMSE} + \Delta$, and there is no
solution that minimizes them simultaneously.
To deal with such trade-offs, which implicitly correspond to sweeping
$b$ in~\eqref{eq:cost} in a multi-objective optimization (MOP) problem, it is customary to think about
{\em Pareto optimality}~\cite{DasDennisPareto1998}.

\subsection{Properties of achievable region}\label{sec:property}

For notational convenience, denote
the set of all MSE values achieved by the pair $(T,R_{agg})$
for some parameter $b$~\eqref{eq:cost} by ${\cal E}(T, R_{agg})$.
Within $(T,R_{agg})$, let the smallest MSE be $\text{MSE}^*(T,R_{agg})$.
We now define the achievable set $\cal C$,
$$
{\cal C} := \{(T,R_{agg}, \text{MSE}) \in \mathbb{R}_{\geq 0}^3: \text{MSE} \in {\cal E}(T, R_{agg})\},
$$
where $\mathbb{R}_{\geq 0}$ is the set of non-negative real numbers.
That is, ${\cal C}$ contains all tuples $(T,R_{agg}, \text{MSE})$ for which
some instantiation of MP-AMP estimates the signal at the
desired MSE level using $T$ iterations and aggregate coding rate $R_{agg}$.

\begin{myDef}\label{def:Pareto}
{\em  The point $\mathcal{X}_1\in\mathcal{C}$ is said to dominate another point $\mathcal{X}_2\in\mathcal{C}$, denoted by $\mathcal{X}_1\prec \mathcal{X}_2$, if $T_1\leq T_2$, $R_{agg_1}\leq R_{agg_2}$, and $\text{MSE}_1\leq \text{MSE}_2$. A point $\mathcal{X}^*\in \mathcal{C}$ is
Pareto optimal if there does not exist $\mathcal{X}\in \mathcal{C}$ satisfying $\mathcal{X}\prec \mathcal{X}^*$.
Furthermore, let $\mathcal{P}$ denote the set of all Pareto optimal points,}
\begin{equation}\label{eq:setP}
\mathcal{P} := \{\mathcal{X}\in \mathcal{C}: \text{$\mathcal{X}$ is Pareto optimal}\}.
\end{equation}
\end{myDef}

In words, the tuple $(T,R_{agg},\text{MSE})$ is Pareto optimal if no other tuple
$(\widehat{T},\widehat{R}_{agg},\widehat{\text{MSE}})$ exists such that
$\widehat{T}\leq T$, $\widehat{R}_{agg}\leq R_{agg}$, and $\widehat{\text{MSE}}\leq \text{MSE}$.
Thus, the Pareto optimal tuples belong to the boundary of $\cal C$.

We extend the definition of the number of iterations $T$ to a probabilistic one. To do so, suppose
that the number of iterations is drawn from a probability distribution $\pi$ over $\mathbb{N}$, such that $\sum_{i=1}^{\infty} \pi_i = 1$. Of course, this definition contains a deterministic $T = j$ as a special case with $\pi_j = 1$ and $\pi_i =0$ for all $i \neq j$.
Armed with this definition of Pareto optimality and the probabilistic definition of the number of iterations, we have the following lemma.

\begin{myLemma}\label{th:convex1}
{\it For a fixed noise variance $\sigma^2_Z$, measurement rate $\kappa$, and $P$ processor nodes in MP-AMP, the achievable set $\mathcal{C}$ is a convex set.}
\end{myLemma}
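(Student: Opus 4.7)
The plan is to prove convexity of $\mathcal{C}$ via a standard time-sharing/randomization argument, which is made available precisely because the number of iterations $T$ has just been promoted from a deterministic integer to a probability distribution $\pi$ over $\mathbb{N}$. This extension is the crucial enabler: without it, convex combinations like $\lambda T_1 + (1-\lambda) T_2$ would not in general correspond to any realizable MP-AMP instantiation.

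First, I would fix any two achievable tuples $\mathcal{X}_1 = (T_1, R_{agg,1}, \text{MSE}_1)$ and $\mathcal{X}_2 = (T_2, R_{agg,2}, \text{MSE}_2)$ in $\mathcal{C}$, together with their underlying MP-AMP instantiations (coding rate sequences $\mathbf{R}^{(1)}$ and $\mathbf{R}^{(2)}$), and pick any $\lambda \in [0,1]$. I would then construct a new instantiation that, before the algorithm starts, draws an independent Bernoulli($\lambda$) random variable $B$; if $B=1$, the fusion center executes scheme 1, else it executes scheme 2. The resulting iteration count is a random variable with distribution $\lambda \pi^{(1)} + (1-\lambda) \pi^{(2)}$, so by the definition of $T$ as a distribution on $\mathbb{N}$, the expected number of iterations of this mixed scheme is $\lambda T_1 + (1-\lambda) T_2$.

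Next, I would check that $R_{agg}$ and $\text{MSE}$ average in the same way. The aggregate rate of the mixed scheme is $\lambda \|\mathbf{R}^{(1)}\|_1 + (1-\lambda) \|\mathbf{R}^{(2)}\|_1 = \lambda R_{agg,1} + (1-\lambda) R_{agg,2}$, since $R_{agg}$ is an expectation over the randomness in the measurement matrix, noise, and now also the coin flip $B$. The MSE of the estimate returned by the mixed scheme satisfies
\begin{equation*}
\text{MSE}_{\text{mix}} = \mathbb{E}\!\left[\tfrac{1}{N}\|\widehat{\x} - \x\|_2^2\right] = \lambda\, \text{MSE}_1 + (1-\lambda)\, \text{MSE}_2
\end{equation*}
by the tower property of conditional expectation, conditioning on $B$. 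Hence $(\lambda T_1 + (1-\lambda) T_2,\ \lambda R_{agg,1} + (1-\lambda) R_{agg,2},\ \lambda \text{MSE}_1 + (1-\lambda) \text{MSE}_2) \in \mathcal{C}$, which is exactly the statement that $\mathcal{C}$ is convex.

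The argument is essentially free once the probabilistic interpretation of $T$ is in place, so I do not anticipate a serious technical obstacle. The only subtle point worth articulating carefully is that MSE, $R_{agg}$, and $T$ must all be interpreted as expectations over the same underlying probability space (which now includes the time-sharing coin $B$), so that linearity of expectation gives the three convex combinations simultaneously. If the definitions instead demanded almost-sure bounds on any of these quantities, the time-sharing construction would break; hence in writing up the proof I would emphasize at the outset that all three coordinates of a point in $\mathcal{C}$ are ensemble averages.
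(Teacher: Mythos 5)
Your proof is correct and follows essentially the same route as the paper's: a time-sharing argument in which the two schemes are selected with probabilities $\lambda$ and $1-\lambda$ (so that $\pi = \lambda\pi^{(1)} + (1-\lambda)\pi^{(2)}$), with linearity of expectation giving the convex combination in all three coordinates simultaneously. Your explicit remark that the probabilistic definition of $T$ is what makes the construction realizable, and that all three coordinates must be ensemble averages over the same probability space, is a faithful (and slightly more carefully articulated) version of the paper's own argument.
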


\begin{proof}
We need to show that for any $(T^{(1)},$ $R^{(1)}_{agg},\text{MSE}^{(1)})$, $(T^{(2)},R^{(2)}_{agg},\text{MSE}^{(2)})$ $\in \mathcal{C}$ and any $0<\lambda<1$,
\begin{equation}\label{eq:suff1}
(\lambda T^{(1)} + (1-\lambda)T^{(2)},\lambda R^{(1)}_{agg} + (1-\lambda)R^{(2)}_{agg},
 \lambda \text{MSE}^{(1)}+(1-\lambda)\text{MSE}^{(2)}) \in \mathcal{C}.
\end{equation}
This result is shown using time-sharing arguments
(see Cover and Thomas~\cite{Cover06}).
Assume that $(T^{(1)},R^{(1)}_{agg},\text{MSE}^{(1)})$, $(T^{(2)},R^{(2)}_{agg},\text{MSE}^{(2)}) \in \mathcal{C}$ are achieved by probability distributions $\pi^{(1)}$ and $\pi^{(2)}$, respectively.
Let us select all parameters of the first tuple with probability
$\lambda$ and those of the second with probability $(1-\lambda)$.
Hence, we have
 $\pi = \lambda \pi^{(1)} + (1-\lambda) \pi^{(2)}$.
 Due to the linearity of expectation,
 $T = \lambda T^{(1)} + (1-\lambda) T^{(2)}$
 and $\text{MSE} = \lambda \text{MSE}^{(1)} + (1-\lambda) \text{MSE}^{(2)}$.
 Again, due to the linearity  of expectation,
 $R_{agg} = \lambda R^{(1)}_{agg} + (1-\lambda)R^{(2)}_{agg}$,
 implying that~\eqref{eq:suff1} is satisfied, and the proof is complete.
\end{proof}

\begin{myDef}\label{def:funcs}
{\em Let the function $R^*(T,\text{MSE}):\mathbb{R}_{\geq 0}^2\rightarrow \mathbb{R}_{\geq 0}$
be the Pareto optimal rate function, which is implicitly described as
$R^*(T,\text{MSE})=R_{agg}^* \Leftrightarrow (T,R_{agg}^*,\text{MSE}) \in \mathcal{P}$.
We further define implicit functions $T^*(R_{agg}, \text{MSE})$ and $\text{MSE}^*(T, R_{agg})$ in a similar way.}
\end{myDef}

\begin{myCoro}\label{coro:convexArguments}
The functions $R^*(T,\text{MSE})$, $T^*(R_{agg}, \text{MSE})$, and $\text{MSE}^*(T,R_{agg})$ are convex in their arguments.
\end{myCoro}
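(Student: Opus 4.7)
The entire argument rests on the convexity of $\mathcal{C}$ established in Lemma~\ref{th:convex1}. My first step is to reinterpret each of the three Pareto functions as a \emph{value function} on a section of $\mathcal{C}$. Specifically, I would argue that
\begin{equation*}
R^*(T,\text{MSE})=\inf\{R_{agg}: (T,R_{agg},\text{MSE})\in\mathcal{C}\},
\end{equation*}
and analogously for $T^*(R_{agg},\text{MSE})$ and $\text{MSE}^*(T,R_{agg})$. This equivalence follows from two observations: (i) by Pareto optimality, no strictly smaller $R_{agg}$ at the same $(T,\text{MSE})$ is achievable, so the Pareto value is a lower bound for the infimum; and (ii) by a padding / budget-discarding monotonicity of $\mathcal{C}$ (more iterations, a larger rate budget, or a larger allowed MSE cannot hurt achievability), any $R_{agg}$ above the Pareto value at the same $(T,\text{MSE})$ is also in $\mathcal{C}$, so the Pareto value upper-bounds the infimum as well.

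\textbf{Key convexity step for $R^*$.} Fix two points $(T_1,\text{MSE}_1),(T_2,\text{MSE}_2)$ and set $R_i^{\star}=R^*(T_i,\text{MSE}_i)$, so that $(T_i,R_i^{\star},\text{MSE}_i)\in\mathcal{C}$ for $i\in\{1,2\}$. For any $\lambda\in(0,1)$, Lemma~\ref{th:convex1} gives
\begin{equation*}
\bigl(\lambda T_1+(1-\lambda)T_2,\ \lambda R_1^{\star}+(1-\lambda)R_2^{\star},\ \lambda \text{MSE}_1+(1-\lambda)\text{MSE}_2\bigr)\in\mathcal{C}.
\end{equation*}
Applying the value function characterization to the first and third coordinates yields
\begin{equation*}
R^*\bigl(\lambda T_1+(1-\lambda)T_2,\ \lambda \text{MSE}_1+(1-\lambda)\text{MSE}_2\bigr)\ \le\ \lambda R_1^{\star}+(1-\lambda)R_2^{\star},
\end{equation*}
which is exactly Jensen's inequality for $R^*$. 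The proofs for $T^*(R_{agg},\text{MSE})$ and $\text{MSE}^*(T,R_{agg})$ are structurally identical: pick two points in the graph of the function, take a convex combination using Lemma~\ref{th:convex1}, and invoke the corresponding value function representation along the remaining coordinate.

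\textbf{Main anticipated obstacle.} The work is concentrated entirely in justifying the value function representation; once available, convexity is an immediate consequence of Lemma~\ref{th:convex1}. Establishing it cleanly requires (a) the coordinatewise monotonicity of $\mathcal{C}$ sketched above, which is intuitive from the algorithmic interpretation (idle iterations, wasted bits, and weakened MSE targets are all free) but is not spelled out in the paper, and (b) attainment of the infimum on each slice of $\mathcal{C}$, so that the Pareto value is in fact a minimum rather than merely an infimum. Under the natural closedness of $\mathcal{C}$ implicit in the time-sharing construction used to prove Lemma~\ref{th:convex1}, both points are mild; nonetheless, stating them carefully is the only non-routine part of the proof.
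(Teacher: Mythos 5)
Your proof is correct and follows exactly the route the paper intends: the corollary is stated without an explicit proof as an immediate consequence of Lemma~\ref{th:convex1}, and your argument is the standard one that the lower envelope of a convex achievable set is convex. Your care in justifying the value-function representation (coordinatewise monotonicity of $\mathcal{C}$ and attainment of the infimum) fills in details the paper leaves implicit, but it does not change the substance of the argument.
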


Note that our proof for the convexity of the set $\mathcal{C}$ might be extended to other iterative distributed learning algorithms
that transmit lossily compressed messages.

\begin{figure*}[t]
\centering
  \subfloat[]{
    \label{fig:rateVSiter} 
    \includegraphics[width=0.32\textwidth]{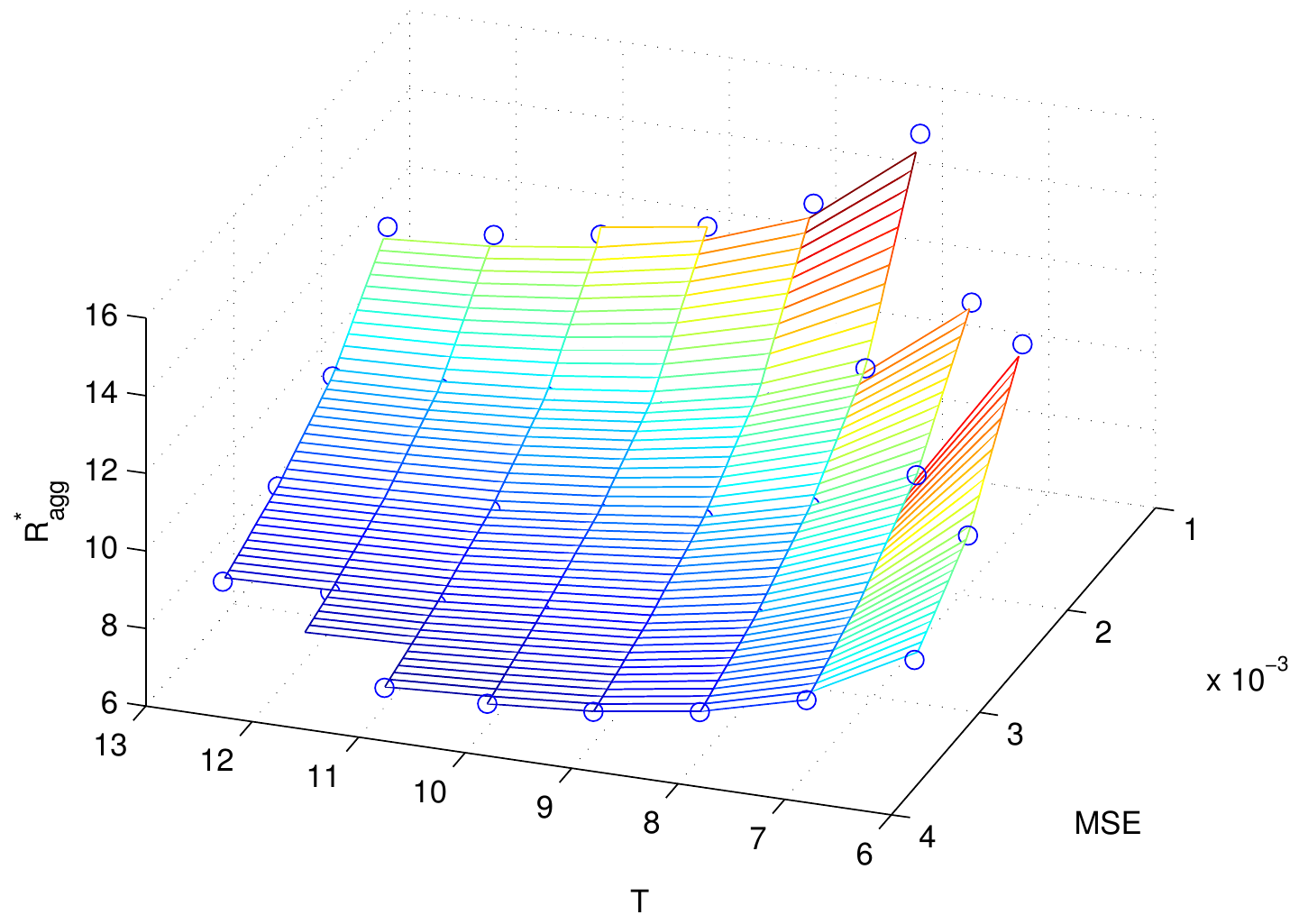}}
  \subfloat[]{
    \label{fig:Pareto2d_fixT} 
    \includegraphics[width=0.3\textwidth]{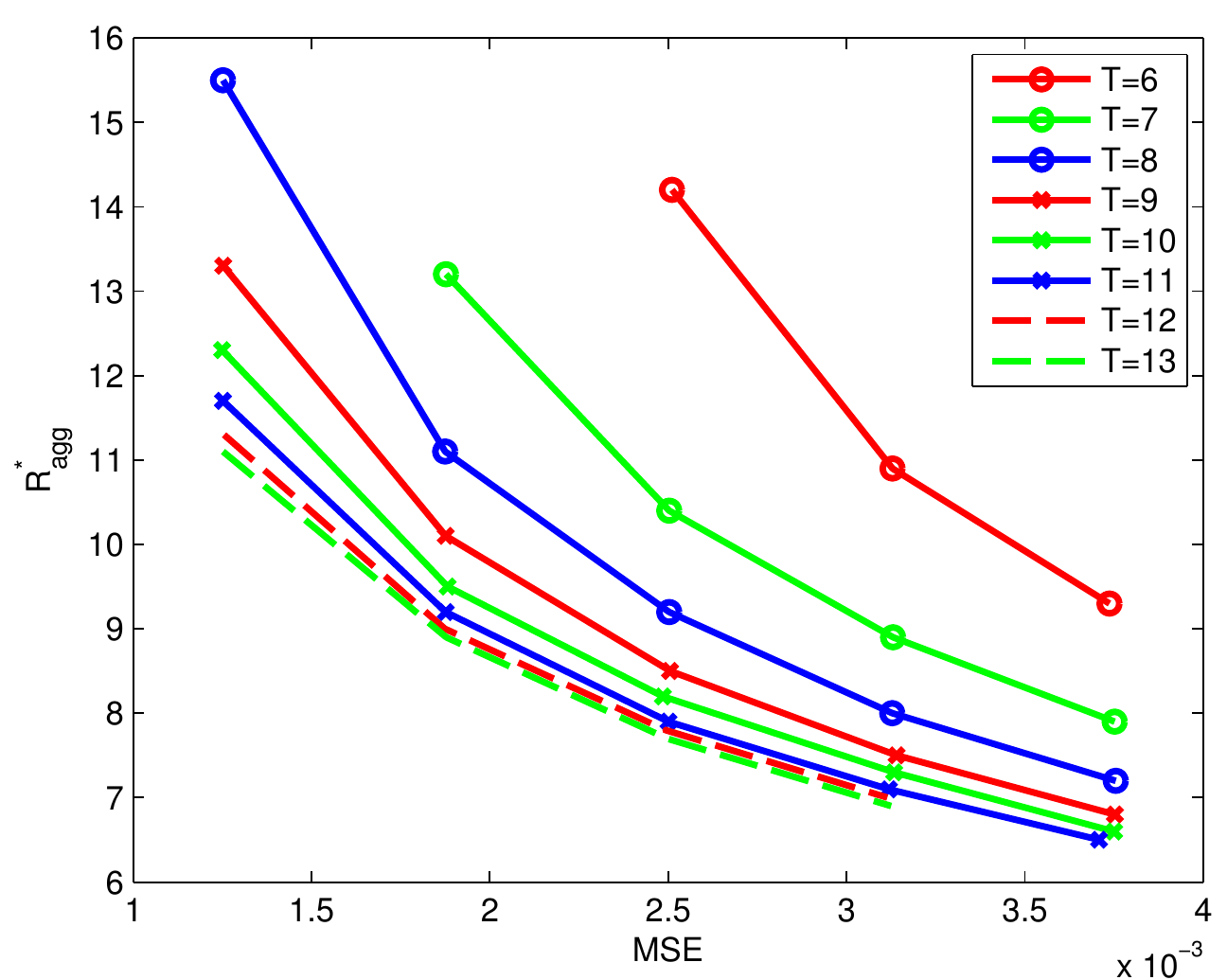}}
  \subfloat[]{
    \label{fig:Pareto2d_fixMSE} 
    \includegraphics[width=0.3\textwidth]{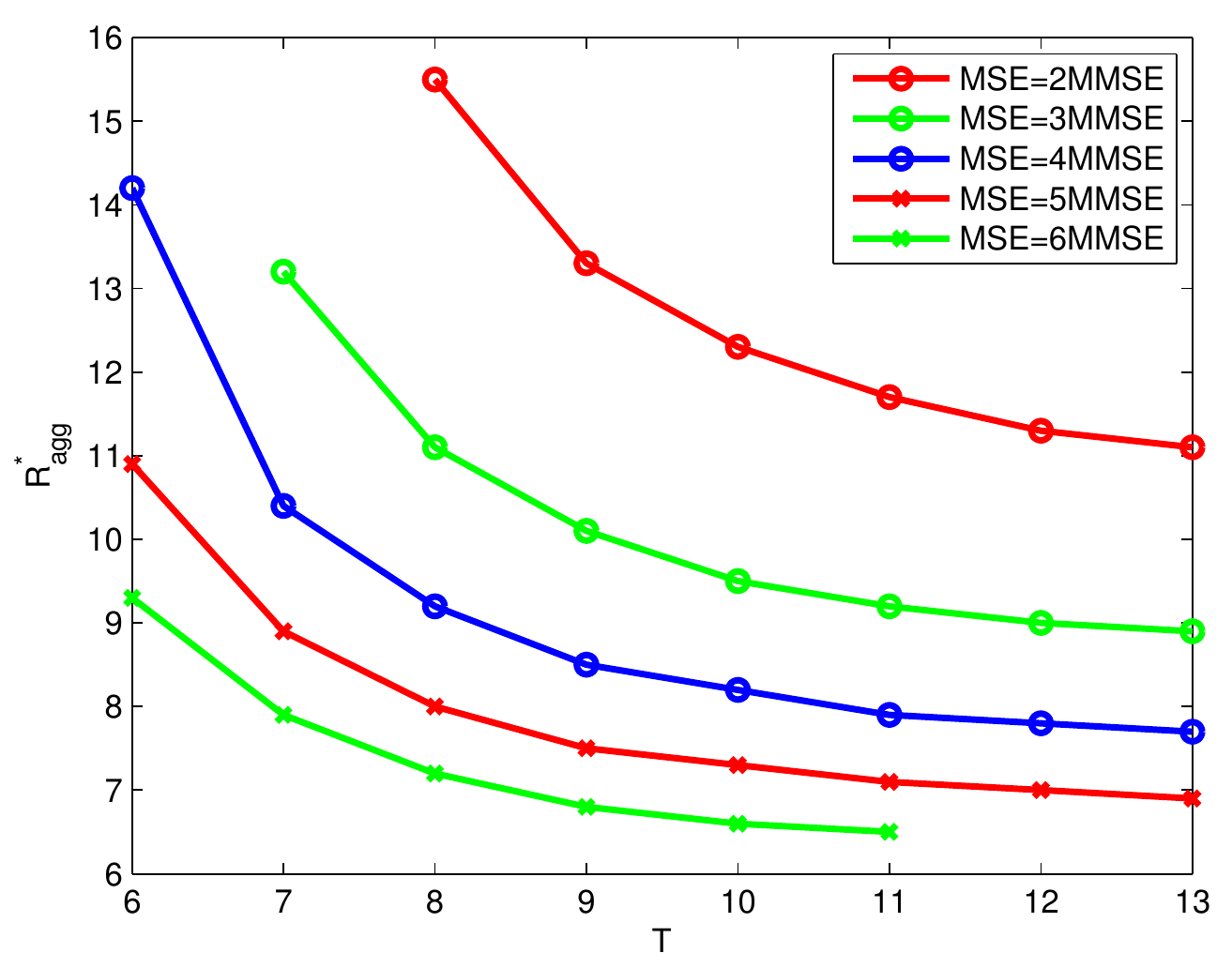}}
\caption{Pareto optimal results provided by DP under a variety of parameters
$b$~\eqref{eq:cost}: (a) Pareto optimal surface, (b) Pareto optimal aggregate coding rate $R_{agg}^*$~\eqref{eq:R_agg} versus the achieved MSE for different optimal MP-AMP iterations $T$, and (c) Pareto optimal $R_{agg}^*$~\eqref{eq:R_agg} versus the number of iterations $T$ for different optimal MSE's.
The signal is Bernoulli-Gaussian~\eqref{eq:BG} with $\rho=0.1$. ($\kappa=0.4$, $P=100$, and $\sigma_Z^2=\frac{1}{400}$.)}\label{fig:Pareto}
\end{figure*}

\subsection{Pareto optimal points via DP}

After proving that the achievable set $\mathcal{C}$ is convex, we apply  DP in Section~\ref{sec:DP}
to find the Pareto optimal points, and validate the convexity of the achievable set.

According to Definition~\ref{def:Pareto}, the resulting tuple $(T,$ $R_{agg},\text{MSE})$
computed using DP (Section~\ref{sec:DP}) is Pareto optimal on the discretized search spaces.
Hence, in this subsection, we run  DP to obtain the Pareto optimal points for a certain distributed linear model by sweeping the parameter $b$~\eqref{eq:cost}.

Consider the same setting as in Figure~\ref{fig:RtAndEMSE}, except that we analyze MP platforms~\cite{pottie2000,estrin2002,EC2} for different $b$~\eqref{eq:cost}. Running the  DP scheme
of Section~\ref{sec:DP},
we obtain the optimal coding rate sequence $\mathbf{R}^*$ that yields the lowest combined cost while
providing a desired EMSE that is at most $\Delta\in \{1,2,\cdots,5\}\times \text{MMSE}$ or equivalently
$\text{MSE}\in \{2,3,\cdots,6\}\times \text{MMSE}$.
In Figure~\ref{fig:rateVSiter}, we draw the Pareto optimal surface obtained by our DP scheme,
where the circles are Pareto optimal points.
Figure~\ref{fig:Pareto2d_fixT} plots the aggregate coding rate $R_{agg}$
as a function of MSE for different optimal numbers of MP-AMP iterations $T$.
Finally, Figure~\ref{fig:Pareto2d_fixMSE} plots the aggregate coding rate $R_{agg}$
as a function of $T$ for different optimal MSE's.
We can see that the surface comprised of the Pareto optimal points is indeed convex.
Note that when running DP to generate Figure~\ref{fig:Pareto}, we used the  RD function~\cite{Cover06,Berger71,GershoGray1993,WeidmannVetterli2012} to model the
relation between the rate $R_t$ and distortion $D_t$ at each iteration,
which could be approached by VQ at sufficiently high
rates. We also ignored the constraint on the
quantization bin size (Section~\ref{sec:MP-CS_for_MP-AMP}).
Therefore, we only present Figure~\ref{fig:Pareto} for illustration purposes.

When a smaller MSE  (or equivalently smaller EMSE)
is desired, more iterations $T$ and greater aggregate coding rates
$R_{agg}$~\eqref{eq:R_agg} are needed. Optimal coding rate sequences increase
$R_{agg}$ to reduce  $T$ when communication costs are
low (examples are commercial cloud computing systems~\cite{EC2}, multi-processor CPUs, and graphic processing units), whereas more iterations
allow to reduce the coding rate when communication is costly (for example, in sensor networks~\cite{pottie2000,estrin2002}).
These applications are discussed in Section~\ref{sec:realworld}.

{\bf Discussion of corner points:} We further discuss the
corners of the Pareto optimal surface (Figure~\ref{fig:Pareto}) below.

\begin{enumerate}
\item First, consider the corner points along the MSE coordinate.
\begin{itemize}
\item If MSE$^* \rightarrow$ MMSE (or equivalently $\Delta \to 0$),
then MP-AMP needs to run infinite iterations with infinite coding rates. Hence, $R_{agg}^*\rightarrow \infty$ and $T^*\rightarrow \infty$. The rate of growth of $R_{agg}^*$ can be deduced from Theorem~\ref{th:optRateLinear}.
\item If MSE$^*$ $\rightarrow \rho$ (the variance of the signal~\eqref{eq:BG}), then MP-AMP does not need to run any iterations at all. Instead, MP-AMP outputs an all-zero estimate. Therefore, $\lim_{\text{MSE}^*\rightarrow\rho} R_{agg}^*= 0$ and $\lim_{\text{MSE}^*\rightarrow\rho} T^*=0$.
\end{itemize}
\item Next, we discuss the corner points along the $T$ coordinate.
\begin{itemize}
\item If $T^*\rightarrow 0$, then the best MP-AMP can do is to output an all-zero estimate. Hence, $\lim_{T^*\rightarrow 0} \text{MSE}^* = \rho$ and $\lim_{T^*\rightarrow 0} R_{agg}^*=0$.
\item The other extreme, $T^*\rightarrow \infty$, occurs only when
we want to achieve an MSE$^*\rightarrow$ MMSE.
Hence, $R_{agg}^*\rightarrow\infty$.
\end{itemize}
\item We conclude with corner points along the $R_{agg}$ coordinate.
\begin{itemize}
\item If $R_{agg}^*\rightarrow 0$, then the best MP-AMP can do is to output an all-zero estimate without running any iterations at all. Hence,
$\lim_{R_{agg}^*\rightarrow 0}\text{MSE}^*=\rho$ and $\lim_{R_{agg}^*\rightarrow 0} T^* = 0$.
\item If $R_{agg}^*\rightarrow\infty$, then the optimal scheme will use high rates in all iterations,
and  MP-AMP resembles centralized AMP. Therefore, the MSE$^*$ as a function of $T^*$ converges to
that of centralized AMP SE~\eqref{eq:ori_SE}.
\end{itemize}
\end{enumerate}

\section{Real-world Case Study}\label{sec:realworld}

To showcase the difference between optimal coding rate sequences in different platforms,
this section discusses several MP platforms including
sensor networks~\cite{pottie2000,estrin2002}
and large-scale cloud servers~\cite{EC2}. The costs in these platforms are
quite different due to the different constraints in these platforms, and we will see how they affect the optimal coding rate sequence $\mathbf{R}^*$.
The changes in the optimal $\mathbf{R}^*$ highlight the importance of optimizing for the correct costs.

\subsection{Sensor networks}\label{sec:sn}
In sensor networks~\cite{pottie2000,estrin2002}, distributed sensors are typically dispatched to remote locations where they collect data and communicate with the fusion center. However, distributed sensors may have severe power consumption constraints. Therefore, low power chips such as the CC253X from Texas Instruments~\cite{CC2530} are commonly used in distributed sensors. Some typical parameters for such low power chips are: central processing unit (CPU) clock frequency 32MHz, data transmission rate 250Kbps, voltage between 2V-3.6V, and transceiver current 25mA~\cite{CC2530}, where the CPU current resembles the transceiver current. Because these chips are generally designed to be low power, when transmitting and receiving data, the CPU helps the transceiver and cannot carry out computing tasks. Therefore, the power consumption can be viewed as constant. Hence, in order to minimize the power consumption, we minimize the total runtime when estimating a signal from MP-LM measurements~\eqref{eq:one-node-meas} collected by the distributed sensors.

The runtime in each MP-AMP iteration~\eqref{eq:slave1}-\eqref{eq:master} consists of ({\em i}) time for computing~\eqref{eq:slave1} and~\eqref{eq:slave2}, ({\em ii}) time for encoding $\f_t^p$~\eqref{eq:slave2}, and ({\em iii}) data transmission time for $Q(\f_t^p)$~\eqref{eq:master0}.
As discussed in Section~\ref{sec:MP-CS_for_MP-AMP}, the fusion center may broadcast $\x_t$~\eqref{eq:master}, and simple compression schemes can reduce the coding rate. Therefore, we consider the data reception time in the $P$ processor nodes to be constant.
The overall computational complexity for~\eqref{eq:slave1} and~\eqref{eq:slave2} is $O(\frac{MN}{P})$.
Suppose further that ({\em i}) each processor node needs to carry out two matrix-vector
products in each iteration, ({\em ii}) the overhead of moving data in memory is assumed to
be 10 times greater than the actual computation, and ({\em iii}) the clock frequency is 32MHz.
Hence, we assume that the actual time needed for computing~\eqref{eq:slave1} and~\eqref{eq:slave2}
is $C_4=\frac{20MN}{32\times 10^6 P}$ sec. Transmitting $Q(\f_t^p)$ of length $N$ at coding rate
$R$ requires $\frac{RN}{250\times 10^3}$ sec, where the denominator is the data transmission rate
of the transceiver. Assuming that the overhead in communication is approximately the same as the
communication load caused by  the actual messages, we obtain that the time requested for
transmitting $Q(\f_t^p)$ at coding rate $R$ is $C_5 R$ sec, where $C_5=\frac{2N}{250\times 10^3}$.
Therefore, the total cost can be calculated from~\eqref{eq:cost} with $b=\frac{C_4}{C_5}$~\eqref{eq:cost}.

Because low power chips equipped in distributed sensors have limited memory (around 10KB, although sometimes external flash is allowed)~\cite{CC2530}, the signal length $N$ and number of measurements $M$ cannot be too large. We consider $N=1000$ and $M=400$ spread over $P=100$ sensors, sparsity rate $\rho=0.1$, and $\sigma_Z^2=\frac{1}{400}$.
We set the desired MSE to be $0.5$ dB above the MMSE, i.e., $10\log_{10}\l(1+ \frac{\Delta}{\text{MMSE}}\r)=0.5$,
and run DP as in
Section~\ref{sec:DP}.\footnote{Throughout Section~\ref{sec:realworld}, we use the RD function~\cite{Cover06,Berger71,GershoGray1993,WeidmannVetterli2012} to model the relation between
rate $R_t$ and distortion $D_t$ at each iteration. We also ignore the constraint on the quantizer
(Section~\ref{sec:MP-CS_for_MP-AMP}). Therefore, the optimal coding rate sequences in
Section~\ref{sec:realworld} are only for illustration purposes.
}
The coding rate sequence provided by DP is $\mathbf{R}^*=(0.1$, $0.1$, $0.6$, $0.8$, $1.0$, $1.0$, $1.1$, $1.1$, $1.2$, $1.4$, $1.6$, $1.9$, $2.3$, $2.7$, $3.1)$. In total we have $T=15$ MP-AMP iterations with $R_{agg}=20.0$ bits aggregate coding rate~\eqref{eq:R_agg}.
The final MSE ($\text{MMSE} + \Delta$) is $7.047\times 10^{-4}$, which is 0.5 dB from the MMSE ($6.281\times 10^{-4}$)~\cite{ZhuBaronCISS2013,Krzakala2012probabilistic,GuoBaronShamai2009,RFG2012}.

\subsection{Large-scale cloud server}\label{sec:largeScaleCase}
Having discussed sensor networks~\cite{pottie2000,estrin2002}, we now discuss an application of  DP (cf. Section~\ref{sec:DP}) to large-scale cloud servers. Consider the dollar cost for users of Amazon EC2~\cite{EC2}, a commercial cloud computing service. A typical cost for CPU time is $\$ 0.03$/hour, and the data transmission cost is $\$ 0.03$/GB. Assuming that the CPU clock frequency is 2.0GHz and considering various overheads, we need a runtime of $\frac{20MN}{2\times 10^9 P}$ sec and the computation cost is $C_4=\$ \frac{20MN}{2\times 10^9 P}\times \frac{0.03}{3600}$ per MP-AMP iteration.
Similar to Section~\ref{sec:sn}, the communication cost for coding rate $R$ is $C_5 R=\$ 2R N\frac{0.03}{8\times 10^9}$.
Note that the multiplicative factors of 20 in $C_4$ and 2 in $C_5$ are due
to the same considerations as in Section~\ref{sec:sn}, and the $8\times 10^9$ in $C_5$ is the number of bits per GB.
Therefore, the total cost with $T$ MP-AMP iterations can still be modeled as in~\eqref{eq:cost}, where $b=\frac{C_4}{C_5}$.

We consider a problem with the same signal and channel model as the setting of Section~\ref{sec:sn}, while the size of the problem grows to $N=50000$ and $M=20000$ spread over $P=100$ computing nodes. Running DP, we obtain the coding rate sequence $\mathbf{R}^*=(1.3$, $1.6$, $1.8$, $1.8$, $1.8$, $1.9$, $2.1$, $2.3$, $2.6$, $3.1$, $3.7)$ for a total of $T=11$ MP-AMP iterations with $R_{agg}=24.0$ bits aggregate coding rate. The final MSE is $7.031\times 10^{-4}$, which is 0.49 dB
above the MMSE. Note that this final MSE is 0.01 dB better than our goal of
$0.5$ dB above the MMSE due to the discretized search spaces used in DP.

{\bf Settings with even cheaper communication costs:} Compared to large-scale cloud servers, the relative
cost of communication is even cheaper in multi-processor CPU and graphics processing unit (GPU)
systems. We reduce $b$ by a factor of 100 compared to the large-scale cloud server case above. We rerun DP, and obtain the
coding rate sequence $\mathbf{R}^*=(2.3$, $2.5$, $2.6$, $2.7$, $2.7$, $2.8$, $3.0$, $3.4$, $3.7$, $4.5)$ for $T=10$ and $R_{agg}=30.2$ bits. Note that 10 iterations are needed for centralized AMP to converge in this setting. With the low-cost communication of this setting, DP yields a coding rate sequence $\mathbf{R}^*$ within 0.5 dB of the MMSE with the same number of iterations as centralized AMP, while using an average coding rate of only 3.02 bits  per iteration.

\begin{myRemark}
Let us review the cost tuples $(T,R_{agg},\text{MSE})$ for our three cases. For sensor networks, $(T,R_{agg},\text{MSE})_{\text{sensornet}}=(15,20,7.047\times 10^{-4})$; for cloud servers,
$(T,R_{agg},\text{MSE})_{\text{cloud}}$ $=$ $(11,24,7.031\times 10^{-4})$; and for GPUs, $(T,R_{agg},\text{MSE})_{\text{GPU}}=(10,30.2,7.047\times 10^{-4})$. These cost tuples are different points in the Pareto optimal set $\mathcal{P}$~\eqref{eq:setP}.
We can see for sensor networks that the optimal coding rate sequence
reduces $R_{agg}$ while adding
iterations, because sensor networks have relatively expensive communications.
The optimal coding rate sequences use higher rates in cloud servers and GPUs,
because their communication costs are relatively lower.
Indeed, different trade-offs between computation and communication
lead to different aggregate coding rates $R_{agg}$ and numbers of MP-AMP iterations $T$.
Moreover, the optimal coding rate sequences for sensor networks, cloud servers, and GPUs
use average coding rates of $1.33$, $2.18$, and $3.02$ bits/entry/iteration, respectively.
Compared to $32$ bits/entry/iteration single-precision floating point communication
schemes, optimal coding rate sequences reduce the communication costs significantly.
\end{myRemark}

\section{Conclusion}\label{sec:conclude}
This chapter used lossy compression in multi-processor (MP) approximate message passing (AMP) for solving MP linear inverse problems. Dynamic programming (DP) was used to obtain the optimal coding rate sequence for MP-AMP that incurs the lowest combined cost of communication and computation while achieving a desired
mean squared error (MSE). We posed the problem of finding the optimal coding rate sequence in the low excess MSE (EMSE=MSE-MMSE, where MMSE refers to the minimum MSE) limit as a convex optimization problem and proved that optimal coding rate sequences are approximately linear when the EMSE is small.
Additionally, we obtained that the combined cost of computation
and communication scales with $O(\log^2(1/\text{EMSE}))$.
Furthermore, realizing that there is a trade-off among the communication cost, computation cost, and MSE, we formulated a multi-objective optimization problem (MOP) for these costs and studied the Pareto optimal points that exploit this trade-off. We proved that the achievable region of the MOP is convex.

We further emphasize that there is little work in the prior art discussing the optimization of communication schemes in iterative distributed algorithms. Although we focused on the MP-AMP algorithm, our conclusions such as the linearity of the optimal coding rate sequence and the convexity of the
achievable set of communication/computation trade-offs could be extended to
other iterative distributed algorithms including consensus averaging~\cite{Frasca2008,Thanou2013}.


\chapter{Universal Algorithm}
\label{chap-SLAM}

Previous chapters discussed the information theoretic performance limits for multi-measurement vector problems~\eqref{eq:MMVmodel_intro} and also studied the optimal trade-offs among different costs in multi-processor linear models~\eqref{eq:one-node-meas_intro}. When the number of rows $M$ is smaller than the number of columns $N$ in the measurement matrix $\A$, we call the corresponding linear model a compressed sensing (CS) problem. In this chapter,
we study the CS signal estimation problem. While CS usually assumes sparsity or compressibility in the input signal during estimation, the signal structure that can be leveraged is often not known a priori. In this chapter, we consider universal CS signal estimation, where the statistics of a stationary ergodic signal source are estimated simultaneously with the signal itself. Inspired by Kolmogorov complexity and minimum description length, we focus on a maximum a posteriori (MAP) estimation framework that leverages universal priors to match the complexity of the source. Our
framework can also be applied to general linear inverse problems where more measurements than the signal length might be needed. We provide theoretical results that support the algorithmic feasibility of universal MAP estimation using a Markov chain Monte Carlo implementation (an algorithmic framework mimicking the annealing process in statistical physics, cf. Section~\ref{sec:background_statPhys}), which is computationally challenging. We incorporate some techniques to accelerate the algorithm while providing comparable and in many cases better estimation quality than existing algorithms. Experimental results show the promise of universality in CS, particularly for low-complexity sources that do not exhibit standard sparsity or compressibility. This chapter is based on our work with Baron and Duarte~\cite{JZ2014SSP,ZhuBaronDuarte2014_SLAM}.
\section{Motivation and Contributions}
\label{sec:intro}

Since many systems in science and engineering are approximately linear~\eqref{eq:SMV},
linear inverse problems have attracted great attention in the
signal processing community. Recall from~\eqref{eq:SMV} that an
input signal $\x \in \mathbb{R}^N$ is recorded via a linear operator under additive noise:
\begin{equation}
\label{eq:def_y}
\y = \A \x + \z,
\end{equation}
where $\A$ is an $M \times N$ matrix and $\z \in \mathbb{R}^M$ denotes the noise.
The goal is to estimate $\x$ from the measurements $\y$ given knowledge of $\A$
and a model for the noise $\z$. When $M \ll N$, the setup is known as compressed
sensing (CS) and the estimation problem is commonly referred to as recovery
or reconstruction; by posing a sparsity or compressibility\footnote{We use the term compressibility in this chapter as defined by Cand\`{e}s et al.~\cite{CandesRUP} to refer to signals whose sparse approximation error decays sufficiently quickly.} requirement on the signal
and using this requirement as a prior during estimation, it is indeed possible to
accurately estimate $\x$ from $\y$~\cite{CandesRUP,DonohoCS}. On the other hand,
we might need more measurements than the signal length when the signal is dense or the noise is substantial.

Wu and Verd{\'u}~\cite{WuVerdu2012} have shown that independent and identically distributed (i.i.d.) Gaussian sensing matrices achieve the same phase transition threshold as the optimal (potentially non-linear) measurement operator, for any i.i.d. signals following the discrete/continuous mixture distribution $f_X(x)=\rho\cdot f_c(x)+(1-\rho)\cdot \mathbb{P}_d(x)$, where $\rho$ is the probability for a scalar $x$ to take a continuous distribution $f_c(x)$ and $\mathbb{P}_d(x)$ is an arbitrary discrete distribution. For non-i.i.d. signals, Gaussian matrices also work well~\cite{Donoho2013,Tan_CompressiveImage2014,MaZhuBaronAllerton2014}. Hence,
in CS the acquisition can be designed independently of the particular signal
prior through the use of randomized Gaussian matrices $\A$. Nevertheless, the majority of (if not all)
existing estimation algorithms require knowledge of the sparsity structure of
$\x$, i.e., the choice of a {\em sparsifying transform} $\W$ that renders a sparse coefficient vector $\theta=\W^{-1} \x$ for
the signal.

The large majority of CS signal estimation algorithms pose a sparsity prior on the signal $\x$ or the coefficient vector $\theta$, e.g.,~\cite{CandesRUP,DonohoCS,GPSR2007}.
A second, separate class of Bayesian CS signal estimation algorithms poses a
probabilistic prior for the coefficients of $\x$ in a known transform
domain~\cite{DMM2010ITW1,RanganGAMP2011ISIT,BCS2008,BCSEx2008,CSBP2010}. Given a probabilistic model, some related message passing approaches
learn the parameters of the signal model and achieve the minimum mean squared error (MMSE) in some settings; examples include EM-GM-AMP-MOS~\cite{EMGMTSP}, turboGAMP~\cite{turboGAMP}, and AMP-MixD~\cite{MTKB2014ITA}. As a third alternative,
complexity-penalized least square methods~\cite{Figueiredo2003,DonohoKolmogorovCS2006,HN05,HN11,Ramirez2011} can use arbitrary prior information on the
signal model and provide analytical guarantees, but are only computationally
efficient for specific signal models, such as the independent-entry Laplacian
model~\cite{HN05}. For example, Donoho et al.~\cite{DonohoKolmogorovCS2006} relies on Kolmogorov complexity, which cannot be computed~\cite{Cover06,LiVitanyi2008}.
As a fourth alternative, there exist algorithms that can formulate dictionaries that yield
sparse representations for the signals of interest when a large amount of training data is
available~\cite{Ramirez2011,AharoEB_KSVD,Mairal2008,Zhoul2011}.
When the signal is non-i.i.d., existing algorithms require either prior knowledge of the probabilistic model~\cite{turboGAMP} or the use of training data~\cite{Garrigues07learninghorizontal}.

In certain cases, one might not be certain about the structure or statistics of the
source prior to estimation. Uncertainty about such structure may result in a sub-optimal choice of
the sparsifying transform $\W$, yielding a coefficient vector $\theta$ that requires more measurements to achieve reasonable
estimation quality; uncertainty about the statistics of the source will make it difficult to
select a prior or model for Bayesian algorithms. Thus, it would  be desirable to formulate algorithms
to estimate $\x$ that are more agnostic to the particular statistics of the signal.
Therefore, we shift our focus from the standard sparsity or compressibility priors to
{\em universal} priors~\cite{LZ77,Rissanen1983,Ramirez2010}. Such concepts have been
previously leveraged in the Kolmogorov sampler universal denoising
algorithm~\cite{DonohoKolmogorov}, which minimizes Kolmogorov
complexity~\cite{Chaitin1966,Solomonoff1964,Kolmogorov1965,LiVitanyi2008,JalaliMaleki2011,JalaliMalekiRichB2014,BaronFinland2011,BaronDuarteAllerton2011}. Related approaches
based on minimum description length
(MDL)~\cite{Rissanen1978,schwarz1978estimating,Wallace1968,BRY98} minimize
the complexity of the estimated signal with respect to (w.r.t.) some class of sources.

Approaches for non-parametric sources based
on Kolmogorov complexity are not computable in practice~\cite{Cover06,LiVitanyi2008}. To address this
computational problem, we confine our attention to the class of stationary ergodic sources and
develop an algorithmic framework for {\em universal} signal estimation in CS systems that will approach the MMSE as closely as possible for the class of stationary ergodic sources. Our framework can be applied to general
linear inverse problems where more measurements might be needed.
Our framework leverages the fact that for stationary ergodic sources,
both the per-symbol empirical entropy and Kolmogorov complexity converge
asymptotically almost surely to the entropy rate of the source~\cite{Cover06}. We aim
to minimize the empirical entropy; our minimization is regularized by introducing a log
likelihood for the noise model, which is equivalent to the standard least squares under
additive white Gaussian noise. Other noise distributions are readily supported.

We make the following contributions toward our universal CS framework.
\begin{itemize}
  \item We apply a specific quantization grid to a maximum {\em a posteriori} (MAP) estimator driven by a universal prior, providing a finite-computation
universal estimation scheme; our scheme can also be applied to general linear inverse problems where more measurements might be needed.
  \item We propose an estimation algorithm based on
Markov chain Monte Carlo (MCMC)~\cite{Geman1984} to approximate this estimation procedure.
  \item We prove that for a sufficiently large number of iterations the output of our
MCMC estimation algorithm converges to the correct MAP estimate.
  \item We identify computational bottlenecks in the implementation of our MCMC estimator and show
approaches to reduce their complexity.
  \item We develop an adaptive quantization
scheme that tailors a set of reproduction levels to minimize the quantization error
within the MCMC iterations and that provides an accelerated implementation.
  \item We propose a framework that adaptively adjusts the cardinality (size) of the adaptive quantizer to match
the complexity of the input signal, in order to further reduce the quantization error and computation.
  \item We note in passing that averaging over the outputs of different runs of the same signal with the same measurements will yield
lower mean squared error (MSE) for our proposed algorithm.
\end{itemize}

This chapter is organized as follows. Section~\ref{sec:setting} provides background
content. Section~\ref{sec:theory} overviews MAP estimation, quantization, and introduces universal MAP estimation.
Section~\ref{sec:MCMC} formulates an initial MCMC algorithm for universal MAP
estimation, Section~\ref{sec:adaptive} describes several improvements to this initial algorithm, and Section~\ref{sec:numerical} presents
experimental results. We conclude in Section~\ref{sec:conclusions}.
The proof of our main theoretical result appears in Appendix~\ref{chap:append-SLAM}.

\section{Background and Related Work}
\label{sec:setting}
\subsection{Compressed sensing}

Consider the noisy measurement setup via a linear operator (\ref{eq:def_y}).
The input signal $\x\in \mathbb{R}^N$ is generated by a stationary ergodic source $X$,
and must be estimated from $\y$ and $\A$. Note that the stationary ergodicity assumption enables us to model the potential memory in the source.
{\em The distribution $f_X(\cdot)$ that generates $\x$ is unknown.} The matrix
$\A \in \mathbb{R}^{M \times N}$ has i.i.d. Gaussian
entries, $A_{m,n} \sim \mathcal{N}(0,\frac{1}{M})$.\footnote{In contrast to our analytical
and numerical results, the algorithm presented in Section~\ref{sec:MCMC} is not
dependent on a particular choice for the matrix $\A$.} These moments ensure that the
columns of the matrix have unit norm on average. For concrete analysis, we assume that
the noise $\z\in\mathbb{R}^M$ is i.i.d.\ Gaussian, with mean zero and known\footnote{We assume that the noise variance is known or can be estimated~\cite{DMM2010ITW1,MTKB2014ITA}.} variance
$\sigma_Z^2$ for simplicity.

We focus on the large system limit (cf. Definition~\ref{def:chap1-largeSystemLimit} in Chapter~\ref{chap-intro}).
Similar settings have been discussed in the literature~\cite{Rangan2010CISS,GuoWang2008}.
When $M\ll N$, this setup is known as CS; otherwise, it is a general linear inverse problem setting.
Since $\x$ is generated by an unknown source, we must search for an estimation
mechanism that is agnostic to the specific distribution $f_X(\cdot)$.

\subsection{Related work}
\label{subsec:Kolmogorov}

For a scalar channel with a discrete-valued signal $\x$, e.g., $\A$ is an identity matrix and $\y=\x+\z$,
Donoho proposed the Kolmogorov sampler for
denoising~\cite{DonohoKolmogorov},
\begin{equation}
\label{eq:x_KS}
\x_{KS}  \triangleq \arg\min_{\w} K(\w)\mbox{, subject to}~\|\w-\y\|^2<\tau,
\end{equation}
where $K(\x)$ denotes the Kolmogorov complexity of $\x$, defined as the length
of the shortest input to a Turing machine~\cite{Turing1950} that generates the
output $\x$ and then halts,\footnote{For real-valued $\x$, Kolmogorov complexity  can be approximated using a fine quantizer. Note that the algorithm developed in this chapter uses a coarse quantizer and does not rely on Kolmogorov complexity
due to the absence of a feasible method for its computation~\cite{Cover06,LiVitanyi2008} (cf.\ Section~\ref{sec:adaptive}).} and $\tau = N\sigma_Z^2$ controls for the presence
of noise. It can be shown that $K(\x)$ asymptotically captures the statistics of
the stationary ergodic source $X$, and the per-symbol complexity achieves the
entropy rate $H \triangleq H(X)$, i.e., $\lim_{N\to\infty} \frac{1}{N}K(\x)=H$ almost
surely~[\cite{Cover06}, p.~154, Theorem~7.3.1]. Noting that universal lossless compression
algorithms~\cite{LZ77,Rissanen1983} achieve the entropy rate for any
discrete-valued finite state machine source $X$, we see that these algorithms
achieve the per-symbol Kolmogorov complexity almost surely.

Donoho et al. expanded Kolmogorov sampler to the linear CS measurement setting
$\y=\A \x$ but did not consider measurement noise~\cite{DonohoKolmogorovCS2006}.
Recent papers by Jalali and coauthors~\cite{JalaliMaleki2011,JalaliMalekiRichB2014}, which appeared
simultaneously with Baron~\cite{BaronFinland2011} and Baron and Duarte~\cite{BaronDuarteAllerton2011},
provide an analysis of a modified Kolmogorov sampler suitable for measurements corrupted by
noise of bounded magnitude. Inspired by Donoho et al.~\cite{DonohoKolmogorovCS2006}, we
estimate $\x$ from noisy measurements $\y$ using the empirical entropy as a proxy
for the Kolmogorov complexity (cf.\ Section~\ref{sec:compressor}).

Separate notions of complexity-penalized least square{s} have also been shown
to be well suited for denoising and CS
signal estimation~\cite{Figueiredo2003,DonohoKolmogorovCS2006,Rissanen1978,schwarz1978estimating,Wallace1968,HN05,HN11,Ramirez2011}. For example,
minimum description length (MDL)~\cite{Rissanen1978,schwarz1978estimating,Wallace1968,Ramirez2011} provides
a framework composed of classes of models for which
the signal complexity can be defined sharply. In general, complexity-penalized least square
approaches can yield MDL-flavored CS signal estimation algorithms that are
adaptive to parametric classes of sources~\cite{DonohoKolmogorovCS2006,Figueiredo2003,HN05,HN11}. An
alternative universal denoising approach computes the universal conditional expectation of the signal~\cite{BaronFinland2011,MTKB2014ITA}.

\sectionmark{Background and Related Work}
\section{Universal MAP Estimation and Discretization}
\label{sec:theory}
\sectionmark{Universal MAP}

This section briefly reviews MAP estimation and then applies it over a quantization grid, where a universal prior is used for the signal. Additionally, we provide a conjecture for the MSE achieved by our universal MAP scheme.

\subsection{Discrete MAP estimation}\label{sec:MAP}
In this subsection, we assume for exposition purposes that we know the signal distribution $f_X(\cdot)$.
Given the measurements $\y$, the MAP estimator for $\x$ has the form
\begin{equation}
\x_{MAP} \triangleq \arg\max_{\w} f_X(\w) f_{Y|X}(\y|\w).
\label{eq:map}
\end{equation}
Because $\z$ is i.i.d.\ Gaussian with mean zero and known variance $\sigma_Z^2$,
\begin{equation*}
 f_{Y|X}(\y|\w) = c_1 \operatorname{e}^{-c_2 \|\y-\A \w\|^2},
\end{equation*}where
$c_1= (2\pi\sigma_Z^2)^{-M/2}$ and
$c_2= \frac{1}{2\sigma_Z^2}$
are constants, and $\|\cdot\|$ denotes the Euclidean norm.\footnote{Other noise distributions are readily supported, e.g., for i.i.d. Laplacian noise, we need to change the $\ell_2$ norm to an $\ell_1$ norm and adjust $c_1$ and $c_2$ accordingly.}
Plugging into (\ref{eq:map}) and taking log likelihoods, we obtain
$\displaystyle \x_{MAP} = \arg\min_{\w} \Psi^X(\w)$,
where $\Psi^X(\cdot)$ denotes the objective function (risk)
\begin{equation*}
\Psi^X(\w) \triangleq -\ln(f_X(\w)) + c_2\|\y-\A \w\|^2;
\end{equation*}our ideal risk would be $\Psi^X(\x_{MAP})$.

Instead of performing continuous-valued MAP estimation, we optimize for the MAP
in the discretized domain $\replevels^N$, with $\replevels$ being defined as follows.
Adapting the approach of Baron and Weissman~\cite{BaronWeissman2012},
we define the set of data-independent reproduction levels for quantizing $\x$ as
\begin{equation}
\label{eq:def:replevels}
\replevels \triangleq \left\{
\cdots,-\frac{1}{\gamma},0,\frac{1}{\gamma},\cdots
\right\},
\end{equation}
where $\gamma=\lceil\ln(N)\rceil$. As $N$ increases, $\replevels$ will quantize
$\x$ to a greater resolution.
These reproduction levels simplify the estimation problem from continuous to discrete.

Having discussed our reproduction levels in the set $\replevels$, we provide a technical condition on boundedness of the signal.
\begin{COND}
\label{cond:tech1}
We require that the probability density $f_X(\cdot)$ has bounded support, i.e., there exists $\Lambda = [x_\textrm{min},x_\textrm{max}]$ such that (s.t.) $f_X(\x) = 0$ for $\x \notin \Lambda^N$.
\end{COND}

A limitation of the data-independent reproduction level set (\ref{eq:def:replevels})
is that $\replevels$ has infinite cardinality (or size for short).
Thanks to Condition~\ref{cond:tech1}, for each value of $\gamma$ there exists
a constant $c_3>0$ s.t. a finite set of reproduction levels
\begin{equation}
\replevels_F \triangleq \left\{
-\frac{c_3\gamma^2}{\gamma},-\frac{c_3\gamma^2-1}{\gamma},\cdots,
\frac{c_3\gamma^2}{\gamma}
\right\}
\label{eq:def:replevels2}
\end{equation}
will quantize the range of values $\Lambda$ to the same accuracy as that of (\ref{eq:def:replevels}). We call $\replevels_F$ the {\em reproduction alphabet}, and each element in it a ({\em reproduction}) {\em level}.
This finite quantizer reduces the complexity of the estimation problem
from infinite to combinatorial. In fact,
$x_i\in [x_\textrm{min},x_\textrm{max}]$ under Condition~\ref{cond:tech1}. Therefore, for all $c_3 >0$ and sufficiently large $N$, this set of levels will cover the range $[x_\textrm{min},x_\textrm{max}]$.
The resulting reduction in complexity is due to the
structure in $\breplevels$ and independent of the particular statistics of the
source $X$.

Now that we have set up a quantization grid $(\breplevels)^N$ for $\x$,
we convert the distribution $f_X(\cdot)$ to a probability mass function (PMF)
$\mathbb{P}_X(\cdot)$ over $(\breplevels)^N$. Let
$\displaystyle f_{\breplevels} \triangleq \sum_{\w \in (\breplevels)^N} f_X(\w)$,
and define a PMF $\mathbb{P}_X(\cdot)$ as
$\displaystyle \mathbb{P}_X(\w) \triangleq \frac{f_X(\w)}{f_{\breplevels}}$.
Then,
\begin{equation*}
\x_{MAP}(\breplevels) \triangleq \arg\min_{\w \in (\breplevels)^N} \left[-\ln(  \mathbb{P}_X(\w)  ) + c_2 \|\y-\A \w\|^2\right]
\end{equation*}gives the MAP estimate of $\x$ over $(\breplevels)^N$.
Note that we use the PMF formulation above, instead of the more common bin integration formulation, in order to simplify our presentation and analysis. Luckily, as $N$ increases, $\mathbb{P}_X(\cdot)$ will approximate $f_X(\cdot)$ more closely under (\ref{eq:def:replevels2}).

\subsection{Universal MAP estimation}
\label{sec:univtheory}
We now describe a
universal estimator for CS over a quantized grid. Consider a prior
$\mathbb{P}_U(\cdot)$ that might involve Kolmogorov
complexity~\cite{Chaitin1966,Solomonoff1964,Kolmogorov1965},
e.g., $\mathbb{P}_U(\w)=2^{-K(\w)}$, or MDL complexity w.r.t.\ some class of
parametric sources~\cite{Rissanen1978,schwarz1978estimating,Wallace1968}.
We call $\mathbb{P}_U(\cdot)$ a {\em universal prior} if it has the fortuitous property that for every stationary ergodic
source $X$ and fixed $\epsilon > 0$,  there exists some minimum $N_0(X,\epsilon)$ s.t.
\begin{equation*}
 -\frac{\ln(\mathbb{P}_U(\w))}{N} < -\frac{\ln(\mathbb{P}_X(\w))}{N} + \epsilon
\end{equation*}for all $\w\in(\breplevels)^N$ and $N > N_0(X,\epsilon)$~\cite{LZ77,Rissanen1983}.
We optimize over an objective function that incorporates $\mathbb{P}_U(\cdot)$ and the presence
of additive white Gaussian noise in the measurements:
\begin{equation}
\Psi^U(\w) \triangleq -\ln(\mathbb{P}_U(\w)) + c_2 \|\y-\A \w\|^2,
\label{eq:psidef}
\end{equation}
resulting in\footnote{This formulation of $\x_{MAP}^U$  corresponds to a Lagrangian relaxation of the approach studied
in~\cite{JalaliMaleki2011,JalaliMalekiRichB2014}.}
$\displaystyle
\label{eq:universal_x}
\x_{MAP}^U \triangleq \arg \min_{\w \in (\breplevels)^N} \Psi^U(\w)$.
Our universal MAP estimator does not require $M\ll N$, and $\x_{MAP}^U$ can be used in general linear inverse problems.

\subsection{Conjectured MSE performance}\label{sec:conjecture}
Donoho~\cite{DonohoKolmogorov} showed for the scalar
channel $\y=\x+\z$ that: ($i$) the Kolmogorov sampler $\x_{KS}$ (\ref{eq:x_KS})
is drawn from the posterior distribution $\mathbb{P}_{X|Y}(\x|\y)$; and ($ii$) the
MSE of this estimate $\mathbb{E}_{X,Z,\A}[\|\y-\x_{KS}\|^2]$
is no greater than twice the MMSE.
Based on this result, which requires a large reproduction alphabet, we now present a conjecture on the quality of the estimate $\x^U_{MAP}$. Our conjecture is based on observing that
({\em i}) in the setting~\eqref{eq:def_y}, Kolmogorov sampling achieves optimal rate-distortion performance;
({\em ii}) the Bayesian posterior distribution is the solution to the rate-distortion problem; and
({\em iii}) sampling from the Bayesian posterior yields a squared error that is no greater
than twice the MMSE. Hence, $\x^U_{MAP}$ behaves as if we sample
from the Bayesian posterior distribution and yields no greater
than twice the MMSE; some experimental evidence to assess this conjecture is presented in Figures~\ref{fig:Ber} and~\ref{fig:sparseL}.

\begin{myConj}
\label{conj:double_MMSE}
Assume that $\A\in\mathbb{R}^{M\times N}$ is an i.i.d.\ Gaussian measurement matrix
where each entry has mean zero and variance $\frac{1}{M}$. Suppose that
Condition~\ref{cond:tech1} holds, the aspect ratio
$\kappa=\frac{M}{N}$, and the noise $\z\in\mathbb{R}^M$ is i.i.d.\ zero-mean
Gaussian with finite variance.
Then for all $\epsilon>0$, the mean squared error of the
universal MAP estimator $\x^U_{MAP}$ satisfies
\begin{equation*}
\frac{\mathbb{E}_{X,Z,\A}\left[\|\x-\x^U_{MAP}\|^2\right]}{N} < \frac{2 \mathbb{E}_{X,Z,\A}\left[\|\x-\mathbb{E}_{X}[\x|\y,\A]\|^2\right]}{N} +\epsilon
\end{equation*}
for sufficiently large $N$.
\end{myConj}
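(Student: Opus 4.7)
The plan is to mimic Donoho's Kolmogorov sampler argument for the scalar channel~\cite{DonohoKolmogorov} and lift it to the linear inverse setting via an equivalent scalar channel induced by the measurement operator. The central idea is that sampling from the true Bayesian posterior $\mathbb{P}_{X\mid Y,\A}$ yields at most twice the MMSE (a consequence of the identity $\mathbb{E}\|X-\widetilde{X}\|^2 = 2\,\mathbb{E}\|X-\mathbb{E}[X\mid Y]\|^2$ when $\widetilde{X}$ is an independent posterior draw), so it suffices to show that $\x_{MAP}^U$ is, up to vanishing slack, a draw from the posterior induced by the true prior $\mathbb{P}_X$ on the grid $(\breplevels_F)^N$.

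First I would replace the unknown true prior by the universal prior in the log-posterior risk. Using the defining inequality $-\tfrac{1}{N}\ln \mathbb{P}_U(\w) < -\tfrac{1}{N}\ln \mathbb{P}_X(\w) + \epsilon$ uniformly over $\w\in(\breplevels_F)^N$ for $N>N_0(X,\epsilon)$, I would show that $\Psi^U$ and $\Psi^X$ differ by at most $N\epsilon$ on the grid. Combined with the discretization error between $\mathbb{P}_X$ and $f_X$ on $(\breplevels_F)^N$, which tends to zero as $N\to\infty$ by Condition~5.1 and the resolution scaling $\gamma=\lceil \ln N\rceil$, this gives $\Psi^U(\w)=\Psi^X(\w)+o(N)$ uniformly on the grid.

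Next I would invoke a concentration-of-measure / replica-style argument to characterize the set $\mathcal{T}_{\eta} := \{\w \in (\breplevels_F)^N : \Psi^X(\w) \le \Psi^X(\x_{MAP}(\breplevels_F)) + N\eta\}$. The goal is to show that for $\eta\to 0$ slowly, every $\w\in \mathcal{T}_\eta$ has per-symbol empirical distribution close to the true posterior marginal, and that $\x_{MAP}^U \in \mathcal{T}_{O(\epsilon)}$ by the previous step. Using the decoupling principle for i.i.d. Gaussian $\A$ with noise $\z$ (cf.\ the equivalent scalar channel $\f_t = \x + \w_t$ from Chapter~\ref{chap-MP-AMP}), the conditional distribution over each coordinate given $(\y,\A)$ factorizes asymptotically into an AWGN posterior whose variance is the fixed point of a state-evolution recursion. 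Integrating this coordinatewise posterior yields the MMSE, and drawing any configuration from $\mathcal{T}_\eta$ behaves statistically like a posterior sample, contributing the ``factor of two'' via the sampling identity above.

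The hard part will be the middle step: establishing rigorously that the $\Psi^X$-level set $\mathcal{T}_\eta$ is concentrated on typical posterior sequences. Donoho's scalar-channel proof uses permutation invariance of the i.i.d.\ scalar denoising problem, which is unavailable here because the measurement operator couples all coordinates. A plausible route is to use the large-deviation equivalence between empirical posterior and true posterior for stationary ergodic sources (an MDL-style Sanov argument) restricted to the affine subspace $\{\w : \|\y - \A\w\|^2 \le (1+\delta)M\sigma_Z^2\}$, combined with a union bound over the cardinality $|\breplevels_F|^N = \exp(O(N \ln\ln N))$ which remains subexponential in $N$ so long as the log-prior term dominates. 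If this concentration can be made quantitative, letting $\eta\to 0$ after $N\to\infty$ closes the gap and yields the $2\cdot\text{MMSE}+\epsilon$ bound; otherwise a weaker statement (e.g., ``at most a constant multiple of the MMSE'') may be all that is accessible in full generality.
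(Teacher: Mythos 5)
You should first note that the statement you are trying to prove is presented in the paper as a \emph{conjecture} (Conjecture~\ref{conj:double_MMSE}), not a theorem: the paper offers no proof, only a three-part heuristic justification --- ({\em i}) Kolmogorov sampling achieves optimal rate--distortion performance, ({\em ii}) the Bayesian posterior is the solution to the rate--distortion problem, ({\em iii}) an independent posterior draw incurs at most twice the MMSE --- together with numerical evidence (Figures~\ref{fig:Ber} and~\ref{fig:sparseL}). Your proposal is essentially a fleshed-out version of that same heuristic: the factor-of-two sampling identity, the uniform $N\epsilon$ slack from the universal prior, and the decoupling of the linear channel into scalar AWGN posteriors are exactly the ingredients the authors have in mind. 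So your outline is faithful to the intended argument, but it does not constitute a proof, and neither does the paper claim one.

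The gap you flag in your final paragraph is the genuine obstruction, and it is worth being precise about why. The factor-of-two identity applies to an estimator that is an \emph{independent draw from the posterior}, not to the posterior mode. Donoho's scalar-channel result works because the minimum-complexity point inside the distortion ball is shown to be statistically indistinguishable from a uniformly random element of the posterior-typical set (all typical posterior sequences having essentially equal complexity), and that argument leans on the product structure and permutation invariance of the i.i.d.\ scalar denoising problem. In the linear model the operator $\A$ couples all coordinates, so you must show both that the level set $\mathcal{T}_\eta$ consists of posterior-typical sequences \emph{and} that $\arg\min \Psi^U$ lands in $\mathcal{T}_\eta$ in a way that inherits the sampling (rather than mode-seeking) behavior; your Sanov-plus-union-bound sketch does not yet deliver either half quantitatively, and the decoupling principle you invoke is itself a replica prediction rather than a theorem in this generality. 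Until that middle step is made rigorous, the statement remains exactly what the paper calls it: a conjecture.
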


\section{Fixed Reproduction Alphabet Algorithm}
\label{sec:MCMC}

Although the results of the previous section are theoretically appealing, a brute force
optimization of $\x_{MAP}^U$ is computationally intractable. Instead, we propose an
algorithmic approach based on MCMC
methods~\cite{Geman1984}. Our approach is reminiscent of the framework for lossy data
compression~\cite{Jalali2008,Jalali2012,BaronWeissman2012,Yang1997}.

\subsection{Universal compressor}\label{sec:compressor}

We propose a universal lossless compression formulation following
the conventions of Weissman and coauthors~\cite{Jalali2008,Jalali2012,BaronWeissman2012}.
We refer to the estimate as $\w$ in our algorithm.
Our goal is to characterize $-\ln(\mathbb{P}_U(\w))$, cf.~(\ref{eq:psidef}).
Although we are inspired by the Kolmogorov sampler approach~\cite{DonohoKolmogorov}, Kolmogorov complexity cannot be computed~\cite{Cover06,LiVitanyi2008}, and we instead use empirical entropy. For stationary ergodic sources, the empirical entropy
converges to the per-symbol entropy rate almost surely~\cite{Cover06}.

To define the empirical entropy, we first define the empirical symbol
counts:
\begin{equation}
n_q(\w,\alpha)[\beta] \triangleq \l| \{ i \in [q+1,N]: \w_{i-q}^{i-1}=\alpha, w_i=\beta \} \r|,
\label{eq:nq}
\end{equation}
where $q$ is the context depth~\cite{Rissanen1983,Willems1995CTW},
$\beta \in \replevels_F$, $\alpha\in(\replevels_F)^q$, $w_i$ is the $i$-th symbol of $\w$,
and $\w_i^j$ is the string comprising symbols $i$ through $j$ within $\w$.
We now define the order $q$ conditional empirical probability for the context
$\alpha$ as
\begin{equation}
\label{eq:def:Pcond}
\mathbb{P}_q(\w,\alpha)[\beta] \triangleq
 \frac{  n_q(\w,\alpha)[\beta] } { \sum_{\beta' \in \replevels_F} n_q(\w,\alpha)[\beta'] },
\end{equation}
and the order $q$ conditional empirical entropy,\footnote{Interested readers can refer to the definitions of entropy for thermodynamics and information theory in~\eqref{eq:def_entropy} and~\eqref{eq:entropy_continuous}, respectively.}
\begin{equation}
H_q(\w) \triangleq -\frac{1}{N} \sum_{\alpha \in (\replevels_F)^q,\beta \in \replevels_F} n_q(\w,\alpha)[\beta]
\log_2\left( \mathbb{P}_q(\w,\alpha)[\beta] \right),
\label{eq:def:H}
\end{equation}
where the sum is only over non-zero counts and probabilities.

Allowing the context depth $q \triangleq q_N=o(\log(N))$ to grow slowly with $N$,
various universal compression algorithms can achieve the empirical entropy
$H_{q}(\cdot)$ asymptotically~\cite{Rissanen1983,Willems1995CTW,LZ77}.
On the other hand, no compressor can outperform the entropy rate. Additionally,
for large $N$, the empirical symbol counts with context depth $q$ provide a
sufficiently precise characterization of the source statistics. Therefore, $H_q$
provides a concise approximation to the per-symbol coding length of a universal
compressor.

\subsection{Markov chain Monte Carlo}\label{sec:B-MCMC}

Having approximated the coding length, we now describe how to
optimize our objective function.
We define the energy $\Psi^{H_q}(\w)$
in an analogous manner to $\Psi^U(\w)$ (\ref{eq:psidef}),
using $H_q(\w)$ as our universal coding length:
\begin{equation}
\label{eq:MCMC_energy}
\Psi^{H_q}(\w) \triangleq NH_q(\w) + c_4 \|\y - \A \w\|^2,
\end{equation}
where $c_4=c_2 \log_2(\operatorname{e}) $.
The minimization of this energy is analogous to minimizing $\Psi^U(\w)$.

Ideally, our goal is to compute the globally minimum energy solution
$\displaystyle \x_{MAP}^{H_q} \triangleq \arg\min_{\w \in (\replevels_F)^N} \Psi^{H_q}(\w)$.
We use a stochastic MCMC
relaxation~\cite{Geman1984} to achieve the globally minimum solution
in the limit of infinite computation.
To assist the reader in appreciating how MCMC is used to compute $\x_{MAP}^{H_q}$, we include pseudocode for our approach in Algorithm~\ref{alg:MCMC}. The algorithm, called basic MCMC (B-MCMC), will be used as a building block for our latter Algorithms~\ref{alg:MCMCAL} and~4 in Section~\ref{sec:adaptive}.
The initial estimate $\w$ is obtained by quantizing the {\em initial point} $\x^*\in\mathbb{R}^N$ to $(\replevels_F)^N$. The initial point $\x^*$ could be the output of any CS signal estimation algorithm, and because $\x^*$ is a preliminary estimate of the signal that does not require high fidelity, we let $\x^*=\A^{\top} \y$ for simplicity, where $\{\cdot\}^{\top}$ denotes transpose.
We refer to the processing of a single entry of $\w$ as an iteration and group the
processing of all entries of $\w$, randomly permuted, into
super-iterations.

The Boltzmann PMF for a thermodynamic system was defined in~\eqref{eq:def_Boltzmann_background}. Similarly, we define the Boltzmann PMF for the energy $\Psi^{H_q}(\w)$~\eqref{eq:MCMC_energy} as
\begin{equation}
\label{eq:def_Boltzmann}
\mathbb{P}_s(\w) \triangleq \frac{1}{\zeta_s} \text{exp}\l(-s \Psi^{H_q}(\w)\r),
\end{equation}
where $s>0$ is inversely related to the temperature in simulated annealing and
$\zeta_s$ is a normalization constant.
MCMC samples from the Boltzmann PMF (\ref{eq:def_Boltzmann}) using a
{\em Gibbs sampler}: in each iteration, a single element $w_n$ is generated
while the rest of $\w$, $\w^{\backslash n} \triangleq \{ w_i:\ n \neq i\}$,
remains unchanged. We denote by $\w_1^{n-1} \beta \w_{n+1}^N$ the
concatenation of the initial portion of the output vector $\w_1^{n-1}$, the symbol
$\beta \in \replevels_F$, and the latter portion of the output $\w_{n+1}^N$. The
Gibbs sampler updates $w_n$ by resampling from the PMF:
\begin{eqnarray}
\mathbb{P}_s(w_n=a|\w^{\backslash n}) \label{eqn:Gibbs}&=&  \frac{ \text{exp}\left(-s\Psi^{H_q}(\w_1^{n-1}a\w_{n+1}^N) \right) }
{ \sum_{b \in \breplevels} \text{exp}\left( -s\Psi^{H_q}(\w_1^{n-1}b\w_{n+1}^N) \right) } \nonumber\\
&=& \frac{1}{\sum_{b \in \breplevels} \text{exp}\left[ -s\left( N\Delta H_q(\w,n,b,a) + c_4
\Delta d(\w,n,b,a) \right) \right] }\nonumber,
\end{eqnarray}
where
\begin{eqnarray*}
\Delta H_q(\w,n,b,a)  \triangleq
 H_q(\w_1^{n-1}b\w_{n+1}^N)  -  H_q(\w_1^{n-1}a\w_{n+1}^N)
\end{eqnarray*}
is the change in empirical entropy $H_q(\w)$ (\ref{eq:def:H}) when $w_n=a$
is replaced by $b$, and
\begin{equation}
\Delta d(\w,n,b,a) \triangleq  \|\y-\A(\w_1^{n-1}b\w_{n+1}^N)\|^2 - \|\y-\A(\w_1^{n-1}a\w_{n+1}^N)\|^2 \label{eq:Deltad}
\end{equation}
is the change in $\|\y-\A \w\|^2$ when $w_n=a$ is replaced by $b$. The maximum change in the energy within an iteration of Algorithm~\ref{alg:MCMC} is then bounded by
\begin{equation}
\Delta_q = \max_{1\le n \le N} \max_{\w \in (\breplevels)^N} \max_{a,b \in \replevels_F} \l|N\Delta H_q(\w,n,b,a)+c_4\Delta d(\w,n,b,a)\r|.
\label{eq:Deltaq}
\end{equation}
Note that $\x$ is assumed bounded (cf.\ Condition~\ref{cond:tech1}) so that (\ref{eq:Deltad}--\ref{eq:Deltaq}) are bounded as well.

\begin{algorithm}[!t]
\caption{Basic MCMC for universal CS -- Fixed alphabet} \label{alg:MCMC}
\begin{algorithmic}[1]
\State {\bf Inputs}: Initial estimate $\w$, reproduction alphabet $\replevels_F$, noise variance $\sigma_Z^2$, number of super-iterations $r$, temperature constant $c>1$, and context depth $q$
\State Compute $n_q(\w,\alpha)[\beta],~\forall~\alpha \in (\replevels_F)^q$, $\beta \in \replevels_F$
\For{$t=1$ to $r$} \Comment{super-iteration}
\State $s \leftarrow \ln(t)/(cN\Delta_q)$ \Comment{$s=s_t$, cf.~(\ref{eq:st})}
\State Draw permutation $\{1,\cdots,N\}$ at random
\For{$t'=1$ to $N$} \Comment{iteration}
\State Let~$n$ be component $t'$ in permutation
\For{all $\beta$ in $\replevels_F$} \Comment{possible new $w_n$}
\State Compute $\Delta H_q(\w,n,\beta,w_n)$ \label{sudo:fixedR_DeltaH}
\State Compute $\Delta d(\w,n,\beta,w_n)$  \label{sudo:fixedR_Deltad}
\State Compute $\mathbb{P}_s(w_n=\beta|\w^{\backslash n})$   \label{sudo:fixedR:fs}
\EndFor
\State Generate $w_n$ using $\mathbb{P}_s(\cdot|\w^{\backslash n})$ \Comment{Gibbs}
\State Update $n_q(\w,\alpha)[\beta],~\forall~\alpha \in (\replevels_F)^q$, $\beta \in \replevels_F$ \label{sudo:fixedR:update}
\EndFor
\EndFor
\State {\bf Output:}\ Return approximation $\w$ of $\x^U_{MAP}$
\end{algorithmic}
\end{algorithm}

In MCMC, the space $\w\in(\replevels_F)^N$
is analogous to a thermodynamic system,
and at low temperatures the system tends toward low energies. Therefore, during the execution of the algorithm, we set a sequence of decreasing temperatures that takes into account the maximum change given in (\ref{eq:Deltaq}):
\begin{align}
s_t \triangleq \ln(t+r_0)/(cN\Delta_q)~\textrm{for some}~c>1,
\label{eq:st}
\end{align}where $r_0$ is a temperature offset. At low temperatures, i.e., large $s_t$, a small difference in energy
$\Psi^{H_q}(\w)$ drives a big difference in probability, cf.~(\ref{eq:def_Boltzmann}). Therefore, we begin at a high
temperature where the Gibbs sampler can freely move around $(\replevels_F)^N$. As the
temperature is reduced, the PMF becomes more sensitive to changes in energy
(\ref{eq:def_Boltzmann}), and the trend toward $\w$ with lower energy grows stronger.
In each iteration, the Gibbs sampler modifies $w_n$ in a
random manner that resembles heat bath concepts in thermodynamics. Although
MCMC could sink into a local minimum, Geman and Geman~\cite{Geman1984} proved that if we decrease the temperature according to (\ref{eq:st}), then the randomness of Gibbs sampling will eventually drive MCMC out of the locally
minimum energy and it will converge to the globally optimal energy w.r.t.\ $\x_{MAP}^U$. Note that Geman and Geman proved that MCMC will converge, although the proof states that it will take infinitely long to do so. In order to help B-MCMC approach the global minimum with reasonable runtime, we will refine B-MCMC in Section~\ref{sec:adaptive}.

The following theorem is proven in Appendix~\ref{ap:th:conv}, following the framework established by Jalali and Weissman~\cite{Jalali2008,Jalali2012}.
\begin{myTheorem}
Let $X$ be a stationary ergodic source that obeys Condition~\ref{cond:tech1}. Then the outcome $\w^r$ of
Algorithm~\ref{alg:MCMC}
in the limit of an infinite number of super-iterations $r$ obeys
\begin{equation*}
\lim_{r \to \infty} \Psi^{H_q}(\w^r) = \min_{\widetilde{\w} \in (\replevels_F)^N} \Psi^{H_q}(\widetilde{\w}) = \Psi^{H_q}\left(\x_{MAP}^{H_q}\right).
\end{equation*}
\label{th:conv}
\end{myTheorem}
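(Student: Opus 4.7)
The plan is to adapt the classical Geman–Geman simulated annealing argument \cite{Geman1984} to our setting, following the template developed by Jalali and Weissman for lossy compression via MCMC \cite{Jalali2008,Jalali2012}. First, I would set up the problem as a time-inhomogeneous Markov chain on the finite state space $(\replevels_F)^N$, whose transition kernel at super-iteration $t$ is the single-coordinate Gibbs update $\mathbb{P}_{s_t}(w_n = a \mid \w^{\backslash n})$ with inverse temperature $s_t = \ln(t+r_0)/(cN\Delta_q)$. Because $f(\widehat{\x}_l) \neq 0$ and all Gibbs weights are strictly positive, the chain is irreducible and aperiodic, and its unique stationary distribution at fixed $s$ is the Boltzmann PMF $\mathbb{P}_s(\cdot)$ defined in \eqref{eq:def_Boltzmann}.

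Next, I would verify the two ingredients that make the Geman–Geman cooling schedule applicable. First, the single-flip energy change is uniformly bounded by $\Delta_q$ in \eqref{eq:Deltaq}: this follows from Condition~\ref{cond:tech1} and the finiteness of $\replevels_F$, which together make both $NH_q(\cdot)$ and $\|\y-\A\w\|^2$ bounded over $(\replevels_F)^N$. Second, the Boltzmann PMF $\mathbb{P}_s(\cdot)$ concentrates on the set $\mathcal{M} = \arg\min_{\widetilde{\w}\in (\replevels_F)^N}\Psi^{H_q}(\widetilde{\w})$ as $s \to \infty$, since for any $\w \notin \mathcal{M}$ the ratio $\mathbb{P}_s(\w)/\mathbb{P}_s(\w^*)$ decays like $\exp(-s[\Psi^{H_q}(\w) - \Psi^{H_q}(\w^*)])$ for $\w^* \in \mathcal{M}$.

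With these ingredients in place, I would invoke the Geman–Geman convergence theorem: provided the cooling is slow enough that $s_t \le \ln(t+r_0)/(cN\Delta_q)$ for some constant $c>1$, the distribution of the chain at super-iteration $t$ converges in total variation to $\mathbb{P}_{s_t}(\cdot)$, and hence to the uniform measure on $\mathcal{M}$. This yields $\Psi^{H_q}(\w^r) \to \min_{\widetilde{\w}} \Psi^{H_q}(\widetilde{\w})$ almost surely as $r\to\infty$, and by definition this limit equals $\Psi^{H_q}(\x_{MAP}^{H_q})$.

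The main obstacle will be correctly bookkeeping the per-super-iteration (as opposed to per-single-flip) energy variation in the Geman–Geman cooling condition. The standard statement requires the schedule to be slower than $1/\log$ with constant calibrated to the \emph{maximum} energy difference between configurations that are reachable in one step of the chain; since a super-iteration consists of $N$ single-coordinate Gibbs updates, the relevant bound is $N\Delta_q$, which is precisely why the factor $N$ appears in the denominator of $s_t$ in \eqref{eq:st}. I would carry out this accounting carefully by borrowing the argument in \cite{Jalali2008,Jalali2012}, where the analogous bookkeeping is done for an MCMC-based lossy compressor; the only substantive change for our setting is to replace their rate-distortion energy by $\Psi^{H_q}(\w) = NH_q(\w) + c_4\|\y-\A\w\|^2$, which has the same Lipschitz-in-single-flip structure after noting that replacing $w_n$ changes at most $2q+1$ of the counts $n_q(\w,\alpha)[\beta]$ in \eqref{eq:nq}.
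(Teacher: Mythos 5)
Your proposal is correct and follows essentially the same route as the paper's proof: the paper likewise treats Algorithm~\ref{alg:MCMC} as a non-homogeneous Markov chain on $(\replevels_F)^N$, bounds the per-super-iteration energy oscillation by $N\Delta_q$ (Lemma~\ref{lemm:ergcoefbound} on Dobrushin's ergodic coefficient is exactly your ``bookkeeping'' step), shows the Boltzmann measures $\pi_{(t)}$ concentrate on the minimizer set, and concludes via weak and strong ergodicity with the $\sum_t t^{-1/c}=\infty$ divergence for $c>1$ --- i.e., it unpacks the Geman--Geman/Jalali--Weissman argument you propose to cite as a black box. The only caveat is that this machinery yields convergence of the law of $\w^r$ in total variation (hence convergence of $\Psi^{H_q}(\w^r)$ in probability) rather than the almost-sure convergence you assert, but that matches the level of precision in the theorem statement itself.
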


Theorem~\ref{th:conv} shows that Algorithm~\ref{alg:MCMC} matches the
best-possible performance of the universal MAP estimator as measured by the objective function $\Psi^{H_q}$, which should yield an MSE that is twice the MMSE (cf.\ Conjecture~\ref{conj:double_MMSE}). We want to remind the reader that Theorem~\ref{th:conv} is based on the stationarity and ergodicity of the source, which could have memory.
To gain some insight about the convergence process of MCMC, we focus on a fixed arbitrary sub-optimal sequence $\w\in(\replevels_F)^N$. Suppose that at super-iteration $t$ the energy for the algorithm's output $\Psi^{H_q}(\w)$ has converged to the steady state (see Appendix~\ref{ap:th:conv} for details on convergence). We can then focus on the probability ratio
$\displaystyle\rho_t=\mathbb{P}_{s_t}(\w)/\mathbb{P}_{s_t}\l(\x^{H_q}_{MAP}\r)$;
$\rho_t<1$ because $\x^{H_q}_{MAP}$ is the global minimum and has the largest Boltzmann probability over all $\w\in(\replevels_F)^N$, whereas $\w$ is sub-optimal. We then consider the same sequence $\w$ at super-iteration $t^2$; the inverse temperature is $2s_t$ and the corresponding ratio at super-iteration $t^2$ is (cf.~(\ref{eq:def_Boltzmann}))
\begin{equation*}
\frac{\mathbb{P}_{2s_t}(\w)}{\mathbb{P}_{2s_t}\l(\x^{H_q}_{MAP}\r)} = \frac{\text{exp}\l(-2s_t
\Psi^{H_q}(\w)\r)}{\text{exp}\l(-2s_t \Psi^{H_q}\l(\x^{H_q}_{MAP}\r)\r)} =
\left(  \frac{ \mathbb{P}_{s_t}(\w)}{\mathbb{P}_{s_t}\l(\x^{H_q}_{MAP}\r)}\right)^2.
\end{equation*}
That is, between super-iterations $t$ and $t^2$ the probability ratio $\rho_t$ is also squared, and the Gibbs sampler is less likely to generate samples whose energy differs significantly from the minimum energy w.r.t.\
$\x^{H_q}_{MAP}$. We infer from this argument that the probability concentration of our algorithm around the globally optimal energy w.r.t.\ $\x^{H_q}_{MAP}$ is linear in the number of super-iterations.

\subsection{Computational challenges}\label{sec:B-MCMC_complexity}

Studying the pseudocode of Algorithm~\ref{alg:MCMC}, we recognize that
Lines~\ref{sudo:fixedR_DeltaH}--\ref{sudo:fixedR:fs} must be implemented efficiently,
as they run $rN|\replevels_F|$ times. Lines~\ref{sudo:fixedR_DeltaH}
and~\ref{sudo:fixedR_Deltad} are especially challenging.

For Line~\ref{sudo:fixedR_DeltaH}, a naive update of $H_q(\w)$ has
complexity $O(|\replevels_F|^{q+1})$, cf.~(\ref{eq:def:H}). To address this problem,
Jalali and Weissman~\cite{Jalali2008,Jalali2012} recompute the empirical conditional
entropy in $O(q|\replevels_F|)$ time only for the $O(q)$ contexts whose corresponding
counts are modified~\cite{Jalali2008,Jalali2012}. The same approach can be used in
Line~\ref{sudo:fixedR:update}, again reducing computation from $O(|\replevels_F|^{q+1})$ to $O(q|\replevels_F|)$.
Some straightforward algebra allows us to convert Line~\ref{sudo:fixedR_Deltad} to a form that requires aggregate runtime of $O(Nr(M+|\replevels_F|))$.
Combined with the
computation for Line~\ref{sudo:fixedR_DeltaH}, and since $M \gg q |\replevels_F|^2$ (because $|\replevels_F|=\gamma^2, \gamma=\lceil\ln(N)\rceil, q=o(\log(N))$, and $M=O(N)$)
in practice, the entire runtime of our algorithm is $O(rMN)$.

The practical value of Algorithm~\ref{alg:MCMC} may be reduced due to its high
computational cost, dictated by the number of super-iterations $r$ required for convergence
to $\x^{H_q}_{MAP}$ and the large size of the reproduction alphabet. Nonetheless, Algorithm~\ref{alg:MCMC} provides a starting point toward further performance gains of more practical algorithms for computing $\x^{H_q}_{MAP}$, which are presented in Section~\ref{sec:adaptive}. Furthermore, our experiments in Section~\ref{sec:numerical} will show that the performance of the algorithm of Section~\ref{sec:adaptive} is comparable to and in many cases better than existing algorithms.

\section{Adaptive Reproduction Alphabet}
\label{sec:adaptive}
While Algorithm~\ref{alg:MCMC} is a first step toward universal signal estimation in CS, $N$ must be large enough to ensure that $\replevels_F$
quantizes a broad enough range of values of $\mathbb{R}$ finely enough to represent the
estimate $\x^{H_q}_{MAP}$ well. For large $N$, the estimation performance using the
reproduction alphabet~(\ref{eq:def:replevels2}) could suffer from high computational complexity.
On the other hand, for small $N$ the number of reproduction levels employed is insufficient to obtain acceptable performance. Nevertheless, using an excessive number of levels will slow down the convergence. Therefore, in this section, we explore techniques that tailor the reproduction alphabet adaptively to the signal being observed.

\subsection{Adaptivity in reproduction levels}\label{sec:L-MCMC}
To estimate better with finite $N$, we utilize reproduction levels that are
{\em adaptive} instead of the fixed levels in $\replevels_F$. To do so, instead of
$\w \in (\replevels_F)^N$, we optimize over a sequence $\u \in \Z^N$, where $|\Z| < |\breplevels|$ and $|\cdot|$ denotes the size.
The new reproduction alphabet $\Z$ does
not directly correspond to real numbers. Instead, there is an adaptive mapping
$\map: \Z \rightarrow \mathbb{R}$, and the reproduction levels are $\map(\Z)$. Therefore, we call $\Z$ the {\em adaptive} reproduction alphabet. Since the mapping $\map$ is one-to-one, we also refer to $\Z$ as reproduction levels.
Considering the energy function (\ref{eq:MCMC_energy}),
we now compute the empirical symbol counts $n_q(\u,\alpha)[\beta]$, order $q$
conditional empirical probabilities $\mathbb{P}_q(\u,\alpha)[\beta]$, and order $q$
conditional empirical entropy $H_q(\u)$ using $\u \in \Z^N$, $\alpha \in \Z^q$, and
$\beta \in \Z$, cf.~(\ref{eq:nq}), (\ref{eq:def:Pcond}), and~(\ref{eq:def:H}). Similarly, we
use $\|\y - \A \map(\u)\|^2$ instead of $\|\y - \A \w\|^2$, where $\map(\u)$ is the
straightforward vector extension of $\map$. These modifications yield an adaptive
energy function $\displaystyle \Psi^{H_q}_a(\u) \triangleq NH_q(\u) + c_4  \|\y - \A \map(\u)\|^2$.

We choose $\map_{opt}$ to optimize for minimum
squared error,
\begin{equation*}
\map_{opt} \triangleq \arg \min_{\map}\|\y-\A \map(\u)\|^2 = \arg \min_{\map}\left[\sum_{m=1}^M(y_m-[\A \map(\u)]_m)^2\right],
\end{equation*}
where $[\A \map(\u)]_m$ denotes the $m$-th entry of the vector $\A \map(\u)$.
The optimal mapping depends entirely on $\y$, $\A$, and $\u$. From a coding
perspective, describing $\map_{opt}(\u)$ requires $H_q(\u)$ bits for $\u$ and
$|\Z| b\log \log(N)$ bits for $\map_{opt}$ to match the resolution of the non-adaptive $\replevels_F$, with $b > 1$ an arbitrary constant~\cite{BaronWeissman2012}. The resulting coding length
defines our universal prior.

\textbf{Optimization of reproduction levels: }
We now describe the optimization procedure for $\map_{opt}$, which must be
computationally efficient. Write
\begin{align*}
\Upsilon(\map) \triangleq \|\y-\A \map(\u)\|^2 = \sum_{m=1}^M\left(y_m-\sum_{n=1}^N A_{mn}\map(u_n)\right)^2,
\end{align*}
where $A_{mn}$ is the entry of $\A$ at the $m$-th row, $n$-th column.
For $\Upsilon(\map)$ to be minimum, we need zero-valued derivatives as follows,
\begin{equation*}
\frac{d\Upsilon(\map)}{d\map(\beta)} = -2 \sum_{m=1}^M \left(y_m - \sum_{n=1}^N A_{mn} \map(u_n) \right)
\left( \sum_{n=1}^N A_{mn} \mathbbm{1}_{u_n=\beta} \right) = 0,~\forall~\beta \in \Z, \label{eq:derzero}
\end{equation*}
where the indicator function $\mathbbm{1}_{A}$ is 1 if the condition $A$ is met, else 0.
Define the location sets
$\displaystyle \N_\beta \triangleq \{ n: 1\le n \le N, u_n = \beta\}$
for each $\beta \in \Z$, and rewrite the derivatives of $\Upsilon(\map)$,
\begin{align}
\frac{d\Upsilon(\map)}{d\map(\beta)} =
-2 \sum_{m=1}^M \left( y_m - \sum_{\lambda \in \Z} \sum_{n \in \N_{\lambda}} A_{mn} \map(\lambda) \right)
\left( \sum_{n\in \N_\beta} A_{mn} \right).
\label{eq:derivative2}
\end{align}
Let the per-character sum column values be
\begin{equation}
\label{eq:mu_def}
\mu_{m\beta} \triangleq \sum_{n \in \N_{\beta}} A_{mn},
\end{equation}
for each $m\in\{1,\cdots,M\}$ and $\beta\in\Z$.
We desire the derivatives to be zero, cf.~(\ref{eq:derivative2}):
\begin{align*}
0 = \sum_{m=1}^M \left(y_m-\sum_{\lambda \in \Z}\map(\lambda) \mu_{m\lambda}\right) \mu_{m\beta}.
\end{align*}
Thus, the system of equations must be satisfied,
\begin{align}\label{eq:system_equation}
\sum_{m=1}^M y_m \mu_{m\beta}
=
\sum_{m=1}^M \left(\sum_{\lambda \in \Z} \map(\lambda) \mu_{m\lambda}\right) \mu_{m\beta}
\end{align}
for each $\beta \in \Z$. Consider now the right hand side,
\begin{align*}
\sum_{m=1}^M \left(\sum_{\lambda \in \alphabet} \map(\lambda) \mu_{m\lambda}\right)
\mu_{m\beta}
= \sum_{\lambda \in \alphabet} \map(\lambda) \sum_{m=1}^M \mu_{m\lambda} \mu_{m\beta},
\end{align*}
for each $\beta \in \Z$. The system of equations can be described in matrix form as follows,
\begin{equation*}
\label{eq:matrix_map_opt}
\overbrace{
\left[\begin{array}{ccc}
\sum_{m=1}^M \mu_{m\beta_1} \mu_{m\beta_1} & \cdots & \sum_{m=1}^M \mu_{m\beta_{|\alphabet|}} \mu_{m\beta_1}\\
\vdots & \ddots & \vdots\\
\sum_{m=1}^M \mu_{m\beta_1} \mu_{m\beta_{|\alphabet|}} & \cdots & \sum_{m=1}^M \mu_{m\beta_{|\alphabet|}} \mu_{m\beta_{|\alphabet|}}\\
\end{array}
\right]
}^{\Omega}
\overbrace{
\left[\begin{array}{c}
\map(\beta_1)\\
\vdots\\
\map(\beta_{|\alphabet|})
\end{array}\right]}
^{\map(\Z)}=
\overbrace{
\left[\begin{array}{c}
\sum_{m=1}^M y_m \mu_{m\beta_1}\\
\vdots\\
\sum_{m=1}^M y_m \mu_{m\beta_{|\alphabet|}}
\end{array}
\right].}
^{\Theta}
\end{equation*}
Note that by writing $\mu$ as a matrix with entries indexed by row $m$ and
column $\beta$ given by (\ref{eq:mu_def}), we can write $\Omega$ as a Gram
matrix, $\Omega=\mu^{\top} \mu$, and we also have $\Theta = \mu^{\top} \y$, cf.~(\ref{eq:system_equation}).
The optimal $\map$ can be computed as a $|\Z|\times 1$ vector
$\displaystyle \map_{opt} = \Omega^{-1} \Theta = (\mu^{\top}\mu)^{-1}\mu^{\top}\y$ if
$\Omega \in \mathbb{R}^{|\Z|\times |\Z|}$ is invertible. We note in passing that numerical
stability can be improved by regularizing $\Omega$. Note also that
\begin{equation}
\label{eqn:compute_ell2}
\|\y-\A \map(u)\|^2  =
\sum_{m=1}^M \left(
y_m - \sum_{\beta \in \Z} \mu_{m\beta} \map_{opt}(\beta)
\right)^2,
\end{equation}
which can be computed in $O(M|\Z|)$ time instead of $O(MN)$.

\begin{algorithm}[!t]
\caption{Level-adaptive MCMC} \label{alg:MCMCAL}
\begin{algorithmic}[1]
\State *{\bf Inputs}: Initial mapping $\map$, sequence $\u$, adaptive alphabet $\Z$, noise variance $\sigma_Z^2$, number of super-iterations $r$, temperature constant $c>1$, context depth $q$, and temperature offset $r_0$
\State Compute $n_q(\u,\alpha)[\beta],~\forall~\alpha \in \Z^q$, $\beta \in \Z$
\State *Initialize $\Omega$\label{Algo2:init_omega}
\For{$t=1$ to $r$} \Comment{super-iteration}
\State $s \leftarrow \ln(t+r_0)/(cN\Delta_q)$ \Comment{$s=s_t$, cf.~(\ref{eq:st})}
\State Draw permutation $\{1,\cdots,N\}$ at random
\For{$t'=1$ to $N$} \Comment{iteration}
\State Let~$n$ be component $t'$ in permutation
\For{all $\beta$ in $\Z$} \Comment{possible new $u_n$}
\State Compute $\Delta H_q(\u,n,\beta,u_n)$ \label{Algo2:DeltaH}
\State *Compute $\mu_{m\beta},\forall\ m \in\{1,\cdots,M\}$ \label{Algo2:mu}
\State *Update $\Omega$ \label{Algo2:Omega} \Comment{$O(1)$ rows and columns}
\State *Compute $\map_{opt}$ \Comment{invert $\Omega$} \label{Algo2:map_opt}
\State Compute $\|\y-\A\map(\u_1^{n-1}\beta \u_{n+1}^N)\|^2$ \label{Algo2:ell2}
\State Compute $\mathbb{P}_s(u_n=\beta|\u^{\backslash n})$ \label{Algo2:distribution}
\EndFor
\State *$\widetilde{u}_n \leftarrow u_n$  \Comment{save previous value}\label{Algo2:save_sequence}
\State Generate $u_n$ using $\mathbb{P}_s(\cdot|\u^{\backslash n})$ \Comment{Gibbs}
\State Update $n_q(\cdot)[\cdot]$ at $O(q)$ relevant locations
\State *Update $\mu_{m\beta},~\forall~m$, $\beta\in \{u_n,\widetilde{u}_n\}$\label{Algo2:update_mu}
\State *Update $\Omega$ \Comment{$O(1)$ rows and columns}\label{Algo2:update_omega}
\EndFor
\EndFor
\State *{\bf Outputs}: Return approximation $\map(\u)$ of $\x^U_{MAP},~\Z$, and temperature offset $r_0+r$
\end{algorithmic}
\end{algorithm}

\textbf{Computational complexity: }
Pseudocode for level-adaptive MCMC (L-MCMC) appears in Algorithm~\ref{alg:MCMCAL}, which resembles Algorithm~\ref{alg:MCMC}.
The initial mapping $\map$ is inherited from a quantization of the initial point $\x^*$, $r_0=0$ ($r_0$ takes different values in Section~\ref{sec:adaptive_size}), and other minor differences between B-MCMC and L-MCMC appear in lines marked by asterisks.

We discuss computational requirements for each line of the pseudocode that is run
within the inner loop.
\begin{itemize}
\item
Line~\ref{Algo2:DeltaH} can be computed in $O(q|\Z|)$ time (see discussion of Line~\ref{sudo:fixedR_DeltaH} of B-MCMC in Section~\ref{sec:B-MCMC_complexity}).
\item
Line~\ref{Algo2:mu} updates $\mu_{m\beta}$ for $m\in\{1,\cdots,M\}$ in $O(M)$ time.
\item
Line~\ref{Algo2:Omega}
updates $\Omega$. Because we only need to update
$O(1)$ columns and $O(1)$ rows, each such column and
row contains $O(|\Z|)$ entries, and each entry is a sum over $O(M)$
terms, we need $O(M|\Z|)$ time.
\item
Line~\ref{Algo2:map_opt}
requires inverting $\Omega$ in $O(|\Z|^3)$ time.
\item
Line~\ref{Algo2:ell2}
requires $O(M|\Z|)$ time, cf.~(\ref{eqn:compute_ell2}).
\item
 Line~\ref{Algo2:distribution} requires $O(|\Z|)$ time.
\end{itemize}
In practice we typically have $M \gg |\Z|^2$,
and so the aggregate complexity is $O(rMN|\Z|)$, which is
greater than the computational complexity of Algorithm~\ref{alg:MCMC} by a factor of $O(|\Z|)$.

\subsection{Adaptivity in reproduction alphabet size}\label{sec:adaptive_size}

While Algorithm~\ref{alg:MCMCAL} adaptively maps $\u$ to $\mathbb{R}^N$, the signal estimation quality heavily depends on $|\Z|$. Denote the true alphabet of the signal by $\X,~\x\in {\X}^N$; if the signal is continuous-valued, then $|\X|$ is infinite. Ideally we want to
employ as many levels as the runtime allows for continuous-valued signals,
whereas for discrete-valued signals we want $|\Z|=|\X|$.
Inspired by this observation, we propose to
begin with some initial $|\Z|$, and then adaptively adjust $|\Z|$ hoping to match $|\X|$. Hence, we propose the size- and level-adaptive MCMC algorithm (Algorithm~5.3), which invokes L-MCMC (Algorithm~\ref{alg:MCMCAL}) several times.

\textbf{Three basic procedures: }
In order to describe the size- and level-adaptive MCMC (SLA-MCMC) algorithm in detail, we introduce three alphabet adaptation procedures as follows.
\begin{itemize}
  \item {\em MERGE}: First, find the closest adjacent levels $\beta_1,\beta_2\in \Z$. Create a new level $\beta_3$ and add it to $\Z$. Let $\map(\beta_3)=(\map(\beta_1)+\map(\beta_2))/2$.
  Replace $u_i$ by $\beta_3$ whenever $u_i\in\{\beta_1,\beta_2\}$. Next, remove $\beta_1$ and $\beta_2$ from $\Z$.
\item {\em ADD-out}: Define the range $R_{\map}=[\min\map(\Z),$ $\max\map(\Z)]$, and $\mathcal{I}_{R_{\map}}=\max\map(\Z)-\min\map(\Z)$. Add a {\em lower } level $\beta_3$ and/or {\em upper level} $\beta_4$ to $\Z$ with
    \begin{eqnarray*}
    \map(\beta_3)=\min\map(\Z)-\frac{\mathcal{I}_{R_{\map}}}{|\Z|-1},\\ \map(\beta_4)=\max\map(\Z)+\frac{\mathcal{I}_{R_{\map}}}{|\Z|-1}.
    \end{eqnarray*}Note that $\l|\{u_i: u_i=\beta_3 \mbox{ or } \beta_4, i=1,\cdots,N\}\r|=0$, i.e., the new levels are empty.
\item {\em ADD-in}: First, find the most distant adjacent levels, $\beta_1$ and $\beta_2$. Then, add a level $\beta_3$ to $\Z$ with $\map(\beta_3)=(\map(\beta_1)+\map(\beta_2))/2$.
    For $i\in\{1,\cdots,|\Z|\}$ s.t. $u_i=\beta_1$, replace $u_i$
    by $\beta_3$ with probability
    \begin{equation*}
    \frac{\mathbb{P}_s(u_i=\beta_2)}{\mathbb{P}_s(u_i=\beta_1)+\mathbb{P}_s(u_i=\beta_2)},
    \end{equation*}where $\mathbb{P}_s(\cdot)$ is given in~(\ref{eq:def_Boltzmann}); for $i\in\{1,\cdots,|\Z|\}$ s.t. $u_i=\beta_2$, replace $u_i$ by $\beta_3$ with probability
    \begin{equation*}
    \frac{\mathbb{P}_s(u_i=\beta_1)}{\mathbb{P}_s(u_i=\beta_1)+\mathbb{P}_s(u_i=\beta_2)}.
    \end{equation*}Note that $\l|\{u_i: u_i=\beta_3, i=1,\cdots,N\}\r|$ is typically non-zero, i.e., $\beta_3$ tends not to be empty.
\end{itemize}
We call the process of running one of these procedures followed by running L-MCMC a {\em round}.

\begin{figure*}[t]
\begin{center}
\includegraphics[width=\textwidth]{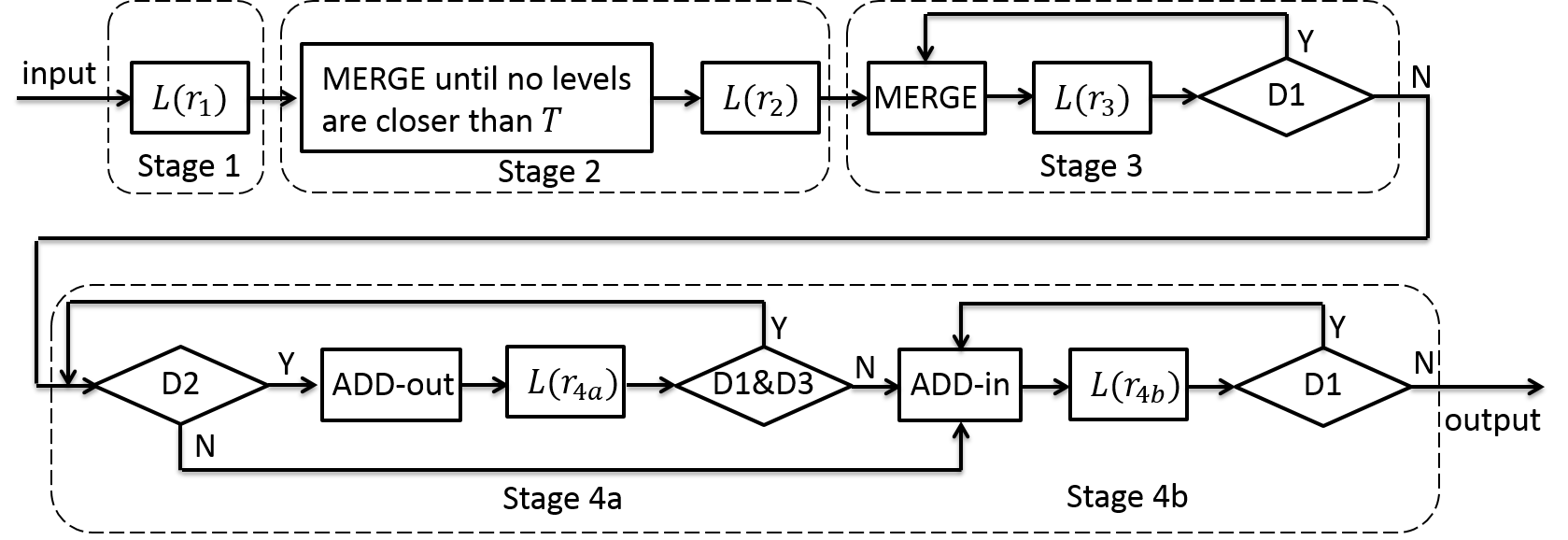}
\end{center}
\caption{Flowchart of Algorithm~5.3 (size- and level-adaptive MCMC).
L($r$) denotes running L-MCMC for $r$ super-iterations. The parameters $r_1$,$r_2$,$r_3$,$r_{4a}$, and $r_{4b}$ are the number of super-iterations used in Stages~1 through~4, respectively. Criteria $D1-D3$ are described in the text.}\label{fig:SLA-MCMC}
\end{figure*}

\textbf{Size- and level-adaptive MCMC: }
SLA-MCMC is conceptually illustrated in the flowchart in Figure~\ref{fig:SLA-MCMC}. It has four stages, and in each stage we will run L-MCMC for several super-iterations; we denote the execution of L-MCMC for $r$ super-iterations by L($r$). The parameters $r_1,r_2,r_3,r_{4a}$, and $r_{4b}$ are the number of super-iterations used in Stages~1 through~4, respectively. The choice of these parameters reflects a trade-off between runtime and estimation quality.

In Stage~1, SLA-MCMC uses a fixed-size adaptive reproduction alphabet $\Z$ to tentatively estimate the signal.
The initial point of Stage~1 is obtained in the same way as L-MCMC.
After Stage~1, the initial point and temperature offset for each instance of L-MCMC correspond to the respective outputs of the previous instance of L-MCMC.
If the source is discrete-valued and $|\Z|>|\X|$ in Stage~1, then multiple levels in the output $\Z$ of Stage~1 may correspond to a single level in $\X$.
To alleviate this problem,  in Stage~2 we merge levels closer than $T=\mathcal{I}_{R_{\map}}/\left(K_1\times(|\Z|-1)\right)$, where $K_1$ is a parameter.

However, $|\Z|$ might still be larger than needed; hence in Stage~3 we tentatively merge the closest adjacent levels. The criterion $D1$ evaluates whether the current objective function is lower (better) than in the previous round;
we do not leave Stage~3 until $D1$ is violated. Note that if $|\X|>|\Z|$ (this always holds for continuous-valued signals), then ideally SLA-MCMC should not merge any levels in Stage~3, because the objective function would increase if we merge any levels.

Define the outlier set $S=\{x_i: x_i\notin R_{\map}, i=1,\cdots,N\}$. Under Condition~\ref{cond:tech1}, $S$ might be small or even empty.
When $S$ is small, L-MCMC might not assign levels to represent the entries of $S$.
To make SLA-MCMC more robust to outliers, in Stage~4a we add empty levels outside the range $R_{\map}$ and then allow L-MCMC to change entries of $\u$ to the new levels during Gibbs sampling; we call this
{\em populating} the new levels.
If a newly added outside level is not populated, then we remove it from $\Z$. Seeing that the optimal mapping $\map_{opt}$ in L-MCMC tends not to map symbols to levels with low population, we
consider a criterion $D2$ where we
will add an outside upper (lower) level if the population of the current upper (lower) level is smaller than $N/(K_2|\Z|)$, where $K_2$ is a parameter.
That is, the criterion $D2$ is violated if both populations of the current upper and lower levels are sufficient (at least $N/(K_2|\Z|)$); in this case we do not need to add outside levels because $\map_{opt}$ will map some of the current levels to represent the entries in $S$.
The criterion $D3$ is violated if all levels added outside are not populated by the end of the round. SLA-MCMC keeps adding levels outside $R_{\map}$ until it is wide enough to cover most of the entries of $\x$.

Next, SLA-MCMC considers adding levels inside $R_{\map}$ (Stage~4b). If the signal is discrete-valued, this stage should stop when $|\Z|=|\X|$.
Else, for continuous-valued signals SLA-MCMC can add levels until the runtime expires.

In practice, SLA-MCMC runs L-MCMC at most a constant number of times, and the computational complexity is in the same order of L-MCMC, i.e., $O(rMN|\Z|)$. On the other hand, SLA-MCMC allows varying
$|\Z|$, which often improves the estimation quality.

\subsection{Mixing}\label{sec:mix}

Donoho proved for the scalar channel setting that $\x_{KS}$ is sampled from the posterior $\mathbb{P}_{X|Y}(\x|\y)$~\cite{DonohoKolmogorov}.
Seeing that the Gibbs sampler used by MCMC (cf.\ Section~\ref{sec:B-MCMC}) generates random samples, and the
outputs of our algorithm will be different if its random number generator is initialized with different {\em random seeds}, we speculate that running SLA-MCMC several times
will also yield independent samples from the posterior, where we note that
the runtime grows linearly in the number of times that we run SLA-MCMC.
By mixing (averaging over) several outputs of SLA-MCMC, we obtain $\widehat{\x}_{\mbox{avg}}$, which may
have lower squared error w.r.t.\ the true $\x$ than the average squared error obtained by a single SLA-MCMC output.
Numerical results suggest that mixing indeed reduces the MSE (cf.\ Figure~\ref{fig:MUnif_algos});
this observation suggests that mixing the outputs of multiple algorithms, including running a
random signal estimation algorithm several times, may reduce the squared error.

\section{Numerical Results} \label{sec:numerical}

In this section, we demonstrate that SLA-MCMC is comparable and in many cases better than existing algorithms in estimation quality,
and that SLA-MCMC is applicable when $M>N$. Additionally, some numerical evidence is provided to justify Conjecture~\ref{conj:double_MMSE} in Section~\ref{sec:conjecture}.
Then, the advantage of SLA-MCMC in estimating low-complexity signals is demonstrated. Finally, we compare B-MCMC, L-MCMC, and SLA-MCMC performance.

We implemented SLA-MCMC in Matlab\footnote{A toolbox that runs the simulations in this chapter is available at http://people.engr.ncsu.edu/dzbaron/software/UCS
\_BaronDuarte/} and tested it using several stationary ergodic sources.
Except when noted, for each source, signals $\x$ of length $N=10000$ were generated.
Each such $\x$ was multiplied by a Gaussian random matrix $\A$ with normalized columns and corrupted by i.i.d. Gaussian measurement noise $\z$.
Except when noted, the number of measurements $M$ varied between 2000 and 7000.
The noise variance $\sigma_Z^2$ was selected to ensure that the signal-to-noise ratio (SNR) was $5$ or $10$~dB; SNR was defined as
$\displaystyle \mbox{SNR}=10\log_{10}\left[(N \mathbb{E}[x^2])/(M\sigma_Z^2)\right]$.
According to Section~\ref{sec:compressor}, the context depth $q=o(\log(N))$, where the base of the logarithm is the alphabet size; using typical values such as $N=10000$ and $|\Z|=10$, we have $\log(N)=4$ and set $q=2$. While larger $q$ will slow down the algorithm, it might be necessary to increase $q$ when $N$ is larger. The numbers of super-iterations in different stages of SLA-MCMC are $r_{1}=50$ and $r_2=r_3=r_{4a}=r_{4b}=10$, the maximum total number of super-iterations is set to $240$, the initial number of levels is $|\Z|=7$, and the tuning parameters from Section~\ref{sec:adaptive_size} are $K_1,K_2=10$; these parameters seem to work well on an extensive set of numerical experiments.
SLA-MCMC was not given the true alphabet $\X$ for any of the sources presented in this chapter; our expectation is that it should adaptively adjust $|\Z|$ to match $|\X|$.
The final estimate $\widehat{\x}_{\mbox{avg}}$ of each signal was obtained by averaging over the outputs $\widehat{\x}$ of $5$
runs of SLA-MCMC, where in each run we initialized the random number generator with another random seed, cf.\ Section~\ref{sec:mix}. These choices of parameters seemed to provide a reasonable compromise between runtime and estimation quality.

We chose our performance metric as the mean signal-to-distortion ratio (MSDR) defined as
$\displaystyle \mbox{MSDR}=10\log_{10}\left(\mathbb{E}[x^2]/\mbox{MSE}\right)$. For each $M$ and SNR, the MSE was obtained after averaging over the squared errors of $\widehat{\x}_{\mbox{avg}}$ for 50 draws of $\x$, $\A$, and $\z$.
We compared the performance of SLA-MCMC to
that of ({\em i}) compressive sensing matching pursuit (CoSaMP)~\cite{Cosamp08}, a greedy method;
({\em ii}) gradient projection for sparse reconstruction (GPSR)~\cite{GPSR2007}, an optimization-based method;
({\em iii}) message passing approaches (for each source, we chose best-matched algorithms between EM-GM-AMP-MOS (EGAM for short)~\cite{EMGMTSP} and turboGAMP (tG for short)~\cite{turboGAMP}); and
({\em iv}) Bayesian compressive sensing~\cite{BCS2008} (BCS).
Note that EGAM~\cite{EMGMTSP} places a Gaussian mixture (GM) prior on the signal, and tG~\cite{turboGAMP} builds a prior set including the priors for the signal, the support set of the signal, the channel, and the amplitude structure. Both algorithms learn the parameters of their assumed priors online from the measurements. We compare the computational complexities of the algorithms above in Table~\ref{table:complexity}, where
$L$ bounds the cost of a matrix-vector multiply with $\A$ or the Hermitian transpose of $\A$, and $\epsilon$ is a given precision parameter~\cite{Cosamp08}; $r_P,r_E, r_G, r_M$ are the number of GPSR~\cite{GPSR2007}, expectation maximization (EM), GAMP~\cite{RanganGAMP2011ISIT}, and model selection~\cite{EMGMTSP} iterations, respectively; $T_1$ and $T_2$ are the average complexities for the EM algorithm and the turbo updating scheme~\cite{turboGAMP}.
\begin{table}[t!]
\caption{Computational complexity} 
\centering 
\begin{tabular}{c c} 
\hline\hline 
Algorithms & Complexity\\ [0.5ex] 
\hline 
SLA-MCMC & $O(rMN|\Z|)$    \\ 
CoSaMP & $O(L\log\frac{\|\x\|}{\epsilon})$\\
GPSR & $O(r_P MN)$ \\
EGAM & $O(r_M r_E T_1+r_M r_E r_G MN)$ \\
tG & $O(r_E T_2+r_E r_G MN)$ \\ [1ex] 
\hline 
\end{tabular}
\label{table:complexity} 
\end{table}
Because all these algorithms are iterative algorithms and require different number of iterations to converge or reach a satisfactory estimation quality, we also report their typical runtimes here. Typical runtimes are $1$ hour (for continuous-valued signals) and $15$ minutes (discrete-valued)
per random seed for SLA-MCMC, $30$ minutes for EGAM~\cite{EMGMTSP} and tG~\cite{turboGAMP}, and $10$ minutes for CoSaMP~\cite{Cosamp08} and GPSR~\cite{GPSR2007} on an Intel(R) Core(TM) i7 CPU 860 @ 2.8GHz with 16.0GB RAM running 64 bit Windows 7. The performance of BCS was roughly 5~dB below SLA-MCMC results. Hence, BCS results are not shown in the sequel. We emphasize that algorithms that use training data (such as dictionary learning)~\cite{Ramirez2011,AharoEB_KSVD,Mairal2008,Zhoul2011} will find our problem size $N=10000$ too large, because they need a training set that has more than $N$ signals. On the other hand, SLA-MCMC does not need to train itself on any training set, and hence is advantageous.

Among these baseline algorithms designed for i.i.d. signals, GPSR~\cite{GPSR2007} and EGAM~\cite{EMGMTSP} only need $\y$ and $\A$, and CoSaMP~\cite{Cosamp08} also needs the number of non-zeros in $\x$. Only tG~\cite{turboGAMP} is designed for non-i.i.d. signals; however, it must be aware of the probabilistic model of the source. Finally, GPSR~\cite{GPSR2007} performance was similar to that of CoSaMP~\cite{Cosamp08} for all sources considered in this section, and thus is not plotted.

\begin{figure}[t]
\begin{center}
\includegraphics[width=80mm]{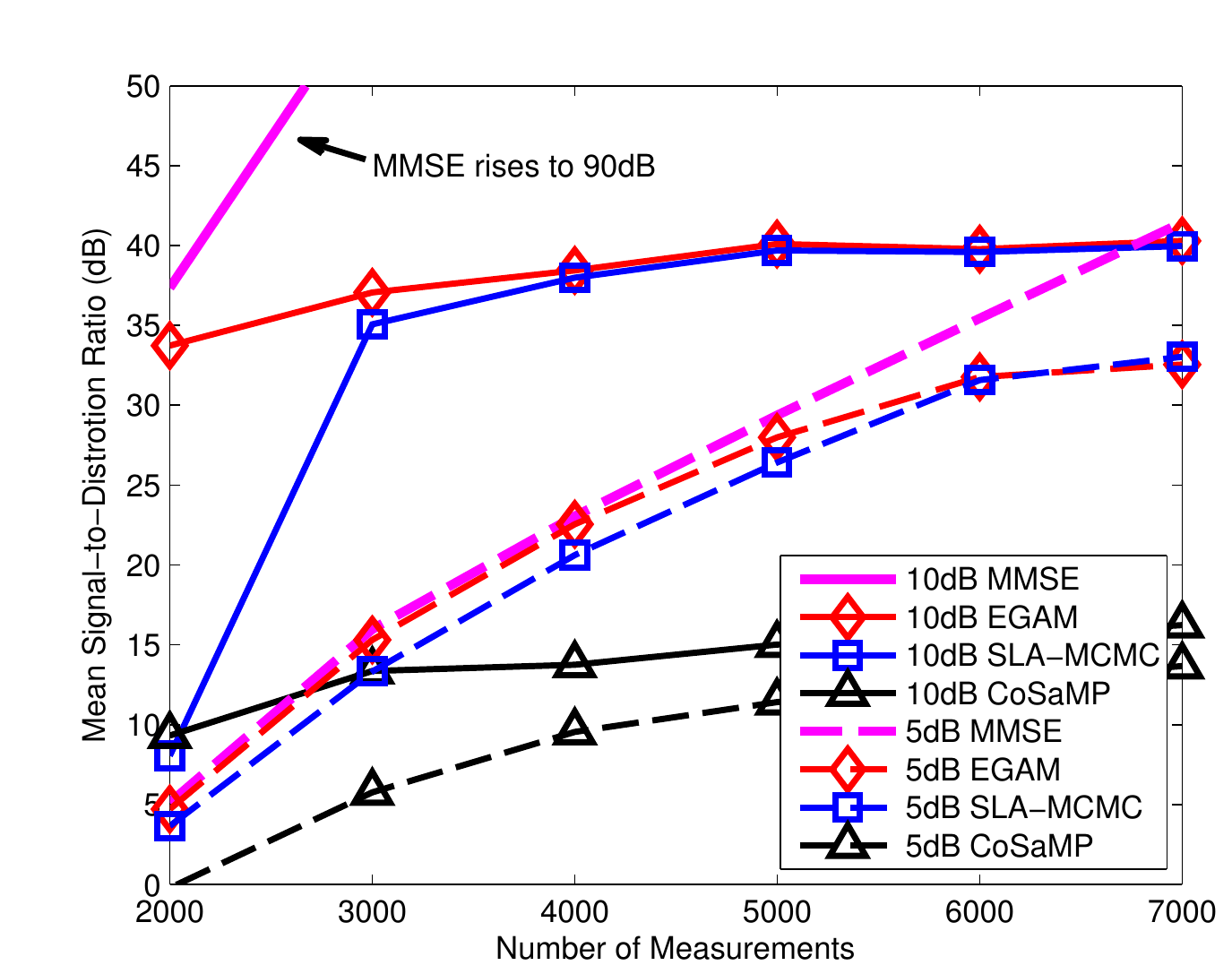}
\end{center}
\caption{SLA-MCMC, EGAM, and CoSaMP estimation results for a source with i.i.d. Bernoulli entries with non-zero probability of $3\%$ as
a function of the number of Gaussian random measurements $M$ for different SNR values ($N=10000$).}\label{fig:Ber}
\end{figure}

\subsection{Performance on discrete-valued sources}
{\bf Bernoulli source:} We first present results for an i.i.d. Bernoulli source. The Bernoulli source followed the distribution $f_X(x)=0.03\delta(x-1)+0.97\delta(x)$, where $\delta(\cdot)$ is the Dirac delta function. Note that SLA-MCMC did not know $\X=\{0,1\}$ and had to estimate it on the fly.
We chose EGAM~\cite{EMGMTSP} for message passing algorithms because it fits the signal with GM's,
which can accurately characterize signals from an i.i.d. Bernoulli source.
The resulting MSDR's for SLA-MCMC, EGAM~\cite{EMGMTSP}, and CoSaMP~\cite{Cosamp08} are plotted in Figure~\ref{fig:Ber}.
We can see that when $\mbox{SNR}=5~\mbox{dB}$, EGAM~\cite{EMGMTSP} approaches the MMSE~\cite{ZhuBaronCISS2013} performance for low to medium $M$; although SLA-MCMC is often worse than EGAM~\cite{EMGMTSP}, it is within $3$~dB of the MMSE performance.
This observation that SLA-MCMC approaches the MMSE for $\mbox{SNR}=5~\mbox{dB}$ partially substantiates Conjecture~\ref{conj:double_MMSE} in Section~\ref{sec:conjecture}. When $\mbox{SNR}=10~\mbox{dB}$, SLA-MCMC is comparable to EGAM~\cite{EMGMTSP} when $M\geq 3000$. CoSaMP~\cite{Cosamp08} has worse MSDR.

\begin{figure}[t]
\begin{center}
\includegraphics[width=80mm]{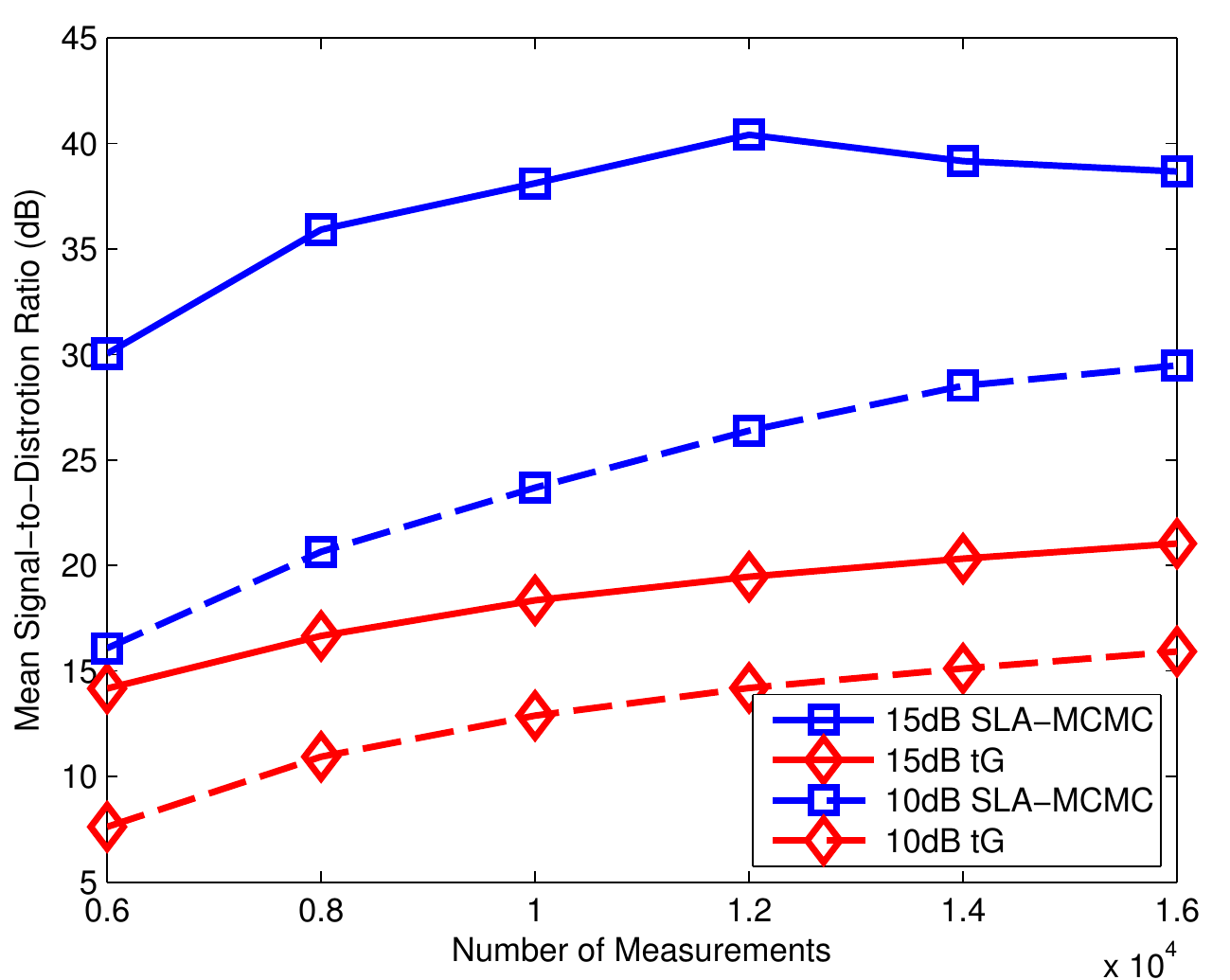}
\end{center}
\caption{SLA-MCMC and tG estimation results for a dense two-state Markov source with non-zero entries drawn from a Rademacher ($\pm 1$) distribution as
a function of the number of Gaussian random measurements $M$ for different SNR values ($N=10000$).}\label{fig:denseMRad}
\end{figure}

{\bf Dense Markov-Rademacher source:}
Considering that most algorithms are designed for i.i.d. sources, we now illustrate the performance of SLA-MCMC on non-i.i.d. sources by simulating a dense Markov-Rademacher (MRad for short) source. The non-zero entries of the dense MRad signal were generated by a two-state Markov state machine (non-zero and zero states). The transition from zero to non-zero state for adjacent entries had probability $\mathbb{P}_{01}=\frac{3}{70}$, while the transition from non-zero to zero state for adjacent entries had probability $\mathbb{P}_{10}=0.10$; these parameters yielded $30\%$ non-zero entries on average. The non-zeros were drawn from a Rademacher distribution, which took values $\pm1$ with equal probability.
With such denser signals, we may need to take more measurements and/or require higher SNR's to achieve similar performance to previous examples.
The number of measurements varied from $6000$ to $16000$, with $\mbox{SNR}=10~\mbox{and}~15~\mbox{dB}$.
Although tG~\cite{turboGAMP} does not provide an option that accurately characterize the MRad source, we still chose
to compare against its performance because it is applicable to non-i.i.d. signals.
The MSDR's for SLA-MCMC and tG~\cite{turboGAMP} are plotted in Figure~\ref{fig:denseMRad}. CoSaMP~\cite{Cosamp08} performs poorly as it is designed for sparse signal estimation, and its results are not shown. Although tG~\cite{turboGAMP} is designed for non-i.i.d. sources, it is nonetheless outperformed by SLA-MCMC.
This example shows that SLA-MCMC estimates non-i.i.d. signals well and is applicable to general linear inverse problems. However, recall that the computational complexity of SLA-MCMC is $O(rMN|\Z|)$. Hence, despite the appealing performance of SLA-MCMC shown in this example, we will suffer from high computational time when we have to apply SLA-MCMC in the case when $M>N$.

\subsection{Performance on continuous sources}\label{sec:bvsub}

We now discuss the performance of SLA-MCMC in estimating continuous sources.

{\bf Sparse Laplace (i.i.d.) source:} For unbounded continuous-valued signals, which do not adhere to Condition~\ref{cond:tech1}, we simulated an i.i.d. sparse Laplace source following the random variable $X=X_B X_L$,
where $X_B\sim \text{Ber}(0.03)$ is a Bernoulli random variable and $X_L$ follows a Laplace distribution with mean zero and variance one.
We chose EGAM~\cite{EMGMTSP} for message passing algorithms because it fits the signal with GM,
which can accurately characterize signals from an i.i.d. sparse Laplace source.
The MSDR's for SLA-MCMC, EGAM~\cite{EMGMTSP}, and CoSaMP~\cite{Cosamp08} are plotted in Figure~\ref{fig:sparseL}.
We can see that EGAM~\cite{EMGMTSP} approaches the MMSE~\cite{ZhuBaronCISS2013} performance in all settings;
SLA-MCMC outperforms CoSaMP~\cite{Cosamp08}, while it is approximately $2$~dB worse than the MMSE.
Recall from Conjecture~\ref{conj:double_MMSE} that we expect to achieve twice the MMSE, which is approximately $3$~dB below the signal-to-distortion ratio of MMSE, and thus SLA-MCMC performance is reasonable.
This example of SLA-MCMC performance approaching the MMSE further substantiates Conjecture~\ref{conj:double_MMSE}.

\begin{figure}[t]
\begin{center}
\includegraphics[width=80mm]{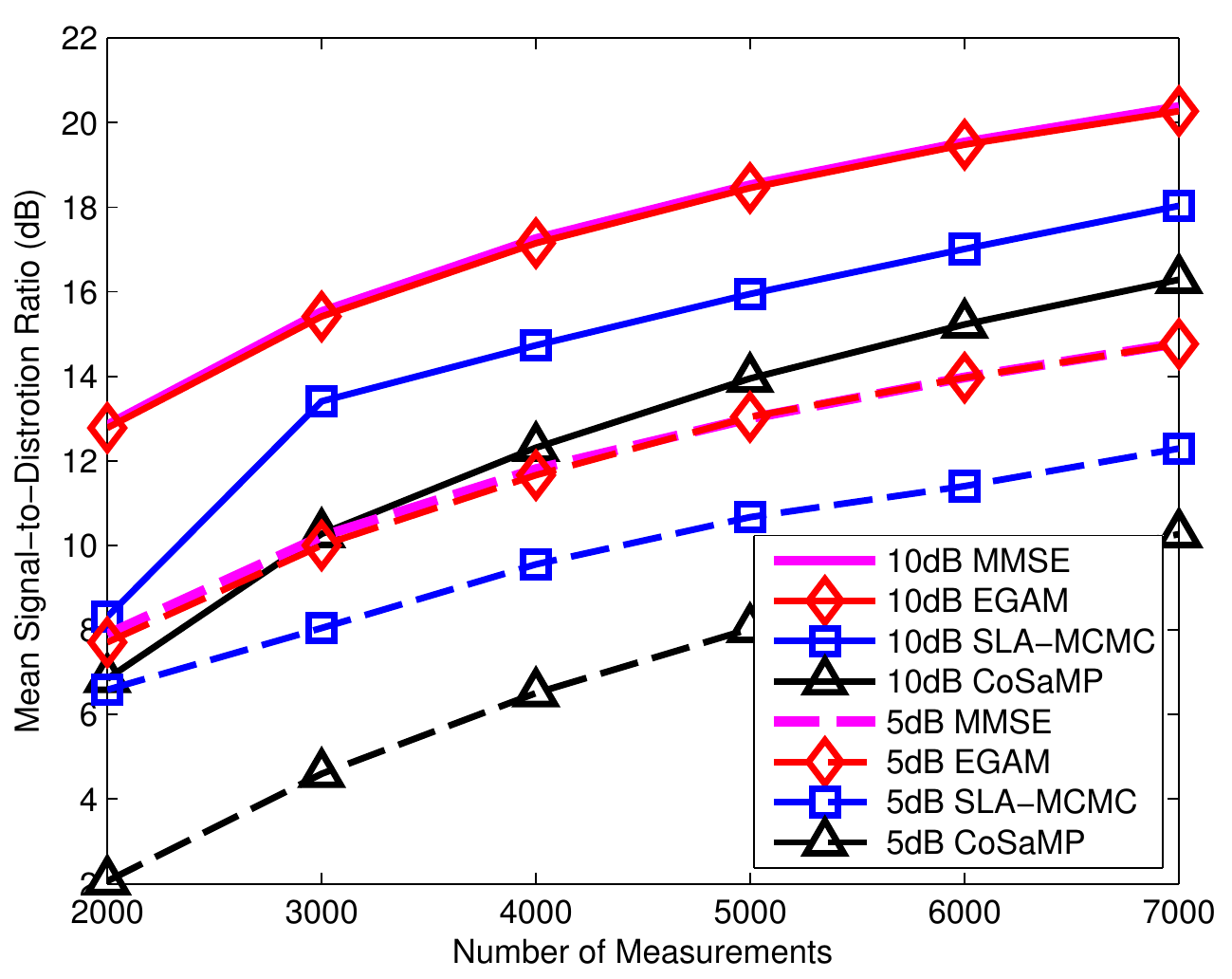}
\end{center}
\caption{SLA-MCMC, EGAM, and CoSaMP estimation results for an i.i.d. sparse Laplace source as
a function of the number of Gaussian random measurements $M$ for different SNR values ($N=10000$).}\label{fig:sparseL}
\end{figure}

{\bf Markov-Uniform source:} For bounded continuous-valued signals, which adhere to Condition~\ref{cond:tech1}, we simulated a Markov-Uniform (MUnif for short) source, whose non-zero entries were generated
by a two-state Markov state machine (non-zero and zero states) with $\mathbb{P}_{01}=\frac{3}{970}$ and $\mathbb{P}_{10}=0.10$; these parameters yielded $3\%$ non-zero entries on average. The non-zero entries were drawn from
a uniform distribution between $0$ and $1$.
We chose tG with Markov support and GM model options~\cite{turboGAMP} for message passing algorithms.
We plot the resulting MSDR's for SLA-MCMC, tG~\cite{turboGAMP}, and CoSaMP~\cite{Cosamp08} in Figure~\ref{fig:MUnif}.
We can see that the CoSaMP~\cite{Cosamp08} lags behind in MSDR. The SLA-MCMC curve is close to that of tG~\cite{turboGAMP} when $\mbox{SNR}=10~\mbox{dB}$, and it is slightly better than tG~\cite{turboGAMP} when $\mbox{SNR}=5~\mbox{dB}$.

When the signal model is known, the message passing approaches EGAM~\cite{EMGMTSP} and tG~\cite{turboGAMP} achieve quite low MSE's, because they can get close to the Bayesian MMSE. Sometimes the model is only known imprecisely, and SLA-MCMC can improve over message passing; for example, it is better than tG~\cite{turboGAMP} in estimating MUnif signals (Figure~\ref{fig:MUnif}), because tG~\cite{turboGAMP} approximates the uniformly distributed non-zeros by GM.

\begin{figure}[t]
\begin{center}
\includegraphics[width=80mm]{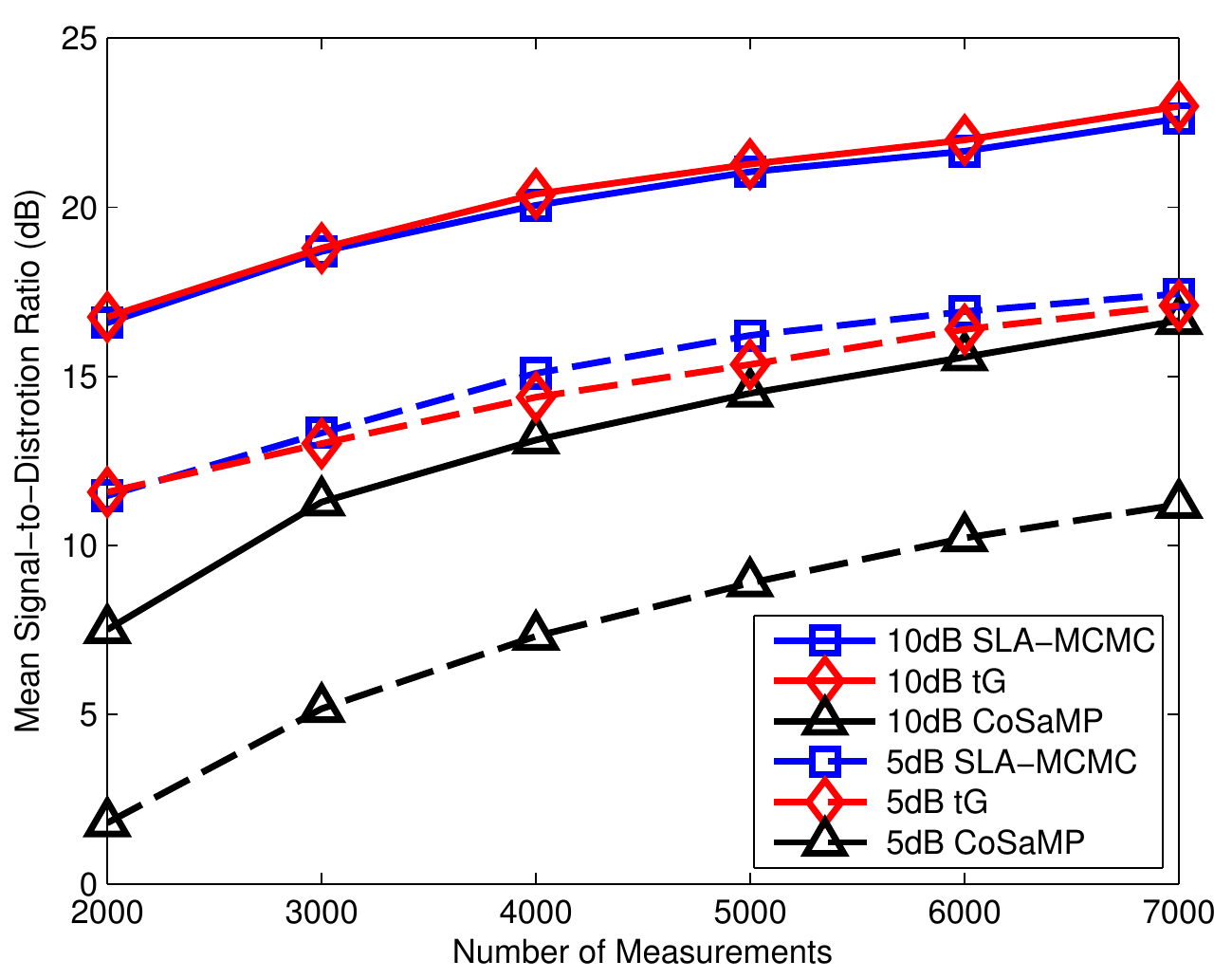}
\end{center}
\caption{SLA-MCMC, tG, and CoSaMP estimation results for a two-state Markov source with non-zero entries drawn from a uniform distribution $U[0,1]$ as a function of the number of Gaussian random measurements $M$ for different SNR values ($N=10000$).\label{fig:MUnif}}
\end{figure}

\subsection{Comparison between discrete and continuous sources}

When the source is continuous (Figures~\ref{fig:sparseL} and~\ref{fig:MUnif}), SLA-MCMC might be worse than the existing message passing approaches (EGAM~\cite{EMGMTSP} and tG~\cite{turboGAMP}).
One reason for the under-performance of SLA-MCMC is the $3$~dB gap of Conjecture~\ref{conj:double_MMSE}. The second reason is that SLA-MCMC can only assign finitely many levels to approximate continuous-valued signals, leading to under-representation of the signal. However, when it comes to discrete-valued signals that have finite size alphabets (Figures~\ref{fig:Ber} and~\ref{fig:denseMRad}), SLA-MCMC is comparable to and in many cases better than existing algorithms. Nonetheless, we observe in the figures that SLA-MCMC is far from the state-of-the-art when the SNR is high and measurement rate is low. Additionally, the dense MRad source in Figure~\ref{fig:denseMRad} has only a limited number of discrete levels and may not provide a general enough example.

\begin{figure}[t]
\begin{center}
\includegraphics[width=80mm]{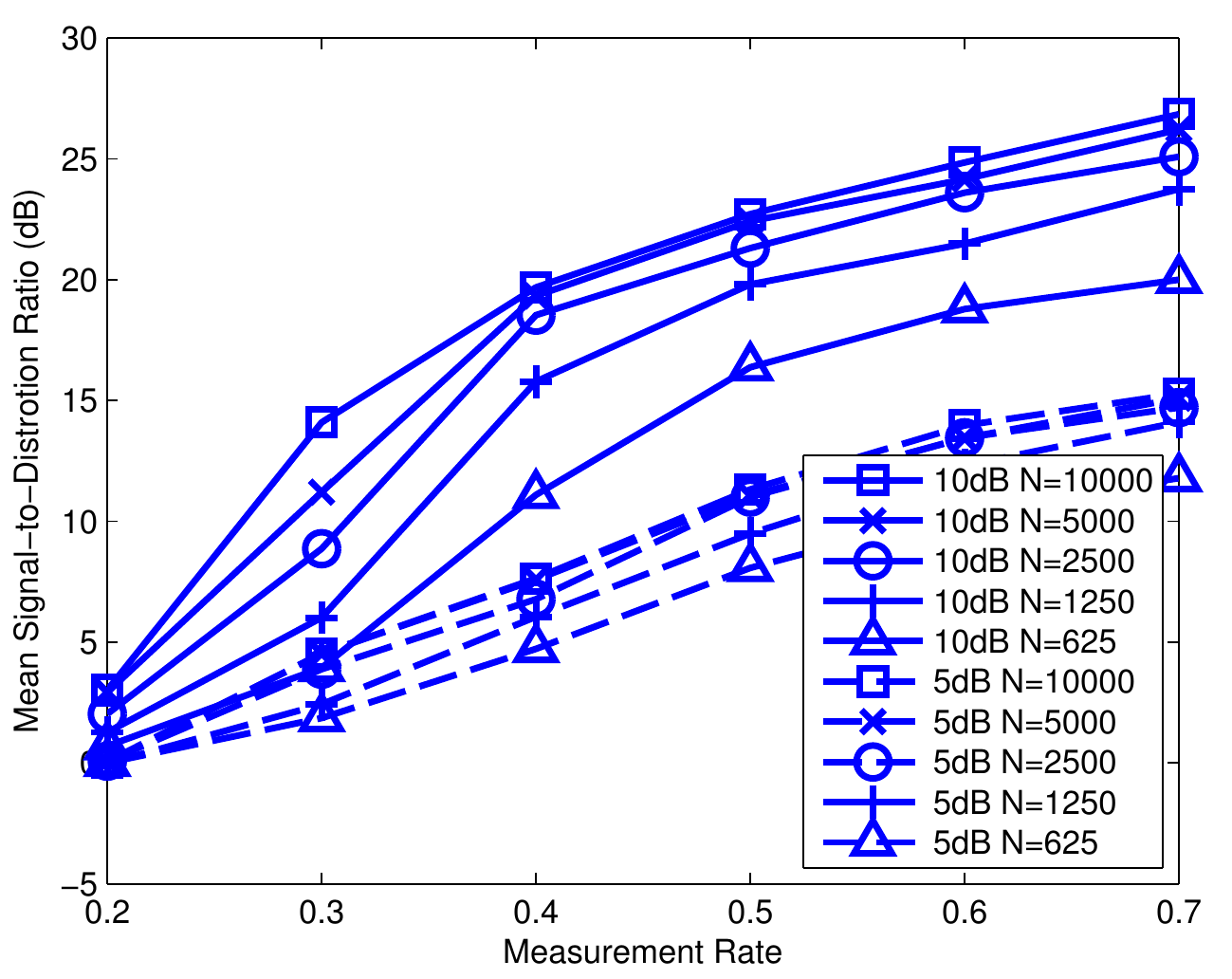}
\end{center}
\caption{SLA-MCMC estimation results for a four-state Markov switching source as
a function of the measurement rate $\kappa$ for different SNR values and signal lengths. Existing CS algorithms fail at estimating this signal, because this source is not sparse.}\label{fig:M4}
\end{figure}

\subsection{Performance on low-complexity signals}
SLA-MCMC promotes low complexity due to the complexity-penalized term in the objective function~(\ref{eq:MCMC_energy}). Hence, it tends to perform well for signals with low complexity such as the signals in
Figures~\ref{fig:Ber} and~\ref{fig:denseMRad} (note that the Bernoulli signal is sparse while the MRad signal is denser).
In this subsection, we simulated a non-sparse low-complexity signal. We show that complexity-penalized approaches such as SLA-MCMC might estimate low-complexity signals well.

{\bf Four-state Markov source:} To evaluate the performance of SLA-MCMC for discrete-valued non-i.i.d.\ and non-sparse signals, we examined a four-state Markov source (Markov4 for short) that generated the pattern $+1,+1,-1,-1,+1,+1,-1,-1\cdots$
with 3\% errors in state transitions, resulting in the signal switching from $-1$ to $+1$ or
vice versa either too early or too late. Note that the estimation algorithm did not know that this source is a binary source. While it is well known that sparsity-promoting
CS signal estimation algorithms~\cite{turboGAMP,Cosamp08,GPSR2007}
can estimate sparse sources from linear measurements,
the aforementioned switching source is not sparse in conventional sparsifying bases (e.g., Fourier, wavelet, and discrete cosine transforms), rendering such sparsifying transforms not applicable. Signals generated by this Markov source can be sparsified using an averaging analysis matrix~\cite{CandesCSdictonary2011} whose diagonal and first three lower sub-diagonals are filled with $+1$, and all other entries are $0$; this transform yields $6\%$ non-zeros in the sparse coefficient vector.
However, even if this matrix had been known {\em a priori}, existing algorithms based on analysis sparsity~\cite{CandesCSdictonary2011} did not perform satisfactorily, yielding MSDR's below $5$~dB.
Thus, we did not include the results for these baseline algorithms in Figure~\ref{fig:M4}.
On the other hand, Markov4 signals have low complexity in the time domain, and hence, SLA-MCMC successfully estimated Markov4 signals with reasonable quality even when $M$ was relatively small.
This Markov4 source highlights the special advantage of our approach in estimating low-complexity signals.

The MSDR's for shorter Markov4 signals are also plotted in Figure~\ref{fig:M4}. We can see that SLA-MCMC performs better when the signal to be estimated is longer. Indeed, SLA-MCMC needs a signal that is long enough to learn the statistics of the signal.

\begin{figure}[t]
\begin{center}
\includegraphics[width=80mm]{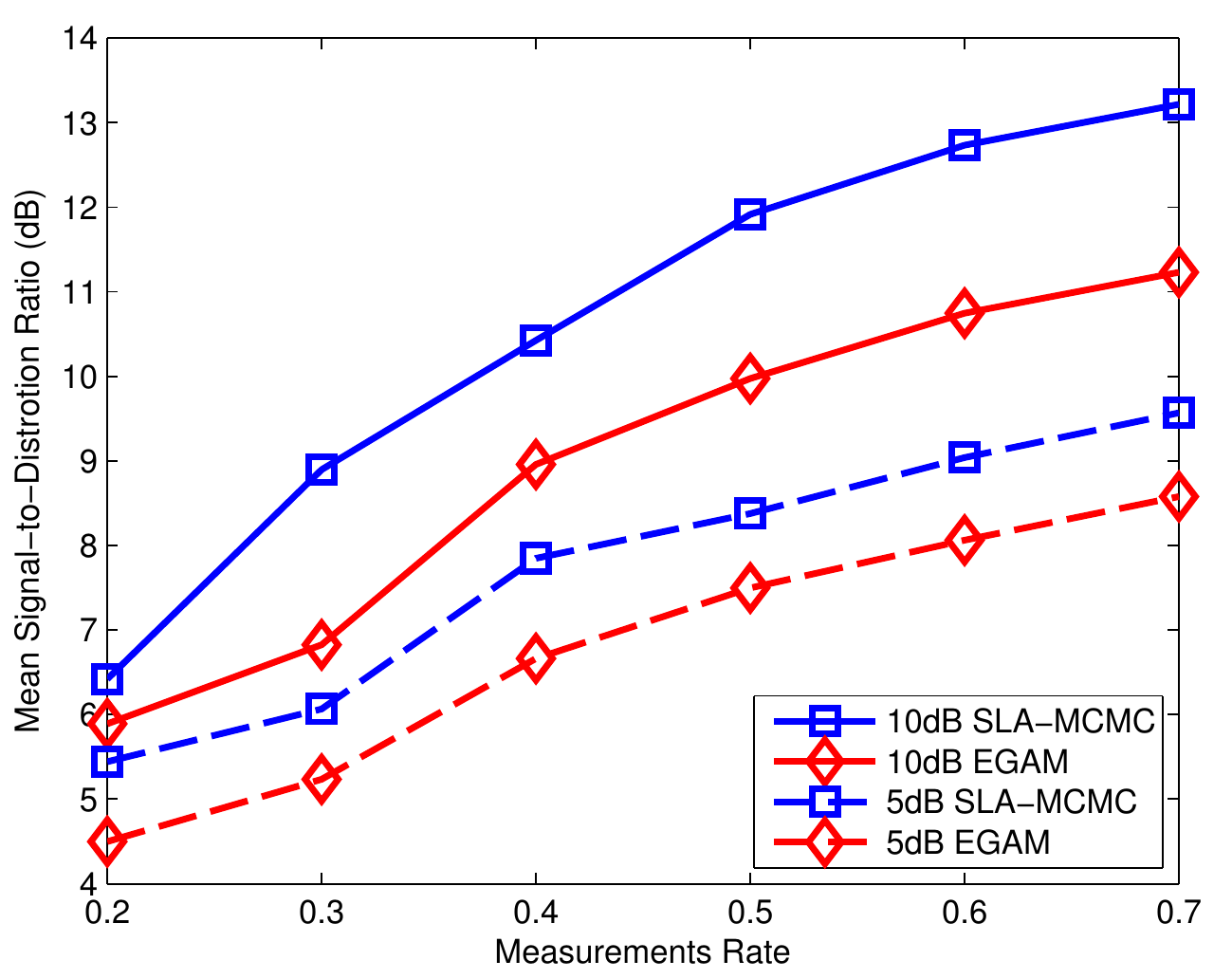}
\end{center}
\caption{SLA-MCMC and EGAM estimation results for a Chirp signal as a function of the measurement rate $\kappa$ for different SNR values ($N=9600$).}\label{fig:Chirp}
\end{figure}

\subsection{Performance on real world signals}
Our experiments up to this point use synthetic signals, where SLA-MCMC has shown comparable and in many cases better results than existing algorithms. This subsection evaluates how well SLA-MCMC estimates a real world signal. We use the ``Chirp'' sound clip from Matlab: we cut a consecutive part with length 9600 out of the ``Chirp'' (denoted by $\x$) and performed a short-time discrete cosine transform (DCT) with window size, number of DCT points, and hop size all being 32.
Then we vectorized the resulting short-time DCT coefficients matrix to form a coefficient vector $\theta$ of length 9600. By denoting the short-time DCT matrix by $\W^{-1}$, we have $\theta=\W^{-1}\x$. Therefore, we can rewrite~\eqref{eq:def_y} as $\y=\widetilde{\A}\theta +\z$,
where $\widetilde{\A}=\A \W$. We want to estimate $\theta$ from the measurements $\y$ and the matrix $\widetilde{\A}$. After we obtain the estimate $\widehat{\theta}$, we obtain the estimated signal by $\widehat{\x}=\W\widehat{\theta}$.
Although the coefficient vector $\theta$ may exhibit some type of memory, it is not readily modeled in closed form, and so we cannot provide a valid model for tG~\cite{turboGAMP}. Instead, we use EGAM~\cite{EMGMTSP} as our benchmark algorithm. We do not compare to CoSaMP~\cite{Cosamp08} because it falls behind in performance as we have seen from other examples. The MSDR's for SLA-MCMC and EGAM~\cite{EMGMTSP} are plotted in Figure~\ref{fig:Chirp}, where SLA-MCMC outperforms EGAM by 1--2 dB.

\subsection{Comparison of B-MCMC, L-MCMC, and SLA-MCMC}

We compare the performance of B-MCMC, L-MCMC, and SLA-MCMC with different numbers of seeds (cf.\ Section~\ref{sec:mix}) by examining the MUnif source (cf.\ Section~\ref{sec:bvsub}). We ran B-MCMC with the fixed uniform alphabet $\replevels_F$ in~(\ref{eq:def:replevels2}) with $|\replevels_F|=10$ levels. L-MCMC was initialized in the same way as Stage~1 of SLA-MCMC. B-MCMC and L-MCMC ran for $100$ super-iterations before outputting the estimates; this number of super-iterations was sufficient because it was greater than $r_1=50$ in Stage~1 of SLA-MCMC. The results are plotted in Figure~\ref{fig:MUnif_algos}.
B-MCMC did not perform well given the $\replevels_F$ in~(\ref{eq:def:replevels2}) and is not plotted.
We can see that SLA-MCMC outperforms L-MCMC. Averaging over more seeds provides an increase of $1$~dB in MSDR.\footnote{For other sources, we observed an increase in MSDR of up to $2$~dB.} It is likely that averaging over more seeds with each seed running fewer super-iterations will decrease the squared error. We leave the optimization of the number of seeds and the number of super-iterations in each seed for future work.
Finally, we tried a ``good'' reproduction alphabet in B-MCMC, $\displaystyle\widetilde{\replevels}_F=\frac{1}{|\replevels_F|-1/2}\{0,\cdots,|\replevels_F|-1\}$, and the results were close to those of SLA-MCMC. Indeed, B-MCMC is quite sensitive to the reproduction alphabet, and Stages~$2$--$4$ of SLA-MCMC find a good set of levels.
Example output levels $\map(\Z)$ of SLA-MCMC were: $\{-0.001,0.993\}$ for Bernoulli signals, $\{-0.998,0.004,1.004\}$ for dense MRad signals, $21$ levels spread in the range $[-3.283,4.733]$ for i.i.d. sparse Laplace signals, $22$ levels spread in the range $[-0.000,0.955]$ for MUnif signals, and $\{-1.010,0.996\}$ for Markov4 signals; we can see that SLA-MCMC adaptively adjusted $|\Z|$ to match $|\X|$ so that these levels represented each signal well. Also, we can see from Figures~\ref{fig:Ber}--\ref{fig:sparseL} that SLA-MCMC did not perform well in the low measurements and high SNR setting, which was due to mismatch between $|\Z|$ and $|\X|$.

\begin{figure}[t]
\begin{center}
\includegraphics[width=80mm]{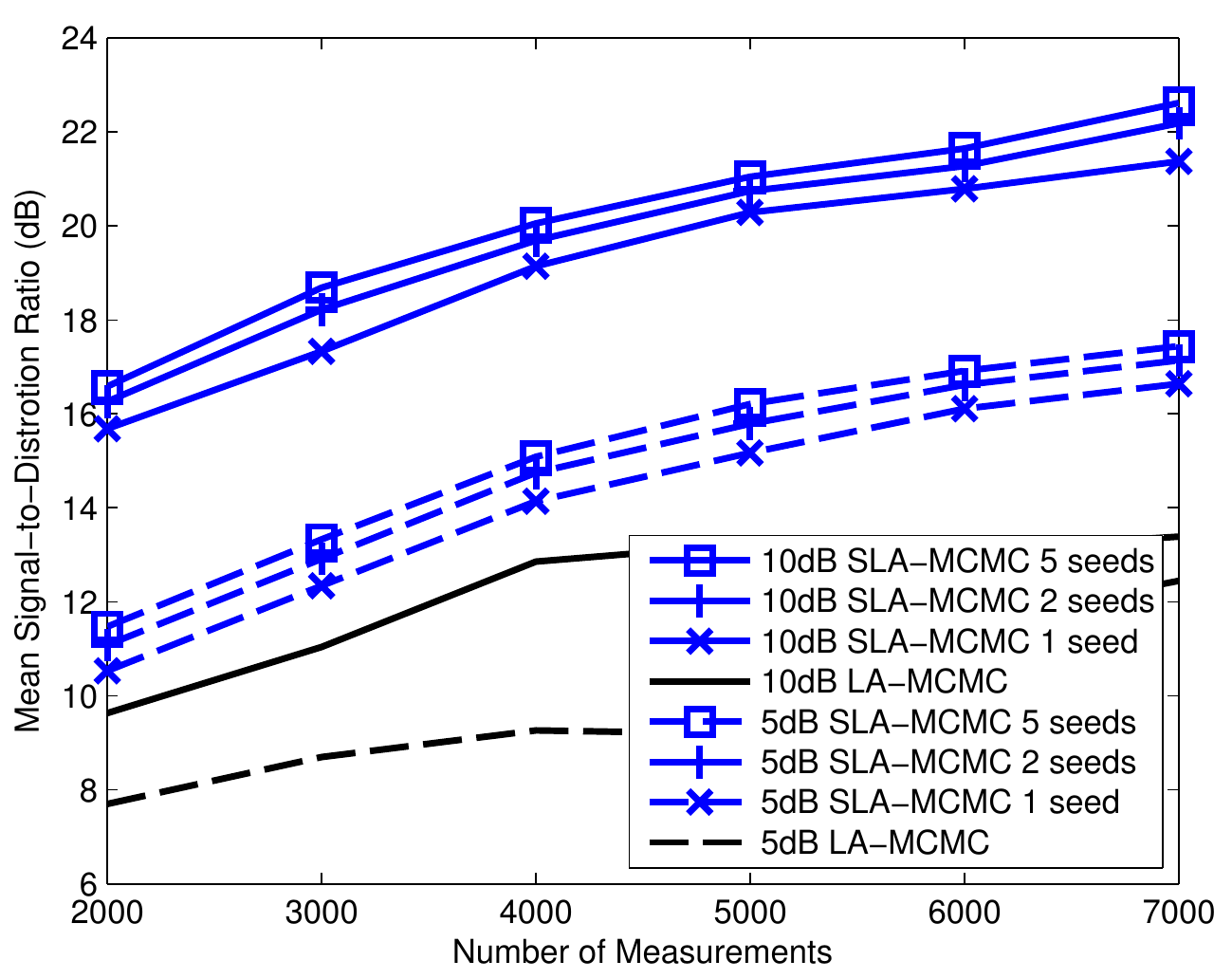}
\end{center}
\caption{SLA-MCMC with different number of random seeds and L-MCMC  estimation results for the Markov-Uniform source described in Figure~\ref{fig:MUnif} as a function of the number of Gaussian random measurements $M$ for different SNR values ($N=10000$).}\label{fig:MUnif_algos}
\end{figure}

\sectionmark{Numerical Results}
\section{Approximate Message Passing with Universal Denoising}\label{sec:AMP-UD}
\sectionmark{AMP-UD}

We note in passing another universal algorithm, approximate message passing with universal denoising (AMP-UD)~\cite{MaZhuBaronAllerton2014,MaZhuBaron2016TSP}, for CS signal estimation, of which the author of this dissertation is a coauthor. The signal $\x$ is assumed to be stationary and ergodic, but the input statistics are unknown. AMP-UD is a novel algorithmic framework that combines: ({\em i}) the approximate message passing CS signal estimation framework~\cite{DMM2009,Montanari2012,Bayati2011,Krzakala2012probabilistic,krzakala2012statistical,Barbier2015}, which solves the CS signal estimation problem by iterative scalar channel denoising; ({\em ii})  a universal denoising scheme based on context quantization~\cite{Sivaramakrishnan2008,SW_Context2009}, which partitions the stationary ergodic signal denoising into i.i.d. sub-sequence denoising; and ({\em iii})  a density estimation approach that approximates the probability distribution of an i.i.d. sequence by fitting a GM model~\cite{FigueiredoJain2002}. In addition to the algorithmic framework, Ma et al.~\cite{MaZhuBaronAllerton2014,MaZhuBaron2016TSP} provide three contributions: ({\em i})  numerical results showing that state evolution~\cite{DMM2011,Bayati2011,JavanmardMontanari2012,Donoho2013,Bayati2015} holds for non-separable Bayesian sliding-window denoisers; ({\em ii})  an i.i.d. denoiser based on a modified GM learning algorithm; and ({\em iii})  a universal denoiser that does not need information about the range where the input takes values from or require the input signal to be bounded. Ma et al.~\cite{MaZhuBaronAllerton2014,MaZhuBaron2016TSP} provide two implementations of AMP-UD with one being faster and the other being more accurate. The two implementations compare favorably with existing universal signal estimation algorithms (including the SLA-MCMC algorithm discussed in this chapter) in terms of both estimation quality and runtime.

\begin{figure}[t]
\begin{center}
\includegraphics[width=80mm]{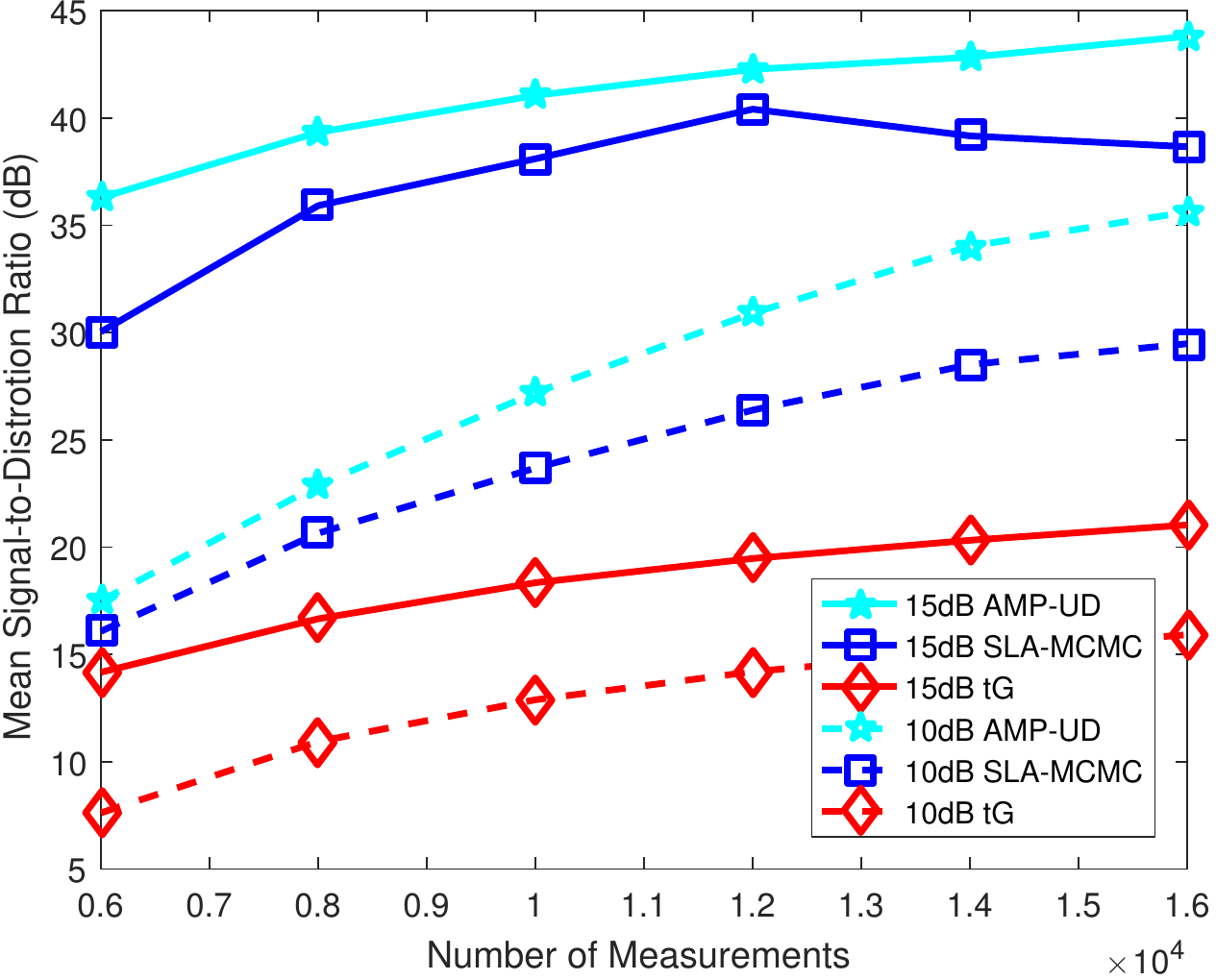}
\end{center}
\caption{AMP-UD~\cite{MaZhuBaronAllerton2014,MaZhuBaron2016TSP}, SLA-MCMC, and tG estimation results for a dense two-state Markov source with non-zero entries drawn from a Rademacher ($\pm 1$) distribution as a function of the number of Gaussian random measurements $M$ for different SNR values ($N=10000$).}\label{fig:AMP_UD_sim}
\end{figure}

To highlight the advantages of AMP-UD relative to SLA-MCMC, Figure~\ref{fig:AMP_UD_sim} compares the AMP-UD simulation results to the SLA-MCMC and tG~\cite{turboGAMP} results for the setting in Figure~\ref{fig:denseMRad}. We see that AMP-UD outperforms both algorithms. Moreover, the runtime of AMP-UD is around 5 minutes to estimate this MRad signal of length $10000$, while it usually takes SLA-MCMC an hour and tG~\cite{turboGAMP} 30 minutes to estimate this signal. Therefore, we see that AMP-UD is indeed promising.

\sectionmark{AMP-UD}
\section{Conclusion} \label{sec:conclusions}
\sectionmark{Conclusion}

This chapter provided universal algorithms for signal estimation from linear measurements. Here, universality
denotes the property that the algorithm need not be informed of the probability
distribution for the recorded signal prior to acquisition; rather, the algorithm
simultaneously builds estimates both of the observed signal and its distribution.
Inspired by the Kolmogorov sampler~\cite{DonohoKolmogorov} and motivated by the
need for a computationally tractable framework, our contribution
focused on stationary ergodic signal sources and relied on a maximum a posteriori estimation algorithm. The algorithm was then implemented via a Markov chain Monte Carlo formulation that is proven to be convergent in the limit of infinite computation.
We reduced the computational complexity and improve the estimation quality of the proposed algorithm by adapting the reproduction alphabet to match
the complexity of the input signal. Our experiments have shown that the performance of the proposed algorithm
is comparable to and in many cases better than existing
algorithms,
particularly for low-complexity sources that do not exhibit standard sparsity or compressibility.

As we were finishing this work, Jalali and Poor~\cite{JalaliPoor2014} have independently shown that our formulation~(\ref{eq:MCMC_energy}) also provides an implementable version of R{\'e}nyi entropy minimization. Their theoretical findings further motivated our proposed universal MCMC formulation. We noted in passing another universal algorithm that often achieves better estimation quality than the SLA-MCMC algorithm discussed in this chapter.

\chapter{Discussion}\label{chap-discuss}

This chapter concludes the dissertation. We begin by summarizing the previous chapters, and then we list our contributions. Finally, we propose some possible future directions.

\section{Summary and Contributions}
Linear models find wide applications in the real world, and the problem of estimating the underlying signal(s) from a linear model is called a linear inverse problem. Depending on the number of underlying signals, we have the single measurement vector problem (SMV) and the multi-measurement vector problem (MMV); depending on how the measurement matrix and the measurements are stored, we have the centralized linear model and the multi-processor linear model. Prior art includes algorithms for linear inverse problems and their corresponding performance characterizations. There are many remaining issues in the prior art. First, there is little work discussing the performance characterization for the linear inverse problems themselves. Second, when dealing with the distributed setting, there is little work studying the relations of different costs. At last, the existing algorithms for linear inverse problems require the prior knowledge of the unknown signal to some extent. These issues are important to practitioners. In this dissertation, we took advantage of the tools in statistical physics and information theory to address these issues in the large system limit, i.e., the length of the signal and the number of measurements go to infinity while the measurement rate (ratio between the number of measurements and the length of the signal) stays constant.

We started with providing background materials on statistical physics and information theory in Chapter~\ref{chap-basics}, and we also discussed the link between statistical physics and information theory.
Then, we studied the minimum mean squared error (MMSE) for MMV problem in Chapter~\ref{chap-MMV} by using the replica analysis from statistical physics.
We analyzed the MMSE for two settings of MMV problems, where the entries in the signal vectors are independent and identically distributed (i.i.d.), and share the same support. One MMV setting has i.i.d. Gaussian measurement matrices, while the other MMV setting has identical i.i.d. Gaussian measurement matrices. Replica analysis yields identical free energy expressions for these two settings in the large system limit. Because of the identical free energy expressions, the MMSE's for both MMV settings are identical. By numerically evaluating the free energy expression, we identified different performance regions for MMV where the MMSE as a function of the channel noise variance and the measurement rate behaves differently. We also identified a phase transition for belief propagation algorithms (BP) that separates regions where BP achieves the MMSE asymptotically and where it is sub-optimal. Simulation results of an approximated version of BP matched with the mean squared error (MSE) predicted by replica analysis. As a special case of MMV, we extended our replica analysis to complex SMV, so that we can calculate the MMSE for complex SMV with real or complex measurement matrices. Chapter~\ref{chap-MMV} is based on our work with Baron~\cite{ZhuBaronCISS2013} and with Baron and Krzakala~\cite{ZhuBaronKrzakala2016}.

In Chapter~\ref{chap-MP-AMP}, we studied the optimization of different costs in running a distributed algorithm; these costs include (but are not limited to) the computation cost, the communication cost, and the quality of the estimate. We focused our discussion on a certain distributed algorithm, multi-processor approximate message passing (MP-AMP). Our results might be extended to some other distributed and iterative algorithms. We proposed to use lossy compression (from information theory) on the messages being transmitted across the network, and we allowed the coding rate to vary from iteration to iteration for MP-AMP. Also, we proposed an algorithmic method to find the optimal coding rate for the messages being transmitted in the network for MP-AMP, so that we can achieve the smallest combined cost of computation and communication. In addition, we theoretically analyzed the optimal coding rate sequence in the limit of low
excess mean squared error (EMSE=MSE-MMSE) and it turns out that the optimal coding rate sequence is approximately linear when the EMSE is low. At last, we proved the existence of trade-offs among these different costs for MP-AMP. Chapter~\ref{chap-MP-AMP} is based on our work with Han et al.~\cite{HanZhuNiuBaron2016ICASSP} and with Baron and Beirami~\cite{ZhuBeiramiBaron2016ISIT,ZhuBaronMPAMP2016ArXiv}.

In Chapter~\ref{chap-SLAM}, we proposed a universal algorithm, size- and level-adaptive Markov chain Monte Carlo (SLA-MCMC), to solve the linear inverse problem.
Inspired by the Kolmogorov sampler~\cite{DonohoKolmogorov} and motivated by the need for a computationally tractable framework, our contribution focused on stationary ergodic signal sources and relied on a maximum a posteriori estimation algorithm. The algorithm was then implemented via a Markov chain Monte Carlo formulation (motivated from thermodynamics) that is proven to be convergent in the limit of infinite computation.
We reduced the computational complexity and improved the estimation quality of the proposed algorithm by adapting the reproduction alphabet to match the complexity of the input signal. Our experiments have shown that the performance of the proposed algorithm is comparable to and in many cases better than existing algorithms, particularly for low-complexity signals that do not exhibit standard sparsity or compressibility. Chapter~\ref{chap-SLAM} is based on our work with Baron and Duarte~\cite{JZ2014SSP,ZhuBaronDuarte2014_SLAM}.

\section{Future Directions}

Along the line of this dissertation, we list some possible future directions.
\begin{enumerate}
\item Our replica analysis in Chapter~\ref{chap-MMV} assumes that the non-zero entries of the jointly sparse signals are i.i.d. However, in real-world application, sometimes the non-zero entries that share the same support are dependent. Our derivation could possibly be generalized to such settings. When the non-zero entries of the signals are dependent, we suspect that the MMV setting with different matrices will yield lower MMSE than the MMV setting with identical matrices.
\item  As is discussed in Chapter~\ref{chap-MMV}, studying other error metrics than the MSE could also be of interest. We could extend the work of Tan and coauthors~\cite{Tan2014,Tan2014Infty}, so that we can both study the theoretic optimal performance for user-defined additive error metric and design algorithms that can achieve the theoretic optimal performance.
\item In Chapter~\ref{chap-MP-AMP}, our study of different costs is within the MP-AMP algorithm. One possible future direction could be to find a generic class of algorithms to which our analyses can apply. Another possible direction is to incorporate such ideas in a real-world software package design, which could be of great interest to industry.
\item Although both SLA-MCMC and AMP-UD from Chapter~\ref{chap-SLAM} seem promising, they are not so resilient to measurement matrices that are far from i.i.d. In order to make a larger impact, we need to design universal algorithms that are more resilient to non-i.i.d. matrices.

\end{enumerate}


\begin{spacing}{1}
 \setlength\bibitemsep{11pt} 
 \addcontentsline{toc}{chapter}{{\uppercase{\bibname}}} 
\titleformat{\chapter}[display]{\bf\filcenter
}{\chaptertitlename\ \thechapter}{11pt}{\bf\filcenter}
\titlespacing*{\chapter}{0pt}{-0.5in-9pt}{22pt}
\newgeometry{margin=1in,lmargin=1.25in,footskip=\chapterfootskip, includefoot}

\printbibliography[heading=myheading]
\end{spacing}

\restoregeometry
\appendix
\newgeometry{margin=1in,lmargin=1.25in,footskip=\chapterfootskip, includefoot}

\chapter{Appendix for Chapter~3}\label{chap:append-A}
This appendix follows the derivation of Barbier and Krzakala~\cite{Barbier2015}, except for some nuances.
Our compressed derivation makes the presentation self-contained.

Plugging~\eqref{eq:XmuNew} and the following identity~\cite{Barbier2015,Krzakala2012probabilistic},
\begin{equation*}
\begin{split}
  &1=\int \text{exp}\Bigg\{-\sij{a=1}{n} \l[\widehat{m}_a\l(m_a NJ-\sij{l=1}{N}(\widehat{\x}_l^a)^{\top}\x_l\r)\r]+\sij{a=1}{n}\Bigg[
  \widehat{Q}_a\l(Q_a\frac{NJ}{2}-\frac{1}{2}\sij{l=1}{N}(\widehat{\x}_l^a)^{\top}\widehat{\x}_l^a\r)\Bigg]-\\
  &\sij{1\leq a< b\leq n}{}\Bigg[\widehat{q}_{ab}\Bigg(q_{ab} NJ-\sij{l=1}{N}(\widehat{\x}_l^a)^{\top}\widehat{\x}_l^b\Bigg)\Bigg]\Bigg\}\pij{a=1}{n}dQ_a\  d\widehat{Q}_{a}\  dm_a \ d\widehat{m}_{a} \pij{1\leq a<b\leq n}{} dq_{ab}\ d\widehat{q}_{ab},
\end{split}
\end{equation*}
into \eqref{eq:EZn1}, we obtain
\begin{equation}\label{eq:EZn3}
\begin{split}
  \mathbb{E}_{\A,\x,\z}[Z^n]=&(2\pi\sigma_Z^2)^{-\frac{nMJ}{2}}\bigintsss \text{exp}\Bigg[NJ\Bigg(\frac{1}{2}\sij{a=1}{n}\widehat{Q}_aQ_a
  -\frac{1}{2}\sij{\substack{1\leq a,b\leq n\\a\neq b}}{}\widehat{q}_{ab}q_{ab}-\sij{a=1}{n}\widehat{m}_am_a\Bigg)\Bigg]\l[\pij{\mu=1}{M} \mathbb{X}_{\mu}\r]\times\\
  &\Gamma^N\pij{a=1}{n}dQ_a\  d\widehat{Q}_{a}\  dm_a\  d\widehat{m}_{a}\pij{\substack{1\leq a,b\leq n\\a\neq b}}{} dq_{ab}\ d\widehat{q}_{ab},
\end{split}
\end{equation}
where
\begin{equation}\label{eq:Gamma_original}
\Gamma=\!\!\!\int\! f(\x_1)\! \l[\pij{a=1}{n}f(\widehat{\x}^a_1)\r]\!\text{exp}\!\Bigg[\!-\frac{1}{2}\sij{a=1}{n}\widehat{Q}_a(\widehat{\x}^a_1)^{\top}\widehat{\x}^a_1+
\frac{1}{2}\sij{\substack{1\leq a,b\leq n\\a\neq b}}{}\widehat{q}_{ab}(\widehat{\x}^a_1)^{\top}\widehat{\x}^b_1+\sij{a=1}{n}\widehat{m}_a(\widehat{\x}^a_1)^{\top}\x_1\Bigg]d\x_1 \pij{a=1}{n}d\widehat{\x}^a_1.
\end{equation}

\textbf{Further simplification of~\eqref{eq:EZn1}}: The Stratanovitch transform~\cite{Stratanovitch-Wiki} in $J$ dimensions is given by
\begin{equation}\label{eq:strat}
\begin{split}
  \text{exp}\l[\frac{\widehat{q}}{2}\sij{\substack{1\leq a,b \leq n\\a\neq b}}{}(\widehat{\x}^a_1)^{\top}\widehat{\x}^b_1\r]&=\pij{j=1}{J}\text{exp}\l[\frac{\widehat{q}}{2}\sij{\substack{1\leq a,b\leq n\\a\neq b}}{}\widehat{x}^a_{1,j} \widehat{x}^b_{1,j}\r]\\
  &=\pij{j=1}{J}\int\text{exp}\l[\sqrt{\widehat{q}}h_j\sij{a=1}{n}\widehat{x}^a_{1,j}-\frac{\widehat{q}}{2}\sij{a=1}{n}\l(\widehat{x}^a_{1,j}\r)^2\r]\mathcal{D}h_j\\
  &=\int\text{exp}\l[\sqrt{\widehat{q}}\h^{\top}\sij{a=1}{n}\widehat{\x}^a_1-\frac{\widehat{q}}{2}\sij{a=1}{n}\l(\widehat{\x}^a_1\r)^{\top}\widehat{\x}^a_1\r]\mathcal{D}\h,
\end{split}
\end{equation}
where $\h=[h_1,...,h_J]^{\top}$, and the differential $\mathcal{D}h_j=\frac{1}{\sqrt{2\pi}}\operatorname{e}^{-h_j^2/2}d h_j$. With the Stratanovitch transform~\eqref{eq:strat},
we simplify $\Gamma$~\eqref{eq:Gamma_original} as follows,
\begin{equation}\label{eq:Gamma}
\Gamma=\int f(\x_1)\int\l[f(\h)\r]^n \mathcal{D}\h\ d\x_1,
\end{equation}
where
$f(\h)=\int f(\x_1)\operatorname{e}^{-\frac{\widehat{Q}+\widehat{q}}{2}\widehat{\x}_1^{\top}\widehat{\x}_1+\widehat{m}\widehat{\x}_1^{\top}\x_1+\sqrt{\widehat{q}}\h^{\top}\widehat{\x}_1}d\widehat{\x}_1$,
and we drop the super-script $a$ of $\widehat{\x}_1^a$ owing to the replica symmetry assumption~\cite{Krzakala2012probabilistic,krzakala2012statistical}.
In the limit of $n\rightarrow 0$, using another Taylor series $[f(\h)]^n\approx 1+n\log [f(\h)]$, we have $\int [f(\h)]^n \mathcal{D}\h\approx 1+n\int \log [f(\h)] \mathcal{D}\h\approx\operatorname{e}^{n\int \log [f(\h)]\mathcal{D}\h}$, so that $\mathbb{E}\l\{\int[f(\h)]^n\mathcal{D}\h\r\}\approx\mathbb{E}\l\{1+n\int \log [f(\h)] \mathcal{D}\h\r\}\approx\operatorname{e}^{\mathbb{E}\l\{n\int \log [f(\h)]\mathcal{D}\h\r\}}$. Hence, we can approximate~\eqref{eq:Gamma} as
\begin{equation}\label{eq:Gamma1}
\Gamma=\text{exp}\l\{n\int f(\x_1) \int\log [f(\h)] \mathcal{D}\h \ d\x_1\r\}.
\end{equation}
Considering~\eqref{eq:Gamma1}, we rewrite \eqref{eq:EZn3} as
\begin{equation}\label{eq:EZn4}
    \mathbb{E}_{\A,\x,\z}[Z^n]=\int \operatorname{e}^{nN\widetilde{\Phi}_J(m,\widehat{m},q,\widehat{q},Q,\widehat{Q})}
      dm\ d\widehat{m}\ dq\ d\widehat{q}\ dQ\ d\widehat{Q},
\end{equation}
where $\widetilde{\Phi}_J(m,\widehat{m},q,\widehat{q},Q,\widehat{Q})$ is given below,
\begin{equation}\label{eq:PhiJ}
\begin{split}
 & \widetilde{\Phi}_J(m,\widehat{m},q,\widehat{q},Q,\widehat{Q})=\frac{J}{2}(Q\widehat{Q}+q\widehat{q}-2m\widehat{m})-\frac{MJ}{2N}\l[\frac{\rho-2m+\sigma_Z^2+q}{Q-q+\sigma_Z^2}+\log(Q-q+\sigma_Z^2)-\log(\sigma_Z^2)\r]+\\
  &\int f(\x_1) \l\{ \int \log \l\{ \int f(\widehat{\x}_1)\text{exp}\l[-\frac{1}{2}(\widehat{Q}+\widehat{q})\widehat{\x}_1^{\top}\widehat{\x}_1+\widehat{m}\widehat{\x}^{\top}_1\x_1+
  \sqrt{\widehat{q}}\h^{\top}\widehat{\x}_1\r]d\widehat{\x}_1\r\} \mathcal{D}\h \r\} d\x_1-\frac{MJ}{2N}\log(2\pi\sigma_Z^2).
\end{split}
\end{equation}

\textbf{Free energy expression}:
We now substitute~\eqref{eq:EZn4} into~\eqref{eq:replicaTrick}. Assuming that the limits in~\eqref{eq:replicaTrick} commute and that we only evaluate~\eqref{eq:replicaTrick} at optimum points of $\widetilde{\Phi}_J$~\eqref{eq:PhiJ}~\cite{Barbier2015,Krzakala2012probabilistic,krzakala2012statistical}, we have
$\mathcal{F}=\widetilde{\Phi}_J(m^*,\widehat{m}^*,q^*,\widehat{q}^*,Q^*,\widehat{Q}^*)$,
where the asterisks denote stationary points. Next, we calculate the stationary points:
\[\frac{\partial \widetilde{\Phi}_J }{\partial m}=0 \Rightarrow \widehat{m}^*=\frac{\kappa}{Q^*-q^*+\sigma_Z^2},\]
\[\frac{\partial \widetilde{\Phi}_J }{\partial q}=0 \Rightarrow \widehat{q}^*=\kappa\frac{\sigma_Z^2+\rho-2m^*+q^*}{(Q^*-q^*+\sigma_Z^2)^2},\]
\[\frac{\partial \widetilde{\Phi}_J }{\partial Q}=0 \Rightarrow \widehat{Q}^*=\kappa\frac{2m^*-\rho-2q^*+Q^*}{(Q^*-q^*+\sigma_Z^2)^2},\]
where $\kappa$~\eqref{eq:measurementRate} is the measurement rate. Because we are analyzing the MMSE, we must assume that the estimated prior matches the true underlying prior, which is a Bayesian setting. Thus, $q^*=m^*$ and $Q^*=\rho$~\eqref{eq:auxParamsSet1}. Let $E=q^*-2m^*+Q^*=Q^*-q^*$, then we obtain $\widehat{q}^*=\widehat{m}^*=\frac{\kappa}{E+\sigma_Z^2}$ and $\widehat{Q}^*=0$. Therefore, we solve for the free energy as a function of $E$ in~\eqref{eq:free_energy3}. Using a change of variables, we obtain~\eqref{eq:free_energy4}, which is a function of $E$. Using~\eqref{eq:auxParamsSet1}, the MSE is
\begin{equation}\label{eq:DandE}
D=E+Q-q=E+\frac{\rho}{N}\overset{N\rightarrow \infty}{\longrightarrow} E.
\end{equation}
Hence, in the large system limit, we can regard the free energy~\eqref{eq:free_energy4} as a function of the MSE, $D$.

\chapter{Appendices for Chapter~4}\label{chap:append-MP-AMP}

\section{Impact of the Quantization Error}\label{app:verifyIndpt}

This appendix provides numerical evidence that ({\em i}) the quantization error $\n_t$
is independent of the scalar channel noise $\w_t$~\eqref{eq:indpt_noises} in the fusion center and
({\em ii}) $\w_t+\n_t$ is independent of the signal $\x$. In the following, we simulate the AMP equivalent scalar channel in each processor node and in the fusion center.
In the interest of simple implementation, we use scalar quantization (SQ)
to quantize $\f_t^p$~\eqref{eq:slave2} (in each processor node) and
hypothesis testing to evaluate ({\em i}) whether $\w_t$ and $\n_t$ (in the fusion center) are independent and ({\em ii}) whether $\w_t+\n_t$ and $\x$ are independent. Both parts are necessary for lossy SE~\eqref{eq:SE_Q} to hold: part ({\em i}) ensures that we can predict the variance of $\w_t+\n_t$ by $\sigma_t^2+PD_t$ and part ({\em ii}) ensures that lossy MP-AMP falls within the general framework of Bayati and Montanari~\cite{Bayati2011} and Rush and Venkataramanan~\cite{Rush_ISIT2016_arxiv}, so that lossy SE~\eqref{eq:SE_Q} holds. Details about our simulation appear below.

Considering~\eqref{eq:equivalent_scalar_channel} and~\eqref{eq:slave2}, we obtain that the AMP equivalent scalar channel in each processor node can be expressed as
\begin{equation}\label{eq:noisePseudoNode}
\f_t^p=\frac{1}{P}\x+\w_t^p,
\end{equation}
where $\sum_{p=1}^P \w_t^p=\w_t$~\eqref{eq:equivalent_scalar_channel}, 
and the variances of $\w_t^p$ and $\w_t$ can be expressed as $(\sigma_t^p)^2$ and $\sigma_t^2$, respectively~\eqref{eq:SE_Q}. Hence, we obtain $\sigma_t^2=\sum_{p=1}^P (\sigma_t^p)^2$. The signal $\x$ follows~\eqref{eq:BG} with $\rho=0.1$. The entries of $\w_t^p$ are i.i.d. and follow $\mathcal{N}(0,(\sigma_t^p)^2)$. Next, we apply an SQ to $\f_t^p$~\eqref{eq:noisePseudoNode},
\begin{equation}\label{eq:quantPseudoNode}
Q(\f_t^p)=\frac{1}{P}\x+\w_t^p+\n_t^p,
\end{equation}
where $Q(\cdot)$ denotes the quantization process, $\n_t^p$ is the quantization error
in processor node $p$, and recall that the variance of $\n_t^p$ is $D_t$.
We simulate the fusion center by calculating
\begin{equation}\label{eq:processInfusionCenter}
\f_t=\sum_{p=1}^P Q(\f_t^p)=\x+\w_t+\n_t,
\end{equation}
where $\n_t=\sum_{p=1}^P \n_t^p$.
Note that $\w_t$ is Gaussian due to properties of AMP~\cite{DMM2009,Montanari2012,Bayati2011}.
The total quantization error at the fusion center, $\n_t$, is also Gaussian,
due to the central limit theorem. Hence, in order to test the independence
of $\w_t$ and $\n_t$~\eqref{eq:processInfusionCenter}, we need only test whether
$\w_t$ and $\n_t$ are uncorrelated. We also test whether $\w_t+ \n_t$ and $\x$ are uncorrelated.

\begin{figure*}[t]
\centering
\subfloat[]{\label{fig:PearsonWN}
\includegraphics[width=0.48\textwidth]{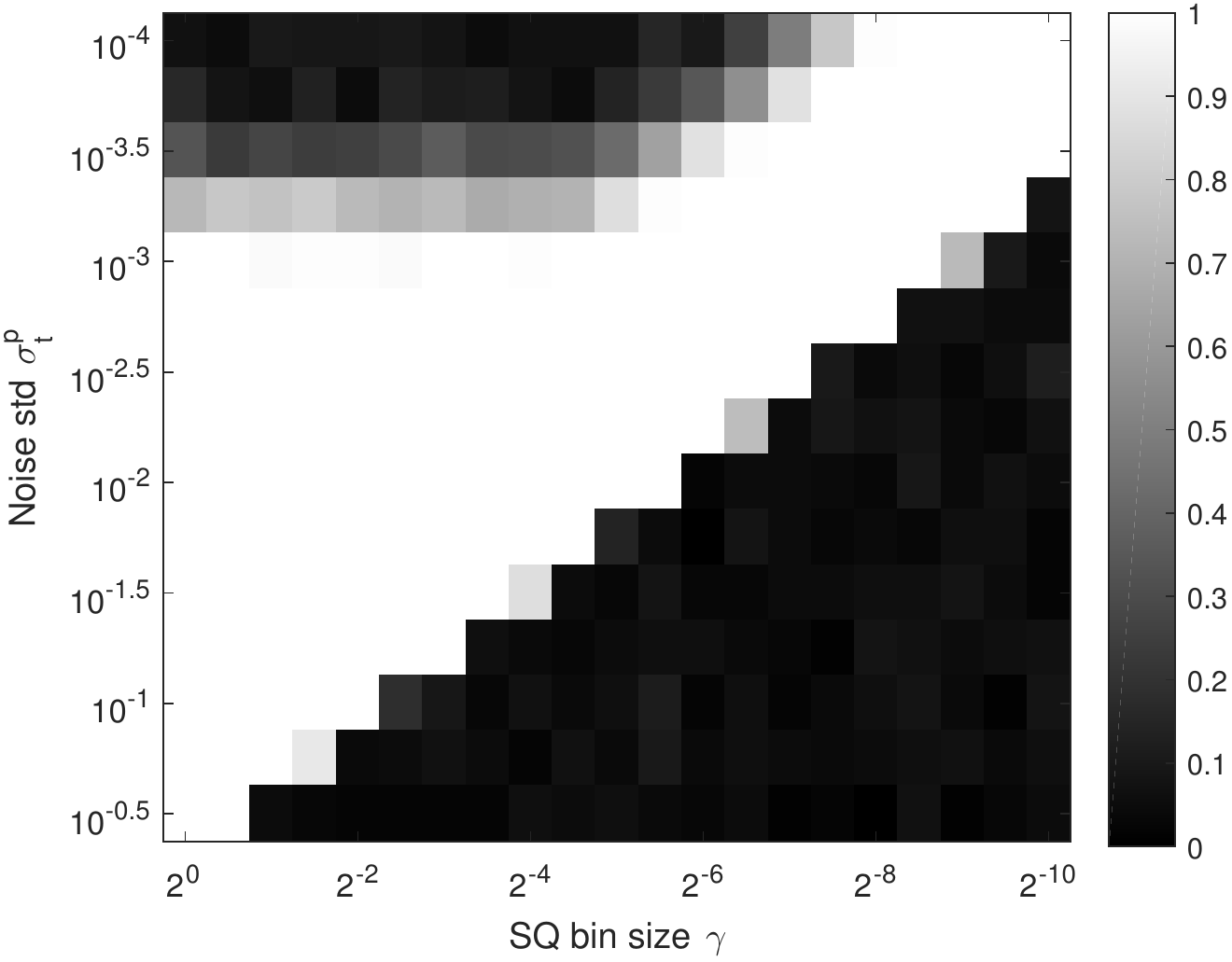}}
\subfloat[]{\label{fig:PearsonWpNX}
\includegraphics[width=0.48\textwidth]{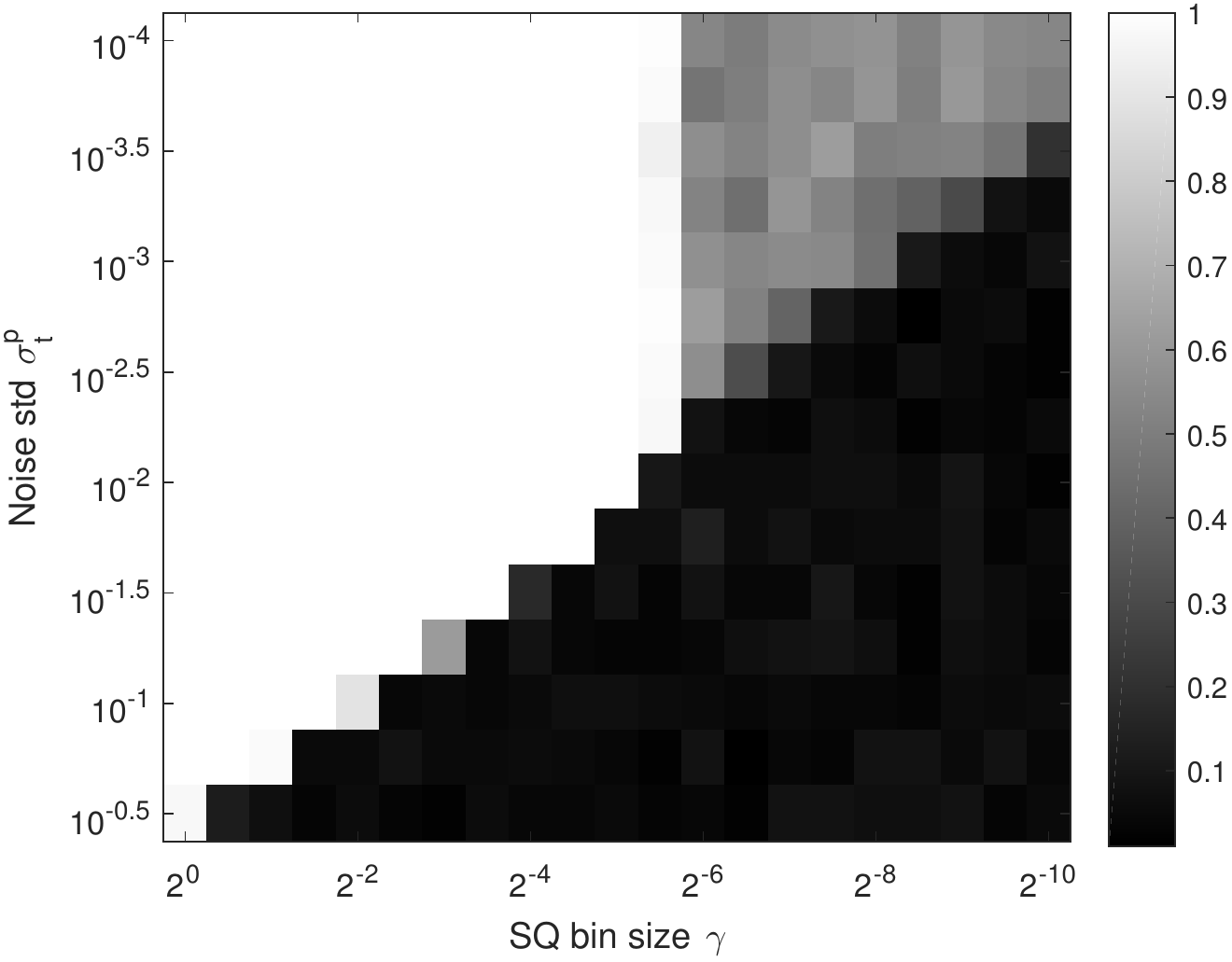}}
\caption{PCC test results. The darkness of the shades shows the fraction of
100 tests where we reject the null hypothesis (random variables being tested are uncorrelated)
with 5\% confidence. The horizontal and vertical axes represent the quantization bin size $\gamma$
of the SQ and the scalar channel noise standard deviation (std) $\sigma_t^p$ in each processor node,
respectively. Panel (a): Test the correlation between $\w_t$ and $\n_t$.
Panel (b): Test the correlation between $\w_t+\n_t$ and $\x$.}
\end{figure*}

We study the settings $\sigma_t^p\in\{10^{-0.5},\cdots,10^{-4}\}$ and $\gamma\in\{2^{0},\cdots,2^{-10}\}$,
where $\gamma$ denotes the SQ bin size. In each setting,
we simulate~\eqref{eq:noisePseudoNode}--\eqref{eq:processInfusionCenter} 100 times and
perform 100 Pearson correlation coefficient (PCC) tests~\cite{PCC} for $\w_t$ and $\n_t$,
respectively. The null hypothesis of the PCC tests~\cite{PCC}
is that $\w_t$ and $\n_t$ are uncorrelated. The null hypothesis is rejected
if the resulting $p$-value is smaller than 0.05.

For each setting, we record the fraction of 100 tests where
the null hypothesis is rejected, which is shown by the darkness of the shades
in Figure~\ref{fig:PearsonWN}. The horizontal and vertical axes represent the
quantization bin size $\gamma$ and the standard deviation (std) $\sigma_t^p$,
respectively. Similarly, we test $\w_t+\n_t$ and $\x$; results appear in
Figure~\ref{fig:PearsonWpNX}. We can see that when $\gamma \ll \sigma_t^p$
(bottom right corner),
({\em i}) $\w_t$ and $\n_t$ tend to be independent and
({\em ii}) $\w_t+\n_t$ and $\x$ tend to be independent.

Now consider Figure~\ref{fig:PearsonWpNX}, which provides PCC test results
evaluating possible correlations between $\w_t+\n_t$ and $\x$.
There appears to be a phase transition that separates regions where $\w_t+\n_t$
and $\x$ seem independent or dependent. We speculate that this phase transition
is related to the pdf of $\frac{1}{P}\x+\w_t^p$. To explain our hypothesis,
note that when the noise $\w_t^p$ is low (top part of~Figure~\ref{fig:PearsonWpNX}),
the phase transition is less affected by noise, and the role of $\gamma$
is smaller. By contrast, large noise (bottom) sharpens the phase transition.

In summary, it appears that when $\gamma<2\sigma_t^p=\frac{2\sigma_t}{\sqrt{P}}$,
we can regard ({\em i}) $\w_t$ and $\n_t$ to be independent and
({\em ii}) $\w_t+\n_t$ and $\x$ to be independent. The requirement
$\gamma<2\sigma_t^p=\frac{2\sigma_t}{\sqrt{P}}$ is motivated by Widrow and
Koll{\'a}r~\cite{widrow2008quantization};
we leave the study of this phase transition for future work.

\section{Numerical Evidence for Lossy SE}\label{app:verifyLossySE}
This appendix provides numerical evidence for lossy SE~\eqref{eq:lossySE}.
We simulate two signal types, one is the Bernoulli-Gaussian signal~\eqref{eq:BG}
and the other is a mixture Gaussian.

\begin{figure*}[t]
\centering
\subfloat[]{\label{fig:lossySEvsSim_BG}
\includegraphics[width=0.48\textwidth]{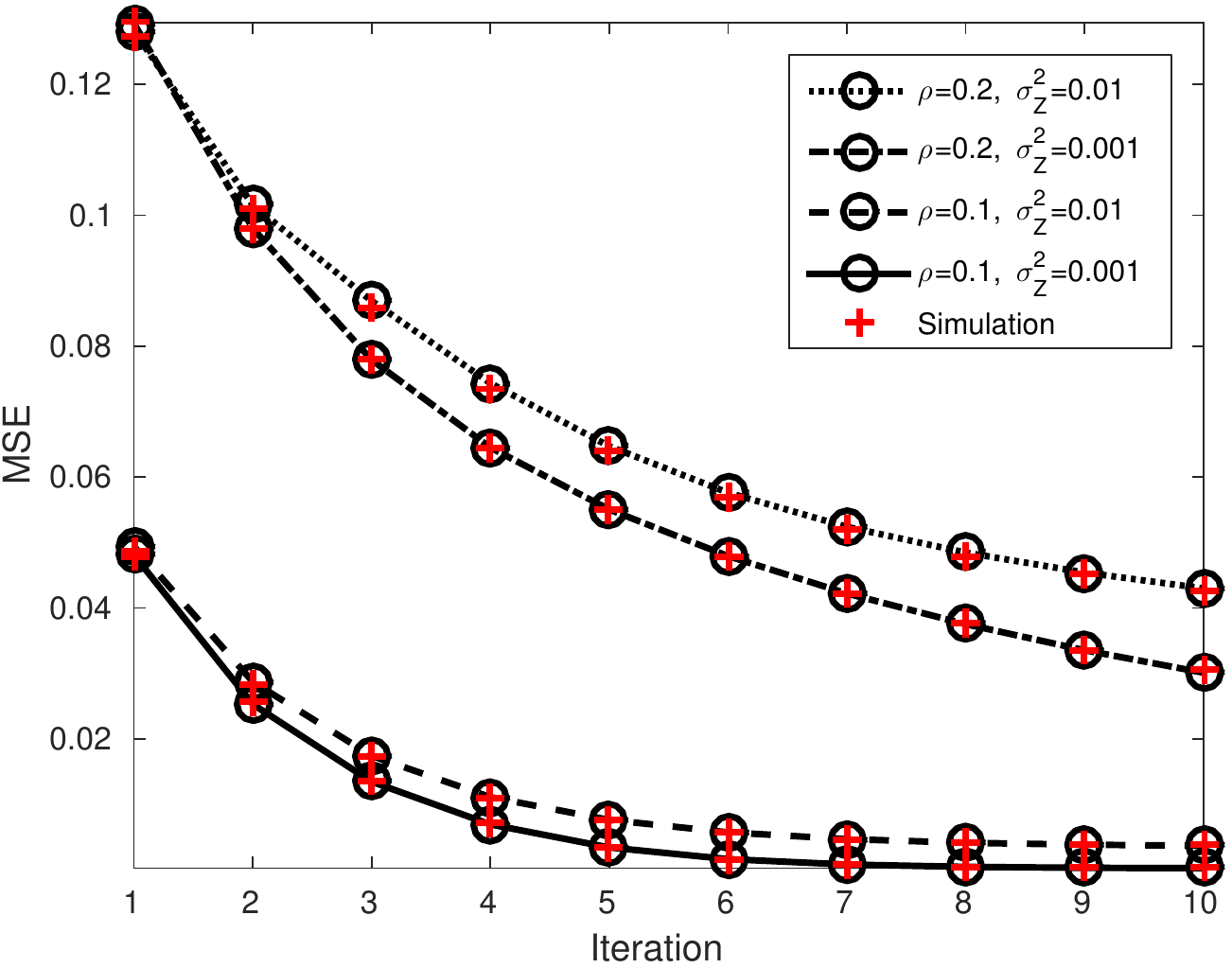}}
\subfloat[]{\label{fig:lossySEvsSim_mixG}
\includegraphics[width=0.48\textwidth]{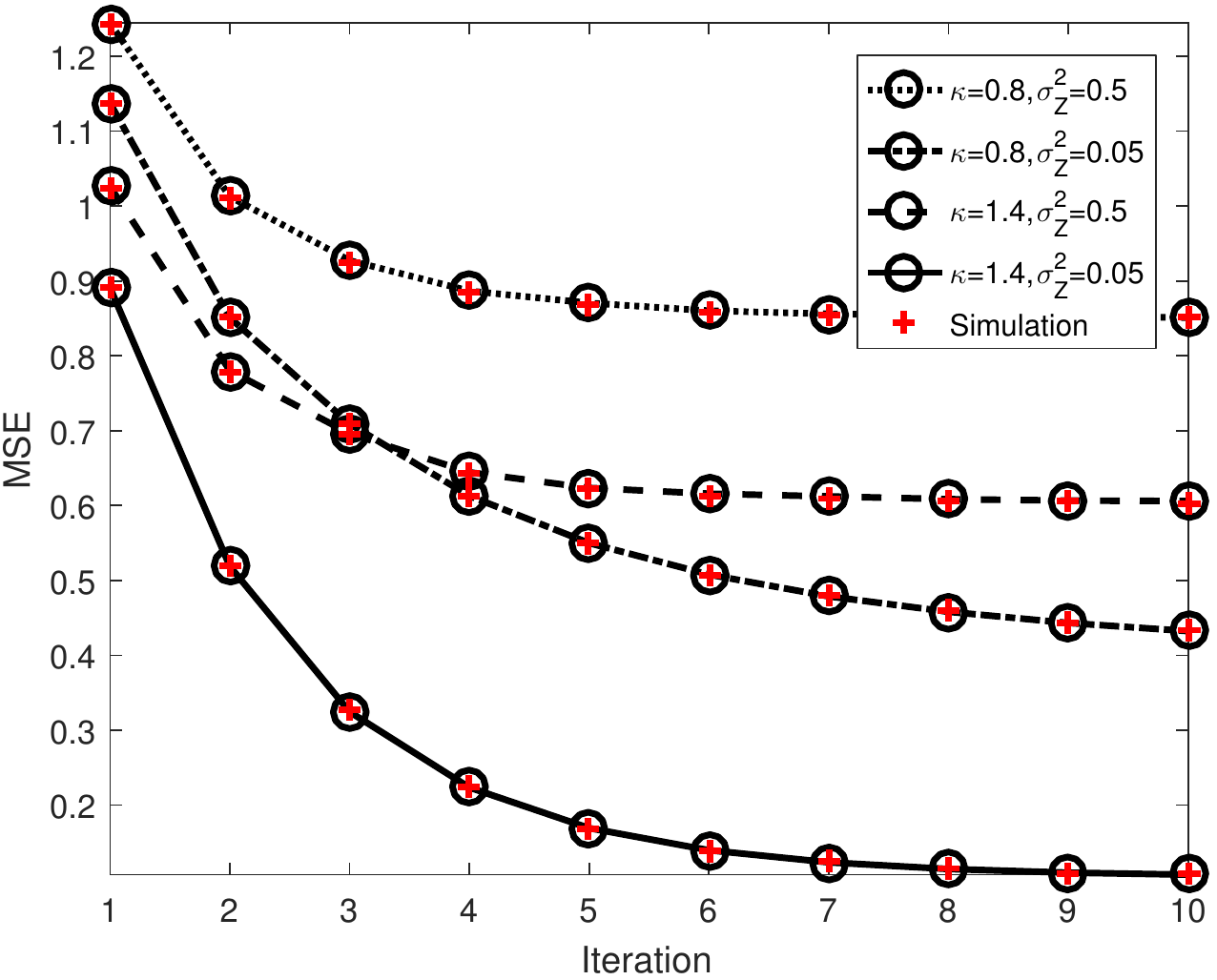}}
\caption{Comparison of the MSE predicted by lossy SE~\eqref{eq:SE_Q}
and the MSE of MP-AMP simulations for various settings.
The round markers represent MSE's predicted by lossy SE, and the (red) crosses represent
simulated MSE's. Panel (a): Bernoulli-Gaussian signal. Panel (b): Mixture Gaussian signal.}\label{fig:lossySEvsSim}
\end{figure*}

{\bf Bernoulli-Gaussian signals:} We generate 50 signals of length $10000$
according to~\eqref{eq:BG}. These signals are measured by $M=5000$ measurements
spread over $P=100$ distributed nodes. We estimate each of these signals by running
$T=10$ MP-AMP iterations. ECSQ is used to quantize $\f_t^p$~\eqref{eq:noisePseudoNode},
and $Q(\f_t^p)$~\eqref{eq:quantPseudoNode} is encoded at coding rate $R_t$.
We simulate settings with sparsity rate $\rho\in\{0.1,0.2\}$
and noise variance $\sigma_Z^2\in\{0.01,0.001\}$.
In each setting, we randomly generate the coding rate sequence $\mathbf{R}$,
s.t. the quantization bin size at each iteration satisfies $\gamma<\frac{2\sigma_t}{\sqrt{P}}$
(details in Appendix~\ref{app:verifyIndpt}).\footnote{Note that the constraint on $\gamma$
implies that $\mathbf{R}$ is likely monotone non-decreasing.}
A Bayesian denoiser,
$\eta_t(\cdot)=\mathbb{E}[\x|\f_t]$, is used in~\eqref{eq:master}.
The resulting MSE's from the MP-AMP simulation averaged over the 50 signals,
along with MSE's predicted by lossy SE~\eqref{eq:lossySE},
are plotted in Figure~\ref{fig:lossySEvsSim_BG}.
We can see that the simulated MSE's are close to the MSE's predicted by lossy SE.

{\bf Mixture Gaussian signals:} We independently generate 50 signals of length $10000$ according to
$X = \sum_{i\in \{0,1,2\}} \mathbbm{1}_{X_B = i} X_{G,i}$
where $X_B \sim \text{cat}(0.5, 0.3, 0.2)$ follows a categorical distribution on alphabet $\{0,1,2\}$, $X_{G,0} \sim \mathcal{N}(0,0.1)$, $X_{G,1} \sim \mathcal{N}(-1.5,0.8)$, and $X_{G,2} \sim \mathcal{N}(2,1)$.
We simulate settings with $T=10$, $P=100$, $\kappa=\frac{M}{N}\in\{0.8,1.6\}$,
and $\sigma_Z^2\in\{0.5,0.05\}$. In each setting, we randomly generate the coding rate  sequence $\mathbf{R}$, s.t. the quantization bin size at each iteration satisfies $\gamma<\frac{2\sigma_t}{\sqrt{P}}$. The results are plotted in Figure~\ref{fig:lossySEvsSim_mixG}. The simulation results match well with the lossy SE predictions.

\section{Integrity of Discretized Search Space}\label{app:Integrity}

\begin{figure}
\centering
 \includegraphics[width=0.48\textwidth]{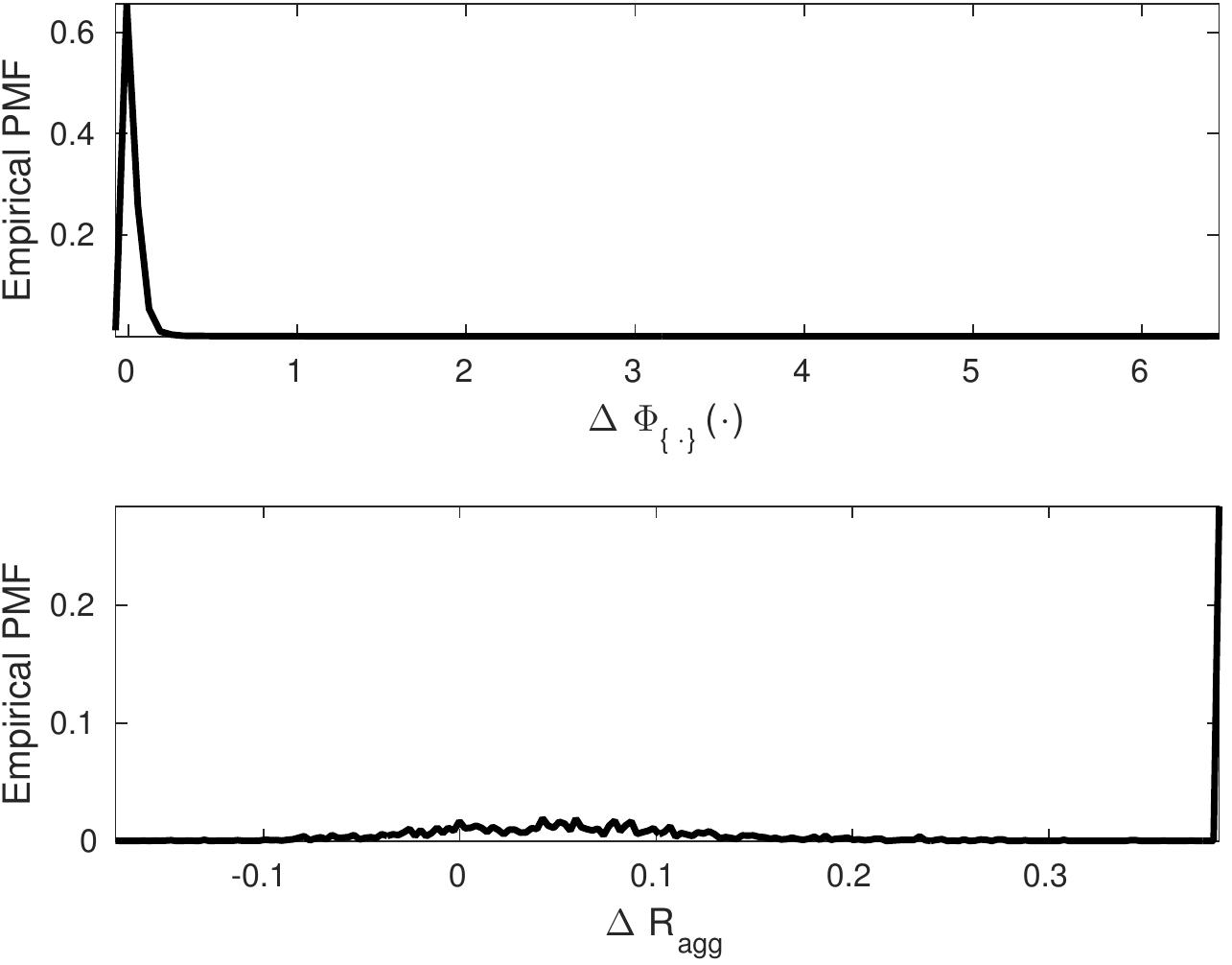}
  \caption{Justification of the discretized search space used in DP. Top panel:
Empirical PMF of the error in the cost function $\Delta \Phi_{\{\cdot\}}(\cdot)$ used to verify the integrity of the linear interpolation in the discretized search space of $\sigma^2$. Bottom panel:
Empirical PMF of $\Delta R_{agg}$; used to verify the integrity of the choice of $\Delta R=0.1$.}\label{fig:DPintegrity}
\end{figure}

When a coding rate $\widehat{R}$ is selected in MP-AMP iteration $t$, DP calculates the equivalent scalar channel noise variance $\sigma^2_{t+1}$~\eqref{eq:indpt_noises} for the next MP-AMP iteration according to~\eqref{eq:SE_Q}. The variance $\sigma^2_{t+1}$ is unlikely to lie on the discretized search space for $\sigma_t^2$, denoted by the {\em grid} $\mathcal{G}(\sigma^2)$. Therefore, $\Phi_{T-(t+1)}(\sigma^2_{t+1}(\widehat{R}))$ in~\eqref{eq:DPrecursion} does not reside in memory.
Instead of brute-force calculation of
$\Phi_{\{\cdot\}}(\cdot)$, we estimate it by fitting a function to the closest neighbors of $\sigma^2_{t+1}$ that lie on the grid $\mathcal{G}(\sigma^2)$  and finding $\Phi_{\{\cdot\}}(\cdot)$ according to the fit function. We evaluate a linear interpolation scheme.

{\bf Interpolation in $\mathcal{G}(\sigma^2)$:}
We run DP over the original coarse grid $\mathcal{G}^c(\sigma^2)$ with resolution $\Delta \sigma^2=0.01$ dB, and a 4$\times$ finer grid $\mathcal{G}^f(\sigma^2)$ with  $\Delta \sigma^2=0.0025$ dB. We obtain the cost function with the coarse grid $\Phi^c_{T-t} (( \sigma^2_t)_c)$ and the cost function with the fine grid $\Phi^f_{T-t} ((\sigma^2_t)_f),\ \forall t\in\{1,...,T\}, (\sigma^2_t)_c\in \mathcal{G}^c(\sigma^2), (\sigma^2_t)_f\in \mathcal{G}^f(\sigma^2)$. Next, we interpolate $\Phi^c_{T-t} (( \sigma^2_t)_c)$ over the fine grid $\mathcal{G}^f(\sigma^2)$ and obtain the interpolated $\Phi^i_{T-t} (( \sigma^2_t)_c)$.
In order to compare $\Phi^i_{T-t} (( \sigma^2_t)_c)$ with $\Phi^f_{T-t} (( \sigma^2_t)_c)$ in a comprehensive way, we consider the settings given by the Cartesian product of the following variables: ({\em i}) the number of distributed nodes $P\in\{50,100\}$, ({\em ii}) sparsity rate $\rho\in\{0.1,0.2\}$, ({\em iii}) measurement rate $\kappa=\frac{M}{N}\in\{3\rho,5\rho\}$, ({\em iv}) EMSE $\epsilon_T \in \{1,0.5\}$dB, ({\em v}) parameter $b\in\{0.5,2\}$, and ({\em vi}) noise variance $\sigma_Z^2\in \{0.01,0.001\}$.
In total, there are 64 different settings.
We calculate the error $\Delta\Phi_{T-t}\l(( \sigma^2_t)_c\r)=\Phi^i_{T-t} \l(( \sigma^2_t)_c\r)-\Phi^f_{T-t} \l(( \sigma^2_t)_c\r)$ and plot the empirical probability mass function (PMF) of $\Delta\Phi_{T-t}\l(( \sigma^2_t)_c\r)$
over all $t$, $(\sigma^2_t)_c$, and all 64 settings. The resulting empirical PMF
of $\Delta\Phi_{\{\cdot\}}(\cdot)$ is plotted in the top panel of Figure~\ref{fig:DPintegrity}.
We see that with 99\% probability, the error satisfies $\Delta\Phi_{\{\cdot\}}\l(\cdot\r)\leq 0.2$,
which corresponds to an inaccuracy of approximately 0.2 in the aggregate
coding rate $R_{agg}$.\footnote{Note that when calculating $\Phi^f$, we are still
using the corresponding interpolation scheme. Although this comparison is not ideal,
we believe it still provides the reader with enough insight.} In the simulation,
we used a resolution of $\Delta R=0.1$. Hence, the inaccuracy of 0.2 in $R_{agg}$
(over roughly 10 iterations) is negligible. Therefore, we use linear interpolation
with a coarse grid $\mathcal{G}^c(\sigma^2)$ with  $\Delta \sigma^2=0.01$ dB.

{\bf Integrity of choice of $\Delta R$:}
We tentatively select resolution $\Delta R=0.1$, and investigate the integrity of this $\Delta R$  over the 64 different settings above. After the coding rate sequence $\mathbf{R}^*=(R_1^*,\cdots,R_T^*)$ is obtained by DP for each setting, we randomly perturb $R_t^*$ by $R_p(t)=R_t^*+\beta_t,\ t=1,...,T$,
where $R_p(t)$ is the {\em perturbed coding rate},
the bias is $\beta_t\in \left[-\frac{\Delta R}{2},+\frac{\Delta R}{2}\right]$,
and $\mathbf{R}_p=(R_p(1),\cdots,R_p(T))$ is called the {\em perturbed coding rate sequence}.
After randomly generating 100 different perturbed coding rate sequences $\mathbf{R}_p$,
we calculate the aggregate coding rate~\eqref{eq:R_agg}, $R_{agg}^p$, of each $\mathbf{R}_p$;
we only consider the perturbed coding rate sequences that achieve EMSE no greater
than the optimal coding rate sequence $\mathbf{R}^*$ given by DP. The bottom panel
of Figure~\ref{fig:DPintegrity} plots the empirical PMF of $\Delta R_{agg}$, where
$\Delta R_{agg}=R_{agg}^p-R_{agg}^*$ and $R_{agg}^*=||\mathbf{R}^*||_1$.
Roughly 15\% of cases in our simulation yield $\Delta R_{agg}<0$
(meaning that the perturbed coding rate sequence has lower $R_{agg}$),
while for the other 85\% cases, $\mathbf{R}^*$ has lower $R_{agg}$.
Considering the resolution $\Delta R=0.1$, we can see that the perturbed
sequences are only marginally better than $\mathbf{R}^*$.
Hence, we verified the integrity of $\Delta R=0.1$.

\section[Proof of Lemma~4.1]{Proof of Lemma~\ref{lemma:DPoptimal}}\label{app:proofDPoptimal}
\begin{proof}
We show that our DP scheme~\eqref{eq:DPrecursion} fits into Bertsekas' formulation ~\cite{bertsekas1995},
which has been proved to be optimal.
Under Bertsekas' formulation, our decision variable is the coding rate $R_t$ and our state
is the scalar channel noise variance $\sigma_t^2$. Our next-state function is the lossy
SE~\eqref{eq:SE_Q} with the distortion $D_t$ being calculated from the RD function given the decision variable $R_t$.
Our additive cost associated with the dynamic system is $b\times \mathbbm{1}_{R_t\neq 0}+ R_t$.
Our control law maps the state $\sigma_t^2$ to a decision (the coding rate $R_t$).
Therefore, our DP formulation~\eqref{eq:DPrecursion} fits into the optimal DP formulation of
Bertsekas~\cite{bertsekas1995}. Hence, our DP formulation~\eqref{eq:DPrecursion}
is also optimal {\em for the discretized search spaces of $R_t$ and $\sigma_t^2$}.
\end{proof}

\begin{figure}[t]
\centering
\includegraphics[width=0.48\textwidth]{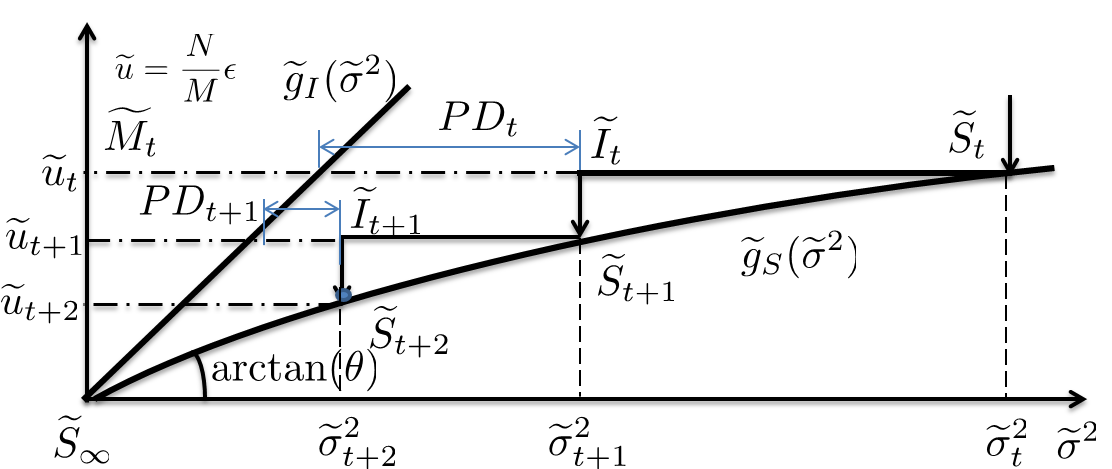}
\caption{Illustration of the evolution of $\widetilde{u}_t$. The vertical axis shows $\widetilde{u}_t=\frac{N}{M}\text{EMSE}=\frac{N}{M}\epsilon_t$.  The solid lines with arrows denote the lossy SE associated with a coding rate sequence and dashed-dotted lines are auxiliary lines.}\label{fig:evolveEMSE}
\end{figure}

\section[Proof of Theorem~4.1]{Proof of Theorem~\ref{th:optRateLinear}}\label{app:optRateLinear}
\begin{proof}
Our proof is based on the assumption that lossy SE~\eqref{eq:SE_Q} holds.
Consider the geometry of the SE incurred by $\mathbf{R}^*$ for arbitrary iterations $t$ and $t+1$, as shown in Figure~\ref{fig:evolveEMSE}.
Let $\widetilde{S}_t = (\widetilde{\sigma}_t^2,\widetilde{u}_t)$ and $R_t^*$ be the state and the optimal coding rate at iteration $t$, respectively.
We know that the slope of $\widetilde{g}_I(\cdot)$ is $\widetilde{g}_I'(\cdot)=1$. Hence,
the length of line segment $\widetilde{M}_t\widetilde{I}_t$  is $\widetilde{\sigma}_{t+1}^2=\widetilde{u}_t+PD_t$. That is
\begin{equation}\label{eq:iter_t}
PD_t=\widetilde{\sigma}_{t+1}^2-\widetilde{u}_t.
\end{equation}
Similarly, we obtain
\begin{equation}\label{eq:iter_tp1}
PD_{t+1}=\widetilde{\sigma}_{t+2}^2-\widetilde{u}_{t+1},
\end{equation}
where $\widetilde{u}_{t+1}$ and $\widetilde{\sigma}_{t+1}^2$ obey
\begin{equation}\label{eq:relate_2iters}
\widetilde{\sigma}_{t+1}^2=\widetilde{g}_S^{-1}(\widetilde{u}_{t+1}).
\end{equation}
Recall that, according to Taylor's theorem~\eqref{eq:Taylor_inverse}, we obtain that
\begin{equation}\label{eq:TaylorInverse}
\widetilde{g}_S^{-1}(\widetilde{u}_{t+1})=\frac{1}{\theta}\widetilde{u}_{t+1}+C \widetilde{u}_{t+1}^2,
\end{equation}
with $\theta$ defined in~\eqref{eq:theta}.
Although $C$ depends on $\widetilde{u}_{t+1}$, it is uniformly bounded, i.e., $C \in [-B,B]$ for some $0\leq B< \infty$.

Fixing $\widetilde{u}_{t}=\frac{N}{M}\epsilon_{t}^*$ and $\widetilde{u}_{t+2}=\frac{N}{M}\epsilon_{t+2}^*$, we  explore different distortions $D_{t}$ and $D_{t+1}$ that obey~\eqref{eq:iter_t}--\eqref{eq:relate_2iters}. According to Definition~\ref{def:optRate}, among distortions that obey~\eqref{eq:iter_t}--\eqref{eq:relate_2iters}, 
the optimal $D_{t}^*$ and $D_{t+1}^*$ correspond to the smallest
aggregate rate at iterations $t$ and $t+1$,  $R_t+R_{t+1}$.
Considering~\eqref{eq:DR}, we have
\begin{equation*}
R_{t}+R_{t+1}\!=\l[ \!\frac{1}{2}\log_2\l(\frac{C_1}{D_{t}}\r)+\frac{1}{2}\log_2\l(\frac{C_1}{D_{t+1}}\r) \r] (1+o_t(1)).
\end{equation*}
Therefore, in the large $t$ limit, minimizing $R_{t}+R_{t+1}$ is
identical to maximizing the product $D_{t}D_{t+1}$.
Considering~\eqref{eq:iter_t}--\eqref{eq:relate_2iters}, our optimization problem becomes
maximization over $F(\widetilde{u}_{t+1})$, where
\begin{equation}\label{eq:optProb}
F(\widetilde{u}_{t+1}) =
(\widetilde{\sigma}_{t+2}^2-\widetilde{u}_{t+1})(\widetilde{g}_S^{-1}(\widetilde{u}_{t+1})-\widetilde{u}_t).
\end{equation}
Invoking Taylor's theorem~\eqref{eq:TaylorInverse} and considering that $C \in [-B, B]$, we solve the optimization problem~\eqref{eq:optProb} in two extremes: one with $C = B$ and the other with  $C=-B$. 

In the case of $C=B$, we obtain
\begin{equation*}
F(\widetilde{u}_{t+1})=-\frac{1}{\theta}\widetilde{u}_{t+1}^2+\frac{1}{\theta}\widetilde{u}_{t+1}\widetilde{\sigma}_{t+2}^2+
B\widetilde{\sigma}_{t+2}^2\widetilde{u}_{t+1}^2-
B\widetilde{u}_{t+1}^3-\widetilde{u}_t\widetilde{\sigma}_{t+2}^2+
\widetilde{u}_t\widetilde{u}_{t+1}.
\end{equation*}
The maximum of $F(\widetilde{u}_{t+1})$ is achieved when $F'(\widetilde{u}_{t+1})=0$. That is,
\begin{equation}\label{eq:zeroDeriv}
F'(\widetilde{u}_{t+1})=-3B\widetilde{u}_{t+1}^2+\l(2B\widetilde{\sigma}_{t+2}^2-\frac{2}{\theta}\r)\widetilde{u}_{t+1}+\frac{\widetilde{\sigma}_{t+2}^2}{\theta}+\widetilde{u}_{t}=0.
\end{equation}
Considering  that $0<\widetilde{u}_{t+1}< \widetilde{u}_{t}$,
the root of the quadratic equation~\eqref{eq:zeroDeriv} is
\begin{equation}\label{eq:optSolution}
\widetilde{u}^*_{t+1}=\frac{1}{3B}\l[\l(B\widetilde{\sigma}_{t+2}^2-\frac{1}{\theta}\r)+A\r],
\end{equation}
where
\begin{equation}\label{eq:original_A}
A=\sqrt{\l(B\widetilde{\sigma}_{t+2}^2-\frac{1}{\theta}\r)^2+3B\l(\frac{\widetilde{\sigma}_{t+2}^2}{\theta}+\widetilde{u}_{t}\r)}.
\end{equation}
We can further simplify~\eqref{eq:original_A} as
\begin{equation}\label{eq:TaylorApprox}
\begin{split}
A=&\frac{1}{\theta}\sqrt{1+B(\theta \widetilde{\sigma}_{t+2}^2+B\theta^2 \widetilde{\sigma}_{t+2}^4+3\theta^2 \widetilde{u}_{t} )}\\
=&\frac{1}{\theta}\l[1+\frac{B}{2}(\theta \widetilde{\sigma}_{t+2}^2+B\theta^2\widetilde{\sigma}_{t+2}^4+
3\theta^2\widetilde{u}_{t})\r]+O(\widetilde{u}_t^2),
\end{split}
\end{equation}
Plugging~\eqref{eq:TaylorApprox} into~\eqref{eq:optSolution}, 
\begin{equation}\label{eq:opt_u_tp1}
\widetilde{u}_{t+1}^*=\frac{1}{2}(\widetilde{\sigma}_{t+2}^2+\theta \widetilde{u}_{t})+O(\widetilde{u}_{t}^2).
\end{equation}
Plugging~\eqref{eq:opt_u_tp1} into~\eqref{eq:iter_t} and~\eqref{eq:iter_tp1}, 
\begin{equation*}
\begin{split}
PD_t^*&=\frac{1}{2\theta}(\widetilde{\sigma}_{t+2}^2-\widetilde{u}_{t}\theta)+O(\widetilde{u}_t^2),\\
PD_{t+1}^*&=\frac{1}{2}(\widetilde{\sigma}_{t+2}^2-\widetilde{u}_t^2\theta )+O(\widetilde{u}_t^2),
\end{split}
\end{equation*}
which leads to
\begin{equation}\label{eq:optRatio_ori}
\frac{D_{t+1}^*}{D_t^*}=\theta(1+O(\widetilde{u}_t)).
\end{equation}

These steps provided the optimal relation between $D_{t}^*$ and $D_{t+1}^*$ when $C=B$.
For the other extreme case, $C=-B$,
similar steps will lead to~\eqref{eq:optRatio_ori},
where the differences between the results are higher order terms. Note that for any $C \in [-B, B]$ the higher order term is bounded between the two extremes.
Hence, the optimal $D_t^*$ and $D_{t+1}^*$
follow~\eqref{eq:optRatio_ori} leading to the first part of the claim~\eqref{eq:theorem1-1}.
Considering~\eqref{eq:DR} and~\eqref{eq:optRatio_ori},
\begin{equation*}
R_{t+1}^*-R_t^*=\frac{1}{2}\log_2 \l(\frac{1}{\theta}\r) (1+ o_t(1)).
\end{equation*}
Therefore, we obtain the second part of the claim~\eqref{eq:theorem1}.
\end{proof}

\section[Proof of Theorem~4.2]{Proof of Theorem~\ref{th:convergence}}\label{app:convergence}

\begin{proof}
Our proof is based on the assumption that lossy SE~\eqref{eq:SE_Q} holds.
Let us focus on an optimal coding rate sequence $\mathbf{R}^*=(R_1^*,\cdots,R_T^*)$.
Applying Taylor's theorem to calculate the ordinate of point $\widetilde{S}_{t+1}$ using its abscissa (Figure~\ref{fig:evolveEMSE}), we obtain
\begin{equation}\label{eq:Taylor_u1}
\widetilde{u}_{t+1}^*=\theta(\widetilde{u}_t^*+PD_t^*)+O((\widetilde{u}_t^*)^2).
\end{equation}
Therefore,
\begin{equation}\label{eq:ratio1}
\frac{\widetilde{u}_{t+1}^*}{\widetilde{u}_t^*}=\theta+\frac{\theta PD_t^*}{\widetilde{u}_t^*}+O(\widetilde{u}_t^*).
\end{equation}
Similarly, we obtain
\begin{equation}\label{eq:ratio2}
\frac{\widetilde{u}_{t+2}^*}{\widetilde{u}_{t+1}^*}=\theta+\frac{\theta PD_{t+1}^*}{\widetilde{u}_{t+1}^*}+O(\widetilde{u}_t^*).
\end{equation}
Plugging~\eqref{eq:optRatio_ori} and~\eqref{eq:Taylor_u1} into~\eqref{eq:ratio2}, we obtain
\begin{equation}\label{eq:ratio2_new}
\begin{split}
\frac{\widetilde{u}_{t+2}^*}{\widetilde{u}_{t+1}^*}&=\theta+\frac{\theta PD_t^*(1+O(u_t^*))}{\widetilde{u}_t^*+PD_t^*+O((\widetilde{u}_t^*)^2)}+O(\widetilde{u}_t^*)\\
&=\theta+\frac{\theta PD_t^*}{\widetilde{u}_t^*+PD_t^*}+O(\widetilde{u}_t^*).
\end{split}
\end{equation}
On the other hand,  $\lim_{t\rightarrow\infty}\frac{\widetilde{u}_{t+1}^*}{\widetilde{u}_{t}^*}=\lim_{t\rightarrow\infty} \frac{\widetilde{u}_{t+2}^*}{\widetilde{u}_{t+1}^*}$.
Therefore, considering~\eqref{eq:ratio1} and~\eqref{eq:ratio2_new}, we obtain
\begin{equation*}
\lim_{t\rightarrow\infty} \frac{\theta PD_t^*}{\widetilde{u}_t^*} = \lim_{t\rightarrow\infty} \frac{\theta PD_t^*}{\widetilde{u}_t^*+PD_t^*},
\end{equation*}
which leads to $\lim_{t\rightarrow}\frac{D_t^*}{\widetilde{u}_t^*} =0$. We obtain~\eqref{eq:theorem2_2} by noting that the optimal EMSE at iteration t is $\epsilon_t^*=\frac{M}{N}\widetilde{u}_t^*$.
Plugging~\eqref{eq:theorem2_2} into~\eqref{eq:ratio1}, we obtain~\eqref{eq:theorem2_1}.
\end{proof}

\chapter{Appendices for Chapter~5}\label{chap:append-SLAM}
\section{Proof of Theorem~\ref{th:conv}}
\label{ap:th:conv}
\sectionmark{Proof of Theorem~5.1}

Our proof mimics a very similar proof presented in~\cite{Jalali2008,Jalali2012} for lossy source coding; we include all details for completeness. The proof technique relies on mathematical properties of non-homogeneous (e.g., time-varying) Markov chains (MC's)~\cite{Bremaud1999}. Through the proof, $\ss \triangleq (\replevels_F)^N$ denotes the state space of the MC of codewords generated by Algorithm~\ref{alg:MCMC}, with size $|\ss| = |\replevels_F|^N$. We define a stochastic transition matrix $\P_{(t)}$ from $\ss$ to itself given by the Boltzmann distribution for super-iteration $t$ in Algorithm~\ref{alg:MCMC}. Similarly, $\pi_{(t)}$ defines the stable-state distribution on $\ss$ for $\P_{(t)}$, satisfying $\pi_{(t)} \P_{(t)} = \pi_{(t)}$.

\begin{myDef}~\cite{Bremaud1999}
{\em Dobrushin's ergodic coefficient} of an MC transition matrix $\P$ is denoted by $\xi(\P)$ and defined as
$\displaystyle \xi(\P) \triangleq \max_{1 \le i,j \le N} \frac{1}{2}\|\P_i-\P_j\|_1$,
where $\P_{i}$ denotes the $i$-th row of $\P$.
\end{myDef}
From the definition, $0 \le \xi(\P) \le 1$. Moreover, the ergodic coefficient can be rewritten as
\begin{align}
\xi(\P) = 1-\min_{1 \le i,j \le N} \sum_{k=1}^N \min(P_{ik},P_{jk}),
\label{eq:ergcoef}
\end{align}
where $P_{ij}$ denotes the entry of $\P$ at the $i$-th row, $j$-th column.

We group the product of transition matrices
across super-iterations as $\P_{(t_1\to t_2)} = \prod_{t=t_1}^{t_2}\P_{(t)}$.
There are two common characterizations for the stable-state behavior of a
non-homogeneous MC.
\begin{myDef}~\cite{Bremaud1999}
A  non-homogeneous MC is called {\em weakly ergodic} if for any distributions
$\eta$ and $\nu$ over the state space $\ss$, and any $t_1 \in \mathbb{N}$,
$\displaystyle {\lim\sup}_{t_2 \to \infty} \|\eta \P_{(t_1\to t_2)}-\nu \P_{(t_1\to t_2)}\|_1 = 0$,
where $\|\cdot\|_1$ denotes the $\ell_1$ norm.
Similarly, a non-homogeneous MC is called {\em strongly ergodic} if there
exists a distribution $\pi$ over the state space $\ss$ such that for any
distribution $\eta$ over $\ss$, and any $t_1 \in \mathbb{N}$,
$\displaystyle {\lim\sup}_{t_2 \to \infty} \|\eta \P_{(t_1 \to t_2)}-\pi\|_1 = 0$.
We will use the following two theorems from~\cite{Bremaud1999} in our proof.
\end{myDef}
\begin{myTheorem}~\cite{Bremaud1999}
An MC is weakly ergodic if and only if there exists a sequence of integers
$0 \le t_1 \le t_2 \le \cdots$ such that $\displaystyle\sum_{i=1}^\infty \left(1-\xi\left(\P_{(t_i \to t_{i+1})}\right)\right) = \infty$.
\label{th:block}
\end{myTheorem}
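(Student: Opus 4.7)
The plan is to prove Theorem~\ref{th:block} by exploiting two classical properties of Dobrushin's ergodic coefficient that link it directly to the $\ell_1$-contraction rate of the chain. Specifically, I would first establish (i)~the \emph{contraction inequality} $\|\eta\mathbf{P}-\nu\mathbf{P}\|_1 \le \xi(\mathbf{P})\|\eta-\nu\|_1$ for arbitrary probability vectors $\eta,\nu$, and (ii)~the \emph{submultiplicativity} $\xi(\mathbf{P}_1\mathbf{P}_2)\le \xi(\mathbf{P}_1)\xi(\mathbf{P}_2)$. Property~(i) is proved by writing $\eta-\nu = \mu_+ - \mu_-$ with disjointly supported non-negative measures of equal mass, and noting that from the alternative expression~\eqref{eq:ergcoef}, $\xi(\mathbf{P}) = \max_{i,j}\tfrac{1}{2}\|\mathbf{P}_i-\mathbf{P}_j\|_1$ bounds $\|\mu_+\mathbf{P}-\mu_-\mathbf{P}\|_1/\|\eta-\nu\|_1$. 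Property~(ii) then follows by applying~(i) to each row pair of $\mathbf{P}_1$ acted on by $\mathbf{P}_2$.

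For the \emph{if} direction, assume $\sum_{i=1}^\infty (1-\xi(\mathbf{P}_{(t_i\to t_{i+1})}))=\infty$ and let $\eta,\nu$ be any two distributions on $\ss$. For any fixed $t_1$ pick $i_0$ with $t_{i_0}\ge t_1$; for $t_2\ge t_{i_0+k}$ we split the product $\mathbf{P}_{(t_1\to t_2)}$ into the initial block up to $t_{i_0}$, the blocks $\mathbf{P}_{(t_i\to t_{i+1})}$ for $i_0\le i< i_0+k$, and a tail. Applying~(i) once and~(ii) repeatedly yields
\begin{equation*}
\|\eta\mathbf{P}_{(t_1\to t_2)}-\nu\mathbf{P}_{(t_1\to t_2)}\|_1 \le \Bigl(\prod_{i=i_0}^{i_0+k-1}\xi(\mathbf{P}_{(t_i\to t_{i+1})})\Bigr)\|\eta-\nu\|_1.
\end{equation*}
Using the elementary bound $\xi\le e^{-(1-\xi)}$, this product is at most $\exp\bigl(-\sum_{i=i_0}^{i_0+k-1}(1-\xi(\mathbf{P}_{(t_i\to t_{i+1})}))\bigr)$, which tends to $0$ as $k\to\infty$ by the divergence hypothesis. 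Since $\|\eta-\nu\|_1\le 2$, weak ergodicity follows.

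For the \emph{only if} direction, assume the chain is weakly ergodic. I would construct the $t_i$ greedily: set $t_1=0$, and, given $t_i$, use weak ergodicity applied to the two distributions $\delta_x,\delta_y$ (for every pair $x,y\in\ss$; recall $\ss$ is finite) to choose $t_{i+1}>t_i$ so large that $\tfrac{1}{2}\|\delta_x\mathbf{P}_{(t_i\to t_{i+1})}-\delta_y\mathbf{P}_{(t_i\to t_{i+1})}\|_1\le \tfrac{1}{2}$ for all $x,y$. Recognizing that $\tfrac{1}{2}\|\delta_x\mathbf{P}-\delta_y\mathbf{P}\|_1$ is exactly the $(x,y)$-term in the maximum defining $\xi(\mathbf{P})$, this gives $\xi(\mathbf{P}_{(t_i\to t_{i+1})})\le \tfrac{1}{2}$, hence $1-\xi(\mathbf{P}_{(t_i\to t_{i+1})})\ge \tfrac{1}{2}$, and summing over $i$ diverges.

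The main obstacle I anticipate is carefully justifying property~(i) for \emph{arbitrary} distributions $\eta,\nu$ starting from the row-pair definition of $\xi(\mathbf{P})$; once the Jordan-type decomposition $\eta-\nu=\mu_+-\mu_-$ is in hand, the rest of the argument is bookkeeping with $\ell_1$-norms and the $1-x\le e^{-x}$ inequality. A secondary subtlety is that the sequence $t_i$ in the statement must be monotone but need not produce contiguous blocks of $\mathbf{P}$'s, so in the \emph{if} direction I must be careful that the tail blocks outside $[t_{i_0},t_{i_0+k}]$ contribute contractions of at most $1$ (which is immediate since $\xi\le 1$ always), so they cannot spoil the bound.
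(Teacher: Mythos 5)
This theorem is quoted from Br\'emaud and the paper gives no proof of it, so there is nothing in the paper to compare against; your argument is the standard proof of Dobrushin's weak-ergodicity criterion and is correct. The contraction inequality, submultiplicativity of $\xi$, the bound $\xi\le e^{-(1-\xi)}$ in the \emph{if} direction, and the greedy choice of $t_{i+1}$ forcing $\xi\left(\P_{(t_i\to t_{i+1})}\right)\le\tfrac{1}{2}$ in the \emph{only if} direction are exactly the textbook route; the one point to tidy is that under the paper's inclusive convention $\P_{(t_1\to t_2)}=\prod_{t=t_1}^{t_2}\P_{(t)}$ consecutive blocks formally share an endpoint factor, so you should take the blocks over disjoint index ranges for the factorization in your \emph{if} direction to be literal.
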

\begin{myTheorem}~\cite{Bremaud1999}
Let an MC be weakly ergodic. Assume that there exists a sequence of
probability distributions $\{\pi_{(t)}\}_{i=1}^\infty$ on the state space $\ss$
such that $\pi_{(t)} \P_{(t)} = \pi_{(t)}$. Then the MC is strongly ergodic if $\displaystyle\sum_{t=1}^\infty \|\pi_{(t)}-\pi_{(t+1)}\|_1 < \infty$.
\label{th:weak}
\end{myTheorem}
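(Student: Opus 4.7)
The plan is to identify the unique candidate limiting distribution from the summability hypothesis and then reduce convergence of $\eta\P_{(t_1\to t_2)}$ to it into three pieces, each controlled by either weak ergodicity or the summability assumption.

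First, I would observe that the summability $\sum_{t=1}^\infty\|\pi_{(t)}-\pi_{(t+1)}\|_1<\infty$ immediately makes $\{\pi_{(t)}\}$ a Cauchy sequence in $\ell_1$. Since the simplex of probability distributions on the finite space $\ss$ is complete in $\ell_1$, there exists a distribution $\pi_\infty$ with $\pi_{(t)}\to\pi_\infty$ in $\ell_1$. This $\pi_\infty$ is the natural and only candidate for the strong-ergodicity limit.

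Next, fix an arbitrary starting distribution $\eta$ on $\ss$ and $t_1\in\mathbb{N}$. A triangle inequality splits the error into three pieces:
\[
\|\eta\P_{(t_1\to t_2)}-\pi_\infty\|_1\le\|\eta\P_{(t_1\to t_2)}-\pi_{(t_1)}\P_{(t_1\to t_2)}\|_1+\|\pi_{(t_1)}\P_{(t_1\to t_2)}-\pi_{(t_2)}\|_1+\|\pi_{(t_2)}-\pi_\infty\|_1.
\]
For the first piece I would use the standard contraction inequality $\|\mu\P-\nu\P\|_1\le \xi(\P)\|\mu-\nu\|_1$, which together with $\|\eta-\pi_{(t_1)}\|_1\le 2$ and weak ergodicity (Theorem~B.1) of the chain forces this term to $0$ as $t_2\to\infty$. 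The third piece vanishes by construction of $\pi_\infty$.

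The main work, and the main obstacle, is the middle term. The key algebraic identity I would exploit is that $\pi_{(k)}$ is stationary for $\P_{(k)}$, hence $\pi_{(k)}\P_{(k\to t_2)}=\pi_{(k)}\P_{(k+1\to t_2)}$, which yields the telescoping decomposition
\[
\pi_{(t_1)}\P_{(t_1\to t_2)}-\pi_{(t_2)}=\sum_{k=t_1}^{t_2-1}\bigl(\pi_{(k)}-\pi_{(k+1)}\bigr)\P_{(k+1\to t_2)}.
\]
Applying contraction once more gives $\|(\pi_{(k)}-\pi_{(k+1)})\P_{(k+1\to t_2)}\|_1\le \xi(\P_{(k+1\to t_2)})\,\|\pi_{(k)}-\pi_{(k+1)}\|_1$, and I would use submultiplicativity $\xi(\P\Q)\le \xi(\P)\xi(\Q)$ to dominate $\xi(\P_{(k+1\to t_2)})\le \xi(\P_{(K\to t_2)})$ whenever $k+1\le K$. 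Then, for any prescribed $\varepsilon>0$, choose $K$ so large that the tail $\sum_{k\ge K}\|\pi_{(k)}-\pi_{(k+1)}\|_1<\varepsilon/2$ (possible by summability), and afterwards choose $t_2$ so large that weak ergodicity forces $\xi(\P_{(K\to t_2)})\cdot\sum_{k=t_1}^{K-1}\|\pi_{(k)}-\pi_{(k+1)}\|_1<\varepsilon/2$. The hard part is precisely this coupled choice: one must first freeze $K$ to tame the tail using summability, and only then exploit weak ergodicity to shrink the prefix — the order of quantifiers matters, because $\xi(\P_{(K\to t_2)})$ depends on $K$, and weak ergodicity only supplies $\xi(\P_{(K\to t_2)})\to 0$ for each fixed $K$. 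Assembling the three pieces gives $\|\eta\P_{(t_1\to t_2)}-\pi_\infty\|_1\to 0$ uniformly in $\eta$, establishing strong ergodicity.
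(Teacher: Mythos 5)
The paper does not prove this statement: it is quoted verbatim from Br\'emaud's monograph and used as a black-box tool in the convergence proof of the MCMC algorithm, so there is no in-paper argument to compare against. Your proof is correct and is essentially the standard textbook argument for this result — extracting the limit $\pi_\infty$ from summability, the three-term triangle inequality, the telescoping identity via stationarity of each $\pi_{(k)}$, and the tail/prefix split in which summability is used first to fix $K$ and weak ergodicity (equivalently, $\xi$ of the block products tending to zero) is used second; you correctly flag that this order of quantifiers is the crux. The only cosmetic slip is attributing the vanishing of the first term to the block criterion for weak ergodicity rather than to the weak-ergodicity hypothesis itself (via the contraction property of Dobrushin's coefficient), which does not affect the validity of the argument.
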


The rest of proof is structured as follows. First, we show that the sequence
of stable-state distributions for the MC used by Algorithm~\ref{alg:MCMC}
converges to a uniform distribution over the set of sequences that minimize
the energy function as the iteration count $t$ increases. Then, we show using
Theorems~\ref{th:block} and~\ref{th:weak} that the non-homogeneous MC
used in Algorithm~\ref{alg:MCMC} is strongly ergodic, which by the definition
of strong ergodicity implies that Algorithm~\ref{alg:MCMC} always converges
to the stable distribution found above. This implies that the outcome of
Algorithm~\ref{alg:MCMC} converges to a minimum-energy solution as
$t \to \infty$, completing the proof of Theorem~\ref{th:conv}.

We therefore begin by finding the stable-state distribution for the non-homogeneous
MC used by Algorithm~\ref{alg:MCMC}. At each super-iteration $t$, the distribution
defined as
\begin{equation}
\pi_{(t)}(\w) \triangleq \frac{\text{exp}\l(-s_t\Psi^{H_q}(\w)\r)}{\sum_{\z \in \ss} \text{exp}\l(-s_t\Psi^{H_q}(\z)\r)}
= \frac{1}{\sum_{\z \in \ss} \text{exp}\l(-s_t\l(\Psi^{H_q}(\z)-\Psi^{H_q}(\w)\r)\r)} \label{eq:pidist}
\end{equation}
satisfies $\pi_{(t)}\P_{(t)} = \pi_{(t)}$, cf.~(\ref{eqn:Gibbs}). We can show that the
distribution $\pi_{(t)}$ converges to a uniform distribution over the set of sequences
that minimize the energy function, i.e.,
\begin{align}
\lim_{t \to \infty} \pi_{(t)}(\w) = \left\{\begin{array}{cl}0 & \w \notin \mathcal{H}, \\ \frac{1}{|\mathcal{H}|} & \w \in \mathcal{H},\end{array}\right.
\label{eq:stable}
\end{align}
where $\mathcal{H} = \{\w \in \ss~\textrm{subject to}~\Psi^{H_q}(\w) = \min_{\z \in \ss} \Psi^{H_q}(\z)\}$.
To show (\ref{eq:stable}), we will show that $\pi_{(t)}(\w)$ is increasing for
$\w \in \mathcal{H}$ and eventually decreasing for $\w \in \mathcal{H}^C$.
Since for $\w \in \mathcal{H}$ and $\widetilde{\w} \in \ss$ we have
$\Psi^{H_q}(\widetilde{\w})-\Psi^{H_q}(\w) \ge 0$,  for $t_1 < t_2$ we have
\begin{equation*}
\sum_{\widetilde{\w} \in \ss} \text{exp}\l[-s_{t_1}\l(\Psi^{H_q}(\widetilde{\w})-\Psi^{H_q}(\w)\r)\r]
\ge \sum_{\widetilde{\w} \in \ss} \text{exp}\l[-s_{t_2}\l(\Psi^{H_q}(\widetilde{\w})-\Psi^{H_q}(\w)\r)\r],
\end{equation*}
which together with (\ref{eq:pidist}) implies $\pi_{(t_1)}(\w) \le \pi_{(t_2)}(\w)$. On the other hand, if $\w \in \mathcal{H}^C$, then we obtain
\begin{align}
\pi_{(t)}(\w)\! =\! \left\{\!\sum_{\widetilde{\w}: \Psi^{H_q}(\widetilde{\w}) \ge \Psi^{H_q}(\w)}\! \text{exp}\l[-s_t\l(\Psi^{H_q}(\widetilde{\w})\!-\!\Psi^{H_q}(\w)\r)\r]\!+\!\sum_{\widetilde{\w}: \Psi^{H_q}(\widetilde{\w}) < \Psi^{H_q}(\w)} \!\text{exp}\l[-s_t\l(\Psi^{H_q}(\widetilde{\w})\!-\!\Psi^{H_q}(\w)\r)\r]\right\}^{-1}. \label{eq:hlower}
\end{align}
For sufficiently large $s_t$, the denominator of (\ref{eq:hlower}) is dominated
by the second term, which increases when $s_t$ increases, and therefore
$\pi_{(t)}(\w)$ decreases for $\w \in \mathcal{H}^C$ as $t$ increases. Finally,
since all sequences $\w \in \mathcal{H}$ have the same energy $\Psi^{H_q}(\w)$,
it follows that the distribution is uniform over the symbols in $\mathcal{H}$.

Having shown convergence of the non-homogenous MC's stable-state distributions,
we now show that the non-homogeneous MC is strongly ergodic. The transition
matrix $\P_{(t)}$ of the MC at iteration $t$ depends on the temperature $s_t$ in
(\ref{eq:st}) used within Algorithm~\ref{alg:MCMC}. We first show that the MC
used in Algorithm~\ref{alg:MCMC} is weakly ergodic via Theorem~\ref{th:block}; the
proof of the following Lemma is given in~\ref{app:lemm:ergcoefbound}.
\begin{myLemma}
The ergodic coefficient of $\P_{(t)}$ for any $t\ge 0$ is upper bounded by
$\displaystyle \xi\left(\P_{(t)}\right) \le 1-\text{exp}(-s_tN\Delta_q)$,
where $\Delta_q$ is defined in (\ref{eq:Deltaq}).
\label{lemm:ergcoefbound}
\end{myLemma}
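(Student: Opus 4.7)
The plan is to bound the ergodic coefficient of the single super-iteration transition matrix $\P_{(t)}$ by directly lower bounding each of its entries, exploiting the fact that one super-iteration consists of $N$ Gibbs updates and that every conditional Gibbs probability admits a uniform lower bound coming from the energy variation bound $\Delta_q$ in~(\ref{eq:Deltaq}).

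First I would rewrite the ergodic coefficient using the identity in~(\ref{eq:ergcoef}), so that the claim becomes the lower bound
\begin{equation*}
\min_{\w,\w' \in \ss} \sum_{\z \in \ss} \min\bigl(P_{(t)}(\w,\z),\, P_{(t)}(\w',\z)\bigr) \;\ge\; \text{exp}(-s_t N \Delta_q).
\end{equation*}
Next, I would analyze a single Gibbs iteration at coordinate $n$. From the Boltzmann form of the conditional $\mathbb{P}_{s_t}(w_n = a \mid \w^{\backslash n})$ in~(\ref{eqn:Gibbs}) and the fact that the energies of the $|\replevels_F|$ candidates in the denominator differ from the minimum by at most $\Delta_q$, I would pull out the minimum energy from both numerator and denominator to obtain
\begin{equation*}
\mathbb{P}_{s_t}(w_n = a \mid \w^{\backslash n}) \;\ge\; \frac{\text{exp}(-s_t \Delta_q)}{|\replevels_F|}
\end{equation*}
uniformly in $a$, $n$, and $\w^{\backslash n}$.

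Then I would compose $N$ such single-coordinate updates. Conditional on any permutation $\sigma$ of $\{1,\dots,N\}$ used within the super-iteration, the probability of moving from any state $\w$ to any specific target $\z$ factors as a product of $N$ Gibbs conditionals of the form above (each coordinate is written exactly once, and the required conditional value is just the corresponding coordinate of $\z$). Multiplying the $N$ per-coordinate bounds gives
\begin{equation*}
P_{(t)}(\w,\z \mid \sigma) \;\ge\; \left(\frac{\text{exp}(-s_t \Delta_q)}{|\replevels_F|}\right)^{\!N} = \frac{\text{exp}(-s_t N \Delta_q)}{|\replevels_F|^N}.
\end{equation*}
Averaging over the random permutation preserves the bound, so $P_{(t)}(\w,\z) \ge |\replevels_F|^{-N} \text{exp}(-s_t N \Delta_q)$ for all $\w,\z \in \ss$. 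Since $|\ss| = |\replevels_F|^N$, summing the per-entry lower bound over $\z$ yields
\begin{equation*}
\sum_{\z \in \ss} \min\bigl(P_{(t)}(\w,\z),\, P_{(t)}(\w',\z)\bigr) \;\ge\; |\replevels_F|^N \cdot \frac{\text{exp}(-s_t N \Delta_q)}{|\replevels_F|^N} = \text{exp}(-s_t N \Delta_q),
\end{equation*}
which combined with~(\ref{eq:ergcoef}) gives the claimed bound $\xi(\P_{(t)}) \le 1 - \text{exp}(-s_t N \Delta_q)$.

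The only subtle point, and what I expect to be the main bookkeeping obstacle, is the composition step: one must verify that the uniform per-coordinate bound survives when the updates are chained, i.e., that the ``worst case'' $\Delta_q$ in~(\ref{eq:Deltaq}) indeed dominates the energy change at every intermediate state visited during a sweep, regardless of which permutation is drawn and which partial updates have already been made. This is handled by noting that $\Delta_q$ is defined as a maximum over all sequences $\w \in (\breplevels)^N$ and all coordinates $n$, so the per-step bound applies uniformly along the trajectory. With that observation in place, the argument reduces to the elementary algebraic manipulations sketched above, and no further appeal to the structure of $H_q$ or the measurement term is required.
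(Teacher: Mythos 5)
Your proposal is correct and follows essentially the same route as the paper's proof: lower-bound each Gibbs conditional by $\text{exp}(-s_t\Delta_q)/|\replevels_F|$ via the uniform energy-variation bound, multiply over the $N$ coordinate updates in a sweep, and plug the resulting uniform entrywise lower bound $\text{exp}(-s_tN\Delta_q)/|\ss|$ into the min-sum form of the ergodic coefficient in~(\ref{eq:ergcoef}). Your explicit handling of the permutation averaging and of the uniformity of $\Delta_q$ along intermediate states is a slightly more careful rendering of a step the paper treats implicitly, but the argument is the same.
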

We note in passing that Condition~\ref{cond:tech1} ensures that $\Delta_q$ is
finite. Using Lemma~\ref{lemm:ergcoefbound} and (\ref{eq:st}), we can evaluate
the sum given in Theorem~\ref{th:block} as
\begin{align*}
\sum_{j=1}^\infty\left[1-\xi\left(\P_{(j)}\right)\right] \ge \sum_{j=1}^\infty\text{exp}(-s_jN\Delta_q)
= \sum_{j=1}^\infty\frac{1}{j^{1/c}} = \infty,
\end{align*}
and so the non-homogeneous MC defined by $\{\P_{(t)}\}_{t=1}^\infty$ is
weakly ergodic. Now we use Theorem~\ref{th:weak} to show that the MC is
strongly ergodic by proving that
$\displaystyle \sum_{t=1}^\infty \|\pi_{(t)}-\pi_{(t+1)}\|_1 < \infty$.
Since we know from earlier in the proof that $\pi_{(t)}(\w)$ is increasing for
$\w \in \mathcal{H}$ and eventually decreasing for $\w \in \mathcal{H}^C$, there
exists a $t_0 \in \mathbb{N}$ such that for any $t_1 > t_0$, we have
\begin{equation*}
\begin{split}
\sum_{t=t_0}^{t_1} \|\pi_{(t)}-\pi_{(t+1)}\|_1=& \sum_{\w \in \mathcal{H}}\sum_{t=t_0}^{t_1} \left(\pi_{(t+1)}(\w)-\pi_{(t)}(\w)\right) + \sum_{\w \notin \mathcal{H}}\sum_{t=t_0}^{t_1} \left(\pi_{(t)}(\w)-\pi_{(t+1)}(\w)\right)\\
=& \sum_{\w \in \mathcal{H}}\left(\pi_{(t_1+1)}(\w)-\pi_{(t_0)}(\w)\right) + \sum_{\w \notin \mathcal{H}} \left(\pi_{(t_0)}(\w)-\pi_{(t_1+1)}(\w)\right)\\
=& \|\pi_{(t_1+1)}-\pi_{(t_0)}\|_1 \le \|\pi_{(t_1+1)}\|_1+\|\pi_{(t_0)}\|_1 = 2.
\end{split}
\end{equation*}
Since the right hand side does not depend on $t_1$, we have that
$\sum_{t=1}^\infty \|\pi_{(t)}-\pi_{(t+1)}\|_1 < \infty$. This implies that the
non-homogeneous MC used by Algorithm~\ref{alg:MCMC} is strongly ergodic, and
thus completes the proof of Theorem~\ref{th:conv}.

\section[Proof of Lemma~C.1]{Proof of Lemma~\ref{lemm:ergcoefbound}}\label{app:lemm:ergcoefbound}

Let $\w',\w''$ be two arbitrary sequences in $\ss$. The probability of transitioning
from a given state to a neighboring state within iteration $t'$ of
super-iteration $t$ of Algorithm~\ref{alg:MCMC} is given by (\ref{eqn:Gibbs}), and
can be rewritten as
\begin{equation*}
\begin{split}
\mathbb{P}_{(t,t')}(\w_1&^{t'-1}a\w_{t'+1}^N|\w_1^{t'-1}b\w_{t'+1}^N) = \mathbb{P}_{s_t}(w_{t'} = a|\w^{\backslash t'}) = \frac{ \text{exp}\left(-s_t\Psi^{H_q}\l(\w_1^{t'-1}a\w_{t'+1}^N\r) \right) }{ \sum_{b\in\replevels_F} \text{exp}\left( -s_t\Psi^{H_q}\l(\w_1^{t'-1}b\w_{t'+1}^N\r) \right) }\\
& = \frac{ \text{exp}\left[-s_t\left(\Psi^{H_q}\l(\w_1^{t'-1}a\w_{t'+1}^N\r) -\Psi^{H_q}_{\min,t'}\l(\w_1^{t'-1},\w_{t'+1}^N\r)\right) \right] }{ \sum_{b\in \replevels_F} \text{exp}\left[ -s_t\left(\Psi^{H_q}\l(\w_1^{t'-1}b\w_{t'+1}^N\r)-\Psi^{H_q}_{\min,t'}\l(\w_1^{t'-1},\w_{t'+1}^N\r)\right) \right] } \ge \frac{\text{exp}(-s_t\Delta_q)}{|\replevels_F|},
\end{split}
\end{equation*}
where $\Psi^{H_q}_{\min,t'}(\w_1^{t'-1},\w_{t'+1}^N) = \min_{\beta \in \replevels_F} \Psi^{H_q}(\w_1^{t'-1}\beta \w_{t'+1}^N)$.
Therefore, the smallest probability of transition from $\w'$ to $\w''$ within
super-iteration $t$ of Algorithm~\ref{alg:MCMC} is bounded by
\begin{equation*}
\min_{\w',\w'' \in \replevels_F}\mathbb{P}_{(t)}(\w''|\w')  \ge \prod_{t'=1}^{N} \frac{\text{exp}(-s_t\Delta_q)}{|\replevels_F|}= \frac{\text{exp}(-s_tN\Delta_q)}{|\replevels_F|^N} = \frac{\text{exp}(-s_tN\Delta_q)}{|\ss|}.
\end{equation*}
Using the alternative definition of the ergodic coefficient (\ref{eq:ergcoef}),
\begin{align*}
\xi\left(\P_{(t)}\right) &= 1-\min_{\w',\w'' \in \ss} \sum_{\widetilde{\w} \in \ss} \min(\mathbb{P}_{(t)}(\widetilde{\w}|\w'),\mathbb{P}_{(t)}(\widetilde{\w}|\w''))\\
&\le 1 - |\ss|\frac{\text{exp}(-s_tN\Delta_q)}{|\ss|} = 1-\text{exp}(-s_tN\Delta_q),
\end{align*}
proving the lemma.

\restoregeometry

\backmatter

\end{document}